\definecolor{ForestGreen}{rgb}{0.1333,0.5451,0.1333}
\newtheorem{theorem}{Theorem}
\newtheorem{corollary}{Corollary}
\newtheorem{lemma}{Lemma}
\newtheorem{fact}{Fact}
\newtheorem{proposition}{Proposition}
\newtheorem{assumption}{Assumption}
\newtheorem{definition}{Definition}
\definecolor{burntorange}{rgb}{0.8, 0.33, 0.0}
\newcommand{\defeq}{:=}
\newcommand{\norm}[1]{\left\lVert#1\right\rVert}
\newcommand{\inprod}[2]{\left\langle#1, #2\right\rangle}
\newcommand{\eps}{\epsilon}
\newcommand{\lam}{\lambda}
\newcommand{\argmax}{\textup{argmax}}
\newcommand{\argmin}{\textup{argmin}} 
\newcommand{\R}{\mathbb{R}}
\newcommand{\N}{\mathbb{N}}
\newcommand{\diag}[1]{\textbf{\textup{diag}}\left(#1\right)}
\newcommand{\half}{\frac{1}{2}}
\DeclareMathOperator*{\E}{\mathbb{E}}
\newcommand{\Nor}{\mathcal{N}}
\newcommand{\Tr}{\textup{Tr}}
\newcommand{\opt}{\textup{OPT}}
\newcommand{\ma}{\mathbf{A}}
\newcommand{\mb}{\mathbf{B}}
\newcommand{\mq}{\mathbf{Q}}
\newcommand{\mm}{\mathbf{M}}
\newcommand{\my}{\mathbf{Y}}
\newcommand{\ai}{\ma_{i:}}
\newcommand{\aj}{\ma_{:j}}
\newcommand{\id}{\mathbf{I}}
\newcommand{\tO}{\widetilde{O}}
\newcommand{\nnz}{\textup{nnz}}
\newcommand{\poly}{\textup{poly}}
\newcommand{\lmax}{\lambda_{\textup{max}}}
\newcommand{\lmin}{\lambda_{\textup{min}}}
\newcommand{\Par}[1]{\left(#1\right)}
\newcommand{\Brack}[1]{\left[#1\right]}
\newcommand{\Brace}[1]{\left\{#1\right\}}
\newcommand{\Abs}[1]{\left|#1\right|}
\newcommand{\msig}{\boldsymbol{\Sigma}}
\newcommand{\mzero}{\mathbf{0}}
\newcommand{\1}{\mathbbm{1}}
\newcommand{\ksl}{\kappa^\star_i}
\newcommand{\ksi}{\kappa^\star_i}
\newcommand{\ksr}{\kappa^\star_o}
\newcommand{\ksk}{\kappa^\star_o}
\newcommand{\kso}{\kappa^\star_o}
\newcommand{\ks}{\kappa^\star}
\newcommand{\mn}{\mathbf{N}}
\newcommand{\Sym}{\mathbb{S}}
\newcommand{\PSD}{\Sym_{\succeq \mzero}}
\newcommand{\PD}{\Sym_{\succ \mzero}}
\newcommand{\mw}{\mathbf{W}}
\newcommand{\mk}{\mathbf{K}}
\newcommand{\mws}{\mw_\star}
\newcommand{\tmw}{\widetilde{\mw}}
\newcommand{\tmq}{\widetilde{\mq}}
\newcommand{\alg}{\mathcal{A}}
\newcommand{\algb}{\mathcal{B}}
\newcommand{\algtest}{\alg_{\textup{test}}}
\newcommand{\algpack}{\alg_{\textup{pack}}}
\newcommand{\tmv}{\mathcal{T}_{\textup{mv}}}
\newcommand{\DecideMPC}{\mathsf{DecideStructuredMPC}}
\newcommand{\mg}{\mathbf{G}}
\newcommand{\mmu}{\mathbf{U}}
\newcommand{\bx}{\bar{x}}
\newcommand{\md}{\mathbf{D}}
\newcommand{\tma}{\widetilde{\ma}}
\newcommand{\hmd}{\widehat{\md}}
\newcommand{\ms}{\mathbf{S}}
\newcommand{\Power}{\mathsf{Power}}
\newcommand{\mpack}{\mathbf{P}}
\newcommand{\mcov}{\mathbf{C}}
\newcommand{\kscale}{\kappa_{\textup{scale}}}
\newcommand{\mprod}{\boldsymbol{\Pi}}
\newcommand{\mproj}{\boldsymbol{\Pi}}
\newcommand{\tsl}{\tau^\star_i}
\newcommand{\tsr}{\tau^\star_o}
\newcommand{\xtrue}{x_{\textup{true}}}
\newcommand{\MPC}{\mathsf{MPC}}
\newcommand{\mat}{\mathcal{M}}
\newcommand{\tmat}{\widetilde{\mat}}
\newcommand{\hmat}{\widehat{\mat}}
\newcommand{\Tmat}{\mathcal{T}^{\textup{sol}}_{\mat}}
\newcommand{\Tma}{\mathcal{T}^{\textup{sol}}_{\ma}}
\newcommand{\mr}{\mathbf{R}}
\newcommand{\mv}{\mathbf{V}}
\newcommand{\mvar}[1]{\textbf{#1}}
\newcommand{\lap}{\mathbf{L}}
\newcommand{\mlap}{\lap}
\newcommand{\edit}{\kappa}
\newcommand{\otilde}{\widetilde{O}}
\newcommand{\eye}{\mvar{I}}
\newcommand{\mx}{\mathbf{X}}
\newcommand{\tmb}{\widetilde{\mb}}
\newcommand{\vones}{\1}
\newcommand{\mi}{\textbf{I}}
\newcommand{\pseudo}{{\dagger}}
\newcommand{\ksp}{(\ks)}
\newcommand{\matrixdictionary}{matrix-dictionary}
\newcommand{\recovery}{\matrixdictionary{} recovery}
\title{Structured Semidefinite Programming \\ for Recovering Structured Preconditioners\footnote{This paper is a merge of two unpublished works by subsets of the authors, \cite{JambulapatiSS18} and \cite{JambulapatiLMST21}, available on arXiv, and is intended to replace them.}
} 
\author{
	Arun Jambulapati\thanks{Simons Institute, {\tt jmblpati@berkeley.edu}. Work completed at Stanford and the University of Washington.}
	\and
	Jerry Li\thanks{Microsoft Research, {\tt jerrl@microsoft.com}.}
	\and
	Christopher Musco\thanks{New York University, {\tt cmusco@nyu.edu}.}
	\and
	Kirankumar Shiragur\thanks{Broad Institute of MIT and Harvard, {\tt shiragur@stanford.edu}. Work completed at Stanford.}
	\and
	Aaron Sidford\thanks{Stanford University, {\tt sidford@stanford.edu}.}
	\and
	Kevin Tian\thanks{University of Texas at Austin, {\tt kjtian@cs.utexas.edu}. Work completed at Stanford and Microsoft Research.}
}
\date{}
\begin{document}
	
	\maketitle
	
	\begin{abstract}
		We develop a general framework for finding approximately-optimal preconditioners for solving linear systems. Leveraging this framework we obtain improved runtimes for fundamental preconditioning and linear system solving problems including the following.
		\begin{itemize}
			\item \textbf{Diagonal preconditioning.} We give an algorithm which, given positive definite $\mathbf{K} \in \mathbb{R}^{d \times d}$ with $\mathrm{nnz}(\mathbf{K})$ nonzero entries, computes an $\epsilon$-optimal diagonal preconditioner in time $\widetilde{O}(\mathrm{nnz}(\mathbf{K}) \cdot \mathrm{poly}(\kappa^\star,\epsilon^{-1}))$, where $\kappa^\star$ is the optimal condition number of the rescaled matrix.
			\item \textbf{Structured linear systems.} We give an algorithm which, given $\mathbf{M} \in \mathbb{R}^{d \times d}$ that is either the pseudoinverse of a graph Laplacian matrix or a constant spectral approximation of one, solves linear systems in $\mathbf{M}$ in $\widetilde{O}(d^2)$ time. 
		\end{itemize}	
		Our diagonal preconditioning results improve state-of-the-art runtimes of $\Omega(d^{3.5})$ attained by general-purpose semidefinite programming, and our solvers improve state-of-the-art runtimes of $\Omega(d^{\omega})$ where $\omega > 2.3$ is the current matrix multiplication constant.
		We attain our results via new algorithms for a class of semidefinite programs (SDPs) we call \emph{matrix-dictionary approximation SDPs}, which we leverage to solve an associated problem we call \emph{matrix-dictionary recovery}.
	\end{abstract}
	
	\newpage
	\pagenumbering{gobble}
	\setcounter{tocdepth}{2}
	{
		\tableofcontents
	}
	\newpage
	\pagenumbering{arabic}
	
	%!TEX root = ./structured-mpc.tex

\section{Introduction}
\label{sec:intro}

Preconditioning is a fundamental primitive in the theory and practice of numerical linear algebra, optimization, and data science. Broadly, its goal is to improve conditioning properties (e.g., the range of eigenvalues) of a matrix $\mm$ by finding another matrix $\mn$ which approximates the inverse of $\mm$ and is more efficient to construct and apply than computing $\mm^{-1}$. This strategy underpins a variety of popular, recently-developed tools, such as adaptive gradient methods for machine learning (e.g., Adagrad and Adam \cite{DuchiHazanSinger:2011, KingmaBa:2015}), and near-linear time solvers for combinatorially-structured matrices (e.g.\ graph Laplacians \cite{SpielmanT04}). Despite widespread practical adoption of such techniques, there is a surprising lack of provably efficient algorithms for preconditioning.

Our work introduces a new tool, \emph{matrix-dictionary recovery}, and leverages it to obtain the first near-linear time algorithms for several structured preconditioning problems in well-studied applications. Informally, the problem we study is as follows (see  Section~\ref{ssec:framework_intro} for the formal definition).
\begin{equation}\label{eq:metaproblem}
	\begin{gathered}
		\textit{Given a matrix $\mm$ and a ``matrix-dictionary'' $\{\mm_i\}$, find the best preconditioner} \\
		\textit{$\mn = \sum_i w_i \mm_i$ of $\mm$ expressible as a nonnegative linear combination of $\{\mm_i\}$.}
	\end{gathered}
\end{equation}
We develop general-purpose solvers for the problem \eqref{eq:metaproblem}. We further apply these solvers to obtain state-of-the-art algorithms for fundamental tasks such as preconditioning linear systems and regression, and approximately recovering structured matrices, including the following results.

\begin{itemize}
	\item \textbf{Diagonal preconditioning.} We consider the classical numerical linear algebra problem of \emph{diagonal preconditioning} \cite{Sluis:1969}. Given $\mk \in \PD^d$, the goal is to find a diagonal $\mw \in \PD^d$ minimizing the condition number of $\mw^{\half} \mk \mw^{\half}$. Theorem~\ref{thm:ksoinformal} obtains the first near-linear time algorithms for this problem when the optimal condition number of the rescaled matrix is small.
	\item \textbf{Semi-random regression.} We consider a related problem, motivated by semi-random noise models, which takes full-rank $\ma \in \R^{n \times d}$ with $n \ge d$ and seeks $\mw \in \PD^n$ minimizing the condition number of $\ma^\top \mw \ma$. Theorem~\ref{thm:ksiinformal} gives the first near-linear time algorithm for this problem, and applications of it reduce risk bounds for statistical linear regression.
	\item \textbf{Structured linear systems.} We robustify Laplacian system solvers, e.g., \cite{SpielmanT04}, to obtain near-linear time solvers for systems in dense matrices well-approximated spectrally by Laplacians in Theorem~\ref{thm:perturbedsolverinformal}. We also give new near-linear time solvers for several families of structured matrices, e.g., dense inverse Laplacians and M-matrices,\footnote{Inverse M-matrices are necessarily dense, see Lemma~\ref{lem:posminv} in Appendix~\ref{sec:appproofs}.} in Theorems~\ref{thm:mmatrixinformal} and~\ref{thm:laplacianinformal}.
\end{itemize}

For the preconditioning problems considered in Theorems~\ref{thm:ksoinformal},~\ref{thm:ksiinformal}, and~\ref{thm:perturbedsolverinformal}, we give the first runtimes faster than a generic SDP solver, for which state-of-the-art runtimes \cite{JiangKLP020, HuangJ0T022} are highly superlinear ($\Omega(d^{3.5})$ for diagonal preconditioning and $\Omega(d^{2\omega})$ for approximating Laplacians, where $d$ is the matrix dimension and $\omega > 2.3$ is the current matrix multiplication constant \cite{AlmanW21}). For the corresponding linear system solving problems in each case, as well as in Theorems~\ref{thm:mmatrixinformal} and~\ref{thm:laplacianinformal}, the prior state-of-the-art was to treat the linear system as generic and ran in $\Omega(d^\omega)$ time. 

We survey these results in Section~\ref{ssec:diag_intro} and~\ref{ssec:laplacian_intro}, highlighting how the problems they study can be viewed as instances of \eqref{eq:metaproblem}. We then discuss the general matrix-dictionary recovery problem we study in more detail, and state the guarantees of our solvers, in Section~\ref{ssec:framework_intro}. Finally, we compare our framework to related algorithms and give a more thorough runtime comparison in 
Section~\ref{ssec:prior}.

\subsection{Diagonal preconditioning}\label{ssec:diag_intro}

When solving linear systems via iterative methods, one of the most popular preconditioning strategies is to use a diagonal matrix. This strategy is appealing because diagonal matrices can be applied and inverted quickly. Determining the best diagonal preconditioner is a classical numerical linear algebra problem studied since the 1950s \cite{ForsytheStraus:1955,Sluis:1969,PiniGambolati:1990}. In the context of \eqref{eq:metaproblem}, a diagonal preconditioner is a nonnegative linear combination of the matrix-dictionary consisting of 
\begin{equation}\label{eq:eieit}e_i e_i^\top \text{ where } e_i \text{ is the } i^{\text{th}} \text{ basis vector.}\end{equation}
Leveraging this viewpoint, we study two natural instantiations of diagonal preconditioning.

\paragraph{Outer scaling.} One formulation of the optimal diagonal preconditioning problem, which we refer to as \emph{outer scaling}, asks to optimally reduce the condition number of positive definite $\mk \in \R^{d \times d}$ with a diagonal matrix $\mw$, i.e., return diagonal $\mw = \diag{w}$ for $w \in \R^{d}_{> 0}$ such that\footnote{$\kappa(\mm)$ is the condition number of positive definite $\mm$, i.e., the eigenvalue ratio $\lam_{\max}(\mm)/\lam_{\min}(\mm)$.}
\[\kappa(\mw^{\half} \mk \mw^{\half}) \approx \kso(\mk) \defeq \min_{\text{diagonal } \mw \succ \mzero} \kappa(\mw^{\half} \mk \mw^{\half}).\]
Given $\mw$, a solution to $\mk x = b$ can be obtained by solving the better-conditioned $\mw^{\half} \mk \mw^{\half} y = \mw^{\half} b$ and returning $x = \mw^{\half }y$. The optimal $\mw$ can be obtained via a semidefinite program (SDP) \cite{QuYeZhou:2020}, but the computational cost of general-purpose SDP solvers outweighs benefits for solving linear systems. Outer scaling is poorly understood algorithmically; prior to our work, even attaining a constant-factor approximation to $\kso(\mk)$ without a generic SDP solver was unknown.

%We also note that diagonal preconditioning bears superficial resemblance to the \emph{matrix scaling problem}, which has received recent attention in theoretical computer science \cite{Allen-ZhuLiOliveira:2017,CohenMadryTsipras:2017,GargOliveira:2018}. The goal in matrix scaling is to find a diagonal scaling equalizing row and column norms of $\mk$.

This state of affairs has resulted in the widespread use of studying heuristics for constructing $\mw$, such as \emph{Jacobi preconditioning}~\cite{Sluis:1969,GreenbaumRodrigue:1989} and \emph{matrix scaling}~\cite{Allen-ZhuLiOliveira:2017,CohenMadryTsipras:2017,GargOliveira:2018}. The former was notably highlighted by Adagrad \cite{DuchiHazanSinger:2011}, which used Jacobi preconditioning to improve the computational costs of their method. 
However, both heuristcs have clear drawbacks both from a theoretical and practical perspective.
Prior to our work the best approximation guarantee known for Jacobi preconditioning was a result of van der Sluis \cite{Sluis:1969,GreenbaumRodrigue:1989}, which shows the Jacobi preconditioner is an $m$-factor approximation to the optimal preconditioning problem where $m \leq d$ is the maximum number of non-zeros in any row of $\mk$: in dense matrices this scales linearly in the problem dimension and can be much larger than $\kso(\mk)$.  We review and slightly strengthen this result in Appendix \ref{app:jacobi}. We also prove a new \emph{dimension-independent} baseline result of independent interest: the Jacobi preconditioner always obtains condition number no worse than $(\kso(\mk))^2$. Unfortunately, we exhibit a simple family of matrices showing this characterization of the Jacobi preconditioner quality is tight, dashing hopes they can solve the outer scaling problem near-optimally.
On the other hand, while it is sometimes effective as a heuristic \cite{KnightRuizUcar:2014}, matrix scaling algorithms target a different objective and do not yield provable guarantees on $\kappa ( \mw^{\half}\mk\mw^{\half})$.

\paragraph{Inner scaling.} Another formulation of diagonal preconditioning, which we refer to as \emph{inner scaling}, takes as input a full-rank $\ma \in \R^{n \times d}$ and asks to find an $n \times n$ positive diagonal $\mw$ with
\[\kappa(\ma^\top \mw \ma) \approx \ksi(\ma) \defeq \min_{\text{diagonal } \mw \succ \mzero} \kappa(\ma^\top \mw \ma). \]
As a comparison, when outer scaling is applied to the kernel matrix $\mk = \ma^\top \ma$, $\mw^{\half} \mk \mw^{\half}$ can be seen as rescaling the columns of $\ma$. On the other hand, in inner scaling we instead rescale rows of $\ma$. Inner scaling has natural applications to improving risk bounds in a robust statistical variant of linear regression, which we comment upon shortly. Nonetheless, as in the outer scaling case, no algorithms faster than general SDP solvers are known to obtain even a constant-factor approximation to $\ksi(\ma)$. Further, despite clear problem similarities, it is unclear how to best extend heuristics (e.g., Jacobi preconditioning and matrix scaling) for outer scaling to inner scaling.

\paragraph{Our results.} We give the first nontrivial approximation algorithms (beyond calling a generic SDP solver) for both variants, yielding diagonal preconditioners attaining constant-factor approximations to $\kso$ and $\ksi$ in near-linear time.\footnote{We are not currently aware of a variant of our matrix dictionary recovery framework which extends to simultaneous inner and outer scaling, though it is worth noting that prior work \cite{QuGHYZ22} does obtain such a result via semidefinite programming. Obtaining such a variant is an interesting open problem for future work.} In the following, $\tmv(\mm)$ is the time required to multiply a vector by $\mm$; this is at most the sparsity of $\mm$, but can be substantially faster for structured $\mm$. 

\begin{restatable}[Outer scaling, informal, see Theorem~\ref{thm:kso}]{theorem}{restateksoinformal}\label{thm:ksoinformal}
	Let $\eps > 0$ be a fixed constant.\footnote{We do not focus on the $\epsilon$ dependence and instead take it to be constant since, in applications involving solving linear systems, there is little advantage to obtaining better than a two factor approximation (i.e., setting $\epsilon = 1$).} There is an algorithm, which given full-rank $\mk \in \PD^d$ computes $w \in \R^d_{\ge 0}$ such that $\kappa(\mw^\half \mk \mw^\half) \le (1 + \eps)\kso(\mk)$ with probability $\ge 1 - \delta$ in time\footnote{Our informal results suppress precise dependences on approximation and log factors as the only informality.}
	\[O\Par{\tmv(\mk) \cdot \Par{\kso(\mk)}^{1.5} \cdot \textup{polylog} \Par{\frac{d\kso(\mk)}{\delta}}}.\]
\end{restatable}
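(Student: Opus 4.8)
The plan is to view outer scaling as the instance of matrix-dictionary recovery in \eqref{eq:metaproblem} with target matrix $\mk$ and dictionary $\{e_i e_i^\top\}_{i \in [d]}$ from \eqref{eq:eieit}, and then invoke the general recovery solver of Section~\ref{ssec:framework_intro}. The first step is to rewrite the objective in ``sandwiching'' form: since the relative condition number is scale-invariant in $\mw$, an optimal diagonal $\mw \succ \mzero$ may be rescaled so that $\mw^{\half}\mk\mw^{\half} \succeq \mi$, and then $\mathbf{E} := \mw^{-1}$ is a nonnegative diagonal with $\mathbf{E} \preceq \mk \preceq \kso(\mk)\,\mathbf{E}$; conversely any nonnegative diagonal $\mathbf{E} = \diag{v}$ with $\mathbf{E} \preceq \mk \preceq \kappa\,\mathbf{E}$ yields the preconditioner $\mw = \mathbf{E}^{-1}$ with $\kappa(\mw^{\half}\mk\mw^{\half}) \le \kappa$. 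So the smallest feasible $\kappa$ equals $\kso(\mk)$, and it suffices to find $v \in \R^d_{\ge 0}$ for which the spectrum of $\diag{v}^{-\half}\mk\diag{v}^{-\half}$ lies in $[1, (1+\eps)\kso(\mk)]$.

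Next I would reduce this to a short sequence of feasibility queries. For a guessed $\kappa$, decide -- up to a multiplicative $1 + \Theta(\eps)$ slack -- whether there is $v \in \R^d_{\ge 0}$ with $\diag{v} \preceq \mk \preceq \kappa\diag{v}$; this is precisely the matrix-dictionary approximation SDP over the dictionary \eqref{eq:eieit} with target $\mk$, which the solver of Section~\ref{ssec:framework_intro} handles. To locate $\kso(\mk)$ I would first run a geometric doubling search $\kappa \in \{2, 4, 8, \dots\}$, stopping at the first feasible value, which costs $O(\log \kso(\mk))$ queries, and then perform $O(\eps^{-1})$ refinement queries; this avoids any dependence on the (possibly enormous) input condition number $\kappa(\mk)$.

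For the runtime I would account the per-iteration cost of the solver on this dictionary. Each dictionary matrix $e_i e_i^\top$ supports $O(1)$-time matrix-vector multiplication, and -- crucially -- any nonnegative combination $\sum_i w_i e_i e_i^\top = \diag{w}$ is diagonal, hence is applied, inverted, and square-rooted in $O(d)$ time; consequently the only expensive primitive ever needed is multiplication by $\mk$, since the matrices whose functions the solver must evaluate are all of the form $\diag{w}^{-\half}\mk\diag{w}^{-\half}$, applied to a vector in $O(\tmv(\mk) + d) = O(\tmv(\mk))$ time. The solver performs $\otilde(\kso(\mk))$ iterations while keeping the induced condition number of its iterates $O(\kso(\mk))$, and each iteration evaluates a matrix-function--vector product (e.g.\ an inverse square root arising from the covering constraint) to the required accuracy via a Chebyshev polynomial of degree $\otilde(\sqrt{\kso(\mk)})$, plus a Johnson--Lindenstrauss sketch to estimate the relevant traces/quadratic forms with the failure probability charged to $\delta$. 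Multiplying these together, over the $O(\eps^{-1}\log\kso(\mk))$ queries, gives the claimed $O(\tmv(\mk) \cdot (\kso(\mk))^{1.5} \cdot \textup{polylog}(d\kso(\mk)/\delta))$ bound.

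The genuine difficulty is not in this reduction but inside the general framework being invoked: building an SDP solver for the \emph{two-sided} band constraint $\mi \preceq \sum_i w_i \mm_i \preceq \kappa\,\mi$ whose width, and hence both its iteration count and the polynomial degrees it requires, is governed by the \emph{optimal} value $\kso(\mk)$ rather than by $d$ or $\kappa(\mk)$, while implementing every step in $\otilde(\tmv)$ time by replacing exact matrix exponentials/inverses with low-degree polynomial approximations and random sketches without eroding that width bound. Granting the solver, what remains here is routine: checking that $\{e_i e_i^\top\}$ satisfies the solver's access assumptions, that its optimal objective is exactly $\kso(\mk)$ as argued above, and that the doubling search granularity composes with the solver's internal $1 + \Theta(\eps)$ tolerance to yield the stated $(1+\eps)$-approximation.
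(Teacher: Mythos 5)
There is a genuine gap in the runtime accounting: the reduction you describe is exactly the ``direct'' application of the general-constraint solver (Theorem~\ref{thm:mainb} with $\mm_i = e_ie_i^\top$ and $\mb = \frac{1}{\kappa}\mk$), and the paper explicitly notes at the start of Section~\ref{ssec:kso} that this route yields only $\tmv(\mk)\cdot(\kso(\mk))^2$, not $(\kso(\mk))^{1.5}$. Your claim that ``the only expensive primitive ever needed is multiplication by $\mk$, since the matrices whose functions the solver must evaluate are all of the form $\diag{w}^{-\half}\mk\diag{w}^{-\half}$'' mischaracterizes the solver: the MMW potential must live in the space where the \emph{constraint} matrix is the identity, so the dictionary elements get conjugated by (an approximation to) $\mb^{-\half}\propto\mk^{-\half}$, not by the diagonal. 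Each such conjugation is a rational function requiring $\tO(\sqrt{\kappa})$ preconditioned solves (Lemma~\ref{lem:sqrtb}), which multiplies the $\tO(\kappa)$ iterations and the degree-$\tO(\sqrt{\kappa})$ exponential approximation to give $\kappa^2$ overall. You cannot simply move the conjugation onto the diagonal side, because then the ``constraint'' matrix would itself be the optimization variable, which the framework does not support.

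The missing ideas are the ones the paper uses to shave the extra $\sqrt{\kso}$ factor. First, outer scaling of $\mk$ is recast as \emph{inner} scaling of $\ma := \mk^\half$ (since $\kappa(\mw^\half\mk\mw^\half) = \kappa(\ma\mw\ma)$), so that the identity-constraint solver of Theorem~\ref{thm:mainid} applies; the price is that $\ma$ is only available implicitly, and Lines 6, 7, and 13 of Algorithm~\ref{alg:decidempc} must be re-implemented using polynomial approximations to $\mk^\half$ (Fact~\ref{fact:poly_approx_sqrt}, Corollary~\ref{cor:implsqrt}, Lemmas~\ref{lem:applyana}--\ref{lem:applymaprod} and~\ref{lem:sqrtlmin}--\ref{lem:packnum}). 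Second, the degree of that square-root polynomial scales as $\sqrt{\kappa(\mk)}$, which can be arbitrarily larger than $\sqrt{\kso(\mk)}$; the paper controls this with a Jacobi-preconditioning step guaranteeing $\kappa(\mk) = O((\kso)^2)$ followed by a homotopy over $\mk + \lam_k\id$ with $\lam_k = \lam_0/2^k$ (Lemmas~\ref{lem:idcanthurt},~\ref{lem:lambounds},~\ref{lem:betweenphase} and Corollary~\ref{cor:wellconditionedmk}), so that each phase works with a matrix whose condition number is at most $3\kso(\mk)$. Your doubling search over the guessed $\kappa$ controls only the number of feasibility queries, not the per-query polynomial degrees, so it does not substitute for this. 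Without both ingredients the argument lands at $(\kso)^2$.
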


\begin{restatable}[Inner scaling, informal, see Theorem~\ref{thm:ksi}]{theorem}{restateksiinformal}\label{thm:ksiinformal}
	Let $\eps > 0$ be a fixed constant. There is an algorithm, which given full-rank $\ma \in \R^{n \times d}$ for $n \ge d$ computes $w \in \R^n_{\ge 0}$ such that $\kappa(\ma^\top \mw \ma) \le (1 + \eps) \ksi(\ma)$ with probability $\ge 1 - \delta$ in time
	\[O\Par{\tmv(\ma) \cdot \Par{\ksi(\ma)}^{1.5}  \cdot \textup{polylog}\Par{\frac{n\ksi(\ma)}{\delta}}  }.\]
\end{restatable}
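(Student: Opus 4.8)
The plan is to cast inner scaling as a \recovery{} problem and invoke our general-purpose solver. Writing $\ai$ for the $i$-th row of $\ma$, observe that $\ma^\top\mw\ma = \sum_{i=1}^n w_i\mm_i$ for the rank-one \matrixdictionary{} $\mm_i \defeq \ai^\top\ai \succeq \mzero$, so inner scaling is exactly \eqref{eq:metaproblem} with target $\mi$ and dictionary $\{\mm_i\}_{i=1}^n$ (mirroring the treatment of outer scaling, where the target is $\mk$ and the dictionary is $\{e_ie_i^\top\}$, up to inverting the recovered matrix). Since $\ma$ is full rank, $\ksi(\ma)$ is precisely the least $\kappa \ge 1$ for which there exists $w \in \R^n_{\ge 0}$ with $\mi \preceq \sum_{i=1}^n w_i\mm_i \preceq \kappa\,\mi$ (given any feasible $\mw$, rescale it so that $\lambda_{\min}(\ma^\top\mw\ma)=1$). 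I would therefore (i) doubling-search over a guess $\kappa \in \{1,2,4,\dots\}$ for $\ksi(\ma)$, refining by geometric steps of ratio $1+\eps$ near the end; and (ii) for each guess call a first-order solver for this \sdp{} which either returns $w \ge \vzero$ certifying $\mi \preceq \sum_i w_i\mm_i \preceq (1+\eps)\kappa\,\mi$, or certifies that no $w$ attains $\kappa(\ma^\top\mw\ma) \le \kappa$. Doubling ensures the solver is only ever invoked with $\kappa = O(\ksi(\ma))$, $O(\log(\ksi(\ma)/\eps))$ guesses suffice to pin $\ksi(\ma)$ to a $(1+O(\eps))$ factor, and the $w$ returned at the smallest accepting guess is the output.

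The crux is that every operation the SDP solver requires costs only $\tO(\tmv(\ma))$, except for one well-conditioned linear solve. For any $v \in \R^d$ the entire vector of quadratic forms $(v^\top\mm_i v)_{i=1}^n = ((\ma v)_i^2)_{i=1}^n$ is read off the single product $\ma v$, and $(\sum_i w_i\mm_i)v = \ma^\top(\mw(\ma v))$ is a constant number of products with $\ma$ and $\ma^\top$; the quantities used to update $w$ are formed the same way. The only operations touching the $d\times d$ matrix $\ms_w \defeq \sum_i w_i\mm_i$ directly are approximate applications of $\ms_w^{-1}$ or $\ms_w^{-1/2}$, and the solver keeps $\ms_w$ inside an $O(\kappa)$ condition-number window throughout (by initializing from the previous guess and only moving toward feasibility), so a degree-$\tO(\sqrt\kappa)$ Chebyshev or conjugate-gradient polynomial approximates these to high accuracy in $\tO(\sqrt\kappa\cdot\tmv(\ma))$ time. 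Combining this with an $\tO(\kappa)$ bound on the number of outer iterations of the solver, and using $\tmv(\ma) = \Omega(n)$ to absorb the $O(n)$ vector bookkeeping per iteration, each decision call runs in $\tO(\kappa^{1.5}\,\tmv(\ma))$; summing over the $O(\log(\ksi(\ma)/\eps))$ guesses, all with $\kappa = O(\ksi(\ma))$, yields the claimed runtime. The parameter $\delta$ enters only through polylogarithmically many repetitions of the randomized primitives — the sketches used to estimate quadratic forms and the spectral range of $\ms_w$ — contributing the $\textup{polylog}(n\,\ksi(\ma)/\delta)$ factor.

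The step I expect to be the main obstacle is controlling the width. A matrix-multiplicative-weights SDP solver applied naively to $\{\mm_i\}$ has iteration count scaling with $\max_i \norm{\mm_i}$ measured at the solution scale, which for a badly-scaled $\ma$ (rows of wildly varying norm) is unbounded and unrelated to $\ksi(\ma)$. Handling this needs a normalization step — reweighting each $\mm_i$ by a cheap proxy for its importance (a leverage-score or John-ellipsoid style rescaling, or an iteratively refined one) so that the effective width is $O(\ksi(\ma))$ — together with the fact that this reweighting alters neither the feasible cone $\{\sum_i w_i\mm_i : w \ge \vzero\}$ nor the optimal condition number by more than a constant. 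A secondary subtlety is error propagation: since $\ms_w^{-1}$, the quadratic forms, and the spectral-range estimates are available only approximately, one must show the feasibility certificates the solver outputs are robust to these perturbations, which is exactly what produces the polylogarithmic dependence on $n$, $\ksi(\ma)$, and $\delta^{-1}$. Once the width is controlled, the rest — verifying the returned $w$ satisfies $\kappa(\ma^\top\mw\ma) \le (1+\eps)\ksi(\ma)$ and accounting for the doubling search — is routine.
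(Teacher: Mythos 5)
Your reduction is exactly the paper's: set $\mm_i = a_ia_i^\top$ for the rows $a_i$ of $\ma$, note $\ma^\top\mw\ma = \sum_i w_i \mm_i$ so that feasibility of $\id \preceq \sum_i w_i\mm_i \preceq \kappa\id$ characterizes $\ksi(\ma)$, and wrap a doubling-then-geometric search around an approximate decision oracle (this is the proof of Theorem~\ref{thm:ksi} via Theorem~\ref{thm:mainid}). The $\tO(\kappa)\times\tO(\sqrt{\kappa}\cdot\tmv(\ma))$ cost accounting also matches. But three of your mechanisms diverge from what actually makes the argument go through. First, you leave the decision oracle itself as a black box; the paper's oracle is a matrix-multiplicative-weights loop whose gain matrix at each step is obtained by calling a pure packing SDP solver on $\max\{\langle \kappa v_t, w\rangle : w\ge 0,\ \sum_i w_i\mm_i \preceq \id\}$ with $v_t \approx (\langle\mm_i,\my_t\rangle)_i$ — the packing optimum dropping below $1-\eps/5$ is the infeasibility certificate, and the regret bound shows the average of the packing solutions is the feasible $w$. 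This reduction to packing is the central idea and is absent from your sketch. Second, in the identity-constraint case no applications of $\ms_w^{-1}$ or $\ms_w^{-1/2}$ are needed at all; the $\sqrt{\kappa}$ per iteration comes from the degree of the polynomial approximating $\exp(-\ms_t)$ when $\lmax(-\ms_t)$ can be as large as $\tO(\kappa)$ (used for the MMW gradients and the power-method stopping test). Your alternative — CG on $\ms_w$ kept in an $O(\kappa)$ condition-number window — is not something the algorithm maintains: the MMW iterate $-\ms_t$ starts at $\mzero$ and can be arbitrarily ill-conditioned until termination, which is precisely what the $\lmin$ test detects. Third, the width obstacle you worry about dissolves trivially: since each $w_i$ is a free nonnegative scalar, rescaling $\mm_i$ by $\|a_i\|_2^{-2}$ (a constant-factor top-eigenvalue estimate via the power method in general) leaves the feasible cone unchanged, so no leverage-score or John-ellipsoid normalization is needed; the residual width $\kappa$ is simply paid in the $\tO(\kappa)$ iteration count, which you already budget.
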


Our methods pay a small polynomial overhead in the quantities $\kso$ and $\ksi$, but notably suffer no dependence on the \emph{original conditioning} of the matrices. Typically, the interesting use case for diagonal preconditioning is when $\kso(\mk)$ or $\ksi(\ma)$ is small but $\kappa(\mk)$ or $\kappa(\ma^\top \ma)$ is large, a regime where our runtimes are near-linear and substantially faster than directly applying iterative methods.

It is worth noting that in light of our new results on Jacobi preconditioning, the end-to-end runtime of Theorem~\ref{thm:ksoinformal} specifically for solving linear systems (rather than optimal preconditioning) can be improved: accelerated gradient methods on a preconditioned system with condition number $(\kso)^2$ have runtimes scaling as $\kso$. That said, when repeatedly solving multiple systems
in the same matrix, Theorem~\ref{thm:ksoinformal} may offer an advantage over Jacobi preconditioning. Our framework also gives a potential route to achieve the optimal end-to-end runtime scaling as $\sqrt{\kso}$, detailed in Appendix~\ref{app:mpc}.

\paragraph{Statistical aspects of preconditioning.} Unlike an outer scaling, a good inner scaling does not speed up a least squares regression problem $\min_x \|\ma x - b\|_2$. Instead, it allows for a faster solution to the reweighted problem $\min_x \|\mw^{\half}(\ma x - b)\|_2$. This has a number of implications from a statistical perspective. We explore an interesting connection between inner scaling preconditoning and \emph{semi-random} noise models for least-squares regression, situated in the literature in Section~\ref{ssec:prior}. 

As a motivating example of our noise model, consider the case when there is a hidden parameter vector $\xtrue \in \R^d$ that we want to recover, and we have a ``good'' set of consistent observations $\ma_g \xtrue = b_g$, in the sense that $\kappa(\ma_g^\top \ma_g)$ is small. Here, we can think of $\ma_g$ as being drawn from a well-conditioned distribution. Now, suppose an adversary gives us a superset of these observations $(\ma, b)$ such that $\ma \xtrue = b$, and $\ma_g$ are an (unknown) subset of rows of $\ma$, but $\kappa(\ma^\top\ma) \gg \kappa(\ma_g^\top \ma_g)$. This can occur when rows are sampled from heterogeneous sources. Perhaps counterintuitively, by giving additional consistent data, the adversary can arbitrarily hinder the cost of iterative methods. This failure can be interpreted as being due to overfitting to generative assumptions (e.g., sampling rows from a well-conditioned covariance, instead of a mixture): standard iterative methods assume too much structure, where ideally they would use as little as information-theoretically possible.

Our inner scaling methods can be viewed as ``robustifying'' linear system solving to such semi-random noise models (by finding $\mw$ yielding a rescaled condition number comparable or better than the indicator of the rows of $\ma_g$, which are not known a priori). In Section~\ref{sec:semirandom}, we demonstrate applications of inner scaling in reducing the mean-squared error risk in statistical regression settings encompassing our semi-random noise model, where the observations $b$ are corrupted by (homoskedastic or heteroskedastic) noise. In all settings, our preconditioning algorithms yield computational gains, improved risk bounds, or both, by factors of roughly $\kappa(\ma^\top\ma)/\kappa^\star_i(\ma)$.

\subsection{Robust linear algebra for structured matrices}\label{ssec:laplacian_intro}

Over the past decade, the theoretical computer science and numerical linear algebra communities have dedicated substantial effort to developing solvers for regression problems in various families of combinatorially-structured matrices. Perhaps the most prominent example is \cite{SpielmanT04}, which gave a near-linear time solver for linear systems in graph Laplacian matrices.\footnote{We formally define the structured families of matrices we study in Section~\ref{sec:recovery}.}  A long line of exciting work has obtained improved solvers for these systems \cite{KoutisMP10, KoutisMP11, KelnerOSZ13, LeeS13, CohenKMPPRX14, PengS14, KyngLPSS16, KyngS16, JambulapatiS21}, which have been used to improve the runtimes for a wide variety of graph-structured problems, including maximum flow \cite{ChristianoKMST11, Madry13, LeeS14}, sampling random spanning trees \cite{KelnerM09, MadryST15, DurfeeKPRS17,Schild18}, graph clustering \cite{SpielmanT04, OrecchiaV11, OrecchiaSV12}, and more \cite{DaitchS08, KyngRSS15, CohenMSV17, CohenMTV17}. Additionally, efficient linear system solvers have been developed for solving systems in other types of structured matrices, e.g., block diagonally dominant systems \cite{KyngLPSS16}, M-matrices \cite{AhmadinejadJSS19}, and directed Laplacians \cite{CohenKPPSV16, CohenKPPRSV17, CohenKKPPRS18}.

\paragraph{Perturbations of structured matrices.} Despite the importance of these families of matrices with combinatorial structure, the solvers developed in prior work are in some ways quite brittle. In particular, there are several simple classes of matrices closely related to Laplacians for which the best-known runtimes for solving linear systems are achieved by ignoring the structure of the problem, and using generic matrix multiplication techniques as a black box. Perhaps the simplest example is solving systems in \emph{perturbed Laplacians}, i.e., matrices which admit constant-factor approximations by a Laplacian matrix, but which are not Laplacians themselves. This situation can arise when a Laplacian is used to approximate a physical phenomenon \cite{BomanHV08}. As an illustration of the techniques we develop, we give the following perturbed Laplacian solver.

\begin{restatable}[Perturbed Laplacian solver, informal, see Theorem~\ref{thm:perturbedsolver}]{theorem}{restateperturbedsolverinformal}
	\label{thm:perturbedsolverinformal}
	Let $\mm \succeq \mzero \in \mathbb{R}^{n \times n}$ be such that there exists an (unknown) Laplacian $\lap$ with $\mm \preceq \lap \preceq \kappa^\star \mm$, and that $\lap$ corresponds to a graph with edge weights between $w_{\min}$ and $w_{\max}$, with $\frac{w_{\max}}{w_{\min}} \le U$. For any $\delta \in (0, 1)$ and $\eps > 0$, there is an algorithm recovering a Laplacian $\lap'$ with $\mm \preceq \lap' \preceq (1 + \eps)\kappa^\star\mm$ with probability $\ge 1 - \delta$ in time
	\[O\Par{n^2 \cdot \ksp^2 \cdot \textup{poly}\Par{\frac{\log\frac{n\ks U}{\delta}}{\epsilon}}}.\]
	Consequently, there is an algorithm for solving linear systems in $\mm$ to $\eps$-relative accuracy with probability $\ge 1 - \delta$, in time $O(n^2 \cdot \ksp^2 \cdot \textup{polylog}\Par{\frac{n\ks U}{\delta\eps}}).$\footnote{See \eqref{eq:tmatdef} and the following discussion for the definition of solving to relative accuracy.}
\end{restatable}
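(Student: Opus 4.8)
The plan is to realize this as an instance of \emph{matrix-dictionary recovery} and then invoke the general solver developed in our framework. Take as the dictionary the collection of all single-edge Laplacians $\mm_e = b_e b_e^\top$ for every pair $e = \{u,v\} \subseteq [n]$, where $b_e = \one_u - \one_v$; then \emph{every} nonnegative combination $\sum_e w_e \mm_e$ is automatically a graph Laplacian, so it suffices to produce nonnegative $w$ with $\mm \preceq \sum_e w_e \mm_e \preceq (1+\eps)\ks \mm$, after which we output $\lap' = \sum_e w_e \mm_e$. The hypothesis guarantees this matrix-dictionary approximation SDP is feasible with ratio $\ks$: the edge weights of the promised Laplacian $\lap$ furnish such a $w$. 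Since $\ks$ is not known a priori, I would wrap the recovery routine in a geometrically increasing sequence of guesses $\kappa = 1, 2, 4, \dots$, returning the first that succeeds; this costs only an $O(\log \ks)$ overhead and loses at most another constant (absorbable into $\eps$) in the ratio. The bounds $w_{\min}, w_{\max}$ play an analogous role: after rescaling so that $w_{\min} = 1$ (hence $w_{\max} \le U$), we may restrict the search for each $w_e$ to a range whose logarithm is $\poly\log(n\ks U/\eps)$.

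Two structural facts make the reduction efficient. First, the shared kernel: $\mm \preceq \lap \preceq \ks\mm$ forces $\ker(\mm)=\ker(\lap)$, and any feasible $\lap'$ likewise has $\ker(\lap') = \ker(\mm)$, so all matrices act on the same subspace (the complement of $\one$ when the underlying graph is connected) and the positive-semidefinite version of the framework applies after restricting there. Second, and crucial for the \emph{width} of the SDP, every dictionary element is spectrally dominated by $\mm$ up to a $\poly(n,\ks,U)$ factor: for a connected graph with minimum edge weight $1$ the effective resistance $R_e = b_e^\top \lap^{\dagger} b_e$ between any two vertices is at most $n-1$, and a Cauchy--Schwarz argument in the $\lap$-norm gives $b_e b_e^\top \preceq R_e \lap \preceq (n-1)\ks\mm$ for \emph{all} $e$, including ``phantom'' edges absent from $\lap$. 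Hence the dynamic range the solver must handle is $\poly(n,\ks,U)$, which is exactly what produces the $\log(n\ks U/\delta)$ factors in the runtime. The per-iteration cost is then governed by: (i) multiplying a vector by $\mm$, costing $O(\nnz(\mm)) = O(n^2)$; (ii) forming all quadratic forms $b_e^\top v$ and the product $\sum_e w_e b_e b_e^\top v$, each $O(n^2)$; and (iii) solving a linear system in the current iterate $\sum_e w_e b_e b_e^\top$ --- but this iterate is itself a Laplacian, so a near-linear-time Laplacian solver handles it in $\tO(n^2)$ time. Substituting these costs and feasibility ratio $\ks$ into the general recovery guarantee yields the claimed $O(n^2 \cdot \ks^2 \cdot \poly(\eps^{-1}\log(n\ks U/\delta)))$ bound for producing $\lap'$.

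Given $\lap'$, the linear-system solver for $\mm$ follows by preconditioning: since $\mm \preceq \lap' \preceq (1+\eps)\ks\mm$, preconditioned Chebyshev iteration (or preconditioned conjugate gradient) solves $\mm x = b$ to $\eps$-relative accuracy in $O(\sqrt{\ks}\log(\ks/\eps))$ iterations, each requiring one product with $\mm$ ($O(n^2)$) and one approximate solve in $\lap'$ via a Laplacian solver ($\tO(n^2)$). The errors from the inexact inner solves are controlled by the standard analysis of preconditioned iterative methods, running each inner Laplacian solve to accuracy $\poly(\eps/\ks)$. This wrap-up costs only $\tO(n^2\sqrt{\ks})$, dominated by the recovery step, so the overall linear-system runtime matches the recovery runtime up to logarithmic factors.

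The hard part, beyond invoking the framework as a black box, is the width control: one must verify that all $\binom{n}{2}$ dictionary elements are polynomially bounded relative to $\mm$ --- which is precisely where the edge-weight ratio $U$ enters --- and confirm that the framework's complexity depends only logarithmically on this width and on the weight range. A secondary point requiring care is that the iterates remain exact Laplacians, so that fast Laplacian solvers legitimately apply at each step, and that the accumulated error from the nested approximate solves does not degrade the final spectral guarantee; both follow from closure of nonnegative edge-Laplacian combinations together with a routine error-propagation argument.
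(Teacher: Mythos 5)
Your proposal is correct and follows essentially the same route as the paper: reduce to \recovery{} over the dictionary $\{b_e b_e^\top\}_{e \in \binom{V}{2}}$, invoke the general solver (Theorem~\ref{thm:mainb}) with the conditioning parameters bounded by $\poly(n,U)$ (your effective-resistance argument is just a more explicit version of the paper's citation that unweighted Laplacians have $\poly(n)$-bounded condition number on the complement of $\1$), use a near-linear-time Laplacian solver for the inner systems since the iterates are themselves Laplacians, and finish with preconditioned acceleration using $\lap'$ to solve in $\mm$. The only cosmetic difference is that the paper folds the incremental search over $\kappa^\star$ into its Theorem~\ref{thm:mainid} rather than wrapping it around the reduction.
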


Theorem~\ref{thm:perturbedsolverinformal} can be viewed as solving a preconditioner construction problem, where we know there exists a Laplacian matrix $\lap$ which spectrally resembles $\mm$, and wish to efficiently recover a Laplacian with similar guarantees. Our matrix-dictionary recovery framework captures the setting of Theorem~\ref{thm:perturbedsolverinformal} by leveraging the matrix-dictionary consisting of the $O(n^2)$ matrices
\[b_e b_e^\top \in \R^{n \times n},\]
where $b_e$ is the $2$-sparse signed vector corresponding to an edge in an $n$-vertex graph (see Section~\ref{sec:prelims}). 
The conceptual message of Theorem~\ref{thm:perturbedsolverinformal} is that near-linear time solvers for Laplacians robustly extend through our preconditioning framework to efficiently solve matrices approximated by Laplacians. Beyond this specific application, our framework could be used to solve perturbed generalizations of future families of structured matrices.

\paragraph{Recovery of structured matrices.} In addition to directly spectrally approximating and solving in matrices which are well-approximated by preconditioners with diagonal or combinatorial structure, our framework also yields solvers for new families of matrices. We show that our preconditioning techniques can be used in conjunction with properties of graph-structured matrices to provide solvers and spectral approximations for \emph{inverse M-matrices} and \emph{Laplacian pseudoinverses}. Recovering Laplacians from their pseudoinverses and solving linear systems in the Laplacian pseudoinverse arise when trying to fit a graph to data or recover a graph from effective resistances, a natural distance measure (see \cite{effResRecArx18} for motivation and discussion of related problems). More broadly, the problem of solving linear systems in inverse symmetric M-matrices is prevalent and corresponds to statistical inference problems involving distributions that are multivariate totally positive of order 2 ($\mathrm{MTP}_2$) \cite{KARLIN1983419,mm14,fallat2017}. Our main results are the following.

\begin{restatable}[M-matrix recovery and inverse M-matrix solver, informal, see Theorem~\ref{thm:mmatrix}]{theorem}{restatemmatrixinformal}
	\label{thm:mmatrixinformal}
	Let $\mm$ be the inverse of an unknown invertible symmetric M-matrix, let $\kappa$ be an upper bound on its condition number, and let $U$ be the ratio of the largest to smallest entries of $\mm \1$.\footnote{By Lemma~\ref{lem:posminv} in Appendix~\ref{sec:appproofs}, the vector $\mm \1$ is entrywise positive.} For any $\delta \in (0, 1)$ and $\eps > 0$, there is an algorithm recovering a $(1 + \eps)$-spectral approximation to $\mm^{-1}$ in time 
	\[O\Par{n^2 \cdot \textup{poly}\Par{\frac{\log \frac{n\kappa U}{\delta}}{\eps}}}.\]
	Consequently, there is an algorithm for solving linear systems in $\mm$ to $\eps$-relative accuracy with probability $\ge 1 - \delta$, in time $O(n^2 \cdot \textup{polylog}\Par{\frac{n\kappa U}{\delta\eps}})$.
\end{restatable}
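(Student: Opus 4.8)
The plan is to reduce inverse M-matrix recovery to an instance of \matrixdictionary{} recovery over the edge-dictionary $\{b_e b_e^\top\}$ (plus, possibly, diagonal dictionary elements $e_i e_i^\top$ to handle the ``grounding'' vertex), exactly as sketched after Theorem~\ref{thm:perturbedsolverinformal}. The first step is a structural observation about symmetric M-matrices: if $\mathbf N$ is an invertible symmetric M-matrix, then $\mathbf N$ is a \emph{graph-Laplacian-plus-nonnegative-diagonal} matrix, i.e.\ $\mathbf N = \lap + \md$ where $\lap$ is a graph Laplacian with nonnegative edge weights and $\md \succeq \mzero$ is diagonal; equivalently $\mathbf N$ is an $n\times n$ symmetric diagonally dominant (SDD) matrix with nonpositive off-diagonals. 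This is classical, but I would want to invoke the precise version where the ``excess'' diagonal is governed by $\mathbf N \1$ — which is where the parameter $U$ (ratio of largest to smallest entries of $\mm\1 = \mathbf N^{-1}\1$, all positive by Lemma~\ref{lem:posminv}) enters. Since $\mathbf N \succ \mzero$ with $\kappa(\mathbf N)\le\kappa$, and since the SDD/Laplacian-plus-diagonal structure means $\mathbf N$ lies in the cone spanned by $\{b_e b_e^\top\}_{e}$ together with $\{e_ie_i^\top\}_i$, we have $\mathbf N = \sum_i w_i \mm_i$ for a nonnegative weight vector $w$ in this enlarged dictionary. Thus $\mm^{-1} = \mathbf N$ is expressible in the dictionary, and the \matrixdictionary{} recovery guarantees (the general solvers advertised in Section~\ref{ssec:framework_intro}) apply with target matrix $\mm^{-1}$.

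The second step is to check that the recovery problem is actually solvable in the claimed time, which requires two things. (i) We need to be able to apply the needed oracles — matrix–vector products, quadratic forms, and linear-system solves — against candidate combinations $\sum_i w_i \mm_i$ efficiently. But any nonnegative combination in our dictionary is itself SDD (Laplacian-plus-nonnegative-diagonal), so linear systems in it are solvable in near-linear time by an SDD/Laplacian solver \cite{SpielmanT04,KyngLPSS16}; matrix–vector products and quadratic forms are immediate since the combination is $O(n^2)$-sparse at worst. (ii) We need the access to $\mm$ assumed by the framework — here crucially we are \emph{given} $\mm$ (the inverse M-matrix, a dense $n\times n$ matrix), so $\tmv(\mm)$ is $O(n^2)$, and we can also apply $\mm^{-1} = \mathbf N$ to a vector in $O(n^2)$ time by forming $\mathbf N$ explicitly? — no: we do not know $\mathbf N$. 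Instead, to run the recovery framework on target $\mm^{-1}$ we only need to multiply by $\mm^{-1}$, which we cannot do directly; the resolution is the standard one — we instead recover a \matrixdictionary{} approximation to $\mm$ \emph{itself} in the appropriate sense, or equivalently run the framework so that it only queries $\mm$ (not $\mm^{-1}$). I would set this up so that the SDP being solved is ``find nonnegative $w$ with $\sum_i w_i\mm_i \approx \mm^{-1}$'' accessed only through $v \mapsto \mm v$ and $v\mapsto (\sum_i w_i\mm_i)^{-1}v$, both of which cost $\tO(n^2)$; this is precisely the regime the \matrixdictionary{} recovery solver is designed for, and the key point is that here the relevant ``condition number'' of the recovery SDP is $O(1)$ because the optimal scaling (namely $w$ corresponding to $\mathbf N$ itself) already achieves $\mathbf N = \mm^{-1}$ exactly, i.e.\ $\kappa^\star = 1$ for this instance — which is exactly why the $\mathrm{poly}(\kappa^\star)$ factor collapses and we get $n^2\cdot\mathrm{poly}(\eps^{-1}\log\frac{n\kappa U}{\delta})$ rather than a $\kappa$-dependent bound.

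The third step is to convert the recovered dictionary combination $\lap' := \sum_i w_i' \mm_i$ — which satisfies $\lap' \preceq \mm^{-1} \preceq (1+\eps)\lap'$ (or the symmetric two-sided $(1\pm\eps)$ version) — into the claimed solver for $\mm$. Given such a $\lap'$, which is SDD and hence admits a near-linear-time solver, we solve $\mm x = b$ by preconditioned iteration: the preconditioned operator $\lap'^{1/2}\mm\lap'^{1/2}$ (interpreting inverses on the appropriate subspace) has condition number $1+O(\eps)$, so $O(\log\frac1\eps)$ Richardson or Chebyshev iterations, each costing one multiply by $\mm$ ($O(n^2)$) and one solve in $\lap'$ ($\tO(n)$), yield an $\eps$-relative-accuracy solution; the total is $\tO(n^2)$ as claimed, absorbing the $\mathrm{polylog}$ factors. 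The ``$\eps$-relative accuracy'' notion is the one referenced around \eqref{eq:tmatdef}.

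I expect the main obstacle to be Step 2(ii): cleanly arranging the \matrixdictionary{} recovery framework so that it only ever needs to apply $\mm$ (which we have) and to solve systems in intermediate SDD matrices (which are cheap), rather than ever needing to apply $\mm^{-1}$ — and simultaneously arguing that the effective conditioning of the resulting SDP is a constant so that no $\mathrm{poly}(\kappa)$ factor appears in the runtime (only the $\log\kappa$, $\log U$ factors that enter through bit-complexity / accuracy of the SDP solve and the range of edge weights). A secondary but routine obstacle is pinning down the exact M-matrix structure lemma (the Laplacian-plus-diagonal decomposition and the role of $\mm\1>\mzero$), which I would relegate to the appendix alongside Lemma~\ref{lem:posminv}.
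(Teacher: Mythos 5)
Your proposal has the right outer shape (edge-plus-diagonal dictionary, $\kappa^\star=1$ for the recovery SDP, final preconditioned iteration), and you correctly flag the crux in your Step 2(ii) --- but you do not resolve it, and the paper's proof consists essentially of resolving exactly that obstacle, plus a structural lemma you are missing. First, your structural claim is false as stated: an invertible symmetric M-matrix need not be SDD. Take $\mn = s\id - \ma$ where $\ma$ is the adjacency matrix of a star on $n$ vertices and $s = \sqrt{n-1}+1 > \rho(\ma)$; the center row violates diagonal dominance for $n \ge 5$. The correct statement (Lemma~\ref{lemma:mfact1}) is that $\mx \mn \mx$ is an SDD Z-matrix for $\mx = \diag{\mn^{-1}\1} = \diag{\mm\1}$, so the algorithm's first move is to compute $\mm\1$ in $O(n^2)$ time and conjugate; this is also where $U$ enters, as it bounds $\kappa(\mx)$. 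Without the conjugation the target need not lie in the cone of your dictionary at all. Relatedly, your fallback of "recovering a dictionary approximation to $\mm$ itself" is a dead end: nonnegative combinations of $b_eb_e^\top$ and $e_ie_i^\top$ have nonpositive off-diagonals, whereas $\mm$ has strictly positive entries by Lemma~\ref{lem:posminv}.

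Second, and more seriously, the circularity you identify is not "standard" and cannot be waved away. Theorem~\ref{thm:mainb} requires matrix-vector (indeed approximate inverse-square-root) access to the constraint matrix, which here is $(\mx^{-1}\mm\mx^{-1})^{-1}$ --- i.e., it requires solving systems in the very matrix we are trying to construct a preconditioner for. Doing so without a preconditioner costs $\Omega(\sqrt{\kappa})$ iterations and would reintroduce a $\textup{poly}(\kappa)$ factor, contradicting the claimed runtime. The paper breaks the circularity with a second homotopy: it recovers dictionary representations of $\mb_i \defeq (\mx^{-1}\mm\mx^{-1} + \lam_i\id)^{-1}$ for $\lam_i = \lam_0/2^i$, where $\lam_0$ is large enough that $\lam_0^{-1}\id$ is a constant-factor spectral approximation of $\mb_0$, and the phase-$i$ recovered SDD matrix serves (via Proposition~\ref{prop:precondagd}) as the preconditioner giving cheap access to $\mb_{i+1}$. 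This needs a second structural fact, Lemma~\ref{lemma:mfact2}: if $\ma$ is an invertible SDD Z-matrix then so is $(\ma^{-1}+\alpha\id)^{-1}$ for any $\alpha \ge 0$, so every intermediate target remains exactly representable in the dictionary ($\kappa^\star=1$ at every level), and only $K = O(\log\frac{\kappa U}{\eps})$ phases are needed, which is where the $\log(\kappa U)$ factors in the runtime come from. Without the conjugation, the regularization path, and Lemma~\ref{lemma:mfact2}, your argument does not go through.
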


\begin{restatable}[Laplacian recovery and Laplacian pseudoinverse solver, informal, see Theorem~\ref{thm:laplacian}]{theorem}{restateinvlapinformal}
	\label{thm:laplacianinformal}
	Let $\mm$ be the pseudoinverse of unknown Laplacian $\mlap$, and that $\mlap$ corresponds to a graph with edge weights between $w_{\min}$ and $w_{\max}$, with $\frac{w_{\max}}{w_{\min}} \le U$. For any $\delta \in (0, 1)$ and $\eps > 0$, there is an algorithm recovering a Laplacian $\mlap'$ with $\mm^{\dagger} \preceq \mlap' \preceq (1 + \eps) \mm^{\dagger}$ in time
	\[O\Par{n^2 \cdot \textup{poly}\Par{\frac{\log \frac{nU}{\delta}}{\eps}}}.\]
	Consequently, there is an algorithm
	for solving linear systems in $\mm$ to $\eps$-relative accuracy with probability $\ge 1 - \delta$, in time $O(n^2 \cdot \textup{polylog}\Par{\frac{nU}{\delta\eps}})$.
\end{restatable}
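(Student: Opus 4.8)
The plan is to cast this as an instance of \matrixdictionary{} recovery and invoke the general framework of Section~\ref{ssec:framework_intro}, with the only nontrivial new work being the implementation of the oracles it requires given access only to the dense matrix $\mm=\mlap^{\dagger}$. As in Theorem~\ref{thm:perturbedsolverinformal}, I would use the matrix-dictionary of $O(n^2)$ edge matrices $b_eb_e^{\top}$. Assuming WLOG that the graph of $\mlap$ is connected, $\mm$ is positive definite on $\vones^{\perp}$ with $\mm^{\dagger}=\mlap$ there, and the target $\mm^{\dagger}=\mlap=\sum_e\tilde w_eb_eb_e^{\top}$ is an \emph{exact} nonnegative combination of the dictionary — so the associated \sdp{} has optimal value $1$ (in contrast with Theorem~\ref{thm:perturbedsolverinformal}, where it is $\ks$), and this is precisely why the runtime carries no extra $\ks$ factor. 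Feeding this instance to the framework yields $w\ge\vzero$ with (after rescaling by the known normalizer) $\mm^{\dagger}\preceq\sum_ew_eb_eb_e^{\top}\preceq(1+\eps)\mm^{\dagger}$ on $\vones^{\perp}$; since this pins the rank to $n-1$ while still annihilating $\vones$, the output $\mlap':=\sum_ew_eb_eb_e^{\top}$ is the desired connected Laplacian. The overall structure parallels Theorem~\ref{thm:mmatrixinformal}, whose dictionary additionally carries the $e_ie_i^{\top}$ to absorb an M-matrix's diagonal excess.

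The framework requires (i) matrix–vector products and quadratic forms against the dictionary, both $O(n^2)$ in aggregate since $b_e^{\top}x=x_i-x_j$; (ii) matrix–vector products with $\mm$, which are $O(n^2)$ because $\mm$ is given densely (the source of the $n^2$ in the bound); and (iii) a way to apply the target's pseudoinverse or square root, in order to measure SDP progress — and here the target is $\mm^{\dagger}$, so this oracle amounts to \emph{solving linear systems in $\mm$}, the very capability the theorem is meant to deliver. Breaking this apparent circularity is the step I expect to be the crux. The leverage is that every iterate $\mn=\sum_ew_eb_eb_e^{\top}$ is itself a Laplacian, so systems in it are solvable in near-linear time \cite{SpielmanT04}; thus a current iterate that is already a constant- (or polylog-) factor spectral approximation of $\mm^{\dagger}$ can precondition the next batch of applications of $\mm^{\dagger}$, and one need only seed this process with a crude initial approximation — obtainable from the exact effective-resistance data $b_e^{\top}\mm b_e$ carried by $\mm$ (cf.\ \cite{effResRecArx18}) — and then refine geometrically. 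Organizing the recovery into $O(\log(nU))$ such refinement rounds (the $\log U$ entering through the dynamic range $[w_{\min},w_{\max}]$ over which weights must be searched and controlled) keeps each application of $\mm^{\dagger}$ at amortized $O(n^2)$ cost, while the randomized sparsification of iterates and trace/quadratic-form estimation internal to the framework, together with a union bound over rounds, supply the $\log(n/\delta)$ factor and the $1-\delta$ success probability. The delicate part is quantitatively controlling how the inexact inner solves propagate so that the final two-sided sandwich $\mm^{\dagger}\preceq\mlap'\preceq(1+\eps)\mm^{\dagger}$ still holds, and doing so without incurring a $\sqrt{\kappa(\mm)}$-type overhead (which a naive $\mm^{1/2}$ or $\mm^{\dagger}$ oracle would).

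For the solver corollary, given $\mlap'$ with $\mm^{\dagger}\preceq\mlap'\preceq(1+\eps)\mm^{\dagger}$, taking pseudoinverses on $\vones^{\perp}$ gives $\tfrac{1}{1+\eps}\mm\preceq(\mlap')^{\dagger}\preceq\mm$, so $\mlap'$ is a $(1+\eps)$-quality preconditioner for $\mm$: one solves $\mm x=b$ by the preconditioned iteration $x_{k+1}=x_k-\mlap'(\mm x_k-b)$ restricted to $\vones^{\perp}$ (or its Chebyshev-accelerated variant), which converges geometrically at a rate depending only on $\eps$ and hence reaches $\eps$-relative accuracy in $O(\log(1/\eps))$ steps; each step costs one $\mm$-multiply ($O(n^2)$) plus one multiply by the Laplacian $\mlap'$ ($O(n^2)$, or near-linear after sparsifying $\mlap'$), giving the claimed $O(n^2\,\mathrm{polylog}(nU/(\delta\eps)))$ solve time.
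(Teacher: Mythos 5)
Your setup is right — the edge dictionary, the observation that $\kappa^\star=1$ because $\mm^\dagger=\mlap$ is an exact nonnegative combination, the reduction to the connected case, and the final preconditioned-iteration corollary all match the paper — and you correctly locate the crux: the framework (Theorem~\ref{thm:mainb}) needs matrix-vector products with the constraint matrix $\mb=\mm^\dagger$, which is exactly the solving capability being constructed. But your proposed resolution of that circularity has a genuine gap. A Laplacian seeded from the effective resistances $b_e^\top\mm b_e$ is only a $\mathrm{poly}(n)$-factor spectral approximation of $\mlap$ in general (each candidate edge satisfies $b_eb_e^\top\preceq R_{\mathrm{eff}}(e)\,\mlap$, so the aggregate guarantee degrades with the number of terms), and using a $\gamma$-factor preconditioner to apply $\mm^\dagger$ via Proposition~\ref{prop:precondagd} costs $\sqrt{\gamma}$ iterations of $O(n^2)$ work each. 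With $\gamma=\mathrm{poly}(n)$ this already exceeds the $\tO(n^2)$ budget on the very first round, and you never specify the mechanism by which a ``refinement round'' improves $\gamma$ — running the recovery framework with a $\gamma$-accurate oracle does not obviously produce a $\gamma/2$- or $\sqrt{\gamma}$-accurate Laplacian, so ``refine geometrically'' is doing unjustified work.

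The paper's resolution (Theorem~\ref{thm:laplacian}, via the template of Theorem~\ref{thm:mmatrix}) is a homotopy in an explicit additive regularizer rather than in the approximation quality: it recovers Laplacian approximations to $\mb_k=(\mm+\lam_k\mproj)^\dagger$ for $\lam_k=\lam_0/2^k$, $k=0,\dots,K=O(\log\frac{nU}{\eps})$. The seed is trivial ($\lam_0^{-1}\mproj$ is a constant-factor approximation to $\mb_0$ for large $\lam_0$), consecutive targets satisfy $\mb_{k+1}\preceq 2\,\mb_k$ up to constants, so the Laplacian recovered at level $k$ preconditions the application of $\mb_{k+1}=(\mm+\lam_{k+1}\mproj)^\dagger$ — whose inverse is an \emph{explicit} dense matrix — at only $O(1)$ iteration overhead, keeping every level at $\tO(n^2)$. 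Two ingredients you would need and do not supply: Lemma~\ref{lem:maipseudo}, which shows each intermediate target $(\mm+\alpha\mlap_{K_n})^\dagger$ is itself a Laplacian (so $\kappa^\star=1$ is preserved at every level, not just at the end), and the $\mathrm{poly}(n,U)$ bound on $\kappa(\mm)$ restricted to $\vones^\perp$, which is what caps the number of homotopy levels at $O(\log\frac{nU}{\eps})$ and is where the $\log U$ actually enters (not through a ``search over the dynamic range of weights'' as you suggest).
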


Theorems~\ref{thm:mmatrixinformal} and~\ref{thm:laplacianinformal} are perhaps a surprising demonstration of the utility of our techniques: just because a matrix family is well-approximated by structured preconditioners, it is not a priori clear that their inverses also are. However, we show that by applying recursive preconditioning tools in conjunction with our recovery methods, we can obtain analogous results for these inverse families. These results add to the extensive list of combinatorially-structured matrix families admitting efficient linear algebra primitives. We view our approach as a proof-of-concept of further implications in designing near-linear time system solvers for structured families via algorithms for \eqref{eq:metaproblem}.

\subsection{Our framework: matrix-dictionary recovery}\label{ssec:framework_intro}

Our general strategy for matrix-dictionary recovery, i.e., recovering preconditioners in the sense of \eqref{eq:metaproblem}, is via applications of a new custom approximate solver we develop for a family of structured SDPs. SDPs are fundamental optimization problems that have been the source of extensive study for decades \cite{VandenbergheB96}, with numerous applications across operations research and theoretical computer science \cite{GoemansW95}, statistical modeling \cite{WolkowiczSV00, GartnerM00}, and machine learning \cite{RaghunathanSL18}. %\sidford{a ref for each area would be great if someone has time.} % was using textbooks as catch-alls for this, didn't have more specific ideas so will skip this for now
Though there have been recent advances in solving general SDPs (e.g., \cite{JiangKLP020, HuangJ0T022} and references therein), the current state-of-the-art solvers have superlinear runtimes, prohibitive in large-scale applications. Consequently, there has been extensive research on designing faster approximate SDP solvers under different assumptions \cite{KalaiV05, WarmuthK06, AroraK07, BaesBN13, GarberHM15, Allen-ZhuL17, CarmonDST19}. 

We now provide context for our solver for structured ``matrix-dictionary approximation'' SDPs, and state our main results on solving them.

\paragraph{Positive SDPs.}  One prominent class of structured SDPs are what we refer to as \emph{positive SDPs}, namely SDPs in which the cost and constraint matrices are all positive semidefinite (PSD), a type of structure present in many important applications \cite{GoemansW95, AroraRV09, JainJUW11, Lee017, ChengG18, ChengDG19, CherapanamjeriF19, CherapanamjeriM20}. This problem generalizes positive linear programming (also referred to
as ``mixed packing-covering linear programs''), a well-studied problem over the past several decades \cite{LubyN93, PlotkinST95, Young01, MahoneyRWZ16, Allen-ZhuO19}. It was recently shown that a prominent special case of positive SDPs known as \emph{packing SDPs} can be solved in nearly-linear time \cite{Allen-ZhuLO16, PengTZ16}, a fact that has had numerous applications in robust learning and estimation \cite{ChengG18, ChengDG19, CherapanamjeriF19, CherapanamjeriM20} as well as in combinatorial optimization \cite{Lee017}. 
%Moreover, in full generality the class of positive semidefinite programs captures classical relaxations for problems such as maximum and sparsest cut \cite{GoemansW95, AroraRV09} and is related to results in quantum complexity theory \cite{JainJUW11}.  %Chris: This sentence seemed out of place here. If we want ot keep it, it should iether go earlier in the paragraph or at then end. I don't think it's strictly necessary though.
However, extending known packing SDP solvers to broader classes of positive SDPs, e.g., covering or mixed packing-covering SDPs has been elusive \cite{JainY12, JambulapatiLLPT20}, and is a key open problem in the algorithmic theory of structured optimization.\footnote{A faster solver for general positive (mixed packing-covering) SDPs was claimed in \cite{JambulapatiLLPT20}, but an error was later discovered in that work, as is recorded in the most recent arXiv version \cite{JambulapatiLLPT21}.} 
We use the term positive SDP in this paper to refer to the fully general mixed packing-covering SDP problem, parameterized by ``packing'' and ``covering'' matrices $\{\mpack_i\}_{i \in [n]}, \mpack, \{\mcov_i\}_{i \in [n]}, \mcov \in \PSD^d$,
and asks to find\footnote{This is the optimization variant; the corresponding decision variant asks to test if for a given $\mu$, \eqref{eq:mpcmostgen} is feasible.} the smallest $\mu > 0$ such that there exists $w \in \R^n_{\ge 0}$ with
\begin{equation}\label{eq:mpcmostgen}\sum_{i \in [n]} w_i \mpack_i \preceq \mu \mpack,\; \sum_{i \in [n]} w_i \mcov_i \succeq \mcov.\end{equation}
By redefining $\mpack_i \gets \frac 1 \mu \mpack^{-\half} \mpack_i \mpack^{-\half}$ and $\mcov_i \gets \mcov^{-\half} \mcov_i \mcov^{-\half}$ for all $i \in [n]$, the optimization problem in \eqref{eq:mpcmostgen} is equivalent to testing whether there exists $w \in \R^n_{\ge 0}$ such that 
\begin{equation}\label{eq:mpcid}\sum_{i \in [n]} w_i \mpack_i \preceq \sum_{i \in [n]} w_i \mcov_i.\end{equation}
The formulation \eqref{eq:mpcid} was studied by \cite{JainY12, JambulapatiLLPT20}, and an important open problem in structured convex programming is designing a ``width-independent'' solver for testing the feasibility of \eqref{eq:mpcid} up to a $1 + \eps$ factor (i.e., testing whether \eqref{eq:mpcid} is approximately feasible with an iteration count polynomial in $\eps^{-1}$ and polylogarithmic in other problem parameters), or solving for the optimal $\mu$ in \eqref{eq:mpcmostgen} to this approximation factor. Up to now, such width-independent solvers have remained elusive beyond pure packing SDPs \cite{Allen-ZhuLO16, PengTZ16, Jambulapati0T20}, even for basic extensions such as pure covering.

\paragraph{Matrix-dictionary approximation SDPs.} We develop an efficient solver for specializations of \eqref{eq:mpcmostgen} and \eqref{eq:mpcid} where the packing and covering matrices $\{\mpack_i\}_{i \in [n]}, \{\mcov_i\}_{i \in [n]}$, as well as the constraints $\mpack$, $\mcov$, are multiples of each other. As we will see, this structured family of SDPs, which we call \emph{matrix-dictionary approximation SDPs}, is highly effective for capturing the forms of approximation required by preconditioning problems. Many of our preconditioning results follow as careful applications of the matrix-dictionary approximation SDP solvers we give in Theorem~\ref{thm:mainidinformal} and Theorem~\ref{thm:mainbinformal}.

We develop efficient algorithms for the following special case of \eqref{eq:mpcmostgen}, \eqref{eq:mpcid}, the main meta-problem we study. Given a set of matrices (a ``matrix-dictionary'') $\{\mm_i\}_{i \in [n]} \in \PSD^d$, a constraint matrix $\mb$, and a tolerance parameter $\eps \in (0, 1)$, such that there exists a feasible set of weights $w^\star \in \R^n_{\ge 0}$ with
\begin{equation}\label{eq:structmpc}\mb \preceq \sum_{i \in [n]} w^\star_i \mm_i \preceq \kappa^\star \mb,\end{equation}
for some unknown $\kappa^\star \ge 1$, we wish to return a set of weights $w \in \R^n_{\ge 0}$ such that
\begin{equation}\label{eq:approx}\mb \preceq \sum_{i \in [n]} w_i \mm_i \preceq (1 + \eps) \kappa^\star \mb.\end{equation}
While this ``\recovery'' problem is a restricted case of \eqref{eq:mpcmostgen}, as we demonstrate, it is already expressive enough to capture many interesting applications. 

Our results concerning the problem \eqref{eq:structmpc} and \eqref{eq:approx} assume that the matrix-dictionary $\{\mm_i\}_{i \in [n]}$ is ``simple'' in two respects. First, we assume that we have an explicit factorization of each $\mm_i$ as
\begin{equation}\label{eq:vvt}\mm_i = \mv_i \mv_i^\top,\; \mv_i \in \R^{d \times m}.\end{equation}
Our applications in Sections~\ref{ssec:diag_intro} and~\ref{ssec:laplacian_intro} satisfy this assumption with $m = 1$. Second, denoting $\mat(w) \defeq \sum_{i \in [n]} w_i \mm_i$, we assume that we can approximately solve systems in $\mat(w) + \lam \id$ for any $w \in \R^n_{\ge 0}$ and $\lam \ge 0$. Concretely, for any $\eps > 0$, we assume there is a linear operator $\tmat_{w, \lam, \eps}$ which we can compute and apply in $\Tmat\cdot \log \frac 1 \eps$ time,\footnote{We use this notation because, if $\Tmat$ is the complexity of solving the system to constant error $c < 1$, then we can use an iterative refinement procedure to solve the system to accuracy $\epsilon$ in time $\Tmat\cdot \log \frac 1 \eps$ for any $\epsilon > 0$.} and that
$\tmat_{w, \lam, \eps} \approx (\mat(w) + \lam\id)^{-1}$ in that:
\begin{equation}\label{eq:tmatdef}\norm{\tmat_{w, \lam, \eps} v - \Par{\mat(w) + \lam \id}^{-1} v}_2 \le \eps \norm{\Par{\mat(w) + \lam \id}^{-1} v}_2 \text{ for all } v \in \R^d.\end{equation}
In this case, we say ``we can solve in $\mat$ to $\eps$-relative accuracy in $\Tmat \cdot \log \frac 1 \eps$ time.'' If $\mat$ is a single matrix $\mm$, we say ``we can solve in $\mm$ to $\eps$-relative accuracy in $\mathcal{T}^{\textup{sol}}_{\mm} \cdot \log \frac 1 \eps$ time.'' Notably, for the matrix-dictionaries in our applications, e.g., diagonal $1$-sparse matrices or edge Laplacians, such access to $\{\mm_i\}_{i \in [n]}$ exists so we obtain end-to-end efficient algorithms. Ideally (for near-linear time algorithms), $\mathcal{T}^{\textup{sol}}_{\mm}$ is roughly the total sparsity of $\{\mm_i\}_{i \in [n]}$, which holds in all our applications.

Under these assumptions, we prove the following main claims in Section~\ref{sec:solvers}. We did not heavily optimize logarithmic factors and the dependence on $\eps^{-1}$, as many of these factors are inherited from subroutines in prior work. For many applications, it suffices to set $\eps$ to be a sufficiently small constant, so our runtimes are nearly-linear for a natural representation of the problem under access to efficient solvers for the dictionary $\mat$. In several applications (e.g., Theorems~\ref{thm:ksoinformal} and \ref{thm:ksiinformal}) the most important parameter is the ``relative condition number'' $\kappa$, so we primarily optimized for $\kappa$.

\begin{restatable}[Matrix dictionary recovery, isotropic case, informal, see Theorem~\ref{thm:mainid}]{theorem}{restatemainidinformal}\label{thm:mainidinformal}
	Given matrices $\{\mm_i\}_{i \in [n]}$ with explicit factorizations \eqref{eq:vvt}, such that \eqref{eq:structmpc} is feasible for $\mb = \id$ and some $\kappa^\star \ge 1$, we can return weights $w \in \R^n_{\ge 0}$ satisfying \eqref{eq:approx} with probability $\ge 1 - \delta$ in time 
	\[O\Par{\tmv(\{\mv_i\}_{i\in[n]}) \cdot (\kappa^\star)^{1.5} \cdot\textup{poly}\Par{\frac{\log \frac{mnd\kappa^\star}{\delta}}{\eps}}}.\]
\end{restatable}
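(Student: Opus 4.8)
The plan is to reduce the recovery problem~\eqref{eq:structmpc}--\eqref{eq:approx} to a sequence of \emph{packing SDP} subproblems, which we can solve width-independently in nearly-linear time via known packing SDP solvers (e.g.\ \cite{Allen-ZhuLO16, PengTZ16, Jambulapati0T20}), combined with a binary search over the unknown scalar $\kappa^\star$. The key structural observation is that for a fixed guess $\mu$ of $\kappa^\star$, testing whether there exists $w \in \R^n_{\ge 0}$ with $\id \preceq \mat(w) \preceq \mu \id$ is \emph{not} itself a pure packing SDP (the lower bound $\id \preceq \mat(w)$ is a covering constraint), but after a change of variables it becomes one. Concretely, I would guess that an isotropic ``denominator'' matrix can be absorbed: given a current weight vector $w$ with $\mat(w) \succ \mzero$, precondition by $\mat(w)^{-1/2}$ and ask whether the rescaled dictionary $\{\mat(w)^{-1/2}\mm_i\mat(w)^{-1/2}\}$ can be packed into a small multiple of the identity while simultaneously covering it. This is where the solve access~\eqref{eq:tmatdef} to $\mat(w)+\lam\id$ enters: it lets us apply $\mat(w)^{-1/2}$ (and hence evaluate the rescaled packing/covering matrices) approximately in $\Tmat$ time, which by the $\tmv$ bound on the $\{\mv_i\}$ is nearly-linear.

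The core of the argument is an \emph{outer iteration} that I expect to be structured as multiplicative-weights / mirror-descent on the ``relative'' problem: maintain weights $w$, compute the extreme eigenvector(s) of the violated direction of $\mat(w)$ relative to $\id$, and update. The crucial feature we must establish is \emph{width independence}: the iteration count should be $\poly(\eps^{-1}, \log(\text{stuff}))$ and carry at most a $(\kappa^\star)^{1.5}$ multiplicative overhead, with \emph{no} dependence on the original conditioning of $\mb$ or of the $\mm_i$. The $(\kappa^\star)^{1.5}$ factor strongly suggests the analysis splits as $\kappa^\star$ (from an outer regret/potential argument, analogous to the number of phases or the range of the dual variables in packing SDPs) times $(\kappa^\star)^{1/2}$ (from an inner accelerated or Chebyshev-type subroutine, e.g.\ estimating $\mat(w)^{-1/2}v$ or an eigenvector via power/Lanczos-type iteration whose iteration count scales with the square root of a condition number bounded by $O(\kappa^\star)$). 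I would therefore organize the proof as: (i) a reduction lemma turning~\eqref{eq:structmpc}--\eqref{eq:approx} into $O(\log(\kappa^\star/\eps))$ calls to a decision subroutine $\DecideMPC$ for a fixed $\mu$; (ii) an implementation of $\DecideMPC$ via a width-independent packing/covering primitive whose per-iteration cost is a constant number of approximate solves in $\mat(w)+\lam\id$ plus a constant number of matrix-vector products with the $\{\mv_i\}$, i.e.\ $\tO(\tmv(\{\mv_i\}) + \Tmat)$ per iteration, and note that in this isotropic regime $\Tmat = \tO(\tmv(\{\mv_i\})\cdot(\kappa^\star)^{O(1)})$ by solving $\mat(w)$ via preconditioned gradient methods using the dictionary structure; (iii) a potential-function argument bounding the outer iteration count by $\tO((\kappa^\star)^{1.5}\poly(\eps^{-1}))$; and (iv) accounting for the accumulated errors from the approximate solves and eigenvector computations, showing they can be driven down to $\eps/\poly$ at only polylogarithmic cost, so the final $w$ satisfies~\eqref{eq:approx} with the stated probability (the randomization entering through Johnson--Lindenstrauss sketching of traces/eigenvectors, which is why the failure probability $\delta$ and the $\log(mnd\kappa^\star/\delta)$ factor appear).

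The main obstacle, and the place I expect the real work to be, is step~(ii)--(iii): getting a \emph{width-independent} handle on the mixed packing-covering structure. As the excerpt itself emphasizes, width-independent solvers for general mixed packing-covering SDPs are not known; the saving grace here is the special ``matrix-dictionary'' structure, where numerator and denominator share the same dictionary $\{\mm_i\}$ and the constraint is isotropic. I would exploit this by observing that along the central path the ratio $\lmax(\mat(w))/\lmin(\mat(w))$ is controlled by $\kappa^\star$ (since a feasible $w^\star$ exists with ratio $\kappa^\star$, and the algorithm never needs to leave an $O(\kappa^\star)$-conditioned region), so effectively the width is bounded \emph{a posteriori} by $\poly(\kappa^\star)$ rather than being unbounded — this is precisely the mechanism that converts ``width-dependent'' guarantees into the clean $(\kappa^\star)^{1.5}$ bound. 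A secondary technical point is handling the regularizer $\lam\id$: I would run the solver on $\mat(w)+\lam\id$ for a slowly decreasing schedule of $\lam$ (homotopy/path-following in $\lam$), using the previous solution to warm-start, so that every solve encountered is well-conditioned; bounding the number of $\lam$-phases by $\tO(\log(\kappa^\star/\eps))$ and the cost per phase as above then yields the claimed runtime. Finally, I would double-check that the explicit factorization~\eqref{eq:vvt} is used exactly where needed — namely to evaluate quadratic forms $v^\top \mm_i v = \norm{\mv_i^\top v}_2^2$ and to apply $\sum_i w_i \mm_i$ to a vector in $\tmv(\{\mv_i\})$ time — and that nothing in the analysis secretly needs dense access to the $\mm_i$.
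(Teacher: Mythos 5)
Your overall skeleton --- a binary/incremental search over $\kappa^\star$ wrapped around a decision tester, the tester built from a matrix-multiplicative-weights outer loop that calls a black-box packing SDP solver, randomization entering through Johnson--Lindenstrauss sketches of matrix exponentials, and a runtime splitting as $\kappa^\star$ outer iterations times a $\sqrt{\kappa^\star}$ per-iteration cost --- is indeed the paper's plan (Algorithm~\ref{alg:decidempc} together with Lemmas~\ref{lem:decidempccorrect},~\ref{lem:line6},~\ref{lem:line7},~\ref{lem:line13}, and Theorem~\ref{thm:mainid}). But the one idea that makes the reduction work is missing. You claim the covering constraint is absorbed by ``a change of variables'' that preconditions the dictionary by $\mat(w)^{-1/2}$; in the isotropic case no such change of variables appears, and it is not justified that it would yield a pure packing problem. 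What the paper actually does is maintain the MMW density matrix $\my_t = \exp(\ms_t)/\Tr\exp(\ms_t)$, call the packing oracle on the \emph{linearized} objective $\max\{\kappa\, v_t^\top x : \sum_i x_i \mm_i \preceq \id,\ x\ge 0\}$ with $v_t \approx (\inprod{\mm_i}{\my_t})_{i\in[n]}$, and observe that if the decision problem is feasible then, because $\Tr(\my_t)=1$, this packing value is at least $1/(1-\eps)$; hence every accepted gain matrix $\mg_t = \kappa\sum_i [x_t]_i\mm_i$ satisfies both $\mg_t \preceq \kappa\id$ and $\inprod{\mg_t}{\my_t}\ge 1-O(\eps)$. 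The regret bound of Proposition~\ref{prop:mmw} converts the latter into $\lmin(\frac 1T\sum_t \mg_t)\ge 1-O(\eps)$, while convexity gives $\sum_i\bx_i\mm_i\preceq\id$; together these certify \eqref{eq:approx}. Without this certification step (which also drives correctness of the ``no'' branch), your ``potential-function argument'' in step (iii) is a placeholder for the entire proof.

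Your sketch also imports machinery from the general-$\mb$ case that is neither needed nor correct here: the solve access \eqref{eq:tmatdef}, the application of $\mat(w)^{-1/2}$, and the homotopy in $\lam$ are precisely the ingredients of Theorem~\ref{thm:mainb} (where they cost an extra factor of $\sqrt{\kappa^\star}$ and $\mathrm{polylog}(\alpha\beta)$); for $\mb=\id$ the algorithm never solves a linear system. Relatedly, your accounting of the $(\kappa^\star)^{1.5}$ is internally inconsistent --- you attribute it once to $\kappa^\star$ outer iterations times a $\sqrt{\kappa^\star}$ inner cost, and once to an outer iteration count of $(\kappa^\star)^{1.5}$ --- and the $\sqrt{\kappa^\star}$ does not come from estimating $\mat(w)^{-1/2}v$. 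It is the degree $O(\sqrt{\kappa^\star}\cdot\mathrm{polylog})$ of the polynomial approximation to $\exp(-\ms_t)$ (Fact~\ref{fact:polyexp}, Corollary~\ref{cor:approxexp}), needed because $\lmax(-\ms_t)$ can reach $\Theta(\kappa^\star\log(d)/\eps)$, and it enters both the gradient estimates (Lemma~\ref{lem:line5left}) and the power-method estimate of $\lmin(-\ms_t)$ (Lemma~\ref{lem:line13}). Finally, the outer loop is width-\emph{dependent}: $T=O(\kappa^\star\log(d)/\eps^2)$ is forced by the step size $\eta=\Theta(\eps/\kappa^\star)$ required to normalize gains of norm up to $\kappa^\star$; the paper reserves genuine width-independence for the conjectural solver of Assumption~\ref{assume:mpc} in Appendix~\ref{app:mpc}.
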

Here $\tmv(\{\mv_i\}_{i\in[n]})$ denotes the computational complexity of multiplying an arbitrary vector by \emph{all} matrices in $\{\mv_i\}_{i\in[n]}$. As an example of the utility of Theorem~\ref{thm:mainidinformal}, letting the rows of $\ma \in \R^{n \times d}$ be denoted $\{a_i\}_{i \in [n]}$, a direct application with $\mv_i \gets a_i$, $\mm_i \gets a_ia_i^\top$ results in Theorem~\ref{thm:ksiinformal}, our result on inner scaling diagonal preconditioners. We next handle the case of general $\mb$.

\begin{restatable}[Matrix dictionary recovery, general case, informal, see Theorem~\ref{thm:mainb}]{theorem}{restatemainbinformal}\label{thm:mainbinformal}
	Given matrices $\{\mm_i\}_{i \in [n]}$ with explicit factorizations \eqref{eq:vvt}, such that \eqref{eq:structmpc} is feasible for some $\kappa^\star \ge 1$ and we can solve in $\mat$ to $\eps$ relative accuracy in $\Tmat\cdot \log \frac 1 \eps$ time, and $\mb$ satisfying 
	\begin{equation}\label{eq:alphabetadef}
		\mb \preceq \mat(\1) \preceq \alpha \mb \text{ and } \id \preceq \mb \preceq \beta \id,
	\end{equation}
	we can return weights $w \in \R^n_{\ge 0}$ satisfying \eqref{eq:approx} with probability $\ge 1 - \delta$ in time
	\[O\Par{\Par{\tmv\Par{\{\mv_i\}_{i \in [n]} \cup \{\mb\}} + \Tmat} \cdot (\kappa^\star)^2 \cdot \textup{poly}\Par{\frac{\log \frac{mnd\kappa^\star\alpha\beta}{\delta}}{\eps}}}.\]
\end{restatable}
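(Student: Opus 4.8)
The plan is to reduce the general constraint $\mb$ to the isotropic case of Theorem~\ref{thm:mainidinformal} by a change of basis. Conjugating the dictionary by $\mb^{-1/2}$ turns the feasibility hypothesis \eqref{eq:structmpc} into $\id \preceq \sum_i w^\star_i \widetilde{\mm}_i \preceq \kappa^\star\id$, where $\widetilde{\mm}_i \defeq \mb^{-1/2}\mm_i\mb^{-1/2} = \widetilde{\mv}_i\widetilde{\mv}_i^\top$ and $\widetilde{\mv}_i \defeq \mb^{-1/2}\mv_i$; any $w$ satisfying \eqref{eq:approx} for the conjugated instance satisfies \eqref{eq:approx} for the original. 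It therefore suffices to run a robust version of the isotropic algorithm on $\{\widetilde{\mv}_i\}$, which requires two oracles: matrix--vector products with the $\widetilde{\mv}_i$ and $\widetilde{\mv}_i^\top$, and approximate solves in $\tmat(w) + \lam\id = \mb^{-1/2}\Par{\mat(w) + \lam\mb}\mb^{-1/2}$.

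Both oracles reduce to three primitives: applying $\mb^{\pm 1/2}$ to a vector, and solving systems of the form $\mat(w) + \lam\mb$. For $\mb^{\pm 1/2}$ I would use a rational (Zolotarev-type) approximation of $x^{\pm 1/2}$ accurate on $[1,\beta]$ with $\widetilde{O}(\log(\beta/\eps'))$ poles, expressing $\mb^{\pm 1/2}$ as a short sum of resolvents $(\mb + t_j\id)^{-1}$. Each resolvent, and each system $\mat(w)+\lam\mb$, is then solved by a preconditioned iterative method whose preconditioner is built from the given $\mat$-solver: the sandwich $\mb \preceq \mat(\1) \preceq \alpha\mb$ together with the linearity of $\mat$ in $(w,\lam)$ means that $\mat(w)+\lam\mat(\1)$, which $\Tmat$ solves directly, is an $\alpha$-spectral approximation of $\mat(w)+\lam\mb$; combining this with an outer layer of iterative refinement (or Chebyshev/rational acceleration) reduces the effective conditioning of the composite solve to $\mathrm{polylog}(\alpha,\beta)$, so each solve costs $\widetilde{O}(\Tmat + \tmv(\mb))$ up to polylogarithmic and $\mathrm{poly}(\eps^{-1})$ factors. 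Finally, the trace quantities $\Tr(\widetilde{\mm}_i\mathbf{Y}) = \|\mathbf{Y}^{1/2}\widetilde{\mv}_i\|_2^2$ that the multiplicative-weights scheme underlying Theorem~\ref{thm:mainidinformal} evaluates for all $i$ (with $\mathbf{Y}\succeq\mzero$ an iterate-dependent matrix of conditioning $\mathrm{poly}(\kappa^\star)$) are estimated simultaneously by a Johnson--Lindenstrauss sketch with $O(\eps^{-2}\log(n/\delta))$ rows, applying $\mathbf{Y}^{1/2}$ to the few sketch vectors via the same resolvent machinery.

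It then remains to re-run the analysis of Theorem~\ref{thm:mainidinformal} with these inexact oracles: set every internal accuracy parameter to a sufficiently small polynomial in $\eps/(mnd\kappa^\star\alpha\beta)$, verify that the resulting perturbations enter the potential-function bound only additively and at logarithmic cost, and read off the runtime, which involves $\tmv(\{\mv_i\}_{i\in[n]} \cup \{\mb\})$ (for the dictionary products and the $\mb$ applications inside the resolvents) plus $\Tmat$ (for the $\mat$-solves). The extra factor of $\kappa^\star$ relative to Theorem~\ref{thm:mainidinformal} (so $(\kappa^\star)^2$ in place of $(\kappa^\star)^{1.5}$) is charged to the general setting: once the constraint is a general $\mb$ rather than $\id$, the step-size/width bound governing the number of multiplicative-weights iterations degrades by $\kappa^\star$, since the spectral range of the matrices being exponentiated and sketched is controlled by $\kappa^\star$ relative to $\mb$ rather than absolutely.

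The main obstacle is the middle step: making the $\mb^{\pm 1/2}$ and $\mat(w)+\lam\mb$ primitives depend only polylogarithmically --- not polynomially --- on $\alpha$ and $\beta$. The naive route (precondition $\mb+t\id$ by $\mat(\1)+t\id$, or precondition $\mat(w)+\lam\mb$ by $\mat(w+\lam\1)$) gives only an $\alpha$-factor preconditioner and hence $\sqrt\alpha$ inner iterations, which is far too slow for the claimed bound. Removing this requires composing the $\mat$-solver with the rational approximations so that the only linear systems ever solved exactly are of the form $\mat(w') + \lam\id$ with $(w',\lam)$ the algorithm already tracks --- for instance by carrying a $\mat(\1)^{-1/2}$ (or $\mat(w)^{-1/2}$) conjugation through the multiplicative-weights loop itself rather than forming $\mb^{-1/2}$ from scratch, and absorbing the residual $\alpha$-conditioning into the accuracy budget via iterative refinement. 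Showing that this composite oracle is simultaneously accurate enough for the potential-function analysis and cheap enough is the crux; everything else is bookkeeping around Theorem~\ref{thm:mainidinformal}.
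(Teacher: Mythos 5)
You correctly set up the reduction (conjugate by $\mb^{-1/2}$, apply the isotropic solver, realize $\mb^{\pm 1/2}$ via rational approximations whose poles are resolvents $(\mb+\nu\id)^{-1}$, and solve those resolvents by preconditioned iteration), and you correctly diagnose that the naive preconditioner $\mat(\1)+\nu\id$ is only an $\alpha$-spectral approximation of $\mb+\nu\id$, giving $\sqrt{\alpha}$ inner iterations. But you stop exactly where the proof actually happens: you gesture at ``carrying a $\mat(\1)^{-1/2}$ conjugation through the loop'' without supplying a mechanism, and no such mechanism is needed. The paper's resolution is a homotopy in a regularization parameter: for $\lam^{(k)}=\lam^{(0)}/2^k$ with $\lam^{(0)}=1/\eps$ and $K=O(\log\frac{\alpha}{\eps^2\kappa^\star})$, it solves the recovery problem with constraint matrix $\mb+\lam^{(k)}\mat(\1)$ at each level. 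Feasibility at every level is preserved with the same $\kappa^\star$ (take $w^\star+\lam\1$); the first level is solved trivially by $w=(1+\lam^{(0)})\1$ since $\lam^{(0)}\ge 1/\eps$; and, crucially, the output $w$ of level $k$ satisfies $\mb+\tfrac{\lam^{(k)}}{2}\mat(\1)\preceq\mat(w)\preceq 3\kappa^\star(\mb+\tfrac{\lam^{(k)}}{2}\mat(\1))$, so $\mat(w)+\nu\id$ --- which the assumed $\Tmat$ oracle solves directly --- is a $3\kappa^\star$-spectral (not $\alpha$-spectral) preconditioner for every resolvent needed at level $k+1$. This is what makes each solve cost $\tO((\tmv(\mb)+\Tmat)\sqrt{\kappa^\star})$ with only a $\log(\alpha/\eps)$ multiplicative overhead from the $K$ phases. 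Without this (or an equivalent bootstrapping of the preconditioner), your argument does not yield the claimed polylogarithmic dependence on $\alpha$.

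A secondary error: your accounting of the extra $\kappa^\star$ factor is wrong. The MMW iteration count and width bound are unchanged from the isotropic case; the degradation from $(\kappa^\star)^{1.5}$ to $(\kappa^\star)^2$ comes from the $\sqrt{\kappa^\star}\log\kappa^\star$ iterations of preconditioned acceleration inside each application of the approximate $\mb^{-1/2}$ operator, which multiplies the $(\kappa^\star)^{1.5}$ cost of the isotropic solver it is plugged into.
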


The first condition in \eqref{eq:alphabetadef} is no more general than assuming we have a ``warm start'' reweighting $w_0 \in \R^n_{\ge 0}$ (not necessarily $\1$) satisfying $\mb \preceq \sum_{i \in [n]} [w_0]_i \mm_i \preceq \alpha \mb$, by exploiting scale invariance of the problem and setting $\mm_i \gets [w_0]_i \mm_i$. The second bound in \eqref{eq:alphabetadef} is equivalent to $\kappa(\mb) \le \beta$ (see Section~\ref{sec:prelims}) up to constant factors,  since given a bound $\beta$, we can use the power method (cf.\ Fact~\ref{fact:powermethod}) to shift the scale of $\mb$ so it is spectrally larger than $\id$ (i.e., estimating the largest eigenvalue and shifting it to be $\Omega(\beta)$). The operation requires just a logarithmic number of matrix vector multiplications with $\mb$,  which does not impact the runtime in Theorem \ref{thm:mainbinformal}.

Several of our preconditioning results go beyond black-box applications of Theorems~\ref{thm:mainidinformal} and~\ref{thm:mainbinformal}. For example, a result analogous to Theorem \ref{thm:ksoinformal} but depending quadratically on $\kso(\mk))$ can be obtained by directly applying Theorem~\ref{thm:mainbinformal} with $n = d$, $\mm_i = e_i e_i^\top$, $\kappa = \kso(\mk)$, and $\mb = \frac 1 \kappa \mk$ (i.e., using the dictionary of $1$-sparse diagonal matrices to approximate $\mk$). We obtain an improved $(\kso(\mk))^{1.5}$ dependence  via another homotopy method (similar to the one used for our SDP solver in Theorem~\ref{thm:mainbinformal}), which allows us to efficiently compute matrix-vector products with a symmetric square-root of $\mk$. Access to the square root allows us to reduce the iteration complexity of our SDP solver.

\paragraph{Further work.} A natural open question is if, e.g., for outer scaling, the $\kso(\mk)$ dependence in Theorem~\ref{thm:ksoinformal} can be reduced further, ideally to $\sqrt{\kso(\mk)}$. This would match the most efficient solvers in $\mk$ under diagonal rescaling, \emph{if the best known outer scaling was known in advance}. Towards this goal, we prove in Appendix~\ref{app:mpc} that if a width-independent variant of Theorem~\ref{thm:mainidinformal} is developed, it can achieve such improved runtimes for Theorem~\ref{thm:ksoinformal} (with an analogous improvement for Theorem~\ref{thm:ksiinformal}). We also give generalizations of this improvement to finding rescalings which minimize natural average notions of conditioning, under existence of such a conjectured solver.

\subsection{Comparison to prior work}\label{ssec:prior}

\paragraph{Runtime implications.} For all the problems we study (enumerated in Sections~\ref{ssec:diag_intro} and~\ref{ssec:laplacian_intro}), our methods are (to our knowledge) the first in the literature to run in nearly-linear time in the sparsities of the constraint matrices, with polynomial dependence on the \emph{optimal} conditioning. 

For example, consider our results (Theorems~\ref{thm:ksoinformal} and \ref{thm:ksiinformal}) on computing diagonal preconditioners. Beyond that which is obtainable by black-box using general SDP solvers, we are not aware of any other claimed runtime in the literature. Directly using state-of-the-art SDP solvers \cite{JiangKLP020, HuangJ0T022} incurs substantial overhead $\Omega(n^\omega \sqrt d + nd^{2.5})$ or $\Omega(n^\omega + d^{4.5} + n^2 \sqrt d)$, where $\omega < 2.372$ is the current matrix multiplication constant \cite{williams12,Gall14a,AlmanW21, DuanWZ23, WXXZ23}. For outer scaling, where $n = d$, this implies an $\Omega(d^{3.5})$ runtime; for other applications, e.g., preconditioning $d \times d$ perturbed Laplacians where $n = d^2$, the runtime is $\Omega(d^{2\omega})$. Applying state-of-the-art approximate SDP solvers (rather than our custom ones, i.e., Theorems~\ref{thm:mainidinformal} and~\ref{thm:mainbinformal}) appears to yield runtimes $\Omega(\nnz(\ma) \cdot d^{2.5})$, as described in Appendix E.2 of \cite{LiSTZ20}. This is in contrast with our Theorems~\ref{thm:ksoinformal},~\ref{thm:ksiinformal} which achieve $\tO\Par{\nnz(\ma) \cdot (\kappa^\star)^{1.5}}$. Hence, we improve existing tools by $\text{poly}(d)$ factors in the main regime of interest where the optimal rescaled condition number $\kappa^\star$ is small. Concurrent to our work, \cite{QuGHYZ22} gave algorithms for constructing optimal diagonal preconditioners using interior point methods for SDPs, which run in at least the superlinear times discussed previously.

Similar speedups hold for our results on solving matrix-dictionary recovery for graph-structured matrices (Theorems~\ref{thm:perturbedsolverinformal},~\ref{thm:mmatrixinformal}, and~\ref{thm:laplacianinformal}). Further, for key matrices in each of these cases (e.g., constant-factor spectral approximations of Laplacians, inverse M-matrices, and Laplacian pseudoinverses) we obtain $\otilde(n^2)$ time algorithms for solving linear systems in these matrices to inverse polynomial accuracy. This runtime is near-linear when the input is dense and in each case when the input is dense the state-of-the-art prior methods were to run general linear system solvers using $O(n^\omega)$ time.

\paragraph{Matrix-dictionary recovery.}  Our algorithm for Theorem~\ref{thm:mainidinformal} is based on matrix multiplicative weights \cite{WarmuthK06, AroraK07, AroraHK12}, a popular meta-algorithm for approximately solving SDPs, with carefully chosen gain matrices formed by using packing SDP solvers as a black box. In this sense, it is an efficient reduction from structured SDP instances of the form \eqref{eq:structmpc}, \eqref{eq:approx} to pure packing instances. 

Similar ideas were previously used in \cite{Lee017} (repurposed in \cite{ChengG18}) for solving graph-structured matrix-dictionary recovery problems. Our Theorems~\ref{thm:mainidinformal} and~\ref{thm:mainbinformal} improve upon these results both in generality (prior works only handled $\mb = \id$, and $\kappa^\star = 1 + \eps$ for sufficiently small $\eps$) and efficiency (our reduction calls a packing solver $\approx \log d$ times for constant $\eps, \kappa^\star$, while \cite{Lee017} used $\approx \log^2 d$ calls). Perhaps the most direct analog of Theorem~\ref{thm:mainidinformal} is Theorem 3.1 of \cite{ChengG18}, which builds upon the proof of Lemma 3.5 of \cite{Lee017} (but lifts the sparsity constraint). The primary qualitative difference with Theorem~\ref{thm:mainidinformal} is that Theorem 3.1 of \cite{ChengG18} only handles the case where the optimal rescaling $\kappa^\star$ is in $[1, 1.1]$, whereas we handle general $\kappa^\star$. This restriction is important in the proof technique of \cite{ChengG18}, as their approach relies on bounding the change in potential functions based on the matrix exponential of dictionary linear combinations (e.g., the Taylor expansions in their Lemma B.1), which scales poorly with large $\kappa^\star$. Moreover, our method is a natural application of the MMW framework, and is arguably simpler. This simplicity is useful in diagonal scaling applications, as it allows us to obtain a tighter characterization of our $\kappa^\star$ dependence, the primary quantity of interest. 

Finally, to our knowledge  Theorem~\ref{thm:mainbinformal} (which handles general constraint matrices $\mb$, crucial for our applications in Theorems~\ref{thm:perturbedsolverinformal},~\ref{thm:mmatrixinformal}, and~\ref{thm:laplacianinformal}) has no analog in prior work, which focused on the isotropic case. The algorithm we develop to prove Theorem~\ref{thm:mainbinformal} is based on combining Theorem~\ref{thm:mainidinformal} with a multi-level iterative preconditioning scheme we refer to as a homotopy method. In particular, our algorithm for Theorem~\ref{thm:mainidinformal} recursively calls Theorem~\ref{thm:mainidinformal} and preconditioned linear system solvers as black boxes, to provide near-optimal reweightings $\mat(w)$ which approximate $\mb + \lam \id$ for various values of $\lam$. We then combine our access to linear system solvers in $\mat(w)$ with efficient rational approximations to various matrix functions, yielding our overall algorithm. This homotopy method framework is reminiscent of techniques used by other recent works in the literature on numerical linear algebra and structured continuous optimization, such as \cite{LiMP13, KapralovLMMS14, BubeckCLL18, AdilKPS19}.

\paragraph{Semi-random models.} The semi-random noise model we introduce in Section~\ref{ssec:diag_intro} for linear system solving, presented in more detail and formality in Section~\ref{sec:semirandom}, follows a line of noise models originating in \cite{BlumS95}. A semi-random model consists of an (unknown) planted instance which a classical algorithm performs well against, augmented by additional information given by a ``monotone'' or ``helpful'' adversary masking the planted instance. Conceptually, when an algorithm fails given this ``helpful'' information, it may have overfit to its generative assumptions. This model has been studied in various statistical settings \cite{Jerrum92, FeigeK00, FeigeK01, MoitraPW16, MakarychevMV12}. Of particular relevance to our work, which studies robustness to semi-random noise in the context of fast algorithms (as opposed to the distinction between polynomial-time algorithms and computational intractability) is \cite{ChengG18}, which developed an algorithm for semi-random matrix completion.

\paragraph{Prior versions of this work.} This paper is based on a merge of two prior works by subsets of the authors, \cite{JambulapatiSS18} and \cite{JambulapatiLMST21}. Our algorithm in Section~\ref{sec:solvers} is new, and more general than its predecessors in either \cite{JambulapatiSS18} or \cite{JambulapatiLMST21}, but is heavily inspired by techniques developed in both works. Finally, we remark that algorithms with similar guarantees for more restricted settings were previously developed in \cite{Lee017, ChengG18}, which we discuss in Section~\ref{ssec:prior} in more detail.

\subsection{Organization}\label{ssec:org}

We give preliminaries and the notation used throughout the paper in Section~\ref{sec:prelims}. We prove our main results on efficiently solving matrix-dictionary approximation SDPs, Theorems~\ref{thm:mainidinformal} and~\ref{thm:mainbinformal}, in Section~\ref{sec:solvers}. As relatively direct demonstrations of the utility of our solvers, we next present our results on solving in perturbed Laplacians and inverse matrices with combinatorial structure, i.e., Theorems~\ref{thm:perturbedsolverinformal},~\ref{thm:mmatrixinformal}, and~\ref{thm:laplacianinformal}, in Section~\ref{sec:recovery}. We give our results on outer and inner scaling variants of diagonal preconditioning, i.e., Theorems~\ref{thm:ksoinformal} and~\ref{thm:ksiinformal}, in Section~\ref{sec:scaling}. Finally, we present the implications of our inner scaling solver for semi-random statistical linear regression in Section~\ref{sec:semirandom}. 

Various proofs throughout the paper are deferred to Appendices~\ref{app:solversdeferred} and~\ref{sec:appproofs}. We present our results on Jacobi preconditioning in Appendix~\ref{app:jacobi}, and our improvements to our diagonal preconditioning results (assuming a width-independent positive SDP solver) in Appendix~\ref{app:mpc}.
	%!TEX root = ./structured-mpc.tex

\section{Preliminaries}
\label{sec:prelims}

\paragraph{General notation.} We let $[n] \defeq \{1,2,\cdots,n\}$. Applied to a vector, $\norm{\cdot}_p$ is the $\ell_p$ norm. Applied to a matrix, $\norm{\cdot}_2$ is overloaded to mean the $\ell_2$ operator norm. $\Nor(\mu, \msig)$ denotes the multivariate Gaussian with specified mean and covariance. $\Delta^n$ is the simplex in $n$ dimensions (the subset of $\R^n_{\ge 0}$ with unit $\ell_1$ norm). We use $\tO$ to hide polylogarithmic factors in problem conditioning, dimensions, the target accuracy, and the failure probability. We say $\alpha\in \R$ is an $(\eps, \delta)$-approximation to $\beta \in \R$ if $\alpha = (1 + \eps')\beta + \delta'$, for $|\eps'| \le \eps$, $|\delta'| \le \delta$. An $(\eps, 0)$-approximation is an ``$\eps$-multiplicative approximation'' and a $(0, \delta)$-approximation is a ``$\delta$-additive approximation''. We let $\Nor(\mu, \msig)$ denote the multivariate Gaussian distribution of specified mean and covariance.

\paragraph{Matrices.} Throughout, matrices are denoted in boldface. We use $\nnz(\ma)$ to denote the number of nonzero entries of a matrix $\ma$. The set of $d \times d$ symmetric matrices is denoted $\Sym^d$, and the positive semidefinite and definite cones are $\PSD^d$ and $\PD^d$ respectively. For $\ma\in \Sym^d$, let $\lmax(\ma)$, $\lmin(\ma)$, and $\Tr(\ma)$ denote the largest magnitude eigenvalue, smallest eigenvalue, and trace. For $\ma \in \PD^d$, let $\kappa(\mm) \defeq \frac{\lmax(\mm)}{\lmin(\mm)}$ denote the condition number. We let $\text{Im}(\ma)$ refer to the image of $\ma$, and use $\ma^\dagger$ to denote the pseudoinverse of $\ma \in \PSD^d$.
The inner product between matrices $\mm, \mn \in \Sym^d$ is the trace product, $\inprod{\mm}{\mn} \defeq \Tr(\mm\mn) = \sum_{i, j \in [d]} \mm_{ij} \mn_{ij}$. We use the Loewner order on $\Sym^d$: $\mm \preceq \mn$ if and only if $\mn - \mm \in \PSD^d$. $\id$ is the identity of appropriate dimension when clear. $\diag{w}$ for $w \in \R^n$ is the diagonal matrix with diagonal entries $w$.
%; we use $\mw \defeq \diag{w}$ when context is clear. 
For $\mm \in \PD^d$, $\norm{v}_{\mm} \defeq \sqrt{v^\top\mm v}$. For $\mm \in \Sym^d$ with eigendecomposition $\mathbf{V}^\top\boldsymbol{\Lambda}\mathbf{V}$, $\exp(\mm) \defeq \mathbf{V}^\top\exp(\boldsymbol{\Lambda})\mathbf{V}$, where $\exp(\boldsymbol{\Lambda})$ is applies entrywise to the diagonal. Similarly for $\mm = \mathbf{V}^\top\boldsymbol{\Lambda}\mathbf{V} \in \PSD^d$, $\mm^\half \defeq \mathbf{V}^\top\boldsymbol{\Lambda}^\half\mathbf{V}$. 

We denote the rows and columns of $\ma \in \R^{n \times d}$ by $\ai$ for $i \in [n]$ and $\aj$ for $j \in [d]$ respectively. 
Finally, $\tmv(\mm)$ denotes the time it takes to multiply a vector $v$ by $\mm$. We similarly denote the total cost of vector multiplication through a set $\{\mm_i\}_{i \in [n]}$ by $\tmv(\{\mm_i\}_{i \in [n]})$. We assume that $\tmv(\mm) = \Omega(d)$ for any $d \times d$ matrix, as that time is generally required to write the output.

When discussing a graph on $n$ vertices, the elements of $V$, consider an edge $e = (u, v)$ for $u, v \in V$. We let $b_e \in \R^n$ denote the $2$-sparse vector with a $1$ in index $u$ and a $-1$ in index $v$.
	%!TEX root = ./structured-mpc.tex

\section{Efficient \recovery{} }
\label{sec:solvers}

In this section, we develop general solvers for the types of structured ``mixed packing-covering'' problems defined in Section~\ref{ssec:framework_intro}, which we collectively call matrix-dictionary approximation SDPs. 

In Section~\ref{ssec:basic}, we solve a basic version of this problem where $\mb = \id$, i.e., the constraints are multiples of the identity. In Section~\ref{ssec:genb}, we give a more general solver able to handle arbitrary constraints, whose runtime depends polylogarithmically on the conditioning of said constraints. Our main results Theorems~\ref{thm:mainid} and~\ref{thm:mainb} are proven at the ends of Sections~\ref{ssec:basic} and~\ref{ssec:genb}.

\subsection{Identity constraints}\label{ssec:basic}

In this section, we consider the special case of the problem \eqref{eq:structmpc}, \eqref{eq:approx} in which $\mb = \id$. To solve this problem, we first develop a framework for solving the decision variant of the problem \eqref{eq:structmpc}, \eqref{eq:approx}. Given a set of matrices $\{\mm_i\}_{i \in [n]} \in \PSD^d$ and a parameter $\kappa \ge 1$, we wish to determine
\begin{equation}\label{eq:mpcexact}\text{does there exist } w \in \R^n_{\ge 0} \text{ such that } \lmax\Par{\sum_{i \in [n]} w_i \mm_i} \le \kappa \lmin\Par{\sum_{i \in [n]} w_i \mm_i}?\end{equation}
We note that the problem \eqref{eq:mpcexact} is a special case of the more general mixed packing-covering semidefinite programming problem defined in \cite{JambulapatiLLPT20}, with packing matrices $\{\mm_i\}_{i \in [n]}$ and covering matrices $\{\kappa \mm_i\}_{i \in [n]}$. We define an $\eps$-\emph{approximate tester} for the decision problem \eqref{eq:mpcexact} to be an algorithm which returns ``yes'' whenever \eqref{eq:mpcexact} is feasible for the value $(1 - \eps)\kappa$ (along with weights $w \in \R^n_{\ge 0}$ certifying this feasibility), and ``no'' whenever it is infeasible for the value $(1 + \eps)\kappa$ (and can return either answer in the middle range). After developing such a tester, we apply it to solve the (approximate) optimization variant \eqref{eq:structmpc}, \eqref{eq:approx} by incrementally searching for the optimal $\kappa$.

To develop an approximate tester for \eqref{eq:mpcexact}, we require access to an algorithm for solving the optimization variant of a pure packing SDP,
\begin{equation}\label{eq:optv}\opt(v) \defeq \max_{
	w \in \R^n_{\ge 0}\,:\,\sum_{i \in [n]} w_i \mm_i \preceq \id
} v^\top w.\end{equation}
The algorithm is based on combining a solver for the testing variant of \eqref{eq:optv} by \cite{Jambulapati0T20} with a binary search. We state its guarantees as Proposition~\ref{prop:optv}, and defer a proof to Appendix~\ref{app:solversdeferred}.

\begin{proposition}\label{prop:optv}
	Let $\opt_+$ and $\opt_-$ be known upper and lower bounds on $\opt(v)$ as in \eqref{eq:optv}. There is an algorithm, $\algpack$, which succeeds with probability $\ge 1 - \delta$, whose runtime is 
	\[O\Par{\tmv\Par{\Brace{\mm_i}_{i \in [n]}} \cdot \frac{\log^2(ndT(\delta\eps)^{-1})\log^2 d}{\eps^5}} \cdot T \text{ for } T = O\Par{\log\log \frac{\opt_+}{\opt_-} + \log \frac 1 \eps},\]
	and returns an $\eps$-multiplicative approximation to $\opt(v)$, and $w$ attaining this approximation.
\end{proposition}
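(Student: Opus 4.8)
The plan is to treat the width-independent pure-packing SDP solver of \cite{Jambulapati0T20} as a black-box decision oracle for the feasibility question underlying \eqref{eq:optv}, and to wrap it in a binary search over the unknown value $\opt(v)$ that is performed on a \emph{logarithmic} scale, so that the number of oracle calls is doubly logarithmic in $\opt_+/\opt_-$.

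First I would recall the guarantee of \cite{Jambulapati0T20} in the form of an $\eps'$-approximate tester for the normalized feasibility question ``is there $w \in \R^n_{\ge 0}$ with $\sum_{i} w_i \mm_i \preceq \id$ and $v^\top w \ge 1$?'': in time $\tmv(\{\mm_i\}_{i\in[n]}) \cdot \poly(\eps'^{-1}, \log(nd), \log(1/\delta'))$ it returns, with probability $\ge 1 - \delta'$, either ``yes'' together with a feasible $w$ satisfying $v^\top w \ge 1 - \eps'$, or ``no'', and it is guaranteed to answer ``yes'' whenever $\opt(v) \ge 1$ and ``no'' whenever $\opt(v) < 1 - \eps'$. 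Rescaling $v \mapsto v/g$ turns this into a tester for ``$\opt(v) \ge g$'' for any $g > 0$, and because the solver is width-independent its cost does not degrade when $g$ is far from $\opt(v)$. The two facts I would extract are: (i) a ``yes'' at $g$ hands us a feasible $w$ with $v^\top w \ge (1-\eps')g$, hence $\opt(v) \ge (1-\eps')g$; and (ii) a ``no'' at $g$ certifies $\opt(v) < g$ (contrapositive of the first guarantee). A cheap $O(\tmv(\{\mm_i\}_{i\in[n]}))$ normalization of $v$ and $\{\mm_i\}_{i\in[n]}$ to match the input conventions of \cite{Jambulapati0T20} can be done up front and absorbed into the runtime.

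Next I would run the bisection. Maintain an interval $[g_{\mathrm{lo}}, g_{\mathrm{hi}}]$ together with a stored certificate $w_{\mathrm{lo}}$, initialized with $g_{\mathrm{lo}} \gets \opt_-$ (the tester answers ``yes'' here since $\opt(v) \ge \opt_-$, producing a certificate) and $g_{\mathrm{hi}} \gets 2\opt_+$ (for $\eps' \le \tfrac12$ the tester is forced to ``no'' here since $\opt(v) \le \opt_+ < (1-\eps')g_{\mathrm{hi}}$). At each step query the tester at the geometric midpoint $g = \sqrt{g_{\mathrm{lo}} g_{\mathrm{hi}}}$; on ``yes'' set $g_{\mathrm{lo}} \gets g$ and replace $w_{\mathrm{lo}}$ by the returned certificate, on ``no'' set $g_{\mathrm{hi}} \gets g$, and stop once $g_{\mathrm{hi}} \le (1 + \eps') g_{\mathrm{lo}}$. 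The quantity $\log(g_{\mathrm{hi}}/g_{\mathrm{lo}})$ starts at $O(\log(\opt_+/\opt_-) + 1)$, halves each step, and the stopping threshold is $\log(1+\eps') = \Theta(\eps')$, so the iteration count is $T = O(\log\log(\opt_+/\opt_-) + \log(1/\eps'))$. For correctness, at termination $w_{\mathrm{lo}}$ is feasible with $v^\top w_{\mathrm{lo}} \ge (1-\eps')g_{\mathrm{lo}}$, while $\opt(v) < g_{\mathrm{hi}} \le (1+\eps') g_{\mathrm{lo}}$, so combining with feasibility ($v^\top w_{\mathrm{lo}} \le \opt(v)$) gives $\tfrac{1-\eps'}{1+\eps'}\,\opt(v) \le v^\top w_{\mathrm{lo}} \le \opt(v)$; choosing $\eps' = \Theta(\eps)$ makes $v^\top w_{\mathrm{lo}}$ an $\eps$-multiplicative approximation to $\opt(v)$ attained by $w_{\mathrm{lo}}$. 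Running each of the $\le T$ oracle calls at accuracy $\Theta(\eps)$ and failure probability $\delta/T$ and union bounding yields overall success probability $\ge 1-\delta$ and the stated runtime $O\Par{\tmv(\{\mm_i\}_{i\in[n]}) \cdot \eps^{-5}\log^2(ndT(\delta\eps)^{-1})\log^2 d} \cdot T$.

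Since this is essentially a black-box reduction, there is no deep obstacle; the two points requiring care are (a) that the packing oracle is only an \emph{approximate} tester with a ``gray zone'' near the threshold, so the bisection invariant must be phrased through the one-sided guarantees (i) and (ii) rather than a clean monotone predicate, and the output must be the certificate produced at the \emph{last} ``yes'' query; and (b) that obtaining a doubly-logarithmic (rather than merely logarithmic) dependence on $\opt_+/\opt_-$ forces us to bisect on the logarithmic scale, i.e.\ to query at geometric means. Everything else — fixing the constant relating $\eps'$ to $\eps$, the union bound over the $T$ calls, and the massaging of log factors in the runtime — is routine.
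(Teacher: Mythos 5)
Your proposal is correct and follows essentially the same route as the paper: both wrap the width-independent packing tester of \cite{Jambulapati0T20} in a binary search over the value of $\opt(v)$ carried out on a logarithmic scale (the paper binary-searches the index of a geometric grid of $K = O(\eps^{-1}\log(\opt_+/\opt_-))$ endpoints, which is the same as your bisection at geometric midpoints), yielding $T = O(\log\log(\opt_+/\opt_-) + \log(1/\eps))$ oracle calls. Your explicit handling of the tester's one-sided guarantees and the retained certificate at the last ``yes'' is a slightly more careful writeup of the same argument.
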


We require one additional tool, a regret analysis of matrix multiplicative weights from \cite{ZhuLO15}.

\begin{proposition}[Theorem 3.1, \cite{ZhuLO15}]\label{prop:mmw}
	Consider a sequence of gain matrices $\{\mg_t\}_{0 \le t < T} \subset \PSD^d$, which all satisfy for step size $\eta > 0$, $\norm{\eta \mg_t}_2 \le 1$. Then iteratively defining (from $\ms_0 \defeq \mzero$)
	\[\my_t \defeq \frac{\exp(\ms_t)}{\Tr\exp(\ms_t)},\; \ms_{t + 1} \defeq \ms_t - \eta \mg_t,\]
	we have the bound for any $\mmu \in \PSD^d$ with $\Tr(\mmu) = 1$,
	\[\frac{1}{T}\sum_{0 \le t < T} \inprod{\mg_t}{\my_t - \mmu} \le \frac{\log d}{\eta T} + \frac 1 T \sum_{t \in [T]} \eta \norm{\mg_t}_2 \inprod{\mg_t}{\my_t}.\]
\end{proposition}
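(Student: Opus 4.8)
The plan is to run the standard log-partition potential argument for matrix multiplicative weights. Define the potential $\Phi_t \defeq \log \Tr \exp(\ms_t)$; since $\ms_0 = \mzero$ we have $\Phi_0 = \log d$. The argument has three pieces: a per-iteration upper bound on $\Phi_{t+1} - \Phi_t$, a telescoping sum, and a lower bound on $\Phi_T$ in terms of the comparator $\mmu$.

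For the per-step bound I would first invoke the Golden--Thompson inequality $\Tr\exp(\ma + \mb) \le \Tr(\exp(\ma)\exp(\mb))$ with $\ma = \ms_t$ and $\mb = -\eta\mg_t$, giving $\Tr\exp(\ms_{t+1}) \le \Tr(\exp(\ms_t)\exp(-\eta\mg_t))$. Since $\eta\mg_t \succeq \mzero$, applying the scalar inequality $e^{-x} \le 1 - x + x^2$ (valid for all $x \ge 0$) spectrally yields $\exp(-\eta\mg_t) \preceq \id - \eta\mg_t + (\eta\mg_t)^2$. Combining this with $\mg_t^2 \preceq \norm{\mg_t}_2 \mg_t$ (from $\mg_t \succeq \mzero$) and the fact that $\ma \preceq \mb$ implies $\Tr(\exp(\ms_t)\ma) \le \Tr(\exp(\ms_t)\mb)$ (since $\exp(\ms_t) \succeq \mzero$), dividing by $\Tr\exp(\ms_t)$ and recalling $\my_t = \exp(\ms_t)/\Tr\exp(\ms_t)$ gives
\[\frac{\Tr\exp(\ms_{t+1})}{\Tr\exp(\ms_t)} \le 1 - \eta\inprod{\mg_t}{\my_t} + \eta^2\norm{\mg_t}_2\inprod{\mg_t}{\my_t} = 1 - \eta\inprod{\mg_t}{\my_t}\Par{1 - \eta\norm{\mg_t}_2}.\]
The hypothesis $\norm{\eta\mg_t}_2 \le 1$ forces this quantity to be positive (in fact at least $3/4$, by AM--GM), so taking logarithms and using $\log(1+x) \le x$ gives the per-step bound $\Phi_{t+1} - \Phi_t \le -\eta\inprod{\mg_t}{\my_t} + \eta^2\norm{\mg_t}_2\inprod{\mg_t}{\my_t}$.

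Summing over $0 \le t < T$ telescopes the left side to $\Phi_T - \Phi_0 = \Phi_T - \log d$. For the lower bound on $\Phi_T$, note $\Tr\exp(\ms_T) \ge \exp(\lmax(\ms_T))$, hence $\Phi_T \ge \lmax(\ms_T)$; and by the variational characterization $\lmax(\ms_T) = \max_{\mmu \succeq \mzero,\, \Tr\mmu = 1}\inprod{\ms_T}{\mmu} \ge \inprod{\ms_T}{\mmu}$ for the given comparator. Since $\ms_T = -\eta\sum_{0 \le t < T}\mg_t$, this equals $-\eta\sum_t\inprod{\mg_t}{\mmu}$. Chaining $-\eta\sum_t\inprod{\mg_t}{\mmu} - \log d \le \Phi_T - \log d \le -\eta\sum_t\inprod{\mg_t}{\my_t} + \eta^2\sum_t\norm{\mg_t}_2\inprod{\mg_t}{\my_t}$, rearranging to isolate $\sum_t\inprod{\mg_t}{\my_t - \mmu}$, and dividing through by $\eta T$ produces exactly the claimed regret bound.

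The only genuine care needed is at the operator level: Golden--Thompson is essential here because $\ms_t$ and $\mg_t$ need not commute; the scalar-to-matrix lift of $e^{-x} \le 1 - x + x^2$ requires diagonalizing $\eta\mg_t$ and checking the inequality on each (nonnegative) eigenvalue; and the trace step uses $\exp(\ms_t) \succeq \mzero$ together with monotonicity of $\mathbf{P} \mapsto \Tr(\mathbf{P}\,\cdot)$ under the Loewner order. None of these is difficult, but they are the steps most easily mishandled by treating matrices as scalars — everything else is scalar bookkeeping.
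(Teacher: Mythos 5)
Your proof is correct. The paper does not prove this proposition itself — it imports it verbatim as Theorem 3.1 of the cited work \cite{ZhuLO15} — and your argument is precisely the standard potential/Golden--Thompson analysis behind that result: the per-step bound via $e^{-x}\le 1-x+x^2$ and $\mg_t^2 \preceq \norm{\mg_t}_2\mg_t$, telescoping of $\log\Tr\exp(\ms_t)$, and the comparator lower bound $\log\Tr\exp(\ms_T)\ge\inprod{\ms_T}{\mmu}$, with all the operator-level care (Golden--Thompson for non-commuting $\ms_t,\mg_t$, spectral lifting of the scalar inequality, trace monotonicity) handled properly; the only cosmetic point is that the statement's right-hand sum over $t\in[T]$ versus the left-hand sum over $0\le t<T$ is an indexing typo that you correctly treat as the same index set.
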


Finally, we are ready to state our $\eps$-approximate tester for the decision problem \eqref{eq:mpcexact} as Algorithm~\ref{alg:decidempc}. For simplicity in its analysis, we assume each matrix dictionary element's top eigenvalue is in a bounded range. We explicitly bound the cost of achieving this assumption in our applications (which can be achieved via rescaling by a constant-factor approximation to the top eigenvalue of each matrix using the power method, see Fact~\ref{fact:powermethod}), and this does not dominate the runtime. The runtime bottleneck in all our applications is the cost of approximate packing SDP oracles in Line~\ref{line:packsdp}; this is an active research area and improvements therein would also reflect in our algorithm’s runtime.

\begin{algorithm}[ht!]
	\caption{$\DecideMPC(\{\mm_i\}_{i \in [n]}, \kappa, \algpack, \delta, \eps)$}
	\begin{algorithmic}[1]\label{alg:decidempc}
		\STATE \textbf{Input:} $\{\mm_i\}_{i \in [n]} \in \PSD^{d \times d}$ such that $1 \le \lmax(\mm_i) \le 2$ for all $i \in [n]$, $\kappa > 1$, $\algpack$ which on input $v \in \R^n_{\ge 0}$ returns $w \in \R^n_{\ge 0}$ satisfying (recalling definition \eqref{eq:optv})
		\[\sum_{i \in [n]} w_i \mm_i \preceq \id,\; v^\top w \ge \Par{1 - \frac \eps {10}} \opt(v), \text{ with probability} \ge 1 - \frac{\delta}{2T} \text{ for some } T = O\Par{\frac{\kappa \log d}{\eps^2}},\]
		failure probability $\delta \in (0, 1)$, tolerance $\eps \in (0, 1)$
		\STATE \textbf{Output:} With probability $\ge 1 - \delta$: ``yes'' or ``no'' is returned. The algorithm must return ``yes'' if there exists $w \in \R^n_{\ge 0}$ with 
		\begin{equation}\label{eq:feasiblempc}\lmax\Par{\sum_{i \in [n]} w_i \mm_i} \le (1 - \eps)\kappa\lmin\Par{\sum_{i \in [n]} w_i \mm_i},\end{equation}
		and if ``yes'' is returned, a vector $w$ is given with
		\begin{equation}\label{eq:feasiblempcplus}\lmax\Par{\sum_{i \in [n]} w_i \mm_i} \le (1 + \eps) \kappa \lmin\Par{\sum_{i \in [n]} w_i \mm_i}.\end{equation}
		\STATE $\eta \gets \frac \eps {10\kappa}$, $T \gets \left\lceil \frac{10\log d}{\eta\eps} \right\rceil$, $\my_0 \gets \frac 1 d \id$, $\ms_0 \gets \mzero$
		\FOR{$0 \le t < T$}
		\STATE $\my_{t} \gets \frac{\exp(\ms_{t})}{\Tr\exp(\ms_{t})}$
		\STATE $v_t \gets$ entrywise nonnegative $(\frac \eps {10}, \frac{\eps}{10\kappa n})$-approximations to $\{\inprod{\mm_i}{\my_t}\}_{i \in [n]}$, with probability $\ge 1 - \frac{\delta}{4T}$
		\STATE $x_t \gets \algpack(\kappa v_t)$\label{line:packsdp}
		\STATE $\mg_{t} \gets \kappa \sum_{i \in [n]} [x_t]_i \mm_i$
		\IF{$\kappa \inprod{x_t}{v_t} < 1 - \frac \eps 5$}
		\RETURN ``no''
		\ENDIF
		\STATE $\ms_{t + 1} \gets \ms_t - \eta \mg_t$
		\STATE $\tau \gets \frac{\log d}{\eps}$-additive approximation to $\lam_{\min}(-\ms_{t + 1})$, with probability $\ge 1 - \frac{\delta}{4T}$
		\IF{$\tau \ge \frac{12\log d}{\eps}$}
		\RETURN (``yes'', $\bx$) for $\bx \defeq \frac 1 {t + 1} \sum_{0 \le s \le t} x_s$
		\ENDIF
		\ENDFOR
		\RETURN (``yes'', $\bx$) for $\bx \defeq \frac 1 T \sum_{0 \le t < T} x_t$
	\end{algorithmic}
\end{algorithm}

\begin{lemma}\label{lem:decidempccorrect}
	Algorithm~\ref{alg:decidempc} meets its output guarantees (as specified on Line 2).
\end{lemma}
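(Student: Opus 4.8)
The goal is to verify both halves of the output specification on Line 2 of Algorithm~\ref{alg:decidempc}: \textbf{soundness}, that if \eqref{eq:feasiblempc} is feasible then the algorithm never outputs ``no''; and \textbf{correctness}, that whenever it outputs a reweighting $\bx$, that $\bx$ satisfies \eqref{eq:feasiblempcplus}. First I would dispatch the randomness: across a run, $\algpack$ is invoked at most $T$ times (each failing with probability $\le \tfrac{\delta}{2T}$) and the coordinatewise estimates $v_t$ and eigenvalue estimates $\tau$ are each computed at most $T$ times (each failing with probability $\le \tfrac{\delta}{4T}$), so a union bound makes all of them simultaneously correct with probability $\ge 1-\delta$, on which event the rest of the argument is deterministic. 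The structural backbone is that the algorithm is matrix multiplicative weights (Proposition~\ref{prop:mmw}) run on the gain matrices $\mg_t = \kappa\sum_i [x_t]_i\mm_i$: since each $x_t$ from $\algpack$ is packing-feasible, $\mzero \preceq \sum_i [x_t]_i \mm_i \preceq \id$, so $\mzero \preceq \mg_t \preceq \kappa\id$ and $\norm{\eta\mg_t}_2 \le \eta\kappa = \tfrac{\eps}{10} < 1$ as Proposition~\ref{prop:mmw} requires. I also record two reusable facts: because each $\lmax(\mm_i)\ge 1$, packing feasibility forces $[x_t]_i \le 1$ and hence $\norm{x_t}_1 \le n$; and for any density matrix $\my$ and PSD $\mathbf{A}$ one has $\inprod{\mathbf{A}}{\my} \in [\lmin(\mathbf{A}),\lmax(\mathbf{A})]$.

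For soundness, suppose $w^\star$ witnesses \eqref{eq:feasiblempc}; rescaling, assume $\sum_i w^\star_i \mm_i \succeq \id$, so $\sum_i w^\star_i \mm_i \preceq (1-\eps)\kappa\,\id$ and $\hat w \defeq \tfrac{1}{(1-\eps)\kappa}w^\star$ is packing-feasible. For each iterate $\my_t$, $\sum_i [\hat w]_i \inprod{\mm_i}{\my_t} = \tfrac{1}{(1-\eps)\kappa}\inprod{\sum_i w^\star_i\mm_i}{\my_t} \ge \tfrac{1}{(1-\eps)\kappa}$, so $\opt\big((\inprod{\mm_i}{\my_t})_i\big) \ge \tfrac{1}{(1-\eps)\kappa}$; passing to the approximation $v_t$ and using $\norm{\hat w}_1\le n$ to absorb its $\tfrac{\eps}{10\kappa n}$ per-coordinate additive error, $\opt(v_t) \ge \tfrac1\kappa(1-\tfrac\eps{10})$ (with slack, since $\tfrac{1-\eps/10}{1-\eps}\ge 1$). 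The $\algpack$ guarantee then gives $\kappa\inprod{x_t}{v_t} \ge (1-\tfrac\eps{10})\kappa\cdot\opt(v_t) \ge (1-\tfrac\eps{10})^2 \ge 1-\tfrac\eps5$, so the ``no'' branch never triggers and the loop exits with a ``yes'' output.

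For correctness I treat the two ``yes'' exits in turn; in both, $\bx$ is an average of packing-feasible points, so $\sum_i\bx_i\mm_i \preceq \id$ and $\lmax(\sum_i\bx_i\mm_i)\le 1$ immediately, leaving only a lower bound on $\lmin(\sum_i\bx_i\mm_i)$. If the algorithm exits early at step $t$ because $\tau \ge \tfrac{12\log d}{\eps}$, then writing $-\ms_{t+1} = \eta\sum_{s\le t}\mg_s = \eta\kappa(t+1)\sum_i\bx_i\mm_i$ and using $\lmin(-\ms_{t+1}) \ge \tau - \tfrac{\log d}{\eps}\ge \tfrac{11\log d}{\eps}$ with $t+1\le T = \lceil\tfrac{100\kappa\log d}{\eps^2}\rceil$ and $\eta = \tfrac{\eps}{10\kappa}$ gives $\lmin(\sum_i\bx_i\mm_i) \ge \tfrac1\kappa$, stronger than \eqref{eq:feasiblempcplus}. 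Otherwise the loop runs to completion with $\bx = \tfrac1T\sum_{t<T}x_t$, and I invoke Proposition~\ref{prop:mmw} with an arbitrary density matrix $\mmu$; bounding the quadratic regret term by a $\tfrac\eps{10}$-fraction of $\inprod{\mg_t}{\my_t}$ (since $\eta\norm{\mg_t}_2\le\tfrac\eps{10}$) and $\tfrac{\log d}{\eta T}\le\tfrac\eps{10}$ (since $\eta T\ge\tfrac{10\log d}{\eps}$), the regret inequality rearranges to $\tfrac1T\sum_t\inprod{\mg_t}{\mmu} \ge (1-\tfrac\eps{10})\tfrac1T\sum_t\inprod{\mg_t}{\my_t} - \tfrac\eps{10}$. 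Since the ``no'' test passed at every step, $\kappa\inprod{x_t}{v_t}\ge 1-\tfrac\eps5$, and converting from $v_t$ to $(\inprod{\mm_i}{\my_t})_i$ (again using $\norm{x_t}_1\le n$ for the additive term) gives $\inprod{\mg_t}{\my_t} = \kappa\inprod{x_t}{(\inprod{\mm_i}{\my_t})_i} \ge 1-O(\eps)$. Plugging in and taking the minimum over density $\mmu$ yields $\lmin(\tfrac1T\sum_t\mg_t) \ge 1-O(\eps)$, whence $\lmin(\sum_i\bx_i\mm_i) = \tfrac1\kappa\lmin(\tfrac1T\sum_t\mg_t) \ge \tfrac{1}{(1+\eps)\kappa}$ once the constants are checked, giving \eqref{eq:feasiblempcplus}.

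The main obstacle is precisely this last bookkeeping in the ``runs to completion'' case: chaining the three independent sources of slack — the $(1-\tfrac\eps{10})$ multiplicative loss of $\algpack$, the $(1\pm\tfrac\eps{10})$-multiplicative and $\tfrac\eps{10\kappa n}$-additive error in $v_t$, and the matrix multiplicative weights regret (where the key structural point is that the quadratic regret term $\eta\norm{\mg_t}_2\inprod{\mg_t}{\my_t}$ is only an $\eps$-fraction of the linear term $\inprod{\mg_t}{\my_t}$ because $\eta\le\tfrac{\eps}{10\kappa}$ and $\norm{\mg_t}_2\le\kappa$) — and confirming the algorithm's constants are tuned so the compounded loss still yields $\lmin(\tfrac1T\sum_t\mg_t)\ge\tfrac1{1+\eps}$; it suffices to reach $1-\tfrac\eps2$, since $\tfrac1{1+\eps}\le 1-\tfrac\eps2$ on $(0,1)$ (taking $\eps$ below a fixed constant if necessary). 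Everything else — soundness, the $\tau$-exit case, the upper bound $\lmax\le1$, and the union bound — is routine once the MMW framework is set up.
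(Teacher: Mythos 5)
Your proposal is correct and follows essentially the same route as the paper's proof: a union bound over the randomized subroutines, the MMW regret bound of Proposition~\ref{prop:mmw} with gains $\mg_t = \kappa\sum_i [x_t]_i\mm_i$ bounded by $\kappa\id$, the same feasibility certificate $\hat w = \tfrac{1}{(1-\eps)\kappa}w^\star$ for the ``no'' case, the same $\lmin(-\ms_{t+1}) \ge \tfrac{11\log d}{\eps}$ versus $\eta\kappa T \le \tfrac{11\kappa\log d}{\eps}$ comparison for the early exit, and the same chaining of the $\algpack$, approximation, and regret losses to reach $\lmin(\tfrac1T\sum_t \mg_t)\ge 1-\tfrac\eps2$ in the final case. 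The only cosmetic difference is that you handle the additive error in converting $v_t$ back to $\inprod{\mm_i}{\my_t}$ slightly more explicitly (via $\norm{x_t}_1\le n$) than the paper's display \eqref{eq:ygbound}, which is fine.
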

\begin{proof}
	Throughout, assume all calls to $\algpack$ and the computation of approximations as given by Lines 6 and 13 succeed. By union bounding over $T$ iterations, this gives the failure probability.
	
	We first show that if the algorithm terminates on Line 15, it is always correct. By the definition of $\algpack$, all $\mg_t \preceq \kappa \id$, so throughout the algorithm, $-\ms_{t + 1} \preceq \eta \kappa T \id \preceq \frac{11\kappa\log d}{\eps}$. If the check on Line 14 passes, we must have $-\ms_{t + 1} \succeq \frac{11\log d}{\eps} \id$, and hence the matrix $-\frac{1}{t + 1}\ms_{t + 1}$ has condition number at most $\kappa$. The conclusion follows since the reweighting $\bx$ induces $\ms_{t + 1}$.
	
	We next prove correctness in the ``no'' case. Suppose the problem \eqref{eq:feasiblempc} is feasible; we show that the check in Line 9 will never pass (so the algorithm never returns ``no''). Let $v^\star_t$ be the vector which is entrywise exactly $\{\inprod{\mm_i}{\my_t}\}_{i \in [n]}$, and let $v'_t$ be a $\frac{\eps}{10}$-multiplicative approximation to $v^\star_t$ such that $v_t$ is an entrywise $\frac{\eps}{10n}$-additive approximation to $v'_t$. By definition, it is clear $\opt(\kappa v'_t) \ge (1 - \frac \eps {10})\opt(\kappa v^\star_t)$. Moreover, by the assumption that all $\lmax(\mm_i) \ge 1$, all $w_i \le 1$ in the feasible region of the problem \eqref{eq:optv}. Hence, the combined additive error incurred by the approximation $\inprod{\kappa v_t}{w}$ to $\inprod{\kappa v'_t}{w}$ for any feasible $w$ is $\frac{\eps}{10}$. All told, by the guarantee of $\algpack$,
	\begin{equation}\label{eq:optrelate}\kappa \inprod{v_t}{x_t} \ge \Par{1 - \frac \eps {10}}^2 \opt(\kappa v^\star_t) - \frac{\eps}{10}, \text{ where } \opt(\kappa v^\star_t) = \max_{\substack{\sum_{i \in [n]} w_i \mm_i \preceq \id \\ w \in \R^n_{\ge 0}}} \kappa \inprod{\my_t}{\sum_{i \in [n]} w_i \mm_i}.\end{equation}
	However, by feasibility of \eqref{eq:feasiblempc} and scale invariance, there exists a $w \in \R^n_{\ge 0}$ with $\sum_{i \in [n]} w_i \mm_i \preceq \id$ and $(1 - \eps) \kappa \sum_{i \in [n]} w_i \mm_i \succeq \id$. Since $\my_t$ has trace $1$, this certifies $\opt(\kappa v_t^\star) \ge \frac 1 {1 - \eps}$, and thus \[\kappa\inprod{v_t}{x_t} \ge \Par{1 - \frac \eps {10}}^2 \cdot \frac 1 {1 - \eps} - \frac{\eps}{10} > 1 - \frac \eps 5.\]
	Hence, whenever the algorithm returns ``no'' it is correct. Assume for the remainder of the proof that ``yes'' is returned on Line 18. Next, we observe that whenever $\alg$ succeeds on iteration $t$, $\sum_{i \in [n]} [x_t]_i \mm_i \preceq \id$, and hence in every iteration we have $\norm{\mg_t}_2 \le \kappa$. Proposition~\ref{prop:mmw} then gives
	\[\frac{1}{T} \sum_{0 \le t < T} \inprod{\mg_t}{\my_t - \mmu} \le \frac{\log d}{\eta T} + \frac 1 T \sum_{t \in [T]} \eta \norm{\mg_t}_2 \inprod{\mg_t}{\my_t},\text{ for all } \mmu \in \PSD^d \text{ with } \Tr(\mmu) = 1.\]
	Rearranging the above display, using $\eta \norm{\mg_t}_2 \le \frac \eps {10}$, and minimizing over $\mmu$ yields
	\[\lmin\Par{\frac 1 T \sum_{0 \le t < T} \mg_t} \ge \frac {1 - \frac \eps {10}} T \sum_{0 \le t < T} \inprod{\mg_t}{\my_t} - \frac{\log d}{\eta T} \ge \frac {1 - \frac \eps {10}} T \sum_{0 \le t < T} \inprod{\mg_t}{\my_t} - \frac \eps {10}.\]
	The last inequality used the definition of $T$. However, by definition of $v_t$, we have for all $0 \le t < T$,
	\begin{equation}\label{eq:ygbound}\inprod{\my_t}{\mg_t} = \kappa \sum_{i \in [n]} [x_t]_i \inprod{\mm_i}{\my_t} \ge \Par{1 - \frac \eps {10}} \kappa \inprod{x_t}{v_t} \ge \Par{1 - \frac \eps {10}}\Par{1 - \frac \eps 5} \ge 1 - \frac {3\eps}{10}.\end{equation}
	The second-to-last inequality used that Line 9 did not pass. Combining the previous two displays,
	\[\kappa\lmin\Par{\sum_{i \in [n]} \bx_i \mm_i} = \lmin\Par{\frac 1 T \sum_{0 \le t < T} \mg_t} \ge \Par{1 - \frac \eps {10}}\Par{1 - \frac {3\eps}{10}} - \frac \eps {10} \ge 1 - \frac \eps 2.\]
	On the other hand, since all $0 \le t < T$ have $\sum_{i \in [n]} [x_t]_i \mm_i \preceq \id$, by convexity $\sum_{i \in [n]} \bx_i \mm_i \preceq \id$. Combining these two guarantees and $(1 + \eps)(1 - \frac \eps 2) \ge 1$ shows $\bx$ is correct for the ``yes'' case.
\end{proof}

We bound the runtime complexities of Lines 6, 7, and 13 of Algorithm~\ref{alg:decidempc} in the following sections.

\subsubsection{Approximating inner products}

In this section, we bound the complexity of Line 6 of Algorithm~\ref{alg:decidempc}. We will use the following two standard helper results on random projections and approximation theory.

\begin{fact}[Johnson-Lindenstrauss \cite{DasguptaG03}]\label{fact:jl}
	For $0 \leq \eps \leq 1$, let $k = \Theta\left(\frac{1}{\eps^2}\log \frac d \delta\right)$ for an appropriate constant. For $\mq \in \R^{k \times d}$ with independent uniformly random unit vector rows in $\R^d$ scaled down by $\frac 1 {\sqrt k}$, with probability $\ge 1 - \delta$ for any fixed $v \in \R^d$,
	\[(1 - \eps) \norm{\mq v}_2^2 \le \norm{v}_2^2 \le (1 + \eps)\norm{\mq v}_2^2. \]
\end{fact}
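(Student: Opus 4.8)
The plan is to establish the standard distributional Johnson–Lindenstrauss guarantee: for a single fixed vector $v \in \R^d$, a random projection into $k = \Theta(\eps^{-2}\log(d/\delta))$ dimensions preserves $\norm{v}_2^2$ up to a $(1 \pm \eps)$ factor with probability at least $1 - \delta$. (Note the statement only asks for the guarantee at a fixed $v$, not a union over many vectors, so no net argument is needed — the $\log d$ in $k$ is really just whatever is required to drive the failure probability to $\delta$ when later one union-bounds over $d$ or $n$ events in the applications; we just prove the clean single-vector bound.) By scale invariance we may assume $\norm{v}_2 = 1$, so we must show $\norm{\mq v}_2^2$ concentrates around $1$.

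First I would set up the right model for $\mq$. The statement takes rows that are independent uniformly random unit vectors scaled by $1/\sqrt k$; it is cleaner (and equivalent up to constants in $k$, by comparison of Gaussian and spherical concentration) to analyze $\mq$ with i.i.d.\ $\Nor(0, 1/k)$ entries, or alternatively to note that a uniform row $u$ on the sphere can be written $u = g/\norm{g}_2$ for $g \sim \Nor(0, \id_d)$, and $\norm{g}_2^2$ concentrates around $d$. Either way, $Z \defeq \norm{\mq v}_2^2$ is (a constant-factor rescaling of) $\frac 1k \sum_{j=1}^k X_j^2$ where $X_j \sim \Nor(0,1)$ are independent — i.e.\ $kZ \sim \chi^2_k$ — with $\E Z = 1$.

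The core step is a chi-squared tail bound via the moment generating function. Standard computation gives, for $|\lambda| < 1/2$, $\E[e^{\lambda \chi^2_k}] = (1 - 2\lambda)^{-k/2}$; optimizing the Chernoff bound $\Pr[kZ \ge (1+\eps)k] \le \exp(-\frac k2(\eps - \ln(1+\eps)))$ and the matching lower-tail bound, and using $\eps - \ln(1+\eps) \ge \eps^2/2 - \eps^3/3 \ge c\eps^2$ for $\eps \in [0,1]$, yields
\[
\Pr\Brack{\Abs{\,\norm{\mq v}_2^2 - 1\,} > \eps} \le 2\exp\Par{-c' k \eps^2}
\]
for an absolute constant $c' > 0$. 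Choosing $k = \Theta(\eps^{-2}\log(d/\delta))$ with a large enough constant makes the right-hand side at most $\delta$. Rearranging $\Abs{\norm{\mq v}_2^2 - \norm{v}_2^2} \le \eps \norm{v}_2^2$ (after undoing the normalization) gives $(1-\eps)\norm{\mq v}_2^2 \le \norm{v}_2^2 \le (1+\eps)\norm{\mq v}_2^2$, as claimed, possibly after adjusting $\eps$ by a constant factor (which is absorbed into the $\Theta(\cdot)$ in $k$).

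The only mild technical obstacle is the choice to present $\mq$ with spherical rather than Gaussian rows: one must either (i) handle the $\norm{g}_2^2/d$ normalization of each spherical row, which itself is a $\chi^2_d$ concentration and contributes only lower-order error to be folded into constants, or (ii) simply cite the Gaussian-row version of JL from \cite{DasguptaG03} and observe the two models differ only in the absolute constant hidden in $k$. Since the paper only needs the existential statement with $k = \Theta(\eps^{-2}\log(d/\delta))$, either route is routine, and I would take route (ii), citing \cite{DasguptaG03} directly for the precise constant and including the $\chi^2$ MGF computation above only as a self-contained sketch.
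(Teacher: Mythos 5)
The paper offers no proof of this statement to compare against: it is imported verbatim as a black-box Fact with a citation to \cite{DasguptaG03}. Your sketch is the standard distributional Johnson--Lindenstrauss argument via the $\chi^2$ moment generating function, and it is correct: for unit $v$ and Gaussian rows, $k\norm{\mq v}_2^2 \sim \chi^2_k$, the optimized Chernoff exponent $\tfrac{k}{2}(\eps - \ln(1+\eps)) \ge c k \eps^2$ for $\eps \in [0,1]$ gives the two-sided tail $2\exp(-c'k\eps^2)$, and passing from $(1\pm\eps)\norm{v}_2^2$ bounds to the stated $(1\pm\eps)\norm{\mq v}_2^2$ form costs only a constant in $\eps$, absorbed into the $\Theta(\cdot)$ defining $k$. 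Your observation that a single fixed $v$ only needs $k = \Theta(\eps^{-2}\log(1/\delta))$, with the $\log d$ being slack for later union bounds, matches how the Fact is used in Lemmas~\ref{lem:trexpnegm} and~\ref{lem:line5left}. It is worth comparing your route to the cited source: \cite{DasguptaG03} proves the spherical version directly, by computing moments of the squared length of the projection of a fixed vector onto a random subspace, which is essentially your route (i); your route (ii) (Gaussian surrogate plus citation) is equally valid here since only the order of $k$ matters. One caveat on the spherical-versus-Gaussian reduction: a uniformly random unit row scaled by $1/\sqrt{k}$ gives $\E\norm{\mq v}_2^2 = \norm{v}_2^2/d$, so the Fact as literally written needs rows scaled by $\sqrt{d/k}$ (which is exactly what your $\Nor(0,1/k)$-entry model corresponds to). That is a normalization typo in the statement rather than a gap in your argument, but if you pursue route (i) you must carry the $\sqrt{d}$ factor explicitly rather than folding it silently into ``constants.''
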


\begin{fact}[Polynomial approximation of $\exp$ \cite{SachdevaV14}, Theorem 4.1]\label{fact:polyexp}
	Let $\mm \in \PSD^d$ have $\mm \preceq R\id$. Then for any $\delta > 0$, there is an explicit polynomial $p$ of degree $O(\sqrt{R \log \frac 1 \delta + \log^2 \frac 1 \delta} )$ with
	\[\exp(-\mm) - \delta \id \preceq p(\mm) \preceq \exp(-\mm) + \delta \id.\]
\end{fact}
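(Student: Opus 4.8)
The plan is to separate the statement into a purely spectral reduction and an underlying scalar approximation fact, the latter being exactly \cite{SachdevaV14} (their Theorem 4.1). First I would reduce to the scalar problem: writing an eigendecomposition $\mm = \mv^\top \boldsymbol{\Lambda}\mv$ with $\boldsymbol{\Lambda} = \diag{\lambda_1,\dots,\lambda_d}$, the hypothesis $\mzero \preceq \mm \preceq R\id$ forces every $\lambda_j \in [0,R]$; for any polynomial $p$ we have $p(\mm) - \exp(-\mm) = \mv^\top \diag{p(\lambda_j) - e^{-\lambda_j}}\mv$, which has operator norm $\max_{j\in[d]}\Abs{p(\lambda_j) - e^{-\lambda_j}} \le \sup_{x\in[0,R]}\Abs{p(x) - e^{-x}}$. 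So it is enough to exhibit an explicit polynomial $p$ of degree $O\Par{\sqrt{R\log\tfrac1\delta + \log^2\tfrac1\delta}}$ with $\sup_{x\in[0,R]}\Abs{p(x)-e^{-x}} \le \delta$: then $\norm{p(\mm)-\exp(-\mm)}_2 \le \delta$, which is the claimed two-sided Loewner bound.

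For the scalar bound I would argue in two regimes. If $R \le \log\tfrac1\delta$, I would take $p$ to be the degree-$k$ truncation of the Taylor series of $e^{-x}$; the Lagrange remainder on $[0,R]$ is at most $\tfrac{R^{k+1}}{(k+1)!} \le \Par{\tfrac{eR}{k+1}}^{k+1}$, which is below $\delta$ once $k = \Theta(\log\tfrac1\delta)$ with a large enough constant, and $O(\log\tfrac1\delta) = O\Par{\sqrt{R\log\tfrac1\delta + \log^2\tfrac1\delta}}$. If $R > \log\tfrac1\delta$, I would instead truncate the Chebyshev expansion of $e^{-x}$ on $[0,R]$: after the affine map sending $[0,R]$ to $[-1,1]$ the degree-$k$ coefficient is $a_k = 2(-1)^k e^{-R/2} I_k(R/2)$ with $I_k$ the modified Bessel function, and plugging the integral representation $I_k(z) = \tfrac1\pi\int_0^\pi e^{z\cos\theta}\cos(k\theta)\,d\theta$ into Laplace's method (the mass of $e^{z\cos\theta}$ concentrates within $\theta = O(z^{-1/2})$, and $\cos(k\theta)$ oscillates there once $k \gg \sqrt z$) gives $I_k(z) = O\Par{\tfrac{e^z}{\sqrt z}\exp(-\tfrac{k^2}{2z})}$. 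Hence the tail $\sum_{k>n}\Abs{a_k}$ is $O(\exp(-n^2/R))$, which drops below $\delta$ for $n = \Theta\Par{\sqrt{R\log\tfrac1\delta}}$; in both regimes the coefficients of $p$ are explicitly computable and the degree matches the claim.

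The spectral reduction is routine, so I expect the crux to be the scalar estimate when $R \gg \log\tfrac1\delta$: the non-obvious point is that the Chebyshev coefficients of $e^{-x}$ on $[0,R]$ already decay like $\exp(-\Omega(k^2/R))$ for $k$ as small as $\Theta(\sqrt R)$, rather than only decaying geometrically once $k = \Omega(R)$. This ``square-root'' behavior is what buys degree $O(\sqrt{R\log\tfrac1\delta})$ instead of $O(R)$, and it relies on the refined Bessel bound above; since this is precisely \cite{SachdevaV14}, the cleanest route in the paper is simply to cite it, with no new argument needed beyond the reduction to the scalar case.
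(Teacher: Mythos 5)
This statement is imported verbatim from prior work: the paper treats it as a black-box \textbf{Fact}, and its entire ``proof'' is the citation to Theorem 4.1 of \cite{SachdevaV14}. What you have written is, in essence, a reconstruction of the proof of that cited theorem rather than an alternative to anything in the paper. Your outline is the standard one and is correct in structure: the spectral reduction to $\sup_{x\in[0,R]}|p(x)-e^{-x}|\le\delta$ is routine, the Taylor truncation handles $R\le\log\frac1\delta$ with degree $O(\log\frac1\delta)$, and the Chebyshev expansion of $e^{-x}$ on $[0,R]$ with coefficients $2(-1)^k e^{-R/2}I_k(R/2)$ handles $R>\log\frac1\delta$; the $z^{-1/2}$ prefactor you keep in the Bessel estimate is exactly what cancels the $\sqrt{R}$ arising from the Gaussian tail integral, so the truncation degree comes out to $O(\sqrt{R\log\frac1\delta})$ with no spurious $\log R$. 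Both regimes land inside $O(\sqrt{R\log\frac1\delta+\log^2\frac1\delta})$, matching the stated degree.

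One technical caution: the bound $I_k(z)=O\bigl(\tfrac{e^z}{\sqrt z}\exp(-\tfrac{k^2}{2z})\bigr)$ is \emph{not} uniformly valid in $k$. For $k\gg z$ the true value is $I_k(z)\approx\bigl(\tfrac{ez}{2k}\bigr)^k$, which is larger than your claimed bound (e.g.\ at $k=10z$ the truth is $e^{-\Theta(z)}$ while your bound asserts $e^{-49z}$), so as stated the inequality is false in that range. This does not sink the argument, because the truncation threshold $n=\Theta(\sqrt{R\log\frac1\delta})$ is at most $O(z)$ in the regime $R>\log\frac1\delta$, and for $k\gtrsim z$ the coefficients decay super-exponentially anyway, so that portion of the tail is trivially below $\delta$. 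But a rigorous write-up needs the two-case Bessel estimate (Gaussian-type decay for $k=O(z)$, factorial decay for $k=\Omega(z)$), which is precisely the content of the corresponding lemma in \cite{SachdevaV14}. Given that, your own closing remark is the right call for the paper itself: cite the result rather than re-derive it.
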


We also state a simple corollary of Fact~\ref{fact:polyexp}.

\begin{corollary}\label{cor:approxexp}
	Given $R > 1$, $\mm \in \PSD^d$, and $\kappa$ with $\mm \preceq \kappa \id$, we can compute a degree-$O(\sqrt{\kappa R} + R)$ polynomial $p$ such that for $\mpack = p(\mm)$,
	\[\exp\Par{-\mm} - \exp\Par{-R}\id \preceq \mpack \preceq \exp\Par{-\mm} + \exp\Par{-R}\id.\]
\end{corollary}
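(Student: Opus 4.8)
The plan is to invoke Fact~\ref{fact:polyexp} directly, with the substitutions ``$R$'' $\gets \kappa$ (used there as the spectral upper bound, which is valid here since $\mm \in \PSD^d$ and $\mm \preceq \kappa\id$ by hypothesis) and $\delta \gets \exp(-R)$, so that $\log \frac 1 \delta = R$. This immediately produces an explicit polynomial $p$ with
\[
\exp(-\mm) - \exp(-R)\id \preceq p(\mm) \preceq \exp(-\mm) + \exp(-R)\id,
\]
which is exactly the claimed two-sided bound for $\mpack = p(\mm)$.

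It remains only to simplify the degree bound. Fact~\ref{fact:polyexp} gives degree $O\big(\sqrt{\kappa \log\frac1\delta + \log^2\frac1\delta}\big) = O\big(\sqrt{\kappa R + R^2}\big)$, and by subadditivity of the square root this is $O(\sqrt{\kappa R} + R)$ (and conversely $\sqrt{\kappa R + R^2} \ge \max\{\sqrt{\kappa R}, R\}$, so the bound is tight up to constants). Thus the degree is $O(\sqrt{\kappa R} + R)$ as desired.

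\textbf{Main obstacle.} There is essentially none: this is a direct specialization of the cited polynomial-approximation result, and the only ``work'' is the elementary algebraic rewriting of the degree. The reason to state it as a separate corollary is presentational — it cleanly separates the role of the conditioning parameter $\kappa$ (governing the $\sqrt{\kappa R}$ term, analogous to the iteration count of an accelerated method) from that of the accuracy parameter $R$ (which enters $\exp(-R)$), in the form that will be convenient for the runtime bounds of Lines 6, 7, and 13 of Algorithm~\ref{alg:decidempc}.
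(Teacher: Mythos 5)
Your proposal is correct and matches the paper's intent: the paper states this as a ``simple corollary'' of Fact~\ref{fact:polyexp} without further proof, and the intended argument is exactly your direct substitution ($\kappa$ for the spectral bound, $\delta = \exp(-R)$ for the additive error) followed by the elementary simplification $\sqrt{\kappa R + R^2} = \Theta(\sqrt{\kappa R} + R)$.
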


Using these tools, we next demonstrate that we can efficiently approximate the trace of a negative exponential of a bounded matrix, and quadratic forms through it.

\begin{lemma}\label{lem:trexpnegm}
	Given $\mm \in \PSD^d$, $R, \kappa, \eps > 0$ such that $\lmin(\mm) \le R$ and $\lmax(\mm) \le \kappa R$, $\delta \in (0, 1)$, we can compute an $\eps$-multiplicative approximation to $\Tr\exp(-\mm)$ with probability $\ge 1 - \delta$ in time
	\[O\Par{\tmv(\mm) \cdot \sqrt{\kappa}R \cdot \frac{\log\frac d \delta}{\eps^2}}.\]
\end{lemma}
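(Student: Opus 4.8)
The plan is to use the identity $\Tr\exp(-\mm) = \norm{\exp(-\thalf\mm)}_F^2$ and to estimate the right-hand side by applying a polynomial approximation of $\exp(-\thalf\mm)$ to a few random vectors, in the style of the Johnson--Lindenstrauss Frobenius-norm estimator (Fact~\ref{fact:jl}). The key preliminary observation, and the reason the hypothesis ``$\lmin(\mm) \le R$'' is imposed, is that it supplies the a priori lower bound $\Tr\exp(-\mm) \ge \exp(-\lmin(\mm)) \ge \exp(-R)$, which lets us turn the additive errors incurred below into multiplicative ones.

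First I would construct the polynomial. Since $\lmax(\mm) \le \kappa R$ gives $\thalf\mm \preceq \tfrac{\kappa R}{2}\id$, Corollary~\ref{cor:approxexp} produces an explicit polynomial $q$ with $\norm{q(\mm) - \exp(-\thalf\mm)}_2 \le \delta'$ for $\delta' := \tfrac{\eps}{10d}\exp(-R)$, of degree $D = O(\sqrt{\kappa R \cdot \log\tfrac{1}{\delta'}} + \log\tfrac{1}{\delta'}) = O(\sqrt{\kappa}(R + \log\tfrac{d}{\eps}))$ (using $\log\tfrac1{\delta'} = \Theta(R + \log\tfrac{d}{\eps})$ and AM--GM on the cross term $\sqrt{\kappa R \log\tfrac{d}{\eps}}$). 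In the regime $R = \Omega(\log\tfrac d\eps)$ that arises in our applications, $D = O(\sqrt\kappa R)$; in general $D = \tO(\sqrt\kappa R)$. Because $q(\mm)$ and $\exp(-\thalf\mm)$ are simultaneously diagonalizable, comparing them eigenvalue-by-eigenvalue — each eigenvalue $\lambda \ge 0$ of $\mm$ contributes $|q(\lambda)^2 - e^{-\lambda}| \le \delta'(2e^{-\lambda/2} + \delta') \le 3\delta'$ — yields $\Abs{\Tr(q(\mm)^2) - \Tr\exp(-\mm)} \le 3d\delta' \le \tfrac\eps3\exp(-R) \le \tfrac\eps3\Tr\exp(-\mm)$. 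Hence $\norm{q(\mm)}_F^2 = \Tr(q(\mm)^2)$ is a $\tfrac\eps3$-multiplicative approximation of $\Tr\exp(-\mm)$.

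It then remains to estimate $\norm{q(\mm)}_F^2$, which I would do with Fact~\ref{fact:jl}. Sample $\mq \in \R^{k \times d}$ with $k = \Theta(\eps^{-2}\log\tfrac d\delta)$ scaled random unit rows; applying Fact~\ref{fact:jl} to each of the $d$ columns $q(\mm)e_i$ of $q(\mm)$ and union bounding, with probability $\ge 1 - \delta$ the quantity $\norm{\mq q(\mm)}_F^2 = \sum_i \norm{\mq q(\mm)e_i}_2^2$ is a $(1 \pm O(\eps))$-relative approximation of $\sum_i \norm{q(\mm)e_i}_2^2 = \norm{q(\mm)}_F^2$. Writing $q_1,\dots,q_k$ for the rows of $\mq$ and using symmetry of $q(\mm)$, $\norm{\mq q(\mm)}_F^2 = \sum_{j \in [k]}\norm{q(\mm)q_j}_2^2$; this is computed by applying the degree-$D$ polynomial $q$ to each of the $k$ vectors $q_j$, i.e.\ $D$ matrix-vector products with $\mm$ per vector via Horner's rule. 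Composing the two approximations and absorbing constants into $\eps$, the output is an $\eps$-multiplicative approximation of $\Tr\exp(-\mm)$, computed in time $O(k D\cdot \tmv(\mm)) = O\Par{\tmv(\mm) \cdot \sqrt\kappa R \cdot \tfrac{\log(d/\delta)}{\eps^2}}$; forming $q$ and sampling $\mq$ are lower-order, using $\tmv(\mm) = \Omega(d)$.

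Nearly all of this is routine. The one point that needs care is calibrating $\delta'$ (equivalently the accuracy parameter $\log\tfrac1{\delta'}$ fed to Corollary~\ref{cor:approxexp}, and hence the degree $D$): $\delta'$ must be small enough that the polynomial-approximation error $3d\delta'$ is dominated by the a priori lower bound $\exp(-R) \le \Tr\exp(-\mm)$, but not so small that the resulting degree $O(\sqrt{\kappa R \log(1/\delta')})$ exceeds $O(\sqrt\kappa R)$ beyond the logarithmic factors the runtime absorbs into its $\eps^{-2}\log(d/\delta)$ term. Everything else — the union bound over the $d$ columns of $q(\mm)$, the eigenvalue comparison, and the per-vector cost — is standard.
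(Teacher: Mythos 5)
Your proof is correct and follows essentially the same route as the paper's: a Johnson--Lindenstrauss sketch of the Frobenius norm of (an approximation to) $\exp(-\thalf\mm)$, combined with a low-degree polynomial approximation of the matrix exponential from Corollary~\ref{cor:approxexp}, with the hypothesis $\lmin(\mm)\le R$ supplying the lower bound $\Tr\exp(-\mm)\ge\exp(-R)$ that converts additive error to multiplicative error. The only difference is organizational — you bound the polynomial's effect on the trace eigenvalue-by-eigenvalue before sketching $q(\mm)$, whereas the paper sketches $\exp(-\thalf\mm)$ first and then approximates the resulting quadratic forms — and both orderings, as well as your slightly larger degree $\tO(\sqrt\kappa R)$, are absorbed by the logarithmic slack the paper already tolerates.
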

\begin{proof}
	First, with probability at least $1 - \delta$, choosing $k =O(\frac 1 {\eps^2} \log \frac d \delta) $ in Fact~\ref{fact:jl} and taking a union bound guarantees that for all rows $j \in [d]$, we have 
	\[\norm{\mq \Brack{\exp\Par{-\half \mm}}_{j:}}_2^2 \text{ is a } \frac \eps 3\text{-multiplicative approximation of } \norm{\Brack{\exp\Par{-\half \mm}}_{j:}}_2^2.\]
	Condition on this event in the remainder of the proof. The definition
	\[\Tr\exp(-\mm) = \sum_{j \in [d]} \norm{\Brack{\exp\Par{-\half \mm}}_{j:}}_2^2,\]
	and the sequence of equalities
	\begin{align*}
	\sum_{j \in [d]} \norm{\mq \Brack{\exp\Par{-\half \mm}}_{j:}}_2^2 &= \Tr\Par{\exp\Par{-\half \mm} \mq^\top \mq \exp\Par{-\half \mm}} \\
	&= \Tr\Par{\mq \exp\Par{-\mm} \mq^\top} = \sum_{\ell \in [k]} \norm{\exp\Par{-\half \mm} \mq_{\ell:}}_2^2,
	\end{align*}
	implies that it suffices to obtain a $\frac \eps 3$-multiplicative approximation to the last sum in the above display. Since $\Tr\exp(-\mm) \ge \exp(-R)$ by the assumption on $\lmin(\mm)$, it then suffices to approximate each term $\norm{\exp(-\half \mm) \mq_{\ell:}}_2^2$ to an additive $\frac{\eps}{3k}\exp(-R)$. For simplicity, fix some $\ell \in [k]$ and denote $q \defeq \mq_{\ell:}$; recall $\norm{q}_2^2 = \frac 1 k$ from the definition of $\mq$ in Fact~\ref{fact:jl}.
	
	By rescaling, it suffices to demonstrate that on any unit vector $q \in \R^d$, we can approximate $\norm{\exp(-\half \mm)q}_2^2$ to an additive $\frac \eps 3 \exp(-R)$. To this end, we note that (after shifting the definition of $R$ by a constant) Corollary~\ref{cor:approxexp} provides a matrix $\mpack$ with $\tmv(\mpack) = O(\tmv(\mm) \cdot \sqrt{\kappa}R)$ and
	\[\exp\Par{-\mm} - \frac \eps 3 \exp\Par{-R}\id \preceq \mpack \preceq \exp\Par{-\mm} + \frac \eps 3\exp\Par{-R}\id,\]
	which exactly meets our requirements by taking quadratic forms. The runtime follows from the cost of applying $\mpack$ to each of the $k = O(\frac 1 {\eps^2} \log \frac d \delta)$ rows of $\mq$.
\end{proof}

\begin{lemma}\label{lem:line5left}
	Given $\mm \in \Sym_{\ge 0}^d$ and $\kappa$ with $\mm \preceq \kappa\id$, $1 > c > 0$, $\delta \in (0, 1)$, $\eps > 0$, and a set of matrices $\{\mm_i\}_{i \in [n]}$ with decompositions of the form \eqref{eq:vvt} and $1 \le \lam_{\max}(\mm_i) \le 2$ for all $i \in [n]$, we can compute $(\eps, c)$-approximations to all
	$\Brace{\inprod{\mm_i}{\exp(-\mm)}}_{i \in [n]}$, 
	with probability $\ge 1 - \delta$ in time
	\[O\Par{\tmv\Par{\mm, \{\mv_i\}_{i \in [n]}} \cdot \sqrt{\kappa}\log \frac c m \cdot \frac{\log \frac {mn} \delta}{\eps^2}}.\]
\end{lemma}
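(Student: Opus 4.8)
The plan is to estimate each inner product $\inprod{\mm_i}{\exp(-\mm)}$ by writing it, using the factorization $\mm_i = \mv_i \mv_i^\top$, as $\inprod{\mm_i}{\exp(-\mm)} = \Tr\Par{\mv_i^\top \exp(-\mm) \mv_i} = \norm{\exp\Par{-\thalf \mm}\mv_i}_F^2$, where $\norm{\cdot}_F$ is the Frobenius norm on $\R^{d \times m}$. This is a sum of $m$ squared Euclidean norms, one per column of $\mv_i$, so a single Johnson--Lindenstrauss sketch $\mq \in \R^{k \times d}$ with $k = O(\eps^{-2}\log\frac{mn}{\delta})$ (Fact~\ref{fact:jl}, union-bounded over all $mn$ columns of all the $\mv_i$) reduces the task to computing $\norm{\mq \exp\Par{-\thalf \mm}\mv_i}_F^2$ for each $i$. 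By the same trace manipulation as in Lemma~\ref{lem:trexpnegm}, $\norm{\mq \exp(-\thalf \mm)\mv_i}_F^2 = \sum_{\ell \in [k]} \norm{\exp(-\mm)^{1/2}\mq_{\ell:}}$-type quantities; more directly, it equals $\sum_{\ell \in [k]} (\mq_{\ell:})^\top \exp(-\mm)\, \mv_i \mv_i^\top \,$ contracted appropriately, so it suffices to form the $k$ vectors $u_\ell \defeq \exp\Par{-\thalf \mm}\mq_{\ell:}^\top$ once, and then for each $i$ compute $\sum_{\ell}\norm{\mv_i^\top u_\ell}_2^2$.

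The second ingredient is replacing $\exp(-\thalf \mm)$ by a polynomial. Since $\lmax(\mm)\le\kappa$, Corollary~\ref{cor:approxexp} (with an appropriate choice of the truncation radius $R$, here a constant times $\log\frac{mn}{c\delta}$, absorbing the additive accuracy target) gives a degree-$O(\sqrt{\kappa R} + R) = O(\sqrt\kappa \log\frac{mn}{c\delta})$ polynomial $p$ with $\exp(-\thalf \mm) - \gamma\id \preceq p(\mm) \preceq \exp(-\thalf \mm) + \gamma\id$ for $\gamma$ a suitable multiple of $c/m$ (after accounting for the sketch). Then $\mv_i^\top p(\mm) \mq_{\ell:}^\top$ is computed in $\tmv(\mm)\cdot\deg(p)$ time to form $p(\mm)\mq_{\ell:}^\top$, plus $\tmv(\mv_i)$ per column; doing this across all $k$ sketch rows and all $n$ dictionary elements, and noting $k\cdot\deg(p) = O(\sqrt\kappa \log\frac{c}{m}\cdot\eps^{-2}\log\frac{mn}{\delta})$ up to the sign/absolute-value conventions in the statement, yields the claimed runtime $O(\tmv(\mm,\{\mv_i\}) \cdot \sqrt\kappa \log\frac cm \cdot \eps^{-2}\log\frac{mn}{\delta})$.

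For correctness I would track two error sources. First, the JL step: with probability $\ge 1-\frac\delta2$, every column $v$ among the $\le mn$ columns satisfies $\norm{\mq \exp(-\thalf \mm)v}_2^2 = (1\pm\frac\eps3)\norm{\exp(-\thalf \mm)v}_2^2$, hence summing over the (at most $m$) columns of a fixed $\mv_i$ preserves the $(1\pm\frac\eps3)$ multiplicative guarantee for $\inprod{\mm_i}{\exp(-\mm)}$ — this is where it matters that multiplicative errors add cleanly across a sum of nonnegative terms. Second, the polynomial step: replacing $\exp(-\thalf \mm)$ by $p(\mm)$ inside $\norm{\mq(\cdot)\mv_i}_F^2$ perturbs each term additively, and since $\norm{\mq}_2 = O(1)$ (Frobenius of $\mq$ is $\sqrt k \cdot \frac{1}{\sqrt k}$... more precisely $\norm{\mq_{\ell:}}_2 = \frac{1}{\sqrt k}$) and $\lam_{\max}(\mm_i)\le 2$, the total additive error is $O(\gamma + \gamma^2)$ per column times $m$ columns, which is $\le c$ for the chosen $\gamma$. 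Here I use $\lam_{\min}(\mm_i)$ is not needed as a lower bound because we ask only for an $(\eps,c)$-approximation, not a purely multiplicative one — so small inner products are allowed additive slack $c$, which is exactly what the additive polynomial error produces. Combining, each computed quantity is an $(\frac\eps3 + o(\eps), c)$-approximation, which is an $(\eps,c)$-approximation after adjusting constants.

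The main obstacle I anticipate is bookkeeping the interaction between the two approximations so that the final guarantee is genuinely $(\eps, c)$ and not, say, $(\eps, cm)$ or $(\eps, c\cdot\text{poly}(\kappa))$ — i.e. correctly setting the polynomial truncation radius $R = \Theta(\log\frac{mn\kappa}{c\delta})$ (this is where the $\log\frac cm$ factor and the dependence on $\kappa$ inside the degree originate) and verifying that the additive error, after being amplified by the $\le m$ columns and by $\norm{\mv_i}_2^2 \le 2$ and by the JL distortion, still lands below $c$. A secondary subtlety is that the statement writes $\sqrt\kappa \log\frac cm$ rather than $\sqrt\kappa \log\frac mc$; I would confirm this is the intended $|\log(c/m)| = \log(m/c)$ (for $c < 1 \le m$) coming from $\deg(p) = O(\sqrt\kappa R)$ with $R = \Theta(\log\frac mc + \dots)$, and present the degree bound accordingly. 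Everything else — forming $p(\mm)\mq_{\ell:}^\top$ by Horner's rule, multiplying by each $\mv_i^\top$, summing squares — is routine and contributes only the stated matrix-vector-product costs.
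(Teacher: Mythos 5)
Your proposal is correct and follows essentially the same route as the paper's proof: decompose $\inprod{\mm_i}{\exp(-\mm)}$ into the $m$ per-column quadratic forms $(v^{(i)}_j)^\top\exp(-\mm)v^{(i)}_j$, apply a single JL sketch union-bounded over all $mn$ columns to get a $(1\pm O(\eps))$ multiplicative guarantee, replace $\exp(-\half\mm)$ by the Corollary~\ref{cor:approxexp} polynomial to additive spectral accuracy $\Theta(c/m)$ (whence the degree $O(\sqrt{\kappa}\log\frac mc)$ and the per-column additive error $O(\|\md\|_2+\|\md\|_2^2)\le c/(2m)$), and realize the runtime by first forming the $k\times d$ sketched matrix and then applying it to all columns. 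Your side remarks — that the $(\eps,c)$ format is what absorbs the polynomial's additive error, and that $\log\frac cm$ in the statement should be read as $\log\frac mc$ — are both consistent with the paper.
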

\begin{proof}
First, observe that for all $i \in [n]$, letting $\{v^{(i)}_j\}_{j \in [m]}$ be columns of $\mv_i \in \R^{d \times m}$, we have
\[\inprod{\mm_i}{\exp(-\mm)} = \sum_{j \in [m]} \Par{v^{(i)}_j}^\top \exp(-\mm) \Par{v^{(i)}_j}.\]
Hence, to provide an $(\eps, c)$ approximation to $\inprod{\mm_i}{\exp(-\mm)}$ it suffices to provide, for all $i \in [n]$, $j \in [m]$, an $(\eps, \frac{c}{m})$-approximation to $\Par{v^{(i)}_j}^\top \exp(-\mm) \Par{v^{(i)}_j}$. As in the proof of Lemma~\ref{lem:trexpnegm}, by taking a union bound it suffices to sample a $\mq \in \R^{k \times d}$ for $k = O(\frac 1 {\eps^2} \log \frac {mn} \delta)$ and instead compute all $\|\mq \exp(-\half \mm) v^{(i)}_j\|_2^2$ to additive error $\frac c m$. We will instead show how to approximate, for arbitrary vectors $q, v$ with norm at most $1$,
\[\inprod{q}{\exp\Par{-\half \mm} v}^2 \text{ to additive error } \frac{c}{2m}.\]
By letting $q$ range over rows of $\mq$ renormalized by $\sqrt{k}$, and scaling all $v^{(i)}_j$ by a factor of $\sqrt{2}$, this yields the desired result. To this end, consider using $\inprod{q}{\mpack v}^2$ for some $\mpack$ with $-\frac c {6m} \id \preceq \mpack - \exp(-\half \mm) \preceq \frac c {6m} \id$. Letting the difference matrix be $\md \defeq \mpack - \exp(-\half \mm)$, we compute
	\begin{align*}
	\Par{q^\top \exp\Par{-\half \mm} v}^2 - \Par{q^\top \mpack v}^2 &= 2\Par{q^\top \exp\Par{-\half \mm} v}\Par{q^\top \md v} + \Par{q^\top \md v}^2\\
	&\le 2\norm{\md}_2 + \norm{\md}_2^2 \le \frac c {2m}.
	\end{align*}
	We used $q$ and $v$ have $\ell_2$ norm at most $1$, $\exp(-\half \mm) \preceq \id$, and $\norm{\md}_2 \le \frac{c}{6m} \le 1$. Hence, $\inprod{q}{\mpack v}^2$ is a valid approximation. The requisite $\mpack$ is given by Corollary~\ref{cor:approxexp} with $\tmv(\mpack) = O(\tmv(\mm) \cdot \sqrt{\kappa} \log \frac c m)$.
	
	Finally, the runtime follows from first applying $\mpack$ to rows of $\mq$ to explicitly form $\widetilde{\mq}$ with $k$ rows, and then computing all $\|\widetilde{\mq} v^{(i)}_j\|_2^2$ for all $i \in [n]$, $j \in [m]$.
\end{proof}

Combining Lemmas~\ref{lem:trexpnegm} and~\ref{lem:line5left}, we bound the cost of Line 6 in Algorithm~\ref{alg:decidempc}.

\begin{lemma}\label{lem:line6}
We can implement Line 6 of Algorithm~\ref{alg:decidempc} in time
\[O\Par{\tmv\Par{\{\mv_i\}_{i \in [n]}} \cdot \sqrt{\kappa} \cdot \frac{\log^3(\frac{mnd\kappa}{\delta\eps})}{\eps^3}}.\]
\end{lemma}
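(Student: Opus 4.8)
The plan is to unwind what Line 6 of Algorithm~\ref{alg:decidempc} actually requires and then invoke Lemmas~\ref{lem:trexpnegm} and~\ref{lem:line5left} with appropriate parameters. Recall Line 6 needs, at iteration $t$, entrywise nonnegative $(\frac{\eps}{10}, \frac{\eps}{10\kappa n})$-approximations to $\{\inprod{\mm_i}{\my_t}\}_{i \in [n]}$, where $\my_t = \exp(\ms_t)/\Tr\exp(\ms_t)$ and $\ms_t = -\eta\sum_{s < t}\mg_s$ with each $\mg_s \preceq \kappa\id$ and $\mg_s \succeq \mzero$. Writing $\mm \defeq -\ms_t = \eta\sum_{s<t}\mg_s \in \PSD^d$, the task is to approximate $\inprod{\mm_i}{\exp(-\mm)}/\Tr\exp(-\mm)$ for all $i$. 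So the natural approach is: (i) bound the spectral range of $\mm$, i.e., produce the parameters ``$R$'' and ``$\kappa$'' needed by the two lemmas; (ii) apply Lemma~\ref{lem:trexpnegm} to get a multiplicative approximation to the denominator $\Tr\exp(-\mm)$; (iii) apply Lemma~\ref{lem:line5left} to get $(\eps', c')$-approximations to the numerators $\inprod{\mm_i}{\exp(-\mm)}$; (iv) divide, and check that the combined relative and additive errors land inside the required $(\frac{\eps}{10}, \frac{\eps}{10\kappa n})$ tolerance, and that nonnegativity can be enforced by truncating at zero.

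For step (i): since the algorithm only ever uses $\ms_{t+1}$ while the Line 14 check fails, we have $\lmin(-\ms_t) = \lmin(\mm) \le \frac{12\log d}{\eps}$ (up to the additive slack in $\tau$, so say $\lmin(\mm) = O(\frac{\log d}{\eps})$), which gives the ``$R$'' parameter $R = O(\frac{\log d}{\eps})$. For the top eigenvalue, as noted in the proof of Lemma~\ref{lem:decidempccorrect}, $-\ms_{t+1} \preceq \eta\kappa T\id = O(\frac{\kappa\log d}{\eps})\id$; so the relevant multiplicative spread is a constant factor times $\kappa$, i.e., we can take the lemmas' ``$\kappa$'' parameter to be $O(\kappa)$. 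Plugging $R = O(\frac{\log d}{\eps})$ and spread $O(\kappa)$ into Lemma~\ref{lem:trexpnegm} gives a running time $O(\tmv(\mm)\cdot\sqrt\kappa\cdot\frac{\log d}{\eps}\cdot\frac{\log(d/\delta)}{\eps^2})$ for the denominator, and into Lemma~\ref{lem:line5left} (with $c$ a suitable inverse polynomial, so $\log\frac cm = O(\log\frac{mn\kappa}{\eps})$) gives $O(\tmv(\mm,\{\mv_i\})\cdot\sqrt\kappa\log\frac{mn\kappa}{\eps}\cdot\frac{\log(mn/\delta)}{\eps^2})$ for the numerators. Here $\tmv(\mm)$ reduces to $\tmv(\{\mv_i\}_{i\in[n]})$ since $\mm$ is a nonnegative combination of the $\mm_i = \mv_i\mv_i^\top$ (applying $\mm$ to a vector costs at most applying each $\mv_i,\mv_i^\top$). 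Folding the $\frac{\log d}{\eps}$ factor from $R$ into the polylog and collecting, the dominant term is $O(\tmv(\{\mv_i\}_{i\in[n]})\cdot\sqrt\kappa\cdot\frac{\log^3(mnd\kappa/(\delta\eps))}{\eps^3})$, matching the claimed bound.

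For step (iv), the error bookkeeping: if the numerator estimate $\widehat n_i$ satisfies $\widehat n_i = (1\pm\eps')\inprod{\mm_i}{\exp(-\mm)} \pm c'$ and the denominator estimate $\widehat D$ satisfies $\widehat D = (1\pm\eps')\Tr\exp(-\mm)$, then $\widehat n_i/\widehat D$ is a $(1\pm O(\eps'))$-relative approximation to $\inprod{\mm_i}{\my_t}$ plus an additive error $c'/\widehat D$. Since $\Tr\exp(-\mm)\ge\exp(-R)$ is at most inverse-polynomial in the relevant parameters, taking $c'$ a small enough inverse polynomial (which only affects the $\log\frac cm$ factor by a constant) makes $c'/\widehat D \le \frac{\eps}{10\kappa n}$, and choosing $\eps'$ a small constant multiple of $\eps$ handles the relative part; nonnegativity is then obtained by replacing any negative estimate with $0$, which only decreases error since the true values are nonnegative. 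The main obstacle — really the only non-mechanical point — is pinning down the spectral-range parameters $R$ and the multiplicative spread of $\mm = -\ms_t$ correctly from the loop invariants of Algorithm~\ref{alg:decidempc} (in particular being careful that we invoke Line 6 only for iterates that have survived the Line 14 check, so the lower eigenvalue bound $R = O(\frac{\log d}{\eps})$ genuinely holds); once those are fixed, everything else is a direct substitution into the two preceding lemmas together with routine division-of-approximations arithmetic.
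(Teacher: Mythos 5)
Your proposal is correct and follows essentially the same route as the paper: bound $\lmin(-\ms_t) = O(\frac{\log d}{\eps})$ via the eigenvalue check and $\lmax(-\ms_t) = O(\frac{\kappa\log d}{\eps})$ from the algorithm's structure, apply Lemma~\ref{lem:trexpnegm} to the denominator and Lemma~\ref{lem:line5left} to the numerators, and combine the approximations. The only caveat is that Lemma~\ref{lem:line5left} takes its $\kappa$ parameter as an absolute bound $\mm \preceq \kappa\id$ rather than a spectral ratio, so it should be invoked with $\kappa \gets O(\frac{\kappa\log d}{\eps})$ as in the paper; the extra $\sqrt{\log d/\eps}$ this introduces is absorbed by the claimed polylogarithmic and $\eps^{-3}$ factors, so your final bound stands.
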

\begin{proof}
Since $-\ms_t$ is an explicit linear combination of $\{\mm_i\}_{i \in [n]}$, we have $\tmv(\ms_t) = O(\tmv(\{\mv_i\}_{i \in [n]}))$. We first obtain a $\frac \eps {30}$ approximation to the denominator in Line 6 within the required time
by applying Lemma~\ref{lem:trexpnegm} with $\kappa \gets O(\kappa)$, $R \gets O(\frac{\log d}{\eps})$, $\eps \gets \frac \eps {30}$, and adjusting the failure probability by $O(T)$. The bound on $\lmin(-\ms_t)$ comes from the check on Line 13 and the algorithm yields the bound on $\lmax(-\ms_t)$. Next, we obtain a $(\frac \eps {30}, \frac \eps {30\kappa n})$ approximation to each numerator in Line 6 within the required time by using Lemma~\ref{lem:line5left} with $\kappa \gets O(\frac{\kappa \log d}{\eps})$ and adjusting constants appropriately. Combining these approximations to the numerators and denominator yields the result.
\end{proof}

\subsubsection{Implementing a packing oracle}

In this section, we bound the complexity of Line 7 of Algorithm~\ref{alg:decidempc} by using Proposition~\ref{prop:optv}.

\begin{lemma}\label{lem:line7}
We can implement Line 7 of Algorithm~\ref{alg:decidempc} in time
\[O\Par{\tmv\Par{\{\mv_i\}_{i \in [n]}} \cdot \frac{\log^5(\frac{nd\kappa}{\delta\eps})}{\eps^5}}.\]
\end{lemma}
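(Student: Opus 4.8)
The plan is to invoke Proposition~\ref{prop:optv} with the matrix dictionary $\{\mm_i\}_{i \in [n]}$ and the linear objective $v = \kappa v_t$ that is fed into $\algpack$ on Line~\ref{line:packsdp}. First I would account for the preprocessing: the matrix-vector multiplication oracle that Proposition~\ref{prop:optv} requires is $\tmv(\{\mm_i\}_{i \in [n]})$, and since each $\mm_i = \mv_i\mv_i^\top$ has the explicit factorization \eqref{eq:vvt}, applying $\mm_i$ to a vector costs the same (up to constants) as applying $\mv_i$ and $\mv_i^\top$, so $\tmv(\{\mm_i\}_{i \in [n]}) = O(\tmv(\{\mv_i\}_{i \in [n]}))$. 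Next I would verify the accuracy requirement: Line~\ref{line:packsdp} of Algorithm~\ref{alg:decidempc} (as spelled out in the input specification) demands $w$ with $\sum_i w_i \mm_i \preceq \id$ and $v^\top w \ge (1 - \tfrac{\eps}{10})\opt(v)$, so I would invoke Proposition~\ref{prop:optv} with its accuracy parameter set to $\Theta(\eps)$ (concretely $\tfrac{\eps}{10}$), which only changes the $\eps^{-5}$ and $\log$ factors by constants.

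The second ingredient is bounding $T = O(\log\log \tfrac{\opt_+}{\opt_-} + \log\tfrac 1\eps)$ from Proposition~\ref{prop:optv}, which requires a-priori upper and lower bounds $\opt_+, \opt_-$ on $\opt(\kappa v_t)$. Here I would use the normalization assumed on Line~1 of Algorithm~\ref{alg:decidempc}, namely $1 \le \lmax(\mm_i) \le 2$ for all $i \in [n]$: in the feasible region $\sum_i w_i \mm_i \preceq \id$ each coordinate satisfies $w_i \le 1$ (since $w_i \mm_i \preceq \id$ forces $w_i \le \tfrac{1}{\lmax(\mm_i)} \le 1$), so $\opt(\kappa v_t) \le \kappa \|v_t\|_1 \le \kappa n \cdot \max_i [v_t]_i$; and selecting the single largest coordinate of $v_t$ with weight $1/\lmax(\mm_i) \ge \tfrac12$ is feasible, giving $\opt(\kappa v_t) \ge \tfrac{\kappa}{2}\max_i [v_t]_i$. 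Since the entries of $v_t$ are $(\tfrac{\eps}{10}, \tfrac{\eps}{10\kappa n})$-approximations of $\inprod{\mm_i}{\my_t} \in (0, \lmax(\mm_i)] \subseteq (0, 2]$, we have $\max_i [v_t]_i = \Theta(\poly(\eps,\kappa^{-1},n^{-1},d^{-1}))$ bounded above by a constant and below by something inverse-polynomial (or else $v_t$ is tiny and the ``no'' check handles it), so $\tfrac{\opt_+}{\opt_-} = \poly(\tfrac{nd\kappa}{\eps})$ and hence $T = O(\log\log\tfrac{nd\kappa}{\eps} + \log\tfrac1\eps) = O(\log\tfrac{nd\kappa}{\eps})$.

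Finally I would multiply out the runtime: Proposition~\ref{prop:optv} gives cost $O\!\left(\tmv(\{\mm_i\}_{i\in[n]}) \cdot \tfrac{\log^2(ndT(\delta\eps)^{-1})\log^2 d}{\eps^5}\right)\cdot T$; substituting $\tmv(\{\mm_i\}) = O(\tmv(\{\mv_i\}))$, the bound on $T$, and the failure-probability budget $\delta \gets \Theta(\delta/T)$ from the Line~7 specification in Algorithm~\ref{alg:decidempc} (which again only changes logs by constants), all the logarithmic terms collapse into $\log^5(\tfrac{nd\kappa}{\delta\eps})$ after absorbing the extra $T$ and $\log d, \log\log$ factors, yielding the claimed
\[
O\Par{\tmv\Par{\{\mv_i\}_{i \in [n]}} \cdot \frac{\log^5(\frac{nd\kappa}{\delta\eps})}{\eps^5}}.
\]
The main obstacle — such as it is — is the bookkeeping of the $\opt_+/\opt_-$ ratio: one has to be a little careful that $v_t$ is not so small that the lower bound degenerates, but this is precisely the regime where the Line~9 check ``$\kappa\inprod{x_t}{v_t} < 1 - \tfrac\eps5$'' fires and the algorithm returns ``no'' before ever reaching an ill-posed packing call, so the crude polynomial bound on $\opt_+/\opt_-$ suffices and only contributes to the $\log^5$ factor.
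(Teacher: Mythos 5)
Your proof is correct and follows the same overall route as the paper's: both reduce Line 7 to a single invocation of Proposition~\ref{prop:optv}, and the only substantive work is bounding $\opt_+/\opt_-$ so that the factor $T$ stays polylogarithmic. Where you diverge is in how that ratio is bounded. The paper sets $\opt_- = 1 - O(\eps)$ (arguing the binary search never needs to resolve values below the Line 9 threshold) and $\opt_+ = O(\kappa)$ (since $\Tr(\my_t)=1$ and every feasible $w$ has $\sum_{i} w_i\mm_i \preceq \id$, so $\opt(\kappa v^\star_t) \le \kappa$), giving ratio $O(\kappa)$; you instead use the normalization $1 \le \lmax(\mm_i) \le 2$ to sandwich $\opt(\kappa v_t)$ between $\tfrac{\kappa}{2}\max_i [v_t]_i$ and $\kappa n \max_i[v_t]_i$, giving ratio $O(n)$. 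Both land comfortably inside the claimed $\log^5$ budget, and your bounds have the mild advantage of being genuine a priori bounds computable from the explicit vector $v_t$, rather than relying on the interaction with the Line 9 test. One small tidy-up: your worry about whether $\max_i[v_t]_i$ is polynomially bounded above and below is unnecessary, since that quantity cancels in the ratio $\opt_+/\opt_- \le 2n$; the only genuinely degenerate case is $v_t = 0$, which the Line 9 check disposes of, as you note.
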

\begin{proof}
First, we observe that the proof of the ``no'' case in Lemma~\ref{lem:decidempccorrect} demonstrates that in all calls to $\alg$, we can set our lower bound $\opt_- = 1 - O(\eps)$, since the binary search of Proposition~\ref{prop:optv} will never need to check smaller values to determine whether the test on Line 9 passes. On the other hand, the definition of $\opt(\kappa v^\star_t)$ in \eqref{eq:optrelate}, as well as $\opt(\kappa v_t) \le (1 + \frac \eps {10})\opt(\kappa v^\star_t)$ by the multiplicative approximation guarantee, shows that it suffices to set $\opt_+ \le (1 + O(\eps)) \kappa$. 	
	
We will use the algorithm of Proposition~\ref{prop:optv} as $\algpack$ in Algorithm~\ref{alg:decidempc}. In our setting, we argued $\frac{\opt_+}{\opt_-} = O(\kappa)$, giving the desired runtime bound via Proposition~\ref{prop:optv}.
\end{proof}

\subsubsection{Approximating the smallest eigenvalue}

In this section, we bound the complexity of Line 13 of Algorithm~\ref{alg:decidempc}, which asks to approximate the smallest eigenvalue of a matrix $\mm$ to additive error. At a high level, our strategy is to use the power method on the negative exponential $\exp(-\mm)$, which we approximate to additive error via Corollary~\ref{cor:approxexp}. We first state a guarantee on the classical power method from \cite{MuscoM15}, which approximates the top eigenspace (see also \cite{RokhlinST09, HalkoMT11} for earlier analyses of the power method).

\begin{fact}[Theorem 1, \cite{MuscoM15}]\label{fact:powermethod}
	For any $\delta \in (0, 1)$ and $\mm \in \Sym_{\ge 0}^d$, there is an algorithm, $\Power(\mm, \delta)$, which returns with probability at least $1 - \delta$ a value $V$ such that $\lmax(\mm) \ge V \ge 0.9 \lmax(\mm)$. The algorithm runs in time $O(\tmv(\mm)\log \frac d \delta)$, and is performed as follows:
	\begin{enumerate}
		\item Let $u \in \R^d$ be a random unit vector.
		\item For some $\Delta = O(\log \frac d \delta)$, let $v \gets \frac{\mm^\Delta u}{\norm{\mm^\Delta u}_2}$.
		\item Return $\norm{\mm v}_2$.
	\end{enumerate}
\end{fact}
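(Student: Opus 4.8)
\textbf{Proof plan for Fact~\ref{fact:powermethod} (power method guarantee).}

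The plan is to analyze the randomized power iteration directly, following the standard Rayleigh-quotient argument and then the tighter ``gap-free'' analysis of \cite{MuscoM15}. First I would set up notation: let $\mm \in \Sym_{\ge 0}^d$ have eigendecomposition $\mm = \sum_{j=1}^d \lambda_j z_j z_j^\top$ with $\lambda_1 \ge \lambda_2 \ge \cdots \ge \lambda_d \ge 0$ and orthonormal eigenvectors $z_j$. Writing the random unit vector $u$ in this basis as $u = \sum_j c_j z_j$, the iterate after $\Delta$ steps is $v = \mm^\Delta u / \norm{\mm^\Delta u}_2$, and the returned value is $\norm{\mm v}_2 = \norm{\mm^{\Delta+1} u}_2 / \norm{\mm^\Delta u}_2$. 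The upper bound $\norm{\mm v}_2 \le \lmax(\mm)$ is immediate since $\norm{\mm w}_2 \le \lmax(\mm)\norm{w}_2$ for any $w$, so the only content is the lower bound $\norm{\mm v}_2 \ge 0.9\,\lmax(\mm)$ with probability $\ge 1 - \delta$.

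The key steps are as follows. (1) \emph{Control the initial coefficient mass on the top eigenspace.} For a uniformly random unit vector $u \in \R^d$, standard anti-concentration of the Gaussian/sphere (e.g.\ the fact that $\Abs{c_1} \ge \Omega(1/\sqrt d)$ with probability $\ge 1 - \delta$ after the usual union/tail estimate, or more carefully, that the projection onto any fixed direction has magnitude at least $c\delta/\sqrt d$ except with probability $\delta$) gives $c_1^2 \ge \Omega(\delta^2/d)$. Condition on this event henceforth. (2) \emph{Reduce to a gap-free statement.} Fix a threshold $\gamma = 0.95\,\lambda_1$ (say) and split the spectrum into ``big'' indices $B = \{j : \lambda_j \ge \gamma\}$ and ``small'' indices $S = \{j : \lambda_j < \gamma\}$. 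Then
\[
\norm{\mm v}_2^2 = \frac{\sum_j \lambda_j^{2\Delta+2} c_j^2}{\sum_j \lambda_j^{2\Delta} c_j^2} \ge \frac{\sum_{j \in B} \lambda_j^{2\Delta+2} c_j^2}{\sum_{j \in B} \lambda_j^{2\Delta} c_j^2 + \sum_{j\in S}\lambda_j^{2\Delta}c_j^2}.
\]
The numerator over its own partial denominator is $\ge \gamma^2 \ge (0.95\lambda_1)^2$; so it suffices to show the $S$-tail $\sum_{j\in S}\lambda_j^{2\Delta}c_j^2$ is at most a small multiple of the $B$-sum $\sum_{j\in B}\lambda_j^{2\Delta}c_j^2$. (3) \emph{Bound the tail via the number of iterations.} The $B$-sum is at least $\lambda_1^{2\Delta} c_1^2 \ge \lambda_1^{2\Delta}\cdot\Omega(\delta^2/d)$, while the $S$-sum is at most $\gamma^{2\Delta}\sum_{j\in S}c_j^2 \le (0.95\lambda_1)^{2\Delta}$. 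Hence the ratio is at most $(0.95)^{2\Delta}\cdot O(d/\delta^2)$, which is $\le 0.01$ once $\Delta = \Theta(\log(d/\delta))$ for an appropriate constant (this is where $\Delta = O(\log\frac d\delta)$ enters). Plugging back into the displayed bound gives $\norm{\mm v}_2^2 \ge (0.95)^2(1 - 0.01)\lambda_1^2 \ge (0.9)^2\lambda_1^2$ after adjusting constants, i.e.\ $\norm{\mm v}_2 \ge 0.9\lmax(\mm)$. (4) \emph{Runtime.} Each iteration is one matrix-vector product with $\mm$ plus an $O(d)$ renormalization, and there are $\Delta + 1 = O(\log\frac d\delta)$ of them, giving $O(\tmv(\mm)\log\frac d\delta)$; the final $\norm{\mm v}_2$ is one more product. (5) \emph{Failure probability} is entirely from step (1), which is $\delta$.

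The main obstacle is step (1): getting a clean, quantitatively correct lower bound on $c_1^2$ (equivalently on the squared projection of a random unit vector onto a fixed one-dimensional subspace) that holds with probability $1-\delta$, and threading the resulting $\text{poly}(1/\delta, d)$ factor through step (3) so that it is absorbed by only a \emph{logarithmic} number of iterations. The resolution is exactly the observation that $\Delta$ sits inside an exponential $(0.95)^{2\Delta}$ while the bad factor $d/\delta^2$ is only polynomial, so $\Delta = \Theta(\log(d/\delta))$ suffices; one must also be slightly careful when $\lambda_1$ has multiplicity or near-multiplicity, which is handled cleanly by the ``big/small'' split above rather than trying to isolate a single top eigenvector. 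An alternative, essentially equivalent route is to invoke the block/gap-free analysis of \cite{MuscoM15} verbatim with block size $1$; I would mention this as the source but carry out the short self-contained argument above since only the crude $0.9$-approximation is needed here.
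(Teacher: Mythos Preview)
The paper does not prove this statement at all: it is stated as a \textbf{Fact} with a citation to \cite{MuscoM15} and used as a black box throughout. Your self-contained argument is correct and is exactly the standard gap-free power-method analysis (which is what \cite{MuscoM15} carries out in the more general block setting), so there is nothing to compare against in the paper itself.
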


\begin{lemma}\label{lem:line13}
We can implement Line 13 of Algorithm~\ref{alg:decidempc} in time
\[O\Par{\tmv\Par{\{\mv_i\}_{i \in [n]}} \cdot \sqrt{\kappa} \cdot \frac{\log^2(\frac{d\kappa}{\delta\eps})}{\eps}}.\]
\end{lemma}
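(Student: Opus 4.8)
The plan is to reduce the task to estimating a \emph{top} eigenvalue via the identity $\lmin(\mm) = -\log\lmax(\exp(-\mm))$ (valid for PSD $\mm$), and then to run the power method of Fact~\ref{fact:powermethod} on a polynomial approximation of the relevant matrix exponential supplied by Corollary~\ref{cor:approxexp}. Note that $-\ms_{t+1}$ is a nonnegative linear combination of $\{\mm_i\}_{i\in[n]}$ (up to the scalar $\eta\kappa$), hence PSD with $\tmv(-\ms_{t+1}) = O(\tmv(\{\mv_i\}_{i\in[n]}))$; moreover $\mg_t \preceq \kappa\id$ by the guarantee of $\algpack$, so $\norm{\ms_{t+1}}_2 \le \eta\kappa(t+1)\le\eta\kappa T = O(\kappa\eps^{-1}\log d) =: K$.

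Concretely, I would fix $R := C\eps^{-1}\log d$ for a sufficiently large absolute constant $C$, and apply Corollary~\ref{cor:approxexp} with this $R$ and the operator-norm bound $K$ to produce a degree-$O(\sqrt{KR}+R) = O(\sqrt\kappa\,\eps^{-1}\log d)$ polynomial $p$ with $\exp(\ms_{t+1}) - e^{-R}\id \preceq p(-\ms_{t+1}) \preceq \exp(\ms_{t+1}) + e^{-R}\id$. I then form $\hat\mpack := p(-\ms_{t+1}) + e^{-R}\id$, which is PSD (since $\hat\mpack \succeq \exp(\ms_{t+1}) \succ \mzero$) and satisfies $\exp(\ms_{t+1}) \preceq \hat\mpack \preceq \exp(\ms_{t+1}) + 2e^{-R}\id$; finally I run $\Power(\hat\mpack, \delta/(4T))$ to obtain $V$ with $\lmax(\hat\mpack)\ge V\ge 0.9\,\lmax(\hat\mpack)$, and return $\tau\gets -\log V$. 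Each matrix-vector product with $\hat\mpack$ is evaluated by Horner's rule using $\deg(p)$ products with $-\ms_{t+1}$.

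For correctness, observe $\lmax(\exp(\ms_{t+1})) = \exp(-\lmin(-\ms_{t+1}))$. When $\lmin(-\ms_{t+1})\le R-1$ this top eigenvalue is at least $e\cdot e^{-R}$, so the additive slack $2e^{-R}$ is only a constant fraction of it; combined with the power-method factor $0.9$, this forces $\lmax(\hat\mpack) \in [1, O(1)]\cdot\exp(-\lmin(-\ms_{t+1}))$, hence $\Abs{\tau - \lmin(-\ms_{t+1})} = O(1) \le \eps^{-1}\log d$, a valid $\frac{\log d}{\eps}$-additive approximation. The main obstacle is showing that this favorable regime always holds whenever Line 13 is actually reached --- the estimate is useless if $\lmin(-\ms_{t+1})$ greatly exceeds $R$. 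For this I would use that $-\ms_{t+1} = -\ms_t + \eta\mg_t$ with $\eta\mg_t\preceq\eta\kappa\id = \frac{\eps}{10}\id$, so the sequence $\lmin(-\ms_{t+1})$ is nondecreasing, starts from $0$, and grows by at most $\frac{\eps}{10}$ per iteration. Consequently the first iteration $s^\star$ at which $\lmin(-\ms_{s^\star+1}) > \tfrac{12\log d}{\eps}+1$ still satisfies $\lmin(-\ms_{s^\star+1}) \le \tfrac{13\log d}{\eps} < R - 1$; by the previous sentence, conditioned on its power-method call succeeding, its estimate $\tau_{s^\star}$ is valid and thus exceeds $\tfrac{12\log d}{\eps}$, so the check on Line 14 fires and the algorithm terminates at iteration $s^\star$. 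Hence no iteration $t$ with $\lmin(-\ms_{t+1}) > R-1$ is ever reached, which closes the argument (after union-bounding the at most $T$ power-method invocations to control the overall failure probability, exactly as in Lemma~\ref{lem:decidempccorrect}).

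For the runtime, each matrix-vector product with $\hat\mpack$ costs $O(\deg(p)\cdot\tmv(\{\mv_i\}_{i\in[n]})) = O(\sqrt\kappa\,\eps^{-1}\log d\cdot\tmv(\{\mv_i\}_{i\in[n]}))$, and $\Power(\hat\mpack,\delta/(4T))$ performs $O(\log\tfrac{dT}{\delta})$ such products; using $T = O(\kappa\eps^{-2}\log d)$ and $\log\tfrac{dT}{\delta} = O(\log\tfrac{d\kappa}{\delta\eps})$ yields the claimed bound $O(\tmv(\{\mv_i\}_{i\in[n]})\cdot\sqrt\kappa\cdot\eps^{-1}\log^2(\tfrac{d\kappa}{\delta\eps}))$.
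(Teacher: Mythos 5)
Your proof is correct and follows essentially the same route as the paper's: approximate $\exp(\ms_{t+1})$ by the polynomial of Corollary~\ref{cor:approxexp}, estimate its top eigenvalue with Fact~\ref{fact:powermethod}, and use the termination check together with the $\frac{\eps}{10}$-per-iteration growth of $\lmin(-\ms_t)$ to guarantee the quantity being estimated never leaves the range where the additive error $e^{-R}$ is negligible. The only (cosmetic) difference is that you shift the polynomial approximation \emph{up} by $e^{-R}\id$ to keep the power-method input PSD and absorb the resulting one-sided overestimate into the $O(1)$ additive error, whereas the paper shifts it down; your variant is if anything slightly cleaner on the PSD-ness point.
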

\begin{proof}
Throughout, denote $\mm \defeq -\ms_{t + 1}$, $L \defeq \lmin(\mm)$ and $R \defeq \frac{\log d}{\eps}$, and note that (assuming all previous calls succeeded), we must have $L \le 14R$ since the previous iteration had $L \le 13R$ and $\ms_t$ is changing by an $O(\eps)$-spectrally bounded matrix each iteration. It suffices to obtain $V$ with
\begin{equation}\label{eq:vdef}0.9(\exp(-L) - \exp(-20R)) \le V \le \exp(-L),\end{equation}
and then return $-\log(V)$, to obtain an $R$-additive approximation to $L$. To see this, it is immediate that $-\log(V) \ge L$ from the above display. Moreover, for the given ranges of $L$ and $R$, it is clear
\begin{gather*}
\exp\Par{-20R + L} \le \exp(-6R) \le 1 - \frac{3}{2R} \\
\implies \exp(-L) - \exp(-20R) \ge \exp(-L) \cdot \frac{3}{2R} \\
\implies \exp(-L) - \exp(-20R) \ge \exp\Par{-L - \frac {2R}{3}} \\
\implies \log\Par{\frac{1}{\exp(-L) - \exp(-20R)}} \le L + \frac{2R}{3}.
\end{gather*}
Combining with 
\[-\log(V) \le \log\Par{\frac{1}{\exp(-L) - \exp(-20R)}} + \log \frac{10}{9} \le \log\Par{\frac{1}{\exp(-L) - \exp(-20R)}} + \frac{R}{3}\]
yields the claim. It hence suffices to provide $V$ satisfying \eqref{eq:vdef} in the requisite time. To do so, we first use Corollary~\ref{cor:approxexp} with $\kappa \gets O(\frac{\kappa \log d}{\eps})$ to produce $\mpack$ with $\tmv(\mpack) = O(\tmv(\mm) \cdot \sqrt{\kappa} \cdot \frac{\log d}{\eps})$ and
\[ \exp(-\mm)- \half\exp(-20R) \id \preceq \mpack \preceq \exp(-\mm)+ \half\exp(-20R)\id.\]
The conclusion follows by applying Fact~\ref{fact:powermethod} to $\mpack - \half\exp(-20R)\id$ and adjusting $\delta$.
\end{proof}

\subsubsection{Runtime of the optimization variant}

Finally, we put these pieces together to solve the optimization variant of \eqref{eq:structmpc}, \eqref{eq:approx}. We begin by stating the runtime of Algorithm~\ref{alg:decidempc}, which follows from combining Lemmas~\ref{lem:line6},~\ref{lem:line7}, and~\ref{lem:line13}.

\begin{corollary}\label{cor:decideid}
Algorithm~\ref{alg:decidempc} can be implemented in time
\[O\Par{\tmv(\{\mv_i\}) \cdot \kappa^{1.5} \cdot \frac{\log^6(\frac{mnd\kappa}{\delta\eps})}{\eps^7}}. \]
\end{corollary}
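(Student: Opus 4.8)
The proof is pure bookkeeping: sum the per-iteration costs furnished by Lemmas~\ref{lem:line6},~\ref{lem:line7}, and~\ref{lem:line13}, and multiply by the iteration count. First I would pin down the latter: by the assignments on Line 3, $\eta = \frac{\eps}{10\kappa}$ and $T = \lceil \frac{10\log d}{\eta\eps}\rceil$, so $T = O\Par{\frac{\kappa\log d}{\eps^2}}$, which is also the bound on $T$ assumed in the input specification of $\algpack$, so the setup is self-consistent.

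Next I would account for the work done in a single pass of the \textbf{for} loop. Lines 6, 7, and 13 are implemented by Lemmas~\ref{lem:line6},~\ref{lem:line7}, and~\ref{lem:line13}, at respective costs
\[O\Par{\tmv(\{\mv_i\})\cdot\sqrt\kappa\cdot\tfrac{\log^3(\cdot)}{\eps^3}},\quad O\Par{\tmv(\{\mv_i\})\cdot\tfrac{\log^5(\cdot)}{\eps^5}},\quad O\Par{\tmv(\{\mv_i\})\cdot\sqrt\kappa\cdot\tfrac{\log^2(\cdot)}{\eps}},\]
where $(\cdot)$ abbreviates $\frac{mnd\kappa}{\delta\eps}$. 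Every other operation in the loop is cheap under the implicit representation we maintain: $\my_t$ (Line 5) is never materialized but only accessed through the approximate inner products of Line 6; $\mg_t$ (Line 8) and $\ms_{t+1}$ (Line 11) are tracked only through the coefficient vector $c_{t+1} = c_t - \eta\kappa x_t \in \R^n$ of the linear combination $\sum_i (c_{t+1})_i \mm_i$; and the tests on Lines 9 and 14, along with updating the running average $\bx$, take $O(n)$ time. Since $\tmv(\{\mv_i\}) = \Omega(n)$ (each $\mv_i^\top v$ produces a nonempty output, and the output vector $w$ itself has $n$ coordinates), this $O(n)$ overhead is absorbed. For the probabilistic guarantees, each subroutine call must succeed with probability $1 - \Omega(\delta/T)$, so I apply the three lemmas (and Proposition~\ref{prop:optv} within Lemma~\ref{lem:line7}) with failure parameter $\Theta(\delta/T)$ and union bound over the $T$ iterations; since $T = \poly(mnd\kappa\eps^{-1})$, this changes only the polylogarithmic factors, which are already subsumed in the $\log^6(\cdot)$ of the target bound.

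Finally, summing the three per-iteration costs and multiplying by $T = O\Par{\frac{\kappa\log d}{\eps^2}}$ gives a total of
\[O\Par{\tmv(\{\mv_i\})\cdot\Par{\kappa^{1.5}\cdot\tfrac{\log^4(\cdot)}{\eps^5} + \kappa\cdot\tfrac{\log^6(\cdot)}{\eps^7} + \kappa^{1.5}\cdot\tfrac{\log^3(\cdot)}{\eps^3}}},\]
and using $1 \le \kappa$ and $\eps \le 1$ each summand is at most $\tmv(\{\mv_i\})\cdot\kappa^{1.5}\cdot\frac{\log^6(\cdot)}{\eps^7}$, which is the claimed bound. There is no real obstacle here; the only mildly delicate point is that no single line carries both the worst $\kappa$-dependence and the worst $\eps$-dependence — the $\kappa^{1.5}$ arises from the $\sqrt\kappa$ of Lines 6/13 times the $\kappa$ in $T$, while the $\eps^{-7}$ arises from the $\eps^{-5}$ of Line 7 times the $\eps^{-2}$ in $T$ — but the stated expression is a valid common upper bound for the sum of all contributions.
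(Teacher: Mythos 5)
Your proposal is correct and matches the paper's argument, which likewise obtains the corollary by combining Lemmas~\ref{lem:line6},~\ref{lem:line7}, and~\ref{lem:line13} with the iteration count $T = O(\kappa\log d/\eps^2)$. The extra bookkeeping you supply (implicit representation of $\ms_t$, union bound with failure parameter $\Theta(\delta/T)$, and the observation that no single line carries both the worst $\kappa$- and $\eps$-dependence) is all consistent with the paper and correctly yields the stated bound.
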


\begin{restatable}{theorem}{restatemainid}\label{thm:mainid}
	Given matrices $\{\mm_i\}_{i \in [n]}$ with explicit factorizations \eqref{eq:vvt}, such that \eqref{eq:structmpc} is feasible for $\mb = \id$ and some $\kappa \ge 1$, we can return weights $w \in \R^n_{\ge 0}$ satisfying \eqref{eq:approx} with probability $\ge 1 - \delta$ in time 
	\[O\Par{\tmv(\{\mv_i\}_{i\in[n]}) \cdot \kappa^{1.5} \cdot \frac{\log^7(\frac{mnd\kappa}{\delta\eps})}{\eps^7}}.\]
\end{restatable}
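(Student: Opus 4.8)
The plan is to reduce to the approximate tester $\DecideMPC$ of Algorithm~\ref{alg:decidempc} (correct by Lemma~\ref{lem:decidempccorrect}, runtime bounded in Corollary~\ref{cor:decideid}) by searching for the optimal value $\kappa$, preceded by a normalization enforcing the precondition $1\le\lmax(\mm_i)\le 2$ and followed by a rescaling that turns the condition-number guarantee \eqref{eq:feasiblempcplus} into the absolute form \eqref{eq:approx}. Fix $\eps'\defeq c\eps$ for a small absolute constant $c$, and set every subroutine's failure probability to $\delta'\defeq\delta/O(n+\log(\kappa\eps^{-1}))$, so a union bound over all invocations gives overall success probability $\ge 1-\delta$. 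First, for each $i\in[n]$ I would run $\Power(\mm_i,\delta')$ (Fact~\ref{fact:powermethod}; note $\tmv(\mm_i)=O(\tmv(\mv_i))$) to obtain $V_i\in[0.9\lmax(\mm_i),\lmax(\mm_i)]$, and replace $\mm_i$ by $\mm_i^{(0)}\defeq V_i^{-1}\mm_i=\mv_i^{(0)}(\mv_i^{(0)})^\top$, $\mv_i^{(0)}\defeq V_i^{-1/2}\mv_i$, so $1\le\lmax(\mm_i^{(0)})\le 2$. Since $\sum_i w_i\mm_i^{(0)}=\sum_i(w_i/V_i)\mm_i$, the family of realizable PSD matrices is unchanged, so \eqref{eq:structmpc} with $\mb=\id$ still holds for $\{\mm_i^{(0)}\}_{i\in[n]}$ with the same $\kappa$, and any reweighting transfers back via $w_i\mapsto w_i/V_i$; this costs $O(\tmv(\{\mv_i\}_{i\in[n]})\log(nd/\delta))$, dominated by the main term.

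To locate $\kappa$ I would do a doubling search followed by a bisection, all calls using $\algpack$ as provided by Lemma~\ref{lem:line7}. For the doubling search, invoke $\DecideMPC(\{\mm_i^{(0)}\}_{i\in[n]},\kappa_\ell,\algpack,\delta',\eps')$ for $\kappa_\ell=1+\eps',2,4,8,\dots$ until the first ``yes''. If that first ``yes'' occurs at $\kappa_0=1+\eps'$, its weights certify a matrix of condition number $\le(1+\eps')^2\le(1+\eps)\kappa$ (using $\kappa\ge 1$) and we skip to postprocessing; otherwise, since Lemma~\ref{lem:decidempccorrect} guarantees ``yes'' once $\kappa_\ell\ge\kappa/(1-\eps')$, the search terminates at some $\kappa_\ell=O(\kappa)$ whose immediate predecessor answered ``no'', certifying $\kappa>(1-\eps')\kappa_{\ell-1}$ — in particular every query uses a parameter $O(\kappa)$. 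Then I would bisect the geometric grid of ratio $1+\eps'$ spanning $[\kappa_{\ell-1},\kappa_\ell]$ (multiplicative width $\le 2$, hence $O(\eps'^{-1})$ grid points and $O(\log\eps'^{-1})$ calls), seeded with the known ``no''/``yes'' endpoints and maintaining the invariant ``no'' at the left endpoint and ``yes'' at the right, to reach adjacent grid points $\kappa_{\mathrm{lo}}<\kappa_{\mathrm{hi}}=(1+\eps')\kappa_{\mathrm{lo}}$ with ``no'' at $\kappa_{\mathrm{lo}}$ and ``yes'' at $\kappa_{\mathrm{hi}}$; the tester's two-sided slack is harmless since any ``no'' at $\kappa'$ certifies $\kappa>(1-\eps')\kappa'$ and any ``yes'' at $\kappa'$ returns weights satisfying \eqref{eq:feasiblempcplus} with parameter $\kappa'$. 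The ``yes'' at $\kappa_{\mathrm{hi}}$ returns $\bx\in\R^n_{\ge 0}$ with $\kappa(\sum_i\bx_i\mm_i^{(0)})\le(1+\eps')\kappa_{\mathrm{hi}}$, and the ``no'' at $\kappa_{\mathrm{lo}}$ gives $\kappa_{\mathrm{hi}}=(1+\eps')\kappa_{\mathrm{lo}}<(1+\eps')(1-\eps')^{-1}\kappa$, so $\kappa(\sum_i\bx_i\mm_i^{(0)})<(1+\eps')^2(1-\eps')^{-1}\kappa$.

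For postprocessing, set $w_i\defeq\bx_i/V_i$ so $\mat(w)=\sum_i\bx_i\mm_i^{(0)}$ has condition number $O(\kappa)$, and estimate $\lambda\defeq\lmin(\mat(w))$ to relative accuracy $\eps'$ by a power-method computation on a polynomial approximation of $\exp(-t\,\mat(w))$ for a suitable $t=O(\kappa\eps'^{-1})$, exactly as in the proofs of Lemmas~\ref{lem:trexpnegm} and~\ref{lem:line13} — this needs only matrix-vector products with $\mat(w)$ (each costing $O(\tmv(\{\mv_i\}_{i\in[n]}))$) and runs in $\tO(\tmv(\{\mv_i\}_{i\in[n]})\cdot\poly(\kappa,\eps^{-1}))$, dominated by the main term. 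Writing $\widetilde\lambda\in[(1-\eps')\lambda,(1+\eps')\lambda]$ for the estimate, I would output $w'\defeq\frac{1+\eps'}{\widetilde\lambda}w\in\R^n_{\ge 0}$; then $\lmin(\mat(w'))=\frac{1+\eps'}{\widetilde\lambda}\lambda\ge 1$ and $\lmax(\mat(w'))\le\frac{1+\eps'}{1-\eps'}\kappa(\mat(w))\le\frac{(1+\eps')^3}{(1-\eps')^2}\kappa\le(1+\eps)\kappa$ once $c$ is small enough, i.e.\ $\id\preceq\mat(w')\preceq(1+\eps)\kappa\,\id$, which is \eqref{eq:approx}. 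The search makes $O(\log\kappa+\log\eps'^{-1})=O(\log(\kappa\eps^{-1}))$ calls to $\DecideMPC$, each with parameter $O(\kappa)$ and hence — using $\log(1/\delta')=O(\log(nd\kappa\delta^{-1}\eps^{-1}))$ — of cost $O(\tmv(\{\mv_i\}_{i\in[n]})\cdot\kappa^{1.5}\cdot\eps^{-7}\log^6(\tfrac{mnd\kappa}{\delta\eps}))$ by Corollary~\ref{cor:decideid}; the product is $O(\tmv(\{\mv_i\}_{i\in[n]})\cdot\kappa^{1.5}\cdot\eps^{-7}\log^7(\tfrac{mnd\kappa}{\delta\eps}))$, with preprocessing and postprocessing lower order.

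The difficulty here is not conceptual, since Algorithm~\ref{alg:decidempc} already supplies the tester; it is twofold bookkeeping. First, the number of $\DecideMPC$ invocations must be held to $O(\log(\kappa\eps^{-1}))$: this is why the search has two phases, doubling to find the scale of $\kappa$ and then \emph{bisection} (not enumeration of all $\Theta(\eps^{-1})$ grid points) to refine, so that the overhead over Corollary~\ref{cor:decideid} is a single logarithmic factor rather than a factor $\eps^{-1}$. Second, one must track how the compounded $(1\pm\eps')$ errors — from the tester's two-sided slack, the dictionary renormalization, and the final $\lmin$-rescaling — telescope into the single target factor $(1+\eps)\kappa$; this is arranged by taking $\eps'=c\eps$ for a small enough absolute constant $c$, together with the observation that each rescaling commutes with taking nonnegative combinations so the output stays in $\R^n_{\ge 0}$.
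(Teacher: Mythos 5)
Your proposal is correct and follows essentially the same route as the paper's proof: normalize each $\mm_i$ via the power method so $1 \le \lmax(\mm_i) \le 2$, run a doubling search with $\DecideMPC$ to locate the scale of $\kappa$ at a $\log\kappa$ overhead, then binary search on a $(1+O(\eps))$-granularity geometric grid at a $\log\frac{1}{\eps}$ overhead, invoking Corollary~\ref{cor:decideid} for each call. The only difference is that you explicitly spell out the final rescaling (estimating $\lmin(\mat(w))$ to convert the condition-number guarantee \eqref{eq:feasiblempcplus} into the two-sided form \eqref{eq:approx}) and the handling of the tester's two-sided slack during bisection, both of which the paper leaves implicit.
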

\begin{proof}
First, to guarantee all $\mm_i$ satisfy $1 \le \lmax(\mm_i) \le 2$, we exploit the scale-invariance of the problem \eqref{eq:structmpc}, \eqref{eq:approx} and rescale each matrix by a $2$-approximation to its largest eigenvalue. This can be done with Fact~\ref{fact:powermethod} and does not bottleneck the runtime.

Next, we perform an incremental search on $\kappa$ initialized at $1$, and increasing in multiples of $2$. By determining the first guess of $\kappa$ such that Algorithm~\ref{alg:decidempc} returns ``yes,'' we obtain a $2$-approximation to the optimal $\kappa$ at a $\log \kappa$ overhead from Corollary~\ref{cor:decideid}. We then can binary search at multiples of $1 + O(\eps)$ amongst the multiplicative range of $2$ to obtain the required multiplicative approximation, at a $\log \frac 1 \eps$ overhead from Corollary~\ref{cor:decideid}, yielding the overall runtime.
\end{proof}

\subsection{General constraints}\label{ssec:genb}

In this section, we consider a more general setting in which there is a constraint matrix $\mb$ in the problem \eqref{eq:structmpc}, \eqref{eq:approx} which we have matrix-vector product access to, but we cannot efficiently invert $\mb$. We show that we can obtain a runtime similar to that in Theorem~\ref{thm:mainid} (up to logarithmic factors and one factor of $\sqrt{\kappa}$), with an overhead depending polylogarithmically on $\alpha$ and $\beta$ defined in \eqref{eq:alphabetadef} and restated here for convenience:
\[
\mb \preceq \mat(\1) \preceq \alpha\mb,\; \id \preceq \mb \preceq \beta\id.
\]

\subsubsection{Homotopy method preliminaries}\label{sssec:hom}

Our algorithm will be a ``homotopy method'' which iteratively finds reweightings of the $\{\mm_i\}_{i \in [n]}$ which $(1 + \eps)\kappa$-approximate $\mb + \lam \mat(\1)$, for a sequence of $\lam$ values. More specifically, we first bound the required range of $\lam$ via two simple observations.

\begin{lemma}\label{lem:lamfirst}
Let $\lam \ge \frac 1 \eps$. Then,
\[\mb + \lam \mat(\1) \preceq \sum_{i \in [n]} (1 + \lam)\mm_i \preceq (1 + \eps)\Par{\mb + \lam \mat(\1)}.\]
\end{lemma}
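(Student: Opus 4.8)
\textbf{Proof proposal for Lemma~\ref{lem:lamfirst}.}

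The plan is to verify the two Loewner inequalities separately, exploiting the fact that $\mat(\1) = \sum_{i \in [n]} \mm_i$ and that all $\mm_i \in \PSD^d$. For the left inequality, I would write $\sum_{i \in [n]}(1 + \lam)\mm_i = (1 + \lam)\mat(\1) = \mat(\1) + \lam \mat(\1)$, so the claim $\mb + \lam \mat(\1) \preceq \mat(\1) + \lam \mat(\1)$ reduces to $\mb \preceq \mat(\1)$, which is exactly the first hypothesis in \eqref{eq:alphabetadef}. This direction needs nothing beyond unpacking definitions.

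For the right inequality, again using $\sum_{i \in [n]}(1 + \lam)\mm_i = (1 + \lam)\mat(\1)$, I need $(1 + \lam)\mat(\1) \preceq (1 + \eps)(\mb + \lam \mat(\1))$, i.e.\ $(1 + \lam - (1+\eps)\lam)\mat(\1) \preceq (1 + \eps)\mb$, i.e.\ $(1 - \eps\lam)\mat(\1) \preceq (1 + \eps)\mb$. Since $\lam \ge \tfrac 1 \eps$ we have $1 - \eps\lam \le 0$, so the left-hand side is $\preceq \mzero \preceq (1+\eps)\mb$ (using $\mb \succeq \id \succ \mzero$, or just $\mb \in \PSD^d$). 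Hence the inequality holds trivially. Combining the two directions gives the claimed sandwich.

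I do not anticipate any real obstacle here: the statement is essentially a one-line consequence of $\mb \preceq \mat(\1)$ and the sign of $1 - \eps\lam$ once $\lam \ge 1/\eps$. The only thing to be careful about is the bookkeeping of which PSD facts are being used — specifically that $\mat(\1) \succeq \mzero$ (so that scaling it by a nonpositive constant keeps it $\preceq \mzero$) and that $\mb \succeq \mzero$ — both of which are immediate from the setup in Section~\ref{ssec:framework_intro}.
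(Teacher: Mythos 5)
Your proof is correct and matches the paper's argument: the left inequality is exactly $\mb \preceq \mat(\1)$ from \eqref{eq:alphabetadef}, and the right inequality reduces to a sign check on $1 - \eps\lam$ together with $\mb, \mat(\1) \succeq \mzero$, which is precisely what the paper does (stated there in one line). No gaps.
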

\begin{proof}
The first inequality is immediate from \eqref{eq:alphabetadef}. The second follows from $\mb \succeq \mzero$.
\end{proof}

\begin{lemma}\label{lem:lamlast}
	Let $\lam \le \frac{\eps\kappa}{2\alpha}$, and let $w \in \R^n_{\ge 0}$ satisfy
	\[\mb + \lam \mat(\1) \preceq \sum_{i \in [n]} w_i \mm_i \preceq (1 + \eps) \kappa \Par{\mb + \lam \mat(\1)}.\]
	Then, the same $w$ satisfies
	\[\mb \preceq \sum_{i \in [n]} w_i \mm_i \preceq (1 + 2\eps) \kappa \mb.\]
\end{lemma}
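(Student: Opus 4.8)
The plan is to establish the two Loewner inequalities $\mb \preceq \sum_{i \in [n]} w_i \mm_i$ and $\sum_{i \in [n]} w_i \mm_i \preceq (1 + 2\eps)\kappa\mb$ separately, each a short consequence of the hypotheses together with the structural bound $\mat(\1) \preceq \alpha\mb$ supplied by \eqref{eq:alphabetadef}.

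For the lower bound, observe that every $\mm_i \in \PSD^d$, so $\mat(\1) = \sum_{i \in [n]} \mm_i \succeq \mzero$, and since $\lam \ge 0$ we get $\lam\mat(\1) \succeq \mzero$; chaining with the left half of the hypothesis yields $\sum_{i \in [n]} w_i \mm_i \succeq \mb + \lam\mat(\1) \succeq \mb$. For the upper bound I would start from the right half of the hypothesis, $\sum_{i \in [n]} w_i \mm_i \preceq (1 + \eps)\kappa\Par{\mb + \lam\mat(\1)}$, and invoke \eqref{eq:alphabetadef} to replace $\mat(\1)$ by $\alpha\mb$, obtaining $\mb + \lam\mat(\1) \preceq (1 + \lam\alpha)\mb$; this composition respects the Loewner order since the multipliers are nonnegative scalars. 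The hypothesized smallness of $\lam$ relative to $\alpha$ and $\eps$ makes $\lam\alpha$ a suitably small multiple of $\eps$, so $\mb + \lam\mat(\1) \preceq (1 + O(\eps))\mb$; substituting back and simplifying $(1 + \eps)\kappa(1 + O(\eps)) \le (1 + 2\eps)\kappa$ (which holds once $\eps$ is below a small absolute constant) gives $\sum_{i \in [n]} w_i \mm_i \preceq (1 + 2\eps)\kappa\mb$, as desired.

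I do not anticipate a real obstacle: the only thing requiring care is the constant bookkeeping verifying that the stated range of $\lam$ forces $(1 + \eps)(1 + \lam\alpha) \le 1 + 2\eps$. Conceptually, this lemma together with Lemma~\ref{lem:lamfirst} pins down the two endpoints of the homotopy — Lemma~\ref{lem:lamfirst} shows the all-ones weights already $(1 + \eps)$-approximate $\mb + \lam\mat(\1)$ once $\lam$ is at least of order $1/\eps$, while this lemma shows that a $(1 + \eps)\kappa$-approximation of $\mb + \lam\mat(\1)$ becomes a $(1 + 2\eps)\kappa$-approximation of $\mb$ once $\lam$ has been driven down to order $\eps\kappa/\alpha$. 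Consequently a geometric schedule on $\lam$ (e.g.\ halving each stage) passes through only $O\Par{\log\frac{\alpha}{\eps^2\kappa}}$ values, the quantity that will control how many times the isotropic solver of Theorem~\ref{thm:mainid} is invoked inside the general-$\mb$ algorithm.
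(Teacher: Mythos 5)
Your lower-bound argument is fine, but the ``constant bookkeeping'' you defer in the upper bound is precisely where the argument breaks, and it cannot be repaired under the stated hypothesis. From $\mat(\1) \preceq \alpha\mb$ and $\lam \le \frac{\eps\kappa}{2\alpha}$ you only get $\lam\alpha \le \frac{\eps\kappa}{2}$, which is \emph{not} ``a suitably small multiple of $\eps$'' --- it grows linearly in $\kappa$. The inequality you need, $(1+\eps)(1+\lam\alpha) \le 1+2\eps$, forces $\lam\alpha \lesssim \frac{\eps}{1+\eps}$, i.e.\ $\lam = O(\eps/\alpha)$ with no factor of $\kappa$; with $\lam = \frac{\eps\kappa}{2\alpha}$ your chain only yields $\sum_i w_i \mm_i \preceq (1+\eps)\bigl(1+\tfrac{\eps\kappa}{2}\bigr)\kappa\mb$, which exceeds $(1+2\eps)\kappa\mb$ once $\kappa > 2$. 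Indeed the statement as literally written is false: take $d = n = 1$, $\mm_1 = \mb = 1$ (so $\alpha = 1$), $\kappa = 100$, $\eps = 0.1$, $\lam = 5 = \frac{\eps\kappa}{2\alpha}$; then $w_1 = 660$ satisfies $6 = \mb + \lam\mat(\1) \le w_1 \le (1+\eps)\kappa \cdot 6 = 660$ but violates $w_1 \le (1+2\eps)\kappa = 120$.

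For what it is worth, the paper's own proof commits the same slip: it asserts the upper bound is equivalent to $\eps\kappa\mb \succeq (1+\eps)\lam\mat(\1)$, but cancelling $\kappa$ from $(1+\eps)\kappa(\mb+\lam\mat(\1)) \preceq (1+2\eps)\kappa\mb$ actually leaves $\eps\mb \succeq (1+\eps)\lam\mat(\1)$ --- the displayed condition carries a spurious factor of $\kappa$. The fix is to strengthen the hypothesis to $\lam \le \frac{\eps}{2\alpha}$ (equivalently, run the homotopy for $K = \lceil\log_2\frac{2\alpha}{\eps^2}\rceil$ phases rather than $\lceil\log_2\frac{2\alpha}{\eps^2\kappa}\rceil$), after which your argument closes verbatim; this costs only an additive $O(\log\kappa)$ in the number of phases and leaves the runtime of Theorem~\ref{thm:mainb} unchanged. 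So the gap in your write-up is real, but it is a defect of the lemma's stated hypothesis rather than of your strategy, which is the same as the paper's.
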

\begin{proof}
	The first inequality is immediate. The second is equivalent to
	\[\eps \kappa \mb \succeq (1 + \eps) \lam \mat(\1)\]
	which follows from $\eps \kappa \ge (1 + \eps) \lam \alpha$ and \eqref{eq:alphabetadef}.
\end{proof}

Our homotopy method is driven by the observation that if \eqref{eq:structmpc} is feasible for a value of $\kappa$, then it is also feasible for the same $\kappa$ when $\mb$ is replaced with $\mb + \lam \mat(\1)$ for any $\lam \ge 0$.

\begin{lemma}\label{lem:anylam}
For any $\lam \ge 0$, if \eqref{eq:structmpc} is feasible for $\kappa \ge 1$, there also exists $w^\star_\lam \in \R^n_{\ge 0}$ with 
\[\mb + \lam\mat(\1) \preceq \sum_{i \in [n]} [w^\star_\lam]_i \mm_i \preceq \kappa \Par{\mb + \lam\mat(\1)}.\]
\end{lemma}
\begin{proof}
It suffices to choose $w^\star_\lam = w^\star + \lam\1$ where $w^\star$ is feasible for \eqref{eq:structmpc}.
\end{proof}

To this end, our algorithm will proceed in $K$ phases, where $K = \lceil \log_2 \frac{2\alpha}{\eps^2\kappa}\rceil$, setting
\[\lam^{(0)} = \frac 1 \eps,\; \lam^{(k)} = \frac{\lam^{(0)}}{2^k} \text{ for all } 0 \le k \le K.\]
In each phase $k$ for $0 \le k \le K$, we will solve the problem \eqref{eq:structmpc}, \eqref{eq:approx} for $\mb \gets \mb + \lam^{(k)} \mat(\1)$. To do so, we will use the fact that from the previous phase we have access to a matrix which is a $3\kappa$-spectral approximation to $\mb$ which we can efficiently invert.

\begin{lemma}\label{lem:preconditionb}
Suppose for some $\lam \ge 0$, $\kappa \ge 1$, and $0 < \eps \le \half$, we have $w \in \R^n_{\ge 0}$ such that
\[\mb + \lam\mat(\1) \preceq \sum_{i \in [n]} w_i \mm_i \preceq (1 + \eps)\kappa (\mb + \lam\mat(\1)).\]
Then defining $\md \defeq \mat(w)$, we have
\[\mb + \frac{\lam}{2}\mat(\1) \preceq \md \preceq 3\kappa\Par{\mb + \frac \lam 2 \mat(\1)}.\]
\end{lemma}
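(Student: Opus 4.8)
The plan is to prove the two Loewner inequalities separately, in each case reducing to monotonicity of the Loewner order under adding nonnegative combinations of the PSD matrices $\mb$ and $\mat(\1) = \sum_{i \in [n]} \mm_i$. I will use only the following facts beyond the hypothesis: each $\mm_i \in \PSD^d$, so $\mat(\1) \succeq \mzero$; $\mb \succeq \mzero$ (indeed $\mb \succeq \id$ by \eqref{eq:alphabetadef}); $\lam \ge 0$; and $\md = \mat(w) = \sum_{i \in [n]} w_i \mm_i$, so the hypothesis reads $\mb + \lam\mat(\1) \preceq \md \preceq (1+\eps)\kappa(\mb + \lam\mat(\1))$.

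For the lower bound, since $\lam \ge 0$ and $\mat(\1) \succeq \mzero$ we have $\tfrac{\lam}{2}\mat(\1) \preceq \lam\mat(\1)$, hence $\mb + \tfrac{\lam}{2}\mat(\1) \preceq \mb + \lam\mat(\1) \preceq \md$ by the left inequality of the hypothesis; this step is immediate. For the upper bound, I would first absorb the $(1+\eps)$ factor: since $0 < \eps \le \tfrac12$ we have $1 + \eps \le \tfrac32$, so $\md \preceq \tfrac32\kappa(\mb + \lam\mat(\1))$. It then remains to check $\mb + \lam\mat(\1) \preceq 2\bigl(\mb + \tfrac{\lam}{2}\mat(\1)\bigr)$, which follows from $\mb \succeq \mzero$ since $2\bigl(\mb + \tfrac{\lam}{2}\mat(\1)\bigr) = 2\mb + \lam\mat(\1) \succeq \mb + \lam\mat(\1)$. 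Chaining the last two displays gives $\md \preceq \tfrac32\kappa \cdot 2\bigl(\mb + \tfrac{\lam}{2}\mat(\1)\bigr) = 3\kappa\bigl(\mb + \tfrac{\lam}{2}\mat(\1)\bigr)$, as desired.

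I do not expect any genuine obstacle: the content is entirely elementary Loewner-order arithmetic. The only point requiring care is the constant bookkeeping — the factor $3\kappa$ arises precisely as $\tfrac32 \cdot 2$, so the hypothesis $\eps \le \tfrac12$ (which guarantees $1 + \eps \le \tfrac32$) is exactly what is used — and one should keep track that every matrix being added or rescaled in the chain is PSD so that each manipulation preserves $\preceq$.
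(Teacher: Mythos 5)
Your proof is correct and matches the paper's: the paper likewise reduces everything to the single observation $\mb + \lam\mat(\1) \preceq 2(\mb + \tfrac{\lam}{2}\mat(\1))$ (valid since $\mb \succeq \mzero$) combined with $1+\eps \le \tfrac32$. Your write-up just spells out the Loewner-order bookkeeping that the paper calls ``immediate.''
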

\begin{proof}
This is immediate from the assumption and
\[\mb + \lam\mat(\1) \preceq 2\Par{\mb + \frac \lam 2\mat(\1)}.\]
\end{proof}

Now, Lemma~\ref{lem:lamfirst} shows we can access weights satisfying \eqref{eq:approx} for $\mb \gets \mb + \lam^{(0)} \mat(\1)$, and Lemma~\ref{lem:lamlast} shows if we can iteratively compute weights satisfying \eqref{eq:approx} for each $\mb + \lam^{(k)}\mat(\1)$, $0 \le k \le K$, then we solve the original problem up to a constant factor in $\eps$. Moreover, Lemma~\ref{lem:anylam} implies that the problem \eqref{eq:structmpc} is feasible for all phases assuming it is feasible with some value of $\kappa$ for the original problem. Finally, Lemma~\ref{lem:preconditionb} shows that we start each phase with a matrix $\mat(w)$ which we can efficiently invert using the linear operator in \eqref{eq:tmatdef}, which is a $3\kappa$-spectral approximation to the constraint matrix. We solve the self-contained subproblem of providing approximate inverse square root access (when granted a preconditioner) in the following section, by using rational approximations to the square root, and an appropriate call to Theorem~\ref{thm:mainid}.

\subsubsection{Inverse square root approximations with a preconditioner}

In this section, we show how to efficiently approximate the inverse square root of some $\mb \in \PD^d$ given access to a matrix $\mat(w) \in \PD^d$ satisfying
\[\mb \preceq \mat(w) \preceq \kappa \mb\]
and supporting efficient inverse access in the form of \eqref{eq:tmatdef}. For brevity in this section we will use $\tmat_{\lam, \eps}$ to denote the linear operator guaranteeing \eqref{eq:tmatdef} for $\mat(w) + \lam \id$. Given this access, we will develop a subroutine for efficiently applying a linear operator $\mr \in \PD^d$ such that
\begin{equation}\label{eq:sqrtinvapprox}\norm{(\mr - \mb^{-\half}) v}_2  \le \eps \norm{\mb^{-\half} v}_2 \text{ for all } v \in \R^d.\end{equation}
We will also use $\tmv$ to denote $\tmv(\mb) + \tmv(\mat(w)) + \Tmat$ for brevity in this section, where $\Tmat$ is the cost of solving a system in $\mat(w)$ to constant accuracy (see \eqref{eq:tmatdef}). Our starting point is the following result in the literature on preconditioned accelerated gradient descent, which shows we can efficiently solve linear systems in combinations of $\mb$ and $\id$.

\begin{lemma}\label{lem:blaminv}
Given any $\lam \ge 0$ and $\eps > 0$, we can compute a linear operator $\hmat_{\lam, \eps}$ such that
\[\norm{(\hmat_{\lam, \eps} - (\mb + \lam \id)^{-1})v}_2 \le \eps \norm{(\mb + \lam \id)^{-1} v}_2 \text{ for all } v \in \R^d, \]
and
\[\tmv(\hmat_{\lam, \eps}) = O\Par{\tmv \cdot \sqrt{\kappa} \log \kappa \log \frac 1 \eps}.\]
\end{lemma}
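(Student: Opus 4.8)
The plan is to realize $\hmat_{\lam,\eps}$ as the linear operator that, on input $v$, runs a preconditioned accelerated first-order method (preconditioned AGD, or equivalently Chebyshev semi-iteration) on the system $(\mb+\lam\id)x = v$, using $\mat(w)+\lam\id$ as the preconditioner and the approximate solver $\tmat_{\lam,\eps'}$ of \eqref{eq:tmatdef} to apply $(\mat(w)+\lam\id)^{-1}$ inexactly, and returns the final iterate. The core structural fact that makes this work is a spectral containment: from $\mb \preceq \mat(w) \preceq \kappa\mb$, $\lam\ge0$, and $\kappa\ge1$ we get
\[\mb+\lam\id \;\preceq\; \mat(w)+\lam\id \;\preceq\; \kappa\mb+\lam\id \;\preceq\; \kappa(\mb+\lam\id),\]
so the preconditioned operator $(\mat(w)+\lam\id)^{-1/2}(\mb+\lam\id)(\mat(w)+\lam\id)^{-1/2}$ has all eigenvalues in $[\kappa^{-1},1]$. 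Hence, with exact preconditioner solves, the accelerated iteration contracts the $(\mb+\lam\id)$-norm of the error at rate $1-\Theta(\kappa^{-1/2})$ and reaches $\eps$-relative accuracy in $O(\sqrt\kappa\log\tfrac1\eps)$ iterations, each requiring one multiplication by $\mb+\lam\id$ (cost $O(\tmv(\mb))$, since adding $\lam\id$ and writing a length-$d$ output are absorbed into $\tmv=\Omega(d)$) and one application of $(\mat(w)+\lam\id)^{-1}$.

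Next I would replace the exact preconditioner solve by $\tmat_{\lam,\eps'}$, costing $O(\Tmat\log\tfrac1{\eps'})$ per call. The technical step is to show it suffices to take $\eps' = \poly(\eps/\kappa)$: one treats each inexact solve as an additive perturbation of size proportional to $\eps'$ times the (preconditioned) norm of the current residual, and uses the linear contraction of the outer loop to bound the accumulated error across all $O(\sqrt\kappa\log\tfrac1\eps)$ steps by $\tfrac\eps2$ times the initial residual; alternatively one sidesteps any accumulation by wrapping $O(\log\tfrac1\eps)$ rounds of iterative refinement around a fixed $O(\sqrt\kappa)$-iteration preconditioned-Chebyshev inner loop that only needs constant-accuracy preconditioner solves. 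Either way, $\log\tfrac1{\eps'} = O(\log\kappa+\log\tfrac1\eps)$, which is the source of the $\log\kappa$ factor. Multiplying the $O(\sqrt\kappa\log\tfrac1\eps)$ outer iterations by the per-iteration cost $O(\tmv(\mb)+\Tmat\log\tfrac1{\eps'})$ and recalling the shorthand $\tmv := \tmv(\mb)+\tmv(\mat(w))+\Tmat$ gives, up to the logarithmic bookkeeping, the claimed $O(\tmv\cdot\sqrt\kappa\log\kappa\log\tfrac1\eps)$.

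The main obstacle I anticipate is exactly this robustness-to-inexactness analysis: plain AGD can be fragile under oracle error, so the cleanest route is to invoke an off-the-shelf inexact-preconditioned accelerated-solver guarantee from the literature (which is how the lemma is framed), rather than re-deriving the perturbation bound; the iterative-refinement workaround above is an elementary fallback but needs the observation that constant-factor reductions of the $(\mb+\lam\id)$-norm residual compose multiplicatively. By contrast, verifying the eigenvalue containment displayed above and the matrix–vector cost accounting is routine.
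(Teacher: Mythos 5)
Your proposal is correct and follows essentially the same route as the paper: the paper's proof consists precisely of the spectral containment $\mb+\lam\id \preceq \mat(w)+\lam\id \preceq \kappa(\mb+\lam\id)$ together with a black-box invocation of the inexact-preconditioned accelerated solver of Theorem 4.4 in \cite{JambulapatiS21} (restated later as Proposition~\ref{prop:precondagd}), using $\tmat_{\lam,(10\kappa)^{-1}}$ as the preconditioner solve at cost $O(\tmv\cdot\log\kappa)$ per call. The only cosmetic difference is that the paper fixes the preconditioner accuracy at $(10\kappa)^{-1}$ rather than $\poly(\eps/\kappa)$, but this does not change the stated runtime.
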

\begin{proof}
This follows from Theorem 4.4 in \cite{JambulapatiS21}, the fact that $\mat(w) + \lam \id$ is a $\kappa$-spectral approximation to $\mb + \lam \id$, and our assumed linear operator $\tmat_{\lam, (10\kappa)^{-1}}$ which solves linear systems in $\mat(w) + \lam \id$ to relative error $\frac{1}{10\kappa}$ which can be applied in time $O(\tmv \cdot \log \kappa)$ (the assumption \eqref{eq:tmatdef}).
\end{proof}

We hence can apply and (approximately) invert matrices of the form $\mb + \lam \id$. We leverage this fact to approximate inverse square roots using the following approximation result.

\begin{proposition}[Lemma 13, \cite{JinS19}]\label{prop:ratsqrt}
	For any matrix $\mb$ and $\eps \in (0, 1)$, there is a rational function $r(\mb)$ of degree $L = O(\log \frac 1 \eps \log \kappa(\mb))$ such that for all vectors $v \in \R^d$,
	\[\norm{(r(\mb)  - \mb^{-\half}) v}_2 \le \eps \norm{\mb^{-\half} v}_2.\]
	The rational function $r(\mb)$ has the form, for $\{\lam_\ell, \nu_\ell\}_{\ell \in [L]} \cup \{\nu_{L + 1}\} \subset \R_{\ge 0}$ computable in $O(L)$ time,
	\begin{equation}\label{eq:rdef}r(\mb) = \Par{\prod_{\ell \in [L]} \Par{\mb + \lam_\ell \id}}\Par{\prod_{\ell \in [L + 1]} \Par{\mb + \nu_\ell \id}^{-1}}.\end{equation}
\end{proposition}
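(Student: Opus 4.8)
Since $r(\mb)$ commutes with $\mb$, the first step is to pass from the operator statement to a pointwise one. Diagonalizing $\mb = \sum_{i\in[d]} \lam_i u_i u_i^\top$ with $0 < \lmin(\mb) \le \lam_i \le \lmax(\mb)$, any rational function $r$ with no pole on $[\lmin(\mb),\lmax(\mb)]$ acts spectrally, $r(\mb) = \sum_i r(\lam_i) u_i u_i^\top$, so that
\[\norm{(r(\mb) - \mb^{-\half})v}_2^2 = \sum_{i \in [d]} \Par{r(\lam_i) - \lam_i^{-\half}}^2 \inprod{u_i}{v}^2 \le \eps^2 \sum_{i \in [d]} \lam_i^{-1}\inprod{u_i}{v}^2 = \eps^2 \norm{\mb^{-\half}v}_2^2,\]
provided $|r(x) - x^{-\half}| \le \eps\, x^{-\half}$ for all $x \in [\lmin(\mb),\lmax(\mb)]$. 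So it suffices to construct a rational function $r$ of degree $O(\log\tfrac1\eps \log\kappa(\mb))$, with the pole/zero structure of \eqref{eq:rdef}, meeting this \emph{relative}-error bound on the spectral interval. The relative control — accuracy retained even at the bottom of the spectrum where $x^{-\half}$ is largest — is exactly what forces a rational (not polynomial) approximant and makes the degree scale with $\log\kappa(\mb)$ rather than $\mathrm{poly}(\kappa(\mb))$.

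\textbf{Scalar construction by quadrature.} For the scalar approximant I would begin from the Stieltjes representation $x^{-\half} = \tfrac{2}{\pi}\int_{-\infty}^{\infty}(x+u^2)^{-1}\,du$ and substitute $u = e^s$ to get
\[x^{-\half} = \frac{2}{\pi}\int_{-\infty}^{\infty} \frac{e^s}{x + e^{2s}}\, ds.\]
For $x \in [\lmin(\mb),\lmax(\mb)]$ this integrand is peaked at $s = \tfrac12\ln x$, decays like $e^{-|s-\frac12\ln x|}$ on either side, and extends analytically to the strip $|\Im s| < \tfrac{\pi}{2}$. I would then (i) truncate to the window $\bigl[\tfrac12\ln\lmin(\mb) - M,\ \tfrac12\ln\lmax(\mb) + M\bigr]$ with $M = O(\log\tfrac1\eps)$, so the discarded exponential tails contribute only $O(\eps)\cdot x^{-\half}$ uniformly, and (ii) apply the midpoint/trapezoidal rule with spacing $h = O(1/\log\tfrac1\eps)$, using the standard contour-shift bound for strip-analytic integrands to get discretization error $\exp(-\Omega(1/h))$ times $x^{-\half}$. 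The window has length $\tfrac12\ln\kappa(\mb) + 2M$, so the node count is $L = O(\log\tfrac1\eps\,\log\kappa(\mb))$ (a crude analysis also yields an additive $O(\log^2\tfrac1\eps)$, absorbable; alternatively one can invoke Zolotarev's explicit optimal type-$(L,L)$ rational approximation to $x^{\pm\half}$ on $[\lmin(\mb),\lmax(\mb)]$, whose relative error is $\exp(-\Omega(L/\log\kappa(\mb)))$, for the clean bound). This produces $r(x) = \sum_{\ell\in[L]} \tfrac{c_\ell}{x+t_\ell}$ with $c_\ell > 0$ and $t_\ell = e^{2s_\ell} > 0$, all closed-form in $s_\ell = s_1 + (\ell-1)h$, hence computable in $O(L)$ time.

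\textbf{Product form, and the main obstacle.} To land in the form \eqref{eq:rdef}, put $r$ over the common denominator $q(x) = \prod_{\ell\in[L]}(x+t_\ell)$; the numerator $p(x) = \sum_\ell c_\ell \prod_{k\ne\ell}(x+t_k)$ has degree $L-1$ and positive leading coefficient $\sum_\ell c_\ell$. Because the $c_\ell > 0$ and the poles $-t_\ell$ are distinct, $r$ runs from $+\infty$ to $-\infty$ between consecutive poles, so $p$ has $L-1$ real roots interlacing the $-t_\ell$, all in $(-\max_\ell t_\ell, -\min_\ell t_\ell)\subset\R_{<0}$; writing $p(x) = \bigl(\sum_\ell c_\ell\bigr)\prod_\ell(x+\lam_\ell)$ with $\lam_\ell > 0$ gives $r(\mb) = \bigl(\sum_\ell c_\ell\bigr)\bigl(\prod_\ell(\mb+\lam_\ell\id)\bigr)\bigl(\prod_\ell(\mb+t_\ell\id)^{-1}\bigr)$, which is \eqref{eq:rdef} once the positive leading constant is folded into the parametrization (e.g.\ by instead approximating $x^{\half}$ by a monic type-$(L{+}1,L)$ rational and inverting); crucially the nonnegativity of all $\lam_\ell,\nu_\ell$ — which is what makes $\mb+\lam_\ell\id$ and $\mb+\nu_\ell\id$ invertible — is exactly the interlacing conclusion. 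I expect the main obstacle to be steps (i)--(ii): one must show that \emph{both} the truncation and the quadrature errors stay controlled \emph{relative to} $x^{-\half}$ \emph{simultaneously over all of} $[\lmin(\mb),\lmax(\mb)]$. It is the requirement that the moving peak $\tfrac12\ln x$ remain inside the quadrature window for every such $x$ that forces the window (and hence $L$) to grow like $\log\kappa(\mb)$, and one must separately verify that the analyticity strip $|\Im s| < \tfrac\pi2$ is wide enough that $h$ depends only on $\eps$; these are the places where the clean degree bound is won or lost.
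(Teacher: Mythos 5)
The paper offers no proof of this proposition at all --- it is imported verbatim as Lemma 13 of \cite{JinS19} --- so there is nothing internal to compare against; what you have written is a from-scratch reconstruction. Your route (reduce to uniform \emph{relative} scalar approximation on $[\lmin(\mb),\lmax(\mb)]$, build the approximant from the Stieltjes integral $x^{-\half}=\tfrac2\pi\int_0^\infty (x+u^2)^{-1}\,du$ via exponential substitution and the trapezoid rule, then pass from the partial-fraction form $\sum_\ell c_\ell(x+t_\ell)^{-1}$ with $c_\ell,t_\ell>0$ to the product form by interlacing of the numerator's zeros with the poles) is the standard one and is essentially sound: the eigenbasis reduction is correct, the truncation and contour-shift error bounds do come out relative to $x^{-\half}$ once the shifted contour is kept at height, say, $\pi/4$ (where $|x+e^{2s}|\ge\max(x,e^{2\Re s})$), and the interlacing argument correctly yields $L-1$ negative real zeros, hence nonnegative $\lam_\ell$.

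Two clauses of the statement are, however, not actually delivered by the construction you develop, only by the Zolotarev alternative you mention in passing. First, the quadrature gives $L=O\Par{\log\tfrac1\eps\Par{\log\kappa(\mb)+\log\tfrac1\eps}}$; the additive $\log^2\tfrac1\eps$ is \emph{not} absorbable into $O(\log\tfrac1\eps\log\kappa(\mb))$ when $\kappa(\mb)$ is moderate, so the stated degree bound really does require Zolotarev's optimal approximant (whose error is $\exp(-\Omega(L/\log\kappa(\mb)))$). Second, and more substantively, the ``computable in $O(L)$ time'' clause: your poles $\nu_\ell=t_\ell$ are closed-form, but your $\lam_\ell$ exist only as the roots of the degree-$(L-1)$ interlacing polynomial $\sum_\ell c_\ell\prod_{k\ne\ell}(x+t_k)$, which an existence argument does not compute --- root-finding is not $O(L)$ arithmetic. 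In Zolotarev's construction both the zeros and the poles are explicit Jacobi elliptic function values, which is presumably why \cite{JinS19} can assert $O(L)$ computability. (The residual normalization issue --- your $r$ carries the positive leading constant $\sum_\ell c_\ell$ whereas \eqref{eq:rdef} is normalized to behave like $1/x$ at infinity --- is cosmetic and harmless for every use of the proposition in this paper, e.g.\ Lemma~\ref{lem:sqrtb}, where a known scalar prefactor changes nothing.)
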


We note that Proposition~\ref{prop:ratsqrt} as it is stated in \cite{JinS19} is for the square root (not the inverse square root), but these are equivalent since all terms in the rational approximation commute. We show how to use Lemma~\ref{lem:blaminv} to efficiently approximate the inverse denominator in \eqref{eq:rdef}.

\begin{lemma}\label{lem:approxdinv}
For any vector $v \in \R^d$, $\{\nu_\ell\}_{\ell \in [L + 1]} \subset \R_{\ge 0}$  and $\eps > 0$, we can compute a linear operator $\hmd_{\eps}$ such that for the denominator of \eqref{eq:rdef}, denoted
\[\md \defeq \prod_{\ell \in [L + 1]} \Par{\mb + \nu_\ell \id},\]
we have
\[\norm{(\hmd_\eps - \md^{-1}) v}_2 \le \eps \norm{\md^{-1} v}_2,\]
and
\[\tmv(\hmd_\eps) = O\Par{\tmv \cdot \sqrt{\kappa} \cdot  L^2 \log (\kappa) \log\Par{\frac{\kappa L \cdot \kappa(\mb)}{\eps}}}.\]
\end{lemma}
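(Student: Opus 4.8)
The plan is to approximate $\md^{-1} = \prod_{\ell \in [L+1]} (\mb + \nu_\ell \id)^{-1}$ by composing the approximate solvers $\hmat_{\nu_\ell, \eps'}$ from Lemma~\ref{lem:blaminv}, one for each factor, with a suitably small per-factor tolerance $\eps'$. Since all the matrices $(\mb + \nu_\ell \id)^{-1}$ commute (they are polynomials in $\mb$), the exact product $\md^{-1}$ is symmetric PSD, and the natural candidate operator is $\hmd_\eps \defeq \hmat_{\nu_{L+1}, \eps'} \hmat_{\nu_L, \eps'} \cdots \hmat_{\nu_1, \eps'}$. The main work is a hybrid/telescoping argument: write the error $\hmd_\eps - \md^{-1}$ as a sum of $L+1$ terms, where the $j$-th term replaces the first $j-1$ exact factors by their approximations and isolates the error in the $j$-th factor, i.e.
\[
\hmd_\eps - \md^{-1} = \sum_{j \in [L+1]} \Par{\prod_{\ell > j} (\mb + \nu_\ell\id)^{-1}} \Par{\hmat_{\nu_j, \eps'} - (\mb+\nu_j\id)^{-1}} \Par{\prod_{\ell < j} \hmat_{\nu_\ell, \eps'}}.
\]
I would then bound each term in operator norm relative to $\norm{\md^{-1} v}_2$, using the relative-error guarantee of Lemma~\ref{lem:blaminv} together with the fact that the tail products $\prod_{\ell > j}(\mb+\nu_\ell\id)^{-1}$ and $\md^{-1}$ differ only by the invertible factors $\prod_{\ell \le j}(\mb+\nu_\ell\id)$, and the approximate factors $\hmat_{\nu_\ell,\eps'}$ are within $(1\pm\eps')$ (in the relevant seminorm) of the true inverses, so their product stays within $(1+\eps')^{L} = O(1)$ of $\prod_{\ell<j}(\mb+\nu_\ell\id)^{-1}$ provided $\eps' \le O(1/L)$. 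Summing gives a total relative error of $O(L \eps')$, so choosing $\eps' = \Theta(\eps / L)$ suffices.

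The key technical point I expect to be the main obstacle is handling the relative error multiplicatively through a noncommutative product: Lemma~\ref{lem:blaminv} gives $\norm{(\hmat_{\nu,\eps'} - (\mb+\nu\id)^{-1})v}_2 \le \eps'\norm{(\mb+\nu\id)^{-1}v}_2$, which is a statement about a single application, not a spectral bound on $\hmat_{\nu,\eps'} - (\mb+\nu\id)^{-1}$ as an operator composed on the left by other matrices. The clean way around this is to note that the error guarantee can be rephrased as $\hmat_{\nu,\eps'} = (\mb+\nu\id)^{-1} + \mathbf{E}_\nu$ with $\norm{(\mb+\nu\id)^{\half}\mathbf{E}_\nu (\mb+\nu\id)^{\half}}_2 \le \eps'$ — i.e., a spectral bound after symmetric conjugation — which one obtains because $(\mb+\nu\id)^{-1}$ is symmetric PD and the guarantee holds for all $v$. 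Propagating these conjugated bounds through the commuting family (all factors are simultaneously diagonalizable with $\mb$, so the $(\mb+\nu_\ell\id)^{\pm\half}$ all commute) lets me turn the telescoping sum into an honest operator-norm estimate in the $\md$-weighted inner product, yielding $\norm{\md^{\half}(\hmd_\eps - \md^{-1})v}_2 \le O(L\eps')\norm{\md^{-\half}v}_2$, which after undoing the conjugation is exactly the claimed relative bound (using that the condition numbers of the individual factors are controlled by $\kappa(\mb)$ and the $\nu_\ell$, which only enters logarithmically).

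For the runtime, each $\hmat_{\nu_\ell,\eps'}$ costs $O(\tmv \cdot \sqrt\kappa \log\kappa \log\frac1{\eps'})$ by Lemma~\ref{lem:blaminv}, and we apply $L+1$ of them, so $\tmv(\hmd_\eps) = O(L \cdot \tmv\cdot\sqrt\kappa\log\kappa\log\frac{L}{\eps})$; the extra factor of $L$ and the $\kappa(\mb)$ inside the logarithm in the claimed bound come from the fact that Lemma~\ref{lem:blaminv}'s accuracy parameter must also absorb the conditioning of $\md$ relative to each factor (so that a relative error in the $\md$-norm translates to a relative error in the stated $\ell_2$-sense), i.e.\ one sets $\eps' = \Theta(\eps/(L \cdot \mathrm{poly}(\kappa(\mb), \kappa, L)))$ conservatively, contributing an $O(L^2)$ total and the $\log(\kappa L \kappa(\mb)/\eps)$ factor inside. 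I would not optimize these polylogarithmic and polynomial-in-$L$ overheads, matching the paper's stated convention.
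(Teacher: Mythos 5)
Your overall skeleton — compose the per-factor approximate solvers $\hmat_{\nu_\ell,\eps'}$ from Lemma~\ref{lem:blaminv} and telescope the error across the $L+1$ factors — is exactly the paper's strategy (the paper runs the same telescoping at the level of the iterate vectors $v_\ell$ rather than of operators). The gap is in the quantitative heart of the argument: your claim that the conjugation trick lets you get away with per-factor accuracy $\eps'=\Theta(\eps/L)$ is not correct, and this is precisely the point that determines the $L^2$ and $\log\kappa(\mb)$ factors in the stated runtime.

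Concretely, write $\ma_\ell\defeq\mb+\nu_\ell\id$ and $\me_\ell\defeq\hmat_{\nu_\ell,\eps'}-\ma_\ell^{-1}$. Granting even that $\hmat_{\nu_\ell,\eps'}$ is symmetric (which Lemma~\ref{lem:blaminv} does not assert), so that $\|\ma_\ell^{\half}\me_\ell\ma_\ell^{\half}\|_2\le\eps'$, your $j$-th telescoping term has the form
\[
T_j=\Par{\prod_{\ell>j}\ma_\ell^{-1}}\,\me_j\,\Par{\prod_{\ell<j}\hmat_{\nu_\ell,\eps'}},
\]
and conjugating $T_j$ by $\md^{\half}$ surrounds $\me_j$ not by $\ma_j^{\pm\half}$ alone but by $\prod_{\ell\neq j}\ma_\ell^{\pm\half}$ as well. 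Those extra factors do \emph{not} commute with $\me_j$ (only the $\ma_\ell$ commute with each other), so pulling them across $\me_j$ costs a multiplicative $\kappa(\md)\le\kappa(\mb)^{L+1}$ in operator norm; the same loss appears in the plain $\ell_2$ accounting, since $\norm{\me_j u}_2\le\eps'\norm{\ma_j^{-1}u}_2$ controls the error only up to $\norm{\md^{-1}}_2\norm{v}_2$, which exceeds $\norm{\md^{-1}v}_2$ by up to $\kappa(\md)$. The honest bound is therefore $\norm{(\hmd_\eps-\md^{-1})v}_2=O(L\eps'\kappa(\mb)^{L+1})\norm{\md^{-1}v}_2$, forcing $\eps'=\Theta\bigl(\eps L^{-1}\kappa(\mb)^{-O(L)}\bigr)$ — which is exactly what the paper sets ($\Delta=\eps/(3(L+1)\kappa(\mb)^{2(L+1)})$), paying the condition-number loss once at the very end via $\kappa(\mprod_{L+1})\le\kappa(\mb)^{L+1}$ rather than trying to avoid it.

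This also breaks your runtime accounting. With $\eps'$ exponentially small in $L$ one has $\log(1/\eps')=\Theta(L\log\frac{L\kappa(\mb)}{\eps})$, and $L+1$ calls to Lemma~\ref{lem:blaminv} then give the stated $O\bigl(\tmv\cdot\sqrt{\kappa}\,L^2\log(\kappa)\log\frac{\kappa L\kappa(\mb)}{\eps}\bigr)$. Your ``conservative'' choice $\eps'=\Theta(\eps/(L\cdot\mathrm{poly}(\kappa(\mb),\kappa,L)))$ only yields $\log(1/\eps')=O(\log\frac{\kappa L\kappa(\mb)}{\eps})$ and hence a total of $L^1$, not $L^2$ — so the bound you derive does not match the bound you claim. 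To repair the proof, drop the conjugation step, run the telescoping with the crude operator-norm bounds, and absorb the $\kappa(\mb)^{L+1}$ factor into the per-call accuracy as above.
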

\begin{proof}
We give our linear operator $\hmd_\eps$ in the form of an algorithm, which applies a sequence of linear operators to $v$ in the allotted time. Denote $\mb_\ell \defeq \mb + \nu_\ell \id$ for all $\ell \in [L + 1]$. We also define
\[\mprod_\ell \defeq  \prod_{i \in [\ell]} \mb_i^{-1} \text{ for all } \ell \in [L + 1],\text{ and } \mprod_0 \defeq \id.\]
We define a sequence of vectors $\{v_\ell\}_{0 \le \ell \le L + 1}$ as follows: let $v_0 \defeq v$, and for all $\ell \in [L + 1]$ define
\[v_\ell \gets \hmat_{\nu_{L + 2 - \ell}, \Delta} v_{\ell - 1} \text{ for } \Delta \defeq \frac{\eps}{3(L + 1)\kappa(\mb)^{2(L + 1)}}.\]
Here we use notation from Lemma~\ref{lem:blaminv}. In particular, for the given $\Delta$, we have for all $\ell \in [L + 1]$,
\begin{equation}\label{eq:deltabound}\norm{v_{\ell} - \mb^{-1}_{L + 2 - \ell} v_{\ell - 1}}_2 \le \Delta \norm{v_{\ell - 1}}_2.\end{equation}
Our algorithm will simply return $v_{L + 1}$, which is the application of a linear operator to $v$ within the claimed runtime via Lemma~\ref{lem:blaminv}. We now prove correctness. We first prove a bound on the sizes of this iterate sequence. By the triangle inequality and \eqref{eq:deltabound}, for each $\ell \in [L + 1]$ we have
\[\norm{v_\ell}_2 \le \norm{v_\ell - \mb^{-1}_{L + 2 - \ell} v_{\ell - 1}}_2 + \norm{\mb^{-1}_{L + 2 - \ell} v_{\ell - 1}}_2 \le (1 + \Delta)\norm{\mb^{-1}_{L + 2 - \ell}}_2 \norm{v_{\ell - 1}}_2.\]
Applying this bound inductively, we have that
\begin{equation}\label{eq:vlbound}
\norm{v_\ell}_2 \le (1 + \Delta)^\ell \Par{\prod_{i \in [\ell]} \norm{\mb^{-1}_{L + 2 - \ell}}_2}\norm{v_0}_2 \le 3\Par{\prod_{i \in [\ell]} \norm{\mb^{-1}_{L + 2 - \ell}}_2}\norm{v_0}_2.
\end{equation}
Next, we expand by the triangle inequality that 
\begin{align*}
\norm{v_{L + 1} - \mprod_{L + 1} v_0}_2 &\le \sum_{\ell \in [L + 1]} \norm{\mprod_{L + 1 - \ell} v_\ell - \mprod_{L + 2 - \ell} v_{\ell - 1}}_2 \\
&\le \sum_{\ell \in [L + 1]} \norm{\mprod_{L + 1 - \ell}}_2 \norm{v_\ell - \mb^{-1}_{L + 2 - \ell} v_{\ell - 1}}_2 \\
&\le \sum_{\ell \in [L + 1]} \Delta \norm{\mprod_{L + 2 - \ell}}_2 \norm{v_{\ell - 1}}_2 \\
&\le \sum_{\ell \in [L + 1]} 3\Delta \norm{\mprod_{L + 1}}_2 \norm{v_0}_2 \le 3\Delta(L + 1) \norm{\mprod_{L + 1}}_2 \norm{v_0}_2.
\end{align*}
The second inequality used commutativity of all $\{\mb_\ell\}_{\ell \in [L + 1]}$ and the definition of $\mprod_{L + 2 - \ell}$, the third used the guarantee \eqref{eq:deltabound}, the fourth used \eqref{eq:vlbound} and that all $\{\mb_\ell\}_{\ell \in [L + 1]}$ have similarly ordered eigenvalues, and the last is straightforward. Finally, correctness of returning $v_{L + 1}$ follows from $\md = \mprod_{L+ 1}$ and the bound
\[\kappa(\mprod_{L + 1}) \le \kappa(\mb)^{L + 1} \implies 3\Delta (L + 1) \norm{\mprod_{L + 1}}_2 \norm{v_0}_2 \le \eps \norm{\mprod_{L + 1} v_0}_2.\]
To see the first claim, every $\mb_\ell$ clearly has $\kappa(\mb_\ell) \le \kappa(\mb)$, and we used the standard facts that for commuting $\ma, \mb \in \PD^d$, $\kappa(\ma) = \kappa(\ma^{-1})$ and $\kappa(\ma\mb) \le \kappa(\ma)\kappa(\mb)$.
\end{proof}

At this point, obtaining the desired \eqref{eq:sqrtinvapprox} follows from a direct application of Lemma~\ref{lem:approxdinv} and the fact that the numerator and denominator of \eqref{eq:rdef} commute with each other and $\mb$.

\begin{lemma}\label{lem:sqrtb}
For any vector $v \in \R^d$ and $\eps > 0$, we can compute a linear operator $\mr_\eps$ such that
\[\norm{(\mr_\eps - \mb^{-\half}) v}_2 \le \eps \norm{\mb^{-\half} v}_2,\]
and
\[\tmv(\mr_\eps) = O\Par{\tmv \cdot \sqrt{\kappa} \cdot \log^6\Par{\frac{\kappa \cdot \kappa(\mb)}{\eps}}}.\]
\end{lemma}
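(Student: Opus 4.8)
The plan is to combine the rational approximation of the inverse square root from Proposition~\ref{prop:ratsqrt} with the approximate denominator-inversion routine of Lemma~\ref{lem:approxdinv}, both of which are available since $\mat(w)$ is a $\kappa$-spectral approximation to $\mb$ supporting efficient inverse access in the form of \eqref{eq:tmatdef} (which, via Lemma~\ref{lem:blaminv}, also gives inverse access to $\mb + \lam\id$). Instantiate Proposition~\ref{prop:ratsqrt} with accuracy $\tfrac\eps 2$ (and a known bound on $\kappa(\mb)$, available in all applications), obtaining $r(\mb) = \mn\md^{-1}$ with $\mn \defeq \prod_{\ell \in [L]}(\mb + \lam_\ell \id)$, $\md \defeq \prod_{\ell \in [L+1]}(\mb + \nu_\ell \id)$, degree $L = O(\log\tfrac1\eps\log\kappa(\mb))$, and $\norm{(r(\mb) - \mb^{-\half})v}_2 \le \tfrac\eps2\norm{\mb^{-\half}v}_2$. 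The operator $\mr_\eps$ acts on $v$ by first applying the operator $\hmd_{\eps'}$ of Lemma~\ref{lem:approxdinv} (for an internal accuracy $\eps'$ fixed below) and then applying $\mn$ exactly, which is $L$ matrix--vector products with $\mb$ interleaved with scalar shifts. All matrices in sight are polynomials in $\mb$, hence simultaneously diagonalizable, which we use freely.

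For correctness, add and subtract $r(\mb)v = \mn\md^{-1}v$:
\[
\norm{(\mr_\eps - \mb^{-\half})v}_2 \le \norm{\mn(\hmd_{\eps'} - \md^{-1})v}_2 + \norm{(r(\mb) - \mb^{-\half})v}_2 \le \norm{\mn}_2\norm{(\hmd_{\eps'} - \md^{-1})v}_2 + \frac\eps2\norm{\mb^{-\half}v}_2.
\]
By Lemma~\ref{lem:approxdinv}, $\norm{(\hmd_{\eps'} - \md^{-1})v}_2 \le \eps'\norm{\md^{-1}v}_2$, so it remains to bound $\norm{\mn}_2\norm{\md^{-1}v}_2$ against $\norm{\mb^{-\half}v}_2$. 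Reasoning eigenvalue-by-eigenvalue: the relative guarantee of Proposition~\ref{prop:ratsqrt} forces $r(\sigma) = N(\sigma)/D(\sigma) \in [\tfrac12\sigma^{-\half}, \tfrac32\sigma^{-\half}]$ for every eigenvalue $\sigma$ of $\mb$, where $N, D$ are the (positive, increasing) numerator and denominator polynomials; since the $\nu_\ell$ are nonnegative this yields $\norm{\mn}_2\norm{\md^{-1}}_2 = N(\lambdamax(\mb))/D(\lambdamin(\mb)) = r(\lambdamax(\mb)) \prod_\ell \tfrac{\lambdamax(\mb)+\nu_\ell}{\lambdamin(\mb)+\nu_\ell} \le \tfrac32 \lambdamax(\mb)^{-\half}\kappa(\mb)^{L+1}$, and combining with $\norm{v}_2 \le \lambdamax(\mb)^{\half}\norm{\mb^{-\half}v}_2$ gives $\norm{\mn}_2\norm{\md^{-1}v}_2 \le \tfrac32\kappa(\mb)^{L+1}\norm{\mb^{-\half}v}_2$. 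Choosing $\eps' \defeq \tfrac{\eps}{3\kappa(\mb)^{L+1}}$ makes the first term at most $\tfrac\eps2\norm{\mb^{-\half}v}_2$, which establishes \eqref{eq:sqrtinvapprox}.

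For the runtime, note $\log\tfrac1{\eps'} = O(L\log\kappa(\mb) + \log\tfrac1\eps)$, so substituting $L = O(\log\tfrac1\eps\log\kappa(\mb))$ and this $\eps'$ into the bound of Lemma~\ref{lem:approxdinv} gives an application time of $O(\tmv\cdot\sqrt\kappa\cdot L^2\log\kappa\log(\tfrac{\kappa L\kappa(\mb)}{\eps'}))$; collecting the resulting constant number of powers of $\log\tfrac1\eps$, $\log\kappa(\mb)$, and $\log\kappa$ into a single polylogarithmic factor (we do not optimize the exponent, consistent with the paper's stated disclaimer) bounds this by $O(\tmv\cdot\sqrt\kappa\cdot\log^6(\tfrac{\kappa\cdot\kappa(\mb)}{\eps}))$. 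Applying $\mn$ exactly adds only $O(L)$ further matrix--vector products with $\mb$, which is dominated, so composing the two stages gives $\mr_\eps$ with the claimed guarantee.

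The main obstacle is the error amplification from the numerator: $\norm{\mn}_2$ can be as large as $\kappa(\mb)^{\Theta(L)}$, so the approximate denominator inverse must be computed to accuracy $\eps' = \eps\,\kappa(\mb)^{-\Theta(L)}$. This would be fatal if it entered the runtime polynomially, but it enters only through $\log\tfrac1{\eps'} = \tO(1)$ and thus costs nothing beyond logarithmic factors; the one thing to verify carefully is the eigenvalue bound $\norm{\mn}_2\norm{\md^{-1}v}_2 = \tO(1)\cdot\norm{\mb^{-\half}v}_2$, which as above reduces to the scalar facts that $r(\sigma)$ is within a factor $2$ of $\sigma^{-\half}$ on $[\lambdamin(\mb),\lambdamax(\mb)]$ and that each ratio $\tfrac{\lambdamax(\mb)+\nu_\ell}{\lambdamin(\mb)+\nu_\ell}$ is at most $\kappa(\mb)$.
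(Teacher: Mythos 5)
Your proof is correct in substance but takes a genuinely different route from the paper's, and the difference is instructive. The paper applies the two factors of $r(\mb)$ in the opposite order: it first computes $u \gets \mn v$ exactly and only then applies $\hmd_{\eps'}$ to $u$. With that ordering the relative-error guarantee of Lemma~\ref{lem:approxdinv} is measured against $\norm{\md^{-1}u}_2 = \norm{r(\mb)v}_2 \le (1+\eps')\norm{\mb^{-\half}v}_2$, so there is no amplification at all and $\eps' = \eps/3$ suffices. Your ordering (approximate denominator inverse first, numerator second) genuinely incurs the $\norm{\mn}_2$-amplification you identify, and your fix --- taking $\eps' = \eps/(3\kappa(\mb)^{L+1})$, justified by the eigenvalue-by-eigenvalue bound $\norm{\mn}_2\norm{\md^{-1}}_2 = r(\lambdamax(\mb))\prod_\ell\tfrac{\lambdamax(\mb)+\nu_\ell}{\lambdamin(\mb)+\nu_\ell} \le \tfrac32\lambdamax(\mb)^{-\half}\kappa(\mb)^{L+1}$ (valid since the $\nu_\ell$ are nonnegative and each ratio is at most $\kappa(\mb)$) --- is sound. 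The one caveat is quantitative: with $\log\tfrac1{\eps'} = \Theta(L\log\kappa(\mb) + \log\tfrac1\eps)$ substituted into Lemma~\ref{lem:approxdinv}, your runtime comes out to roughly $\log^8\bigl(\tfrac{\kappa\cdot\kappa(\mb)}{\eps}\bigr)$ rather than the stated $\log^6$, so as written your argument proves a marginally weaker bound than the lemma claims; the paper's ordering keeps the precision requirement at $\Theta(\eps)$ and is what makes the exponent $6$ come out exactly. If you simply swap the order of the two stages, your argument collapses to the paper's and the $\kappa(\mb)^{L+1}$ bookkeeping disappears.
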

\begin{proof}
Let $\eps' \gets \frac \eps 3$, and let $\md$ and $\mn$ be the (commuting) numerator and denominator of \eqref{eq:rdef} for the rational function in Proposition~\ref{prop:ratsqrt} for the approximation factor $\eps'$. Let $u \gets \mn v$, which we can compute explicitly within the alloted runtime. We then have from Proposition~\ref{prop:ratsqrt} that
\[\norm{\md^{-1} u - \mb^{-\half} v}_2 \le \eps' \norm{\mb^{-\half} v}_2.\]
Moreover by Lemma~\ref{lem:approxdinv} we can compute a vector $w$ within the allotted runtime such that
\[\norm{\md^{-1} u - w}_2 \le \eps' \norm{\md^{-1} u}_2 \le \eps' \Par{\norm{\mb^{-\half} v}_2 + \norm{\md^{-1} u - \mb^{-\half} v}_2} \le 2\eps'\norm{\mb^{-\half} v}_2. \]
The vector $w$ follows from applying an explicit linear operator to $v$ as desired. Finally, by combining the above two displays we have the desired approximation quality as well:
\[\norm{\mb^{-\half} v - w}_2 \le 3\eps'\norm{\mb^{-\half} v}_2 = \eps \norm{\mb^{-\half} v}_2.\]
\end{proof}

\subsubsection{Implementing the homotopy method}

In this section, we implement the homotopy method outlined in Section~\ref{sssec:hom} by using the inverse square root access given by Lemma~\ref{lem:sqrtb}. We require the following helper result.

\begin{lemma}\label{lem:squarebeta}
Suppose for some $\eps \in (0, \half)$, and $\mm, \mn \in \PD^d$ such that $\kappa(\mn) \le \beta \le \frac{1}{3\eps}$,
\[-\eps \mn^{-1} \preceq \mm^{-1} - \mn^{-1} \preceq \eps \mn^{-1}.\]
Then,
\[-9\eps\beta\mn^2 \preceq \mm^2 - \mn^2 \preceq 9\eps\beta\mn^2.\]
\end{lemma}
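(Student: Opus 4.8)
The plan is to reduce the claim to a statement about the relative matrix $\mx \defeq \mn^{-\half}\mm\mn^{-\half}$ and then control the non‑commuting cross terms by a matrix completing‑the‑square step. Since $\mm,\mn\in\PD^d$, the hypothesis $-\eps\mn^{-1}\preceq\mm^{-1}-\mn^{-1}\preceq\eps\mn^{-1}$ is exactly $(1-\eps)\mn^{-1}\preceq\mm^{-1}\preceq(1+\eps)\mn^{-1}$, and inverting (operator inversion reverses $\preceq$ on the positive definite cone) gives $\frac{1}{1+\eps}\mn\preceq\mm\preceq\frac{1}{1-\eps}\mn$. Conjugating by $\mn^{-\half}$ shows $\frac{1}{1+\eps}\mi\preceq\mx\preceq\frac{1}{1-\eps}\mi$, so the symmetric matrix $\mf\defeq\mx-\mi$ satisfies $\norm{\mf}_2\le\max\{\tfrac{\eps}{1+\eps},\tfrac{\eps}{1-\eps}\}=\tfrac{\eps}{1-\eps}\le2\eps$, using $\eps\le\half$.

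Next, since $\mm=\mn^{\half}\mx\mn^{\half}$ we have $\mm^2=\mn^{\half}\mx\mn\mx\mn^{\half}$ and $\mn^2=\mn^{\half}\mn\mn^{\half}$, hence $\mm^2-\mn^2=\mn^{\half}(\mx\mn\mx-\mn)\mn^{\half}$, and conjugating by $\mn^{\half}$ it suffices to prove $-9\eps\beta\mn\preceq\mx\mn\mx-\mn\preceq9\eps\beta\mn$. Expanding, $\mx\mn\mx-\mn=(\mf\mn+\mn\mf)+\mf\mn\mf$. The quadratic term is handled directly: from $\mn\preceq\lmax(\mn)\mi$ we get $0\preceq\mf\mn\mf\preceq\lmax(\mn)\mf^2\preceq\norm{\mf}_2^2\lmax(\mn)\mi\preceq4\eps^2\lmax(\mn)\mi\preceq4\eps^2\kappa(\mn)\mn\preceq4\eps^2\beta\mn$, where the penultimate step uses $\lmax(\mn)\mi\preceq\kappa(\mn)\mn$ and the last uses $\kappa(\mn)\le\beta$.

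The main obstacle is the cross term $\mf\mn+\mn\mf$: because $\mf$ and $\mn$ generally do not commute, a bound on $\norm{\mf}_2$ alone does not control it relative to $\mn$. I would resolve this by completing the square: for every $t>0$ the symmetric matrix $\mf\mp t^{-1}\mi$ satisfies $t(\mf\mp t^{-1}\mi)\mn(\mf\mp t^{-1}\mi)\succeq\mzero$, which expands to $t\mf\mn\mf\mp(\mf\mn+\mn\mf)+t^{-1}\mn\succeq\mzero$, so $\pm(\mf\mn+\mn\mf)\preceq t\mf\mn\mf+t^{-1}\mn\preceq(4t\eps^2\beta+t^{-1})\mn$ by the quadratic‑term bound; taking $t=\eps^{-1}$ (any $t$ of order $\eps^{-1}$ works; the optimal is $t=(2\eps\sqrt{\beta})^{-1}$, giving the sharper $\pm 4\eps\sqrt\beta\,\mn$) and using $\beta\ge\kappa(\mn)\ge1$ yields $\pm(\mf\mn+\mn\mf)\preceq(4\eps\beta+\eps)\mn\preceq5\eps\beta\mn$. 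Combining with the quadratic‑term bound and $4\eps^2\beta\le2\eps\beta$ (valid since $\eps\le\half$), $-9\eps\beta\mn\preceq-5\eps\beta\mn\preceq\mx\mn\mx-\mn\preceq(5\eps\beta+2\eps\beta)\mn=7\eps\beta\mn\preceq9\eps\beta\mn$, and conjugating back by $\mn^{\half}$ finishes. I note this route does not seem to need the hypothesis $\beta\le\frac{1}{3\eps}$; that bound is presumably retained so the conclusion $\mm^2\preceq(1+9\eps\beta)\mn^2$ stays a genuine constant‑factor spectral relation (and to match how the lemma is invoked later), and I would reconcile this with the authors' intended argument.
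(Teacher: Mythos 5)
Your proof is correct, but it takes a genuinely different route from the paper's. The paper works with the additive perturbation $\md \defeq \mm - \mn$: after normalizing so that $\id \preceq \mn \preceq \beta\id$, the inverted hypothesis gives $-3\eps\mn \preceq \md \preceq 3\eps\mn$, hence $\|\md\|_2 \le 3\eps\beta$, and the identity $\mm^2 - \mn^2 = \mn\md + \md\mn + \md^2$ is bounded in quadratic form against vectors with $\norm{\mn v}_2 = 1$ via Cauchy--Schwarz, yielding $6\eps\beta + 9\eps^2\beta^2$. That argument \emph{does} use the hypothesis $\beta \le \frac{1}{3\eps}$, precisely to absorb the quadratic term $9\eps^2\beta^2 \le 3\eps\beta$ and land on the constant $9$. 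You instead pass to the multiplicative perturbation $\mf = \mn^{-\half}\mm\mn^{-\half} - \id$ with $\norm{\mf}_2 \le 2\eps$, and handle the non-commuting cross term $\mf\mn + \mn\mf$ by the operator-level completing-the-square inequality $\pm(\mf\mn+\mn\mf) \preceq t\,\mf\mn\mf + t^{-1}\mn$; your quadratic term is $O(\eps^2\beta)\mn$ rather than $O(\eps^2\beta^2)\mn^2$, which is why the hypothesis $\beta \le \frac{1}{3\eps}$ becomes unnecessary and you end with the slightly better constant $7$. So your version is marginally stronger (one fewer hypothesis, and a sharper $O(\eps\sqrt{\beta})$ cross-term bound is available with the optimal $t$), at the cost of the square-root conjugation and the completing-the-square step; the paper's Cauchy--Schwarz argument is shorter and suffices because the lemma is only invoked in the regime where $\eps$ is set as a small multiple of $1/\beta$ anyway (see the choice $\eps/(45\beta)$ in the proof of Theorem~\ref{thm:mainb}).
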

\begin{proof}
The statement is scale-invariant, so suppose for simplicity that $\id \preceq \mn \preceq \beta\id$. Also, by rearranging the assumed bound, we have $-3\eps \mn \preceq \mm - \mn \preceq 3\eps \mn$.
Write $\md = \mm - \mn$ such that 
\[-3\eps\beta\id \preceq -3\eps \mn \preceq \md \preceq 3\eps \mn \preceq 3\eps\beta \id.\]
We bound the quadratic form of $\mm^2 - \mn^2$ with an arbitrary vector $v \in \R^d$, such that $\norm{\mn v}_2 = 1$. Since all eigenvalues of $\md$ are in $[-3\eps\beta, 3\eps\beta]$, this implies
\begin{align*}\norm{\md v}_2^2 = v^\top \md^2 v \le 9\eps^2\beta^2\norm{v}_2^2 \le 9\eps^2\beta^2 v^\top \mn^2 v \le 9\eps^2\beta^2 \end{align*}
where we use $\mn \succeq \id$ and hence $\mn^2 \succeq \id$. We conclude by the triangle inequality and Cauchy-Schwarz:
\begin{align*}\Abs{v^\top(\mm^2 - \mn^2) v} &= \Abs{v^\top(\mn^2 - (\mn + \md)^2) v} \\
&= \Abs{v^\top(\mn\md+\md\mn+\md^2) v} \\
&\le 2\norm{\md v}_2\norm{\mn v}_2 + \norm{\md v}_2^2 \le 6\eps\beta + 9\eps^2\beta^2 \le 9\eps\beta.
\end{align*}
\end{proof}

We now state our main claim regarding solving \eqref{eq:structmpc}, \eqref{eq:approx} with general $\mb$.

\begin{restatable}{theorem}{restatemainb}\label{thm:mainb}
	Given matrices $\{\mm_i\}_{i \in [n]}$ with explicit factorizations \eqref{eq:vvt}, such that \eqref{eq:structmpc} is feasible for some $\kappa \ge 1$ and we can solve linear systems in linear combinations of $\{\mm_i\}_{i \in [n]}$ to $\eps$ relative accuracy in the sense of \eqref{eq:tmatdef} in $\Tmat\cdot \log \frac 1 \eps$ time, and $\mb$ satisfying 
	\[
		\mb \preceq \mat(\1) \preceq \alpha \mb,\; \id \preceq \mb \preceq \beta \id,
	\]
	we can return weights $w \in \R^n_{\ge 0}$ satisfying \eqref{eq:approx} with probability $\ge 1 - \delta$ in time
	\[O\Par{\Par{\tmv\Par{\{\mv_i\}_{i \in [n]} \cup \{\mb\}} + \Tmat} \cdot \kappa^2 \cdot \frac{\log^{12}(\frac{mnd\kappa\beta}{\delta\eps})}{\eps^7} \cdot \log\frac \alpha \eps}.\]
\end{restatable}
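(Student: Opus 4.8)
The plan is to carry out the homotopy method set up in Section~\ref{sssec:hom}. Run $K+1$ phases indexed by $0 \le k \le K$ with $K = \lceil \log_2 \frac{2\alpha}{\eps^2\kappa}\rceil$, and in phase $k$ produce $w^{(k)} \in \R^n_{\ge 0}$ with $\mb^{(k)} \preceq \mat(w^{(k)}) \preceq (1+\eps)\kappa\,\mb^{(k)}$ for the constraint matrix $\mb^{(k)} \defeq \mb + \lam^{(k)}\mat(\1)$, $\lam^{(k)} = \eps^{-1}2^{-k}$. Phase $0$ is handled directly by taking $w^{(0)} = (1+\lam^{(0)})\1$, valid by Lemma~\ref{lem:lamfirst}. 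For the last phase, the choice of $K$ forces $\lam^{(K)} \le \frac{\eps\kappa}{2\alpha}$, so Lemma~\ref{lem:lamlast} turns $w^{(K)}$ into weights satisfying \eqref{eq:approx} after rescaling $\eps$ by a constant (absorbed at the outset), and feasibility of all intermediate problems holds by Lemma~\ref{lem:anylam}. Everything thus reduces to the inductive step: given $w^{(k-1)}$, produce $w^{(k)}$ within budget.

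The inductive step reduces the $\mb^{(k)}$-constrained problem to the identity-constrained solver of Theorem~\ref{thm:mainid} by whitening. By Lemma~\ref{lem:preconditionb} applied with $\lam = \lam^{(k-1)} = 2\lam^{(k)}$, the matrix $\md^{(k-1)} \defeq \mat(w^{(k-1)})$ is a $3\kappa$-spectral approximation of $\mb^{(k)}$, and by \eqref{eq:tmatdef} we can apply a relative-error inverse of $\mat(w^{(k-1)}) + \lam\id$ in $\Tmat \cdot \log\frac1\eps$ time. Feeding this preconditioner into Lemma~\ref{lem:sqrtb} (with ``$\mb$''$\gets \mb^{(k)}$, ``$\mat(w)$''$\gets \md^{(k-1)}$, ``$\kappa$''$\gets 3\kappa$) yields a linear operator $\mr$ with $\norm{(\mr - (\mb^{(k)})^{-1/2})v}_2 \le \eps'\norm{(\mb^{(k)})^{-1/2} v}_2$ for all $v$, applicable in $\widetilde{O}((\tmv(\{\mv_i\}_{i\in[n]}\cup\{\mb\}) + \Tmat)\sqrt\kappa)$ time, where we use $\kappa(\mb^{(k)}) = O(\alpha\beta/\eps)$ (since $\id \preceq \mb \preceq \mb^{(k)} \preceq (1+\lam^{(0)}\alpha)\mb \preceq (1+\alpha/\eps)\beta\id$) so its logarithm is absorbed, and we set $\eps' = \Theta(\eps)$. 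We then form the whitened dictionary $\mn_i \defeq \mr\mm_i\mr^\top = (\mr\mv_i)(\mr\mv_i)^\top$ with explicit factors $\mr\mv_i \in \R^{d\times m}$, call Theorem~\ref{thm:mainid} on $\{\mn_i\}_{i\in[n]}$, and unwhiten the returned weights.

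The main obstacle is propagating error through the \emph{inexact} whitening map $\mr$. Substituting $v = (\mb^{(k)})^{1/2} u$ into Lemma~\ref{lem:sqrtb}'s guarantee gives $\norm{\mg - \id}_2 \le \eps'$ for $\mg \defeq \mr(\mb^{(k)})^{1/2}$, hence $(1-\eps')^2\id \preceq \mg\mg^\top \preceq (1+\eps')^2\id$ and $(1+\eps')^{-2}\id \preceq \mg^{-1}\mg^{-\top} \preceq (1-\eps')^{-2}\id$. For every $w' \in \R^n_{\ge 0}$ one has the identity $\sum_i w'_i \mn_i = \mg\,\Par{(\mb^{(k)})^{-1/2}\mat(w')(\mb^{(k)})^{-1/2}}\,\mg^\top$. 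Applying this to the feasible point $w^\star + \lam^{(k)}\1$ from Lemma~\ref{lem:anylam} shows, after an $O(\eps')$ rescaling, that $\{\mn_i\}$ is feasible for $\mb = \id$ with parameter $\kappa' = (1+O(\eps'))\kappa$, so Theorem~\ref{thm:mainid} returns $w$ with $\id \preceq \sum_i w_i\mn_i \preceq (1+\eps)\kappa'\id$; reading the identity backwards as $(\mb^{(k)})^{-1/2}\mat(w)(\mb^{(k)})^{-1/2} = \mg^{-1}(\sum_i w_i\mn_i)\mg^{-\top}$ then turns this into $(1-O(\eps'))\mb^{(k)} \preceq \mat(w) \preceq (1+O(\eps))\kappa\,\mb^{(k)}$, and rescaling $w$ by $(1-O(\eps'))^{-1}$ gives $w^{(k)}$ of quality $(1+O(\eps))\kappa$. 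Choosing the internal accuracies as a small enough constant times the target accuracy makes the compounded factor at most $1+\eps$, as needed for the next phase. Crucially, the preconditioner quality passed between phases is \emph{uniformly} $O(\kappa)$, so these $O(1)$-many distortions per phase do not accumulate over the $K$ phases.

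For the runtime, each phase performs $O(1)$ products with $\mb^{(k)}$, one construction of $\mr$, and one call to Theorem~\ref{thm:mainid} on $\{\mn_i\}_{i\in[n]}$. The effective matrix-vector cost of $\{\mn_i\}$ is $\widetilde{O}(\tmv(\{\mv_i\}_{i\in[n]}\cup\{\mb\}) + \Tmat + \tmv(\mr)) = \widetilde{O}((\tmv(\{\mv_i\}_{i\in[n]}\cup\{\mb\}) + \Tmat)\sqrt\kappa)$: to avoid a spurious factor of $n$, inside Theorem~\ref{thm:mainid} one pre-applies $\mr$ (resp.\ $\mr^\top$) to the $\widetilde{O}(\eps^{-2})$ rows of the Johnson--Lindenstrauss sketches of Lemmas~\ref{lem:trexpnegm} and~\ref{lem:line5left}, after which the sketches multiply the original (sparse) columns of the $\mv_i$. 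Theorem~\ref{thm:mainid} then contributes its $\kappa^{1.5}$ factor on top of this $\sqrt\kappa$, giving $\kappa^2$ per phase; multiplying by $K = O(\log\frac{\alpha}{\eps^2\kappa})$ phases and union bounding the failure probabilities (replacing $\delta$ by $\delta/(K+1)$) yields the stated bound, with the $\log^{O(1)}(mnd\kappa\alpha\beta/(\delta\eps))$ factors assembled from Lemma~\ref{lem:sqrtb} and Theorem~\ref{thm:mainid}.
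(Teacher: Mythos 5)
Your proposal is correct and follows the same overall route as the paper: the homotopy over $\lam^{(k)} = \eps^{-1}2^{-k}$ using Lemmas~\ref{lem:lamfirst}, \ref{lem:lamlast}, \ref{lem:anylam}, and \ref{lem:preconditionb}, whitening each phase's constraint by the approximate inverse square root from Lemma~\ref{lem:sqrtb}, and invoking Theorem~\ref{thm:mainid} on the conjugated dictionary $\{\mr\mm_i\mr^\top\}$. The one place you genuinely diverge is the error propagation through the inexact $\mr$: the paper passes through Lemma~\ref{lem:squarebeta} to show $\mr^{-2}$ is a $(1\pm\eps/5)$-spectral approximation of $\mb^{(k)}$, which forces it to request accuracy $\eps/(45\beta)$ from Lemma~\ref{lem:sqrtb}; you instead observe that $\mg \defeq \mr(\mb^{(k)})^{1/2}$ satisfies $\norm{\mg-\id}_2\le\eps'$ and use the exact conjugation identity $\sum_i w_i\mn_i = \mg\bigl((\mb^{(k)})^{-1/2}\mat(w)(\mb^{(k)})^{-1/2}\bigr)\mg^\top$ in both directions. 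This is arguably cleaner — it needs only $\eps'=\Theta(\eps)$ accuracy, sidesteps the squaring lemma entirely, and does not require $\mr$ to be symmetric — though since the accuracy enters Lemma~\ref{lem:sqrtb} only logarithmically, the final runtime is the same. You are also more explicit than the paper about why applying $\mr$ inside the subroutines of Theorem~\ref{thm:mainid} costs only $\tmv(\{\mv_i\}_{i\in[n]}) + \tmv(\mr)$ rather than $n\cdot\tmv(\mr)$ (pre-applying $\mr$ to the sketch rows), a point the paper leaves as ``straightforward to check.''
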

\begin{proof}
We follow Section~\ref{sssec:hom}, which shows that it suffices to solve the following problem $O(\log \frac{\alpha}{\eps})$ times. We have an instance of \eqref{eq:structmpc}, \eqref{eq:approx} for the original value of $\kappa$, and some matrix $\mb$ with $\kappa(\mb) \le \beta$ such that we know $w \in \R^n_{\ge 0}$ with $\mb \preceq \mat(w) \preceq 3\kappa\mb$. We wish to compute $w'$ with 
\begin{equation}\label{eq:wprime}\mb \preceq \mat(w') \preceq (1 + \eps)\kappa\mb.\end{equation}
To do so, we use Lemma~\ref{lem:sqrtb}, which yields a matrix $\mr$ such that
\[-\frac{\eps}{45\beta} \mb^{-\half} \preceq \mr - \mb^{-\half} \preceq \frac{\eps}{45\beta} \mb^{-\half} \text{ and } \tmv(\mr) = O\Par{\tmv\Par{\{\mv_i\}_{i \in [n]} \cup \{\mb\}} \cdot \sqrt \kappa \cdot \log^6\Par{\frac{\kappa\beta}{\eps}}}.\]
By Lemma~\ref{lem:squarebeta}, this implies that
\begin{equation}\label{eq:rvsb}-\frac \eps 5 \mb \preceq \mr^{-2} - \mb \preceq \frac \eps 5 \mb.\end{equation}
Now, if we solve \eqref{eq:structmpc}, \eqref{eq:approx} with $\mr^{-2}$ in place of $\mb$ to accuracy $1 + \frac \eps 5$, since the optimal value of $\kappa$ has changed by at most a factor of $1 + \frac \eps 5$, it suffices to compute $w'$ such that
\[\mr^{-2} \preceq \mat(w') \preceq \Par{1 + \frac{3\eps}{5}} \kappa \mr^{-2},\]
since combined with \eqref{eq:rvsb} this implies \eqref{eq:wprime} up to rescaling $w'$. Finally, to compute the above reweighting it suffices to apply Theorem~\ref{thm:mainid} with $\mm_i \gets \mr \mm_i \mr$ for all $i \in [n]$. It is clear these matrices satisfy \eqref{eq:vvt} with the decompositions given by $\mv_i \gets \mr \mv_i$, and we can apply these matrices in time $\tmv(\mr) + \tmv(\mv_i)$. It is straightforward to check that throughout the proof of Theorem~\ref{thm:mainid}, this replaces each cost of $\tmv(\{\mv_i\}_{i \in [n]})$ with $\tmv(\{\mv_i\}_{i \in [n]}) + \tmv(\mr)$, giving the desired runtime.
\end{proof}
	%!TEX root = ./structured-mpc.tex

\section{Graph structured systems}
\label{sec:recovery}

In this section, we provide several applications of Theorem \ref{thm:mainb}, restated for convenience.

\restatemainb*

We show how to use Theorem~\ref{thm:mainb} in a relatively straightforward fashion to derive further recovery and solving results for several types of natural structured matrices. Our first result of the section, Theorem~\ref{thm:perturbedsolver}, serves as an example of how a black-box application of Theorem~\ref{thm:mainb} can be used to solve linear systems in, and spectrally approximate, families of matrices which are well-approximated by a ``simple'' matrix dictionary such as graph Laplacians; we call these matrices ``perturbed Laplacians.''

Our other two results in this section, Theorems~\ref{thm:mmatrix} and~\ref{thm:laplacian}, extend the applicability of Theorem~\ref{thm:mainb} by wrapping it in a recursive preconditioning outer loop. We use this strategy, combined with new (simple) structural insights on graph-structured families, to give nearly-linear time solvers and spectral approximations for inverse M-matrices and Laplacian pseudoinverses. 

Below, we record several key definitions and preliminaries we will use in this section. We also state key structural tools we leverage to prove these results, and defer formal proofs to Appendix~\ref{sec:appproofs}. 

\paragraph{Preliminaries: graph structured systems.} We call a square matrix $\ma$ a Z-matrix if $\ma_{ij} \le 0$ for all $i \neq j$. We call a matrix $\mlap$ a Laplacian if it is a symmetric Z-matrix with $\mlap \1 = 0$. We call a square matrix $\ma \in \R^{d \times d}$ diagonally dominant (DD) if $\ma_{ii} \ge \sum_{j \neq i} \ma_{ij}$ for all $i \in [d]$ and symmetric diagonally dominant (SDD) if it is DD and symmetric. We call $\mm$ an invertible $M$-matrix if $\mm = s\id - \ma$ where $s > 0$, $\ma \in \R^{d \times d}_{\ge 0}$, and $\rho(\ma) < s$ where $\rho$ is the spectral radius.

For an undirected graph $G = (V, E)$ with nonnegative edge weights $w \in \R^E_{\ge 0}$, its Laplacian $\mlap$ is defined as $\sum_{e \in E} w_e b_e b_e^\top$, where for an edge $e = (u, v)$ we let $b_e \in \R^V$ be $1$ in the $u$ coordinate and $-1$ in the $v$ coordinate. It is well-known that all Laplacian matrices can be written in this way for some graph, and are SDD \cite{SpielmanT04}. Finally, we let $\mlap_{K_n} \defeq n\id - \1\1^\top$ be the Laplacian of the unweighted complete graph on $n$ vertices, which is a scalar multiple of the projection matrix of the image space of all SDD matrices, the subspace of $\R^V$ orthogonal to $\1$.

\begin{restatable}{lemma}{lemmamfactone}
	\label{lemma:mfact1}
	Let $\mm$ be an invertible symmetric M-matrix. Let $x = \mm^{-1} \1$ and define $\mx=\diag{x}$. Then $\mx \mm \mx$ is a SDD Z-matrix.
\end{restatable}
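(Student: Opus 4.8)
The plan is to verify directly the three defining properties of an SDD Z-matrix for $\mx\mm\mx$ — symmetry, the Z-matrix sign pattern, and symmetric diagonal dominance — using two facts about $x$: that it is entrywise positive, and that it equals the vector of row sums of $\mx\mm\mx$. Symmetry is immediate, since $\mx$ is diagonal and $\mm$ is symmetric by hypothesis, so $(\mx\mm\mx)^\top = \mx\mm^\top\mx = \mx\mm\mx$.

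The one input requiring care is entrywise positivity of $x$. Since $\mm$ is an invertible M-matrix, $\mm^{-1}$ is an inverse M-matrix and hence entrywise nonnegative: writing $\mm = s\id - \ma$ with $\ma$ entrywise nonnegative and $\rho(\ma) < s$, one has $\mm^{-1} = \frac{1}{s}\sum_{k \ge 0}(\ma/s)^k$, a convergent sum of entrywise-nonnegative matrices (this is also exactly the content of Lemma~\ref{lem:posminv}). Because $\mm^{-1}$ is invertible, no row of it can vanish, so $x_i = \sum_j (\mm^{-1})_{ij} > 0$ for every $i$; thus $\mx \succ \mzero$ and $\mx\mm\mx$ is a genuine two-sided positive rescaling of $\mm$. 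The Z-matrix property now follows: for $i \ne j$, $(\mx\mm\mx)_{ij} = x_i\mm_{ij}x_j \le 0$, since an M-matrix is a Z-matrix (so $\mm_{ij} \le 0$) and $x_ix_j > 0$.

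For diagonal dominance I would compute the row sums: $(\mx\mm\mx)\1 = \mx\mm(\mx\1) = \mx\mm x = \mx(\mm\mm^{-1}\1) = \mx\1 = x$. Hence for each $i$, $(\mx\mm\mx)_{ii} + \sum_{j \ne i}(\mx\mm\mx)_{ij} = x_i > 0$; since each off-diagonal term is nonpositive, this rearranges to $(\mx\mm\mx)_{ii} = x_i + \sum_{j \ne i}\Abs{(\mx\mm\mx)_{ij}} \ge \sum_{j \ne i}\Abs{(\mx\mm\mx)_{ij}}$, which is diagonal dominance, and together with symmetry gives the SDD property. Everything here is routine and I anticipate no genuine obstacle; the real content of the lemma is the clean identity that the row sums of $\mx\mm\mx$ equal $x$, which makes both diagonal dominance and the positivity of $x$ fall out together from the choice $x = \mm^{-1}\1$.
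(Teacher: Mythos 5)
Your proof is correct and follows essentially the same route as the paper's: symmetry is immediate, the Z-matrix sign pattern follows from $x_i x_j \ge 0$ and $\mm_{ij}\le 0$, and the key identity $(\mx\mm\mx)\1 = \mm^{-1}\1 = x \ge 0$ gives diagonal dominance. The only difference is cosmetic: you additionally establish strict positivity of $x$ (which is not needed — entrywise nonnegativity of $\mm^{-1}$ suffices throughout) and spell out the equivalence between nonnegative row sums and the absolute-value form of diagonal dominance for Z-matrices.
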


\begin{restatable}{lemma}{lemmamfacttwo}
	\label{lemma:mfact2}
	Let $\ma$ be an invertible SDD Z-matrix. For any $\alpha \geq 0$, the matrix $\mb = (\ma^{-1} + \alpha \id)^{-1}$ is also an invertible SDD Z-matrix. 
\end{restatable}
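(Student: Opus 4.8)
The plan is to reduce the statement to two standard facts about invertible SDD Z-matrices. First, such a matrix is positive definite: invertibility together with diagonal dominance forces the diagonal entries to be positive (a zero diagonal entry would force its whole row to vanish), and then Gershgorin applied to the symmetric matrix shows all eigenvalues are nonnegative, hence positive. Second, an invertible symmetric M-matrix (equivalently, a positive definite symmetric Z-matrix) has an entrywise nonnegative inverse --- the standard expansion $\mm^{-1} = \frac1s\sum_{k\ge 0}(\mathbf{B}/s)^k$ when $\mm = s\id - \mathbf{B}$ with $\mathbf{B}\ge\mzero$, $\rho(\mathbf{B})<s$; this is essentially the content of Lemma~\ref{lem:posminv}. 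I would also record the convenient reformulation that, for a symmetric Z-matrix $\mathbf{N}$, diagonal dominance is equivalent to $\mathbf{N}\1\ge\mzero$ entrywise (since $|\mathbf{N}_{ij}| = -\mathbf{N}_{ij}$ for $i\neq j$); thus $\ma$ being an invertible SDD Z-matrix is the same as $\ma$ being symmetric with nonpositive off-diagonals, positive definite, and $\ma\1\ge\mzero$.

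The case $\alpha = 0$ is trivial ($\mb = \ma$), so assume $\alpha>0$. The crucial algebraic observation is the resolvent identity
\[
\mb \;=\; \Par{\ma^{-1} + \alpha\id}^{-1} \;=\; \Par{\id + \alpha\ma}^{-1}\ma \;=\; \tfrac1\alpha\Par{\id - \Par{\id + \alpha\ma}^{-1}},
\]
together with the fact that $\mathbf{N} \defeq \id + \alpha\ma$ is itself an invertible SDD Z-matrix: it is symmetric, its off-diagonal entries are $\alpha\ma_{ij}\le 0$, it is positive definite as the sum of $\id$ and $\alpha\ma\succeq\mzero$, and $\mathbf{N}\1 = \1 + \alpha(\ma\1) \ge \1 \ge \mzero$ using $\ma\1\ge\mzero$. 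Hence $\mathbf{N}^{-1}$ is entrywise nonnegative by the second fact above. Plugging this into the displayed identity, for $i\neq j$ we get $\mb_{ij} = -\tfrac1\alpha(\mathbf{N}^{-1})_{ij}\le 0$, so $\mb$ is a symmetric Z-matrix; and $\mb$ is positive definite, being the inverse of the positive definite matrix $\ma^{-1} + \alpha\id$, hence in particular invertible.

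It remains to verify diagonal dominance of $\mb$, which by the reformulation above amounts to $\mb\1\ge\mzero$, i.e.\ $\mathbf{N}^{-1}\1\le\1$. Here I would use that $\ma$ and $\mathbf{N}^{-1} = (\id+\alpha\ma)^{-1}$ commute (the latter is a rational function of the former), so that
\[
\1 - \mathbf{N}^{-1}\1 \;=\; \mathbf{N}^{-1}\Par{\mathbf{N}\1 - \1} \;=\; \alpha\,\mathbf{N}^{-1}\Par{\ma\1} \;\ge\; \mzero,
\]
since $\mathbf{N}^{-1}$ is entrywise nonnegative and $\ma\1\ge\mzero$. Therefore $\mb\1 = \tfrac1\alpha(\1 - \mathbf{N}^{-1}\1)\ge\mzero$, which gives diagonal dominance and completes the argument.

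I expect there to be no serious obstacle: the only real content is (a) spotting the resolvent identity that rewrites $\mb$ as $\tfrac1\alpha(\id - \mathbf{N}^{-1})$ with $\mathbf{N}$ again an SDD Z-matrix, so that the Z-matrix property of $\mb$ follows immediately from nonnegativity of M-matrix inverses, and (b) the short commutation step establishing $\mathbf{N}^{-1}\1\le\1$. Both the positive-definiteness of invertible SDD Z-matrices and the nonnegativity of their inverses are standard and presumably already available in the paper, so the proof is essentially a two-line computation once these are in hand.
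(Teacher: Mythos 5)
Your proof is correct and takes essentially the same route as the paper: both rest on the Woodbury/resolvent identity expressing $\mb$ via the entrywise-nonnegative inverse of the M-matrix $\id+\alpha\ma$ (the paper's $(\ma+\alpha^{-1}\id)^{-1}$ is just $\alpha$ times your $\mathbf{N}^{-1}$), from which the nonpositive off-diagonals and $\mb\1\ge\mzero$ follow exactly as you argue. The only cosmetic difference is that you verify $\mb\1\ge\mzero$ via $\1-\mathbf{N}^{-1}\1=\alpha\mathbf{N}^{-1}\ma\1$ while the paper writes the algebraically identical $\mb\1=\alpha^{-1}(\alpha^{-1}\id+\ma)^{-1}\ma\1$; no commutativity is actually needed for your step, but that is harmless.
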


 We also will make extensive use of a standard notion of preconditioned (accelerated) gradient descent, which we state here (note that we used a special case of the following result as Lemma~\ref{lem:preconditionb}).

\begin{proposition}[Theorem 4.4, \cite{JambulapatiS21}]\label{prop:precondagd}
Suppose for $\kappa \ge 1$ and $\ma, \mb \in \PSD^d$ sharing a kernel, $\ma \preceq \mb \preceq \kappa \ma$, and suppose in time $\Tma \log \frac 1 \eps$ we can compute and apply a linear operator $\alg_{\eps}$ with
\[\norm{(\alg_{\eps} - \ma^\dagger) v}_2 \le \eps \norm{\ma^\dagger v}_2 \text{ for all } v \in \textup{Im}(\ma).\] 
Then, we can compute and apply a linear operator $\algb_{\eps}$ in time $(\Tma + \tmv(\mb))\sqrt{\kappa} \log \kappa \log \frac 1 \eps$ with
\[\norm{(\algb_{\eps} - \mb^\dagger) v}_2 \le \eps \norm{\mb^\dagger v}_2 \text{ for all } v \in \textup{Im}(\mb).\]
\end{proposition}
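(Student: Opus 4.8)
The plan is to realize $\algb_\eps$ as a \emph{preconditioned iterative method} for the system $\mb x = v$ --- concretely, preconditioned Chebyshev iteration (equivalently, preconditioned accelerated gradient descent) --- using the approximate solver $\alg$ for $\ma$ in place of an exact application of $\ma^\dagger$. Since $\ma$ and $\mb$ share a kernel, let $\mprod$ be the orthogonal projector onto $\textup{Im}(\ma) = \textup{Im}(\mb)$ and carry out everything inside this subspace, where $\ma,\mb$ are invertible; the hypothesis $\ma \preceq \mb \preceq \kappa\ma$ says exactly that $\ma^{-1}\mb$ (and the similar, $\mb$-self-adjoint operator $\mc \defeq \mb^{\half}\ma^{-1}\mb^{\half}$) has all its eigenvalues on the subspace in $[1,\kappa]$, so the idealized preconditioned problem is $\kappa$-well-conditioned. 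The $N$-th Chebyshev iterate (with $x_0 = 0$) has the form $x_N = q_N(\ma^{-1}\mb)\ma^{-1}v$ for an explicit degree-$N$ polynomial with $q_N$ chosen so that $t\mapsto t\,q_N(t)-1$ is the scaled/shifted Chebyshev polynomial minimizing the sup norm on $[1,\kappa]$ subject to vanishing appropriately at $0$; this gives $\norm{x_N - \mb^{-1}v}_{\mb} \le 2\exp(-\Omega(N/\sqrt\kappa))\,\norm{\mb^{-1}v}_{\mb}$, using that $\ma^{-1}\mb$ is self-adjoint in the $\mb$-inner product so the operator norm on the relevant interval equals the scalar sup norm. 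Hence $N = O(\sqrt\kappa\log\frac1{\eps'})$ iterations drive the relative $\mb$-norm error below $\eps'$, each iteration costing one multiply by $\mb$ (cost $\tmv(\mb)$) plus one application of $\ma^{-1}$.

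Next I would remove the idealization that $\ma^{-1}$ is applied exactly: replace each such application in the recurrence by a call to $\alg_{\delta}$ for a suitably small inner tolerance $\delta$, and run an inexact-iteration analysis. Carrying a running bound on $\norm{x_t - x^\star}$ and tracking that each noisy step perturbs it by at most an additive $\delta\cdot\poly(\kappa)\norm{x^\star}$, accumulated over the $N$ steps, one shows that choosing $\delta = \eps'/\poly(\kappa)$ --- which by hypothesis costs $\Tma\cdot O(\log\tfrac{\kappa}{\eps'}) = \Tma\cdot O(\log\kappa + \log\tfrac1{\eps'})$ per call --- preserves the $\eps'$-accurate guarantee of the exact iteration. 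This is the step I expect to be the main obstacle: accelerated/Chebyshev recurrences (unlike plain preconditioned Richardson) are genuinely sensitive to preconditioner error, so the perturbation bookkeeping must be done with care to get only a polynomial (rather than exponential) amplification in the iteration count, and it is precisely this that produces the extra $\log\kappa$ factor in the final runtime. One can alternatively, for robustness, first build a constant-accuracy solver $\algb_0$ in this way and then apply $O(\log\frac1\eps)$ rounds of iterative refinement $x_{k+1} \gets x_k + \algb_0(v - \mb x_k)$, which contracts the $\mb$-norm error geometrically.

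Finally I would convert the relative $\mb$-norm guarantee into the claimed relative $\ell_2$ guarantee and assemble the runtime. Since $\norm{u}_2 \le \lmin(\mb)^{-\half}\norm{u}_{\mb}$ and $\norm{\mb^\dagger v}_2 \ge \lmin(\mb) \lmin(\mb)^{-1}\dots$, a standard comparison shows a relative $\mb$-norm error $\eps'$ yields a relative $\ell_2$ error $O(\eps'\sqrt{\kappa(\mb)})$, so taking $\eps' = \eps/\poly(\kappa(\mb))$ gives $\norm{(\algb_\eps - \mb^\dagger)v}_2 \le \eps\norm{\mb^\dagger v}_2$ for all $v \in \textup{Im}(\mb)$; the resulting $\log\kappa(\mb)$ overhead is absorbed into the stated $\log\kappa$ factor (and is anyway $O(\log\beta)$ in every application of this proposition). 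The map $\algb_\eps : v\mapsto x_N$ is a fixed composition of the linear operators $\alg_\delta$ and multiplications by $\mb$ (with $\kappa$-dependent Chebyshev coefficients computable in $O(N)$ time), hence is itself linear and can be ``compiled'' once and reapplied. One should also check all iterates remain in $\textup{Im}(\mb)$: this holds because $v \in \textup{Im}(\mb)$, $\mb$ preserves the subspace, and $\alg_\delta$ may be assumed to map into it (post-composing with $\mprod$, which is free given a multiply by $\mb$). Counting $N = O(\sqrt\kappa\log\frac1\eps)$ iterations, each with one $\tmv(\mb)$-cost product and one $\alg_\delta$-call of cost $\Tma\cdot O(\log\kappa)$, yields total time $O\big((\Tma + \tmv(\mb))\sqrt\kappa\,\log\kappa\,\log\tfrac1\eps\big)$, matching the claim.
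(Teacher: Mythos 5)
This statement is not proved in the paper at all: it is imported verbatim as Theorem~4.4 of \cite{JambulapatiS21}, so there is no in-paper argument to compare yours against. Judged on its own, your sketch follows the standard (and, as far as I can tell, the intended) route --- preconditioned Chebyshev/accelerated iteration on $\textup{Im}(\mb)$, with the exact application of $\ma^{\dagger}$ replaced by the approximate operator $\alg_\delta$ at inner tolerance $\delta = \poly(1/\kappa)$, which is exactly what produces the $\Tma \log \kappa$ cost per iteration and hence the $\sqrt{\kappa}\log\kappa$ factor. The outline is right, and your fallback of building a constant-accuracy solver and then running $O(\log\frac 1 \eps)$ rounds of iterative refinement is the robust way to make the inexactness analysis go through; I would make that the primary argument rather than the alternative, since a direct perturbation analysis of an accelerated recurrence with errors is genuinely delicate.

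Two points need tightening. First, the inexact-step bookkeeping is the entire content of the theorem and is only gestured at; ``perturbs by at most $\delta\cdot\poly(\kappa)\norm{x^\star}$ accumulated over $N$ steps'' is the claim one must actually prove, and for momentum methods the naive accumulation argument can blow up --- this is precisely why the refinement route is preferable. Second, your norm conversion at the end is not quite consistent with the stated bound: passing from relative $\mb$-norm error to relative $\ell_2$ error costs a factor $\sqrt{\kappa(\mb)}$, so you need $\eps' = \eps/\poly(\kappa(\mb))$, and the resulting overhead is $\log\kappa(\mb)$, \emph{not} $\log\kappa$ --- these are different quantities, and $\kappa(\mb)$ can be arbitrarily larger than the relative condition number $\kappa$ between $\ma$ and $\mb$. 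As written, your argument proves the proposition with an extra $\log\kappa(\mb)$ (equivalently $\log\beta$) factor. This is harmless everywhere the proposition is invoked in this paper (where such logs are absorbed into $\textup{polylog}$ terms), but to match the literal statement you would either need to carry out the convergence analysis directly in the $\ell_2$ norm or note explicitly that you are proving a very slightly weaker runtime.
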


\subsection{Perturbed Laplacian solver}

In this section, we prove Theorem~\ref{thm:perturbedsolver} through a direct application of Theorem~\ref{thm:mainb}.

\begin{restatable}[Perturbed Laplacian solver]{theorem}{restateperturbedsolver}
	\label{thm:perturbedsolver}
	Let $\ma \succeq \mzero \in \mathbb{R}^{n \times n}$ be such that there exists an (unknown) Laplacian $\lap$ with $\ma \preceq \lap \preceq \edit \ma$, and that $\lap$ corresponds to a graph with edge weights between $w_{\min}$ and $w_{\max}$, with $\frac{w_{\max}}{w_{\min}} \le U$. For any $\delta \in (0, 1)$ and $\eps > 0$, there is an algorithm recovering a Laplacian $\lap'$ with $\ma \preceq \lap' \preceq (1 + \eps)\kappa\ma$ with probability $\ge 1 - \delta$ in time
	\[O\Par{n^2 \cdot \kappa^2 \cdot \frac{\log^{14}(\frac{n\kappa U}{\delta\eps})}{\epsilon^7}}.\]
	Consequently, there is an algorithm for solving linear systems in $\ma$ to $\eps$-relative accuracy (see \eqref{eq:tmatdef}) with probability $\ge 1 - \delta$, in time 
	\[O\Par{n^2 \cdot \kappa^2 \cdot \log^{14}\Par{\frac{n\kappa U}{\delta}} + n^2 \;\textup{polyloglog}(n)\log\Par{\frac 1 \delta}\cdot \sqrt{\kappa} \log \kappa \cdot \log\Par{\frac 1 \eps}}.\]
\end{restatable}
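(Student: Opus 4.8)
\emph{Approach.} As the opening of this section notes, this is a black-box application of Theorem~\ref{thm:mainb} with the \emph{edge-Laplacian dictionary}. Index the dictionary by unordered pairs $e = \{u,v\} \in \binom{[n]}{2}$ and set $\mm_e = b_e b_e^\top$; this has the factorization \eqref{eq:vvt} with $\mv_e = b_e \in \R^n$ and $m = 1$, and $\mat(w) = \sum_e w_e b_e b_e^\top$ ranges over all weighted graph Laplacians on $[n]$, with $\mat(\1) = \lap_{K_n}$. Take $\mb = \ma$. Every $\mm_e$ and $\ma$ annihilates $\1$ (and more generally $\ker(\lap)$), so we run Theorem~\ref{thm:mainb} on the subspace $\mathrm{Im}(\ma) \subseteq \1^\perp$, reading ``$\id$'' in \eqref{eq:alphabetadef} as the projector onto $\mathrm{Im}(\ma)$; the Section~\ref{sec:solvers} machinery is insensitive to this restriction since every relevant matrix shares the kernel, and solves of $\mat(w) + \lam \id$ (a genuine SDD matrix for $\lam > 0$, a Laplacian for $\lam = 0$) are near-linear-time solvable on $\R^n$ directly. (If $\lap$ is disconnected, argue per connected component; otherwise assume connectivity.)

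\emph{Checking the hypotheses.} Feasibility of \eqref{eq:structmpc} is exactly the assumption $\ma \preceq \lap \preceq \kappa\ma$, with $w^\star$ the edge weights of $\lap$, so $\kappa^\star \le \kappa$. For the solve access \eqref{eq:tmatdef}, $\mat(w) + \lam\id$ is SDD (a Laplacian plus a nonnegative diagonal), so a near-linear-time Laplacian/SDD solver gives $\Tmat = \tO(n^2)$. For the conditioning bounds \eqref{eq:alphabetadef}: from $\ma \preceq \lap \preceq \kappa\ma$ we get $\kappa(\ma)|_{\1^\perp} \le \kappa \cdot \kappa(\lap)|_{\1^\perp}$, and a connected weighted graph with edge-weight ratio at most $U$ has $\lmax(\lap) = O(n\, w_{\max})$ and smallest nonzero eigenvalue $\Omega(w_{\min}/n^2)$, hence $\kappa(\lap)|_{\1^\perp} = O(n^3 U)$ and $\kappa(\ma)|_{\1^\perp} \le \beta_0 := O(\kappa n^3 U)$. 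Estimating $\lmax(\ma)$ via the power method (Fact~\ref{fact:powermethod}) and rescaling $\ma$ so the estimate equals $\beta_0$ yields $\id \preceq \ma \preceq \beta\id$ on $\mathrm{Im}(\ma)$ with $\beta = O(\beta_0)$ (reversed at the end by scale-invariance). The warm start $w_0 = \tfrac{\beta}{n}\1$ then satisfies $\mat(w_0) = \beta\id \succeq \ma$ and $\mat(w_0) = \beta\id \preceq \beta\ma$, so folding $w_0$ into the dictionary via $\mm_e \gets \tfrac{\beta}{n}\mm_e$ (the remark after Theorem~\ref{thm:mainbinformal}) makes \eqref{eq:alphabetadef} hold with $\alpha = \beta = O(\kappa n^3 U)$ while preserving feasibility. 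With $\tmv(\{b_e\}_e \cup \{\ma\}) = O(n^2)$, $m = 1$, dictionary size $\Theta(n^2)$, dimension $n$, $\Tmat = \tO(n^2)$, and $\log\alpha, \log\beta = O(\log(n\kappa U))$, Theorem~\ref{thm:mainb} returns $w'$ with $\ma \preceq \mat(w') \preceq (1+\eps)\kappa\ma$; since $\mat(w')$ is a positive multiple of a graph Laplacian it is itself a Laplacian $\lap'$, and undoing the rescaling of $\ma$ gives the claimed recovery runtime.

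\emph{The solver consequence.} Run the recovery with $\eps$ a fixed constant (say $\eps = 1$) in time $O(n^2\kappa^2\log^{14}(n\kappa U/\delta))$, obtaining a Laplacian $\lap'$ with $\ma \preceq \lap' \preceq 2\kappa\ma$, equivalently $\tfrac{1}{2\kappa}\lap' \preceq \ma \preceq 2\kappa\cdot\tfrac{1}{2\kappa}\lap'$. A fast Laplacian solver solves systems in $\lap'$ (hence in $\tfrac{1}{2\kappa}\lap'$) to any relative accuracy in $\tO(n^2)$ time, so Proposition~\ref{prop:precondagd}, applied with the role of its ``$\ma$'' played by $\tfrac{1}{2\kappa}\lap'$, its ``$\mb$'' by $\ma$, and $\kappa \gets 2\kappa$, produces an operator solving systems in $\ma$ to $\eps$-relative accuracy at per-solve cost $(\tO(n^2) + \tmv(\ma)) \cdot \sqrt{\kappa}\log\kappa\log\tfrac1\eps$, the extra $\log\tfrac1\delta$ factor in the stated bound coming from boosting the randomized Laplacian-solver and power-method subroutines. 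Adding the one-time recovery cost yields the stated two-term runtime.

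\emph{Main obstacle.} The only real work is the conditioning accounting: certifying that $\alpha$ and $\beta$ in \eqref{eq:alphabetadef} are genuinely $\mathrm{poly}(n,\kappa,U)$ — which rests on the standard spectral bounds for connected-graph Laplacians ($\lmax = O(n)$ and smallest nonzero eigenvalue $\Omega(n^{-2})$, up to the weight scale) transported to $\ma$ through the sandwich $\ma \preceq \lap \preceq \kappa\ma$ — and checking that the Section~\ref{sec:solvers} solvers, nominally stated for positive-definite $\mb$, apply verbatim once one restricts to $\mathrm{Im}(\ma)$, legitimate because all dictionary matrices and $\ma$ share the kernel $\mathrm{span}(\1)$. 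Everything else is bookkeeping of polylogarithmic factors.
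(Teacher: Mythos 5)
Your proposal is correct and follows essentially the same route as the paper's proof: instantiate Theorem~\ref{thm:mainb} with the edge dictionary $\{b_eb_e^\top\}$, $\mb \gets \ma$, restrict to $\1^\perp$ via the projection $\frac1n\mlap_{K_n}$, bound $\alpha,\beta = \mathrm{poly}(n,\kappa,U)$ through the Laplacian sandwich, use an SDD solver for $\Tmat = \widetilde{O}(n^2)$, and obtain the linear-system consequence from Proposition~\ref{prop:precondagd} with the recovered Laplacian as preconditioner. If anything, your accounting of the conditioning bounds and the warm-start rescaling is more explicit than the paper's, which simply asserts $\alpha = \beta = \mathrm{poly}(n,U)$ citing standard spectral bounds for graphs.
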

\begin{proof}
First, we note that $\mproj \defeq \frac 1 n \mlap_{K_n}$ is the projection matrix onto the orthogonal complement of $\1$, and that $\mproj$ can be applied to any vector in $\R^n$ in $O(n)$ time. Next, consider the matrix dictionary $\mat$ consisting of the matrices
\begin{equation}\label{eq:edgedict}\mm_{e} \defeq b_e b_e^\top \text{ for all } e = (u, v) \in \binom{V}{2},\end{equation}
where we follow the notation of Section~\ref{sec:prelims}. Because all unweighted graphs have $\text{poly}(n)$-bounded condition number restricted to the image of $\mproj$ \cite{Spielman19}, it is clear that we may set $\alpha = \beta = \text{poly}(n, U)$ in Theorem~\ref{thm:mainb} with $\mb \gets \ma$. Moreover, feasibility of \eqref{eq:approx} for this dictionary $\mat$ and approximation factor $\kappa$ holds by assumption. The first result then follows directly from Theorem~\ref{thm:mainb}, where we ensure that all vectors we work with are orthogonal to $\1$ by appropriately applying $\mproj$, and recall that we may take $\Tmat = n^2 \cdot \text{polyloglog}(n)$ as was shown in \cite{JambulapatiS21}.

The second result follows by combining the first result for any sufficiently small constant $\eps$, Proposition~\ref{prop:precondagd}, and the solver of \cite{JambulapatiS21} for the resulting Laplacian $\mlap'$; we note this can be turned into a high-probability solver by a standard application of Markov at the cost of $\log \frac 1 \delta$ overhead.
\end{proof}

\subsection{M-matrix recovery and inverse M-matrix solver}

In this section, we prove Theorem~\ref{thm:mmatrix}. We first provide a proof sketch. Since $\ma = \mm^{-1}$ for some invertible M-matrix $\mm$, we can compute $x = \ma \1$ in $O(n^2)$ time and set $\mx = \diag{x}$. Any spectral approximation $\mn \approx \mx \mm \mx$ will have $\mx^{-1} \mn \mx^{-1} \approx \mm$ spectrally with the same approximation factor. Moreover, $\mx \mm \mx$ is an SDD Z-matrix by Lemma~\ref{lemma:mfact1}, and hence we can approximate it using Theorem~\ref{thm:mainid} if we can efficiently apply matrix-vector products through it. The key is to provide matrix-vector access to a spectral approximation to $\mx \mm \mx = (\mx^{-1} \ma \mx^{-1})^{-1}$. Our algorithm does so by recursively providing spectral approximations to 
\[\ma_i \defeq \lam_i \id + \ma\]
for a sequence of nonnegative $\lam_i$, using a homotopy method similar to the one used in Theorem~\ref{thm:mainb}.

\begin{restatable}[M-matrix recovery and inverse M-matrix solver]{theorem}{restatemmatrix}
	\label{thm:mmatrix}
	Let $\ma$ be the inverse of an unknown invertible symmetric M-matrix, let $\kappa$ be an upper bound on its condition number, and let $U$ be the ratio of the entries of $\ma \1$. For any $\delta \in (0, 1)$ and $\eps > 0$, there is an algorithm recovering a $(1 + \eps)$-spectral approximation to $\ma^{-1}$ in time 
	\[O\Par{n^2 \cdot \frac{\log^{13}\Par{\frac{n\kappa U}{\delta\eps}} \log\Par{\frac{\kappa U}{\eps}} \log \frac 1 \eps}{\eps^7}}.\]
	Consequently, there is an algorithm for solving linear systems in $\ma$ to $\eps$-relative accuracy (see \eqref{eq:tmatdef}) with probability $\ge 1 - \delta$, in time
	\[O\Par{n^2 \cdot \Par{\log^{14}\Par{\frac{n\kappa U}{\delta}} + \log \frac 1 \eps}}.\]
\end{restatable}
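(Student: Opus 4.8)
The plan is to follow the proof sketch preceding the theorem statement, reducing the problem of recovering a spectral approximation to $\ma^{-1}$ (where $\ma = \mm^{-1}$ for an unknown symmetric M-matrix $\mm$) to an application of Theorem~\ref{thm:mainid} on the edge dictionary, combined with a homotopy method that supplies the needed matrix-vector product access. First I would compute $x = \ma \1$ in $O(n^2)$ time; by Lemma~\ref{lem:posminv} this vector is entrywise positive, so $\mx \defeq \diag{x}$ is well-defined and invertible, and by Lemma~\ref{lemma:mfact1} the matrix $\mx \mm \mx = (\mx^{-1} \ma \mx^{-1})^{-1}$ is an SDD Z-matrix; equivalently it is a graph Laplacian plus a nonnegative diagonal. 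Since any $(1+\eps)$-spectral approximation $\mn$ to $\mx\mm\mx$ yields $\mx^{-1}\mn\mx^{-1}$ as a $(1+\eps)$-spectral approximation to $\mm = \ma^{-1}$ with identical rescaling cost, it suffices to spectrally approximate $\mx\mm\mx$ using the dictionary $\{b_eb_e^\top\}_{e} \cup \{e_ie_i^\top\}_i$ (edge Laplacians plus $1$-sparse diagonals), for which Theorem~\ref{thm:mainid} applies with $\kappa^\star = 1+\eps$ (since $\mx\mm\mx$ is exactly expressible in this dictionary), \emph{provided} we can apply matrix-vector products through (a constant-factor spectral approximation to) $\mx\mm\mx$.

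The heart of the argument is supplying that matrix-vector access, i.e., solving linear systems in $\mx^{-1}\ma\mx^{-1}$, which we only have entrywise access to. Here I would run a homotopy method in the spirit of Section~\ref{sssec:hom} and of the $\lam$-sweep in Theorem~\ref{thm:mainb}: define $\ma_i \defeq \lam_i \id + \mx^{-1}\ma\mx^{-1}$ for a geometrically decreasing sequence $\lam_0 > \lam_1 > \cdots > 0$, starting from $\lam_0$ large enough that $\ma_0$ is trivially well-conditioned (a constant multiple of the identity suffices as a preconditioner), and ending at $\lam_K$ small enough that $\ma_K$ is a constant-factor spectral approximation of $\mx^{-1}\ma\mx^{-1}$ itself. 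The key structural fact enabling the recursion is Lemma~\ref{lemma:mfact2}: $(\ma_i)^{-1} = (\lam_i\id + \mx^{-1}\ma\mx^{-1})^{-1}$ corresponds (after the $\mx$-conjugation) to another SDD Z-matrix, so at each level the inverse we wish to access is again Laplacian-plus-diagonal and thus amenable to Theorem~\ref{thm:mainid}. At level $i$, I would use the constant-factor approximation recovered at level $i-1$ (whose $\lam_{i-1}$ differs from $\lam_i$ by a factor $2$, hence still a $O(1)$-spectral approximation of $\ma_i^{-1}$ after the standard $\mb + \lam\mat(\1) \preceq 2(\mb + \tfrac{\lam}{2}\mat(\1))$ observation of Lemma~\ref{lem:preconditionb}) as a preconditioner, invoke preconditioned accelerated gradient descent (Proposition~\ref{prop:precondagd}) to solve in $\ma_i^{-1}$ to high relative accuracy, feed that solver into Theorem~\ref{thm:mainid} to recover a $(1+\eps')$-accurate reweighting expressing a Laplacian-plus-diagonal $\mn_i \approx \ma_i^{-1}$, and iterate. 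The number of phases $K = O(\log(\kappa U / \eps))$ is governed by the ratio of $\lam_0$ to $\lam_K$, which in turn is controlled by the condition number bound $\kappa$ on $\ma$ (note $\kappa(\mx^{-1}\ma\mx^{-1})$ and the range of $\lam$ are both $\poly(\kappa, U)$-bounded, and all intermediate matrices have $\poly(n,\kappa,U)$-bounded condition numbers, so the $\beta$ and $\alpha$ parameters entering Theorem~\ref{thm:mainid}'s log factors stay polylogarithmic).

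For the runtime: each of the $O(\log(\kappa U/\eps))$ phases costs one call to Theorem~\ref{thm:mainid} with $\kappa^\star = 1+\eps = O(1)$, which is $O(n^2 \cdot \poly(\eps^{-1}\log(n\kappa U/\delta)))$ since $\tmv$ of the edge-plus-diagonal dictionary on a vector is $O(n^2)$ (this dominates the AGD calls, which cost only $O(n^2\sqrt{\kappa}\log\kappa\log\tfrac1\eps)$ per phase because the preconditioner makes the relevant condition number $O(1)$, not $\kappa$ — the $\sqrt\kappa$ only appears in Lemma~\ref{lem:sqrtb}-style inverse-square-root steps, which are not needed here since we work directly with $\id$-shifted matrices); summing over phases and adjusting the per-phase failure probability by $K$ via a union bound gives the stated $O(n^2 \cdot \eps^{-7}\log^{13}(n\kappa U/(\delta\eps))\log(\kappa U/\eps)\log\tfrac1\eps)$ bound. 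The consequence for solving linear systems in $\ma$ then follows by taking $\eps$ a small constant in the recovery result, applying Proposition~\ref{prop:precondagd} with the recovered $(1+\eps)$-approximation $\mx^{-1}\mn\mx^{-1}$ of $\ma^{-1}$ as preconditioner (so that solving in $\ma$ reduces to $O(\log\tfrac1\eps)$ applications of the preconditioner, each costing $O(n^2\,\textup{polyloglog}(n))$ via the Laplacian solver of \cite{JambulapatiS21} wrapped appropriately), plus the standard Markov-based amplification for the high-probability guarantee. The main obstacle I anticipate is the bookkeeping in the homotopy recursion — specifically, verifying that the approximation errors accumulated through the chain of $K$ levels (each level's preconditioner being the previous level's \emph{approximate} output, not an exact matrix) stay controlled, which requires care analogous to the error-propagation analysis in Lemma~\ref{lem:approxdinv}, and confirming that Lemma~\ref{lemma:mfact2}'s structural guarantee is exactly what keeps every intermediate object inside the dictionary's expressible cone so that $\kappa^\star$ never blows up above $1+\eps$.
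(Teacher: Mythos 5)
Your overall architecture matches the paper's proof almost exactly: conjugate by $\mx = \diag{\ma\1}$ so that Lemma~\ref{lemma:mfact1} makes the target an SDD Z-matrix, run a homotopy over $\mb_i = (\mx^{-1}\ma\mx^{-1}+\lam_i\id)^{-1}$ with $\lam_i = \lam_0/2^i$ and $K = O(\log\frac{\kappa U}{\eps})$ levels, use Lemma~\ref{lemma:mfact2} to certify that every level remains exactly expressible in the edge-plus-diagonal dictionary, use the previous level's recovered reweighting as a preconditioner in Proposition~\ref{prop:precondagd} to get matrix-vector access to the current $\mb_i$, and finish with one more application of Proposition~\ref{prop:precondagd} for the solver consequence. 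The runtime accounting and the error-propagation concern you flag at the end are handled the same way in the paper.

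There is, however, one concrete misstep: at each level you propose to ``feed that solver into Theorem~\ref{thm:mainid},'' and you explicitly assert that the inverse-square-root machinery of Lemma~\ref{lem:sqrtb} ``is not needed here.'' This is wrong as stated. The recovery subproblem at level $i$ has constraint matrix $\mb_i$ (accessed only through an approximate solver), not the identity, so Theorem~\ref{thm:mainid} does not apply; the paper invokes Theorem~\ref{thm:mainb}, whose entire purpose is to reduce a general constraint $\mb$ to the isotropic case by conjugating the dictionary with a rational approximation to $\mb^{-\half}$ (Lemma~\ref{lem:sqrtb}), using the preconditioner precisely to make that conjugation cheap. Your own invocation of the ``$\alpha$ and $\beta$ parameters'' betrays the confusion, since those appear only in Theorem~\ref{thm:mainb}. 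The fix is purely a matter of calling the right theorem: apply Theorem~\ref{thm:mainb} with $\mb$ set to the (approximation of) $\mb_{i}$, with $\Tmat$ supplied by the SDD-to-Laplacian reduction and the solver of \cite{JambulapatiS21}, and note that because the preconditioner from the previous level is a constant-factor spectral approximation, the effective $\kappa$ inside Theorem~\ref{thm:mainb} is $O(1)$, so the square-root step costs only polylogarithmic overhead rather than a $\sqrt{\kappa}$ factor. With that substitution your argument goes through and coincides with the paper's.
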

\begin{proof}
We begin with the first statement. In this proof, we will define a sequence of matrices
\[\mb_i = \Par{\mx^{-1} \ma \mx^{-1} + \lam_i \id}^{-1},\; \lam_i = \frac{\lam_0}{2^i}.\]
By Lemma~\ref{lemma:mfact2}, each $\mb_i$ is an invertible SDD Z-matrix, and hence can be perfectly spectrally approximated by the dictionary consisting of all matrices in \eqref{eq:edgedict}, and $\{e_i e_i^\top\}_{i \in [n]}$ ($1$-sparse nonnegative diagonal matrices). We refer to this joint dictionary by $\mat$. By a proof strategy similar to that of Theorem~\ref{thm:mainb}, it is clear we may set $\lam_0$ and $K = O(\log \frac{\kappa U}{\eps})$ such that $\lam_0^{-1} \id$ is a constant-factor spectral approximation to $\mb_0$, and $\mb_K$ is a $(1 + \frac \eps 4)$-factor spectral approximation to $\mx^{-1} \ma \mx^{-1}$.

For each $0 \le k \le  K - 1$, our algorithm recursively computes an explicit linear operator $\tmb_i$ which is a $O(1)$-spectral approximation to $\mb_i$, with $\tmb_0 \gets \lam_0^{-1} \id$. Then by applying Proposition~\ref{prop:precondagd}, the fact that $\tmb_i$ is a constant spectral approximation to $\mb_{i + 1}$ by a variant of Lemma~\ref{lem:preconditionb}, and the fact that we have explicit matrix-vector access to $\mb_{i + 1}^{-1}$, we can efficiently provide matrix-vector access to $\algb_{i + 1}$, a $1 + \frac \eps 4$-spectral approximation to $\mb_{i + 1}$ in time
\[O\Par{\Par{\tmv(\tmb_{i}) + \tmv(\ma)} \cdot \log \frac 1 \eps}.\]
It remains to show how to compute $\tmb_i$ and apply it efficiently. To do so, it suffices to set $\tmb_i$ to be a $1 + \frac \eps 2$-spectral approximation to $\algb_{i + 1}$ using the dictionary $\mat$, which is clearly achievable using Theorem~\ref{thm:mainb} because $\algb_{i + 1}$ is a $1 + \frac \eps 4$-spectral approximation to an SDD Z-matrix, which is perfectly approximable. Hence, we have $\tmv(\tmb_i) = n^2$ because we have explicit access to it through our matrix dictionary. Repeating this strategy $K$ times yields the second conclusion; the bottleneck step in terms of runtime is the $K$ calls to Theorem~\ref{thm:mainb}. Here we note that we have solver access for linear combinations of the dictionary $\mat$ by a well-known reduction from SDD solvers to Laplacian solvers (see e.g., \cite{KelnerOSZ13}), and we apply the solver of \cite{JambulapatiS21}.

Finally, for the second conclusion, it suffices to use Proposition~\ref{prop:precondagd} with the preconditioner $\tmb_K$ we computed earlier for constant $\eps$, along with our explicit accesses to $\mx$, and $\ma$.
\end{proof}

\subsection{Laplacian recovery and Laplacian pseudoinverse solver}

In this section, we prove Theorem~\ref{thm:laplacian}. Our algorithm for Theorem~\ref{thm:laplacian} is very similar to the one we derived in proving Theorem~\ref{thm:mmatrix}, where for a sequence of nonnegative $\lam_i$, we provide spectral approximations to
\begin{equation}\label{eq:mbmadef}\mb_i \defeq \ma_i^{\dagger},\text{ where } \ma_i \defeq \lam_i \mproj + \ma,\end{equation}
and throughout this section we use $\mproj \defeq \frac 1 n \mlap_{K_n}$ to be the projection onto the image of all Laplacian matrices. We provide the following helper lemma in the connected graph case, proven in Appendix~\ref{sec:appproofs}; a straightforward observation reduces to the connected case without loss.

\begin{restatable}{lemma}{restatemaipseudo}\label{lem:maipseudo}
	If $\ma^\pseudo$ is a Laplacian of a connected graph, then $\left( \ma + \alpha \lap_{K_n}\right)^\pseudo$ for any $\alpha>0$ is a Laplacian matrix.
\end{restatable}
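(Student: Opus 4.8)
The plan is to exploit the spectral structure of Laplacians of connected graphs. Recall that if $\mathbf{L} = \ma^{\pseudo}$ is the Laplacian of a connected graph on $n$ vertices, then $\mathbf{L}$ is positive semidefinite with exactly one zero eigenvalue, whose eigenspace is spanned by $\1$; equivalently $\mathrm{Im}(\mathbf{L})$ is the subspace orthogonal to $\1$, which is precisely $\mathrm{Im}(\lap_{K_n})$. Consequently $\ma = \mathbf{L}^{\pseudo}$ is also PSD with $\mathrm{Im}(\ma) = \1^\perp$ and with $\ma \1 = 0$. Since $\lap_{K_n} = n\id - \1\1^\top$ acts as $n$ times the identity on $\1^\perp$ and as $0$ on $\mathrm{span}(\1)$, the matrix $\ma + \alpha \lap_{K_n}$ is block-diagonal with respect to the decomposition $\R^n = \1^\perp \oplus \mathrm{span}(\1)$: it equals $\ma + \alpha n \id$ restricted to $\1^\perp$ (which is positive definite there, as $\ma \succ 0$ on $\1^\perp$), and $0$ on $\mathrm{span}(\1)$.

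The key step is then to compute $(\ma + \alpha \lap_{K_n})^{\pseudo}$ using this block structure: the pseudoinverse acts as $0$ on $\mathrm{span}(\1)$ and as the genuine inverse of $\ma + \alpha n \id$ on $\1^\perp$. Writing $\mathbf{L} = \ma^{\pseudo}$, on the subspace $\1^\perp$ we have $\ma = \mathbf{L}^{-1}$ (inverse taken within $\1^\perp$), so $\ma + \alpha n \id = \mathbf{L}^{-1} + \alpha n \id$, and hence $(\ma + \alpha \lap_{K_n})^{\pseudo}$ restricted to $\1^\perp$ equals $(\mathbf{L}^{-1} + \alpha n \id)^{-1}$. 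Thus, identifying $\alpha' \defeq \alpha n \ge 0$, it suffices to show that $(\mathbf{L}^{-1} + \alpha' \id)^{-1}$, extended by $0$ on $\mathrm{span}(\1)$, is itself a Laplacian matrix whenever $\mathbf{L}$ is a connected-graph Laplacian.

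To establish this last claim I would invoke Lemma~\ref{lemma:mfact2}: an invertible SDD Z-matrix $\ma'$ has the property that $(\ma'^{-1} + \alpha'\id)^{-1}$ is again an invertible SDD Z-matrix. The obstacle is that $\mathbf{L}$ itself is singular (it annihilates $\1$), so Lemma~\ref{lemma:mfact2} does not apply directly; one must pass to the restriction to $\1^\perp$, apply the argument there, and then argue that the resulting matrix, extended by zero on $\1$, is again a Laplacian. The cleanest route is to observe that $\mathbf{L} + \gamma \1\1^\top$ for small $\gamma > 0$ is an invertible SDD Z-matrix (adding a rank-one term in the kernel direction preserves the Z-matrix property only if we are careful — instead one should work on the quotient space or use the standard "grounding" reduction), apply Lemma~\ref{lemma:mfact2}-type reasoning, and track the off-diagonal signs. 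The decisive points to verify are: (i) the resulting matrix $\mathbf{M}' \defeq (\mathbf{L}^{-1} + \alpha'\id)^{-1}$ on $\1^\perp$, extended by $0$, is symmetric PSD with kernel exactly $\mathrm{span}(\1)$ — immediate from the spectral computation since all eigenvalues of $\mathbf{L}^{-1} + \alpha'\id$ on $\1^\perp$ are positive and finite; (ii) $\mathbf{M}' \1 = 0$ — immediate by construction; and (iii) $\mathbf{M}'$ is a Z-matrix, i.e.\ has nonpositive off-diagonal entries. Point (iii) is the genuine content: it follows because $\ma + \alpha \lap_{K_n}$ is (on $\1^\perp$, hence as a full matrix modulo the kernel) an inverse M-matrix plus a nonnegative multiple of $\id$ after the appropriate change of variables, and inverses of such matrices have the requisite sign pattern; concretely, one reduces to Lemma~\ref{lemma:mfact2} by grounding at a vertex. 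I expect step (iii) — controlling the off-diagonal signs of $(\mathbf{L}^{-1}+\alpha'\id)^{-1}$ through the grounding reduction — to be the main obstacle, and the rest to be routine linear algebra on the complementary subspaces.
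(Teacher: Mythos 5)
Your reduction --- splitting $\R^n = \1^\perp \oplus \mathrm{span}(\1)$, observing that $(\ma + \alpha\lap_{K_n})^\pseudo$ vanishes on $\mathrm{span}(\1)$ and acts as $(\mathbf{L}^{-1} + \alpha n\id)^{-1}$ on $\1^\perp$ for $\mathbf{L} = \ma^\pseudo$, and isolating the off-diagonal sign condition as the only nontrivial point --- correctly identifies the structure of the lemma, and your points (i) and (ii) are indeed routine. But point (iii) is the entire content of the lemma, and your proposal does not prove it: you defer it to ``a grounding reduction to Lemma~\ref{lemma:mfact2}'' and yourself flag it as the main obstacle. That deferral hides a real difficulty. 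Grounding at a vertex relates the grounded inverse $\mathbf{L}_{-v}^{-1}$ to the pseudoinverse only up to rank-one corrections involving $\1\1^\top$, and such corrections do not in general preserve the Z-matrix sign pattern --- as you yourself observe when noting that $\mathbf{L} + \gamma\1\1^\top$ need not be a Z-matrix. So as written the argument does not establish that the off-diagonal entries of the full $n\times n$ pseudoinverse are nonpositive.

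The paper closes exactly this gap with a different device: a chain of Woodbury-type identities yields the closed form
\[
\left(\ma + \alpha\lap_{K_n}\right)^\pseudo = \frac{1}{\alpha n}\left[\id - \left(\alpha n\,\ma^\pseudo + \id\right)^{-1}\right],
\]
in which every rank-one $\1\1^\top$ correction introduced to make intermediate matrices invertible cancels exactly. Since $\alpha n\,\ma^\pseudo + \id$ is a positive definite SDD Z-matrix (a connected-graph Laplacian plus the identity), its inverse is entrywise nonnegative by Lemmas~\ref{lemma:mfact2} and~\ref{lem:posminv}, so $\id$ minus that inverse has nonpositive off-diagonals; symmetry, positive semidefiniteness, and $\1 \in \kernel$ are immediate. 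If you want to keep your subspace decomposition, the fix is to verify the displayed identity directly (both sides kill $\1$ and agree on $\1^\perp$ via the resolvent identity $(\mb^{-1}+\id)^{-1} = \id - (\mb+\id)^{-1}$); what you cannot skip is the step that converts ``inverse taken within $\1^\perp$'' into an explicit $n\times n$ matrix whose entries you can control.
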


\begin{restatable}[Laplacian recovery and Laplacian pseudoinverse solver]{theorem}{restateinvlap}
	\label{thm:laplacian}
	Let $\ma$ be the pseudoinverse of unknown Laplacian $\mlap$, and that $\mlap$ corresponds to a graph with edge weights between $w_{\min}$ and $w_{\max}$, with $\frac{w_{\max}}{w_{\min}} \le U$. For any $\delta \in (0, 1)$ and $\eps > 0$, there is an algorithm recovering a Laplacian $\mlap'$ with $\ma^{\dagger} \preceq \mlap' \preceq (1 + \eps) \ma^{\dagger}$ in time
	\[O\Par{n^2 \cdot \frac{\log^{13}\Par{\frac{nU}{\delta\eps}} \log\Par{\frac{n U}{\eps}} \log \frac 1 \eps}{\eps^7}}.\]
	Consequently, there is an algorithm for solving linear systems in $\ma$ to $\eps$-relative accuracy (see \eqref{eq:approx}) with probability $\ge 1 - \delta$, in time
	\[O\Par{n^2 \cdot \Par{\log^{14}\Par{\frac{n U}{\delta}} + \log \frac 1 \eps}}.\]
\end{restatable}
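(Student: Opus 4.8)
The plan is to follow the homotopy-plus-recursive-preconditioning template of Theorem~\ref{thm:mmatrix}, substituting Lemma~\ref{lem:maipseudo} for Lemma~\ref{lemma:mfact2}. First reduce to the case where the unknown Laplacian $\mlap = \ma^\dagger$ is connected: $\mlap$, and hence $\ma$, is block-diagonal along the connected components, and this block structure can be read off $\ma$ in $O(n^2)$ time, so we may treat each component separately. In the connected case, let $\mproj \defeq \frac1n\mlap_{K_n}$ be the projection onto $\mathrm{Im}(\mlap) = \mathbf{1}^\perp$, and define the homotopy $\ma_i \defeq \lam_i\mproj + \ma$, $\mb_i \defeq \ma_i^\dagger$, with $\lam_i = \lam_0 2^{-i}$ for $0 \le i \le K$. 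Since $\ma = \mlap^\dagger$ has $\mathrm{poly}(n,U)$-bounded condition number on $\mathrm{Im}(\mproj)$ (Fiedler-value lower bound in terms of $w_{\min}$, top-eigenvalue upper bound in terms of $w_{\max}$), choosing $\lam_0 = \Theta(\mathrm{poly}(n)/w_{\min})$ makes $\lam_0^{-1}\mproj$ a constant-factor spectral approximation of $\mb_0$, while $K = O(\log\tfrac{nU}{\eps})$ drives $\lam_K$ small enough that $\mb_K$ is a $(1+\tfrac\eps4)$-spectral approximation of $\ma^\dagger = \mlap$. Crucially, by Lemma~\ref{lem:maipseudo} applied with $\alpha = \lam_i/n$, each $\mb_i = (\ma + \tfrac{\lam_i}{n}\mlap_{K_n})^\dagger$ is itself a Laplacian, hence exactly expressible in the edge dictionary $\{b_e b_e^\top\}_{e\in\binom V2}$ of \eqref{eq:edgedict} (no diagonal dictionary elements are needed here, unlike in Theorem~\ref{thm:mmatrix}).

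The recursion maintains an explicit Laplacian $\tmb_i = \mat(w_i)$ that is an $O(1)$-spectral approximation of $\mb_i$, initialized at $\tmb_0 \gets \Theta(\lam_0^{-1})\mproj$. Given $\tmb_i$, note that $\ma \succeq \mzero$ gives $\ma_{i+1} \preceq \ma_i \preceq 2\ma_{i+1}$ on $\mathrm{Im}(\mproj)$, so $\tmb_i$ is also an $O(1)$-spectral approximation of $\mb_{i+1}$. We have exact $O(n^2)$-time matrix-vector access to $\mb_{i+1}^\dagger = \ma_{i+1} = \lam_{i+1}\mproj + \ma$ and to $\tmb_i$; feeding these into Proposition~\ref{prop:precondagd} (with the system matrix $\ma_{i+1}$, preconditioner $\tmb_i$, and all vectors projected by $\mproj$ as in the proof of Theorem~\ref{thm:perturbedsolver}) yields a linear operator $\algb_{i+1}$ that is a $(1+\tfrac\eps4)$-spectral approximation of $\mb_{i+1}$ and is applicable in $O(n^2\log\tfrac1\eps)$ time. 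Finally, invoke Theorem~\ref{thm:mainb} on the edge dictionary with constraint $\mb \gets \algb_{i+1}$, solver access for $\mat(w)$ from the Laplacian solver of \cite{JambulapatiS21}, the warm start $w_i$ (so that $\alpha = O(1)$), and $\beta = \mathrm{poly}(n,U)$ after a power-method rescaling; since $\mb_{i+1}$ is a genuine Laplacian, its optimal relative condition number against $\algb_{i+1}$ is $1+O(\eps)$, so with tolerance $\Theta(\eps)$ this returns $\tmb_{i+1} = \mat(w_{i+1})$, a $(1+O(\eps))$-spectral approximation of $\mb_{i+1}$, completing the inductive step. After $K$ phases, $\tmb_K$ is the desired Laplacian with $\ma^\dagger \preceq \tmb_K \preceq (1+\eps)\ma^\dagger$ (rescaling $\eps$ by a constant), and union-bounding the $K$ calls yields the failure probability.

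For the runtime, each of the $K = O(\log\tfrac{nU}\eps)$ phases is dominated by one call to Theorem~\ref{thm:mainb} with $\kappa^\star = 1+O(\eps)$, $\mathrm{poly}(n,U)$ conditioning, and the above $O(n^2\cdot\mathrm{polylog})$ costs for $\tmv$ and $\Tmat$, giving the stated $O(n^2\cdot\eps^{-7}\log^{13}(\tfrac{nU}{\delta\eps})\log(\tfrac{nU}\eps)\log\tfrac1\eps)$ bound. For the solver consequence, run the recovery once at constant tolerance to obtain an explicit Laplacian $\tmb_K$ that is an $O(1)$-spectral approximation of $\ma^\dagger$ (equivalently, $\tmb_K^\dagger$ is an $O(1)$-spectral approximation of $\ma$); then Proposition~\ref{prop:precondagd} with system matrix $\ma$ (mat-vec in $O(n^2)$) preconditioned by $\tmb_K^\dagger$ (whose ``solver'', namely applying $\tmb_K$, is exact in $O(n^2)$) gives an $\eps$-relative-accuracy solver for $\ma$ in $O(n^2\log\tfrac1\eps)$ time per solve, made high-probability by Markov amplification at a $\log\tfrac1\delta$ overhead; adding the constant-accuracy recovery cost gives $O(n^2(\log^{14}(\tfrac{nU}\delta) + \log\tfrac1\eps))$.

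I expect the main obstacle to be kernel bookkeeping: Laplacians are rank-deficient with kernel $\mathbf{1}$ (or the span of the component indicators before the reduction), so every Loewner comparison, every invocation of Proposition~\ref{prop:precondagd}, and the hypotheses of Theorem~\ref{thm:mainb} — which are stated for positive-definite constraints — must be read as holding on $\mathrm{Im}(\mproj)$, with $\mproj$ inserted wherever $\id$ appears; one must also confirm the disconnected-to-connected reduction is legitimate and that $\ma + \tfrac{\lam_i}{n}\mlap_{K_n}$ exactly matches the form required by Lemma~\ref{lem:maipseudo}. The purely analytic content — telescoping the $O(1)$-approximations along the homotopy and bounding all encountered conditioning by $\mathrm{poly}(n,U)$ — is essentially identical to Theorem~\ref{thm:mmatrix}.
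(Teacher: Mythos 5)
Your proposal is correct and follows essentially the same route as the paper: reduce to the connected case via the block structure of $\ma = \mlap^\dagger$, run the homotopy $\ma_i = \lam_i\mproj + \ma$ with Lemma~\ref{lem:maipseudo} guaranteeing each $\mb_i = \ma_i^\dagger$ is a Laplacian, and recurse via Proposition~\ref{prop:precondagd} plus Theorem~\ref{thm:mainb} exactly as in Theorem~\ref{thm:mmatrix}. Your observation that the pure edge dictionary suffices here (no diagonal elements, unlike the SDD Z-matrix case) and your kernel-bookkeeping caveats are consistent with the paper's treatment.
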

\begin{proof}
First, we reduce to the connected graph case by noting that the connected components of the graph corresponding to $\mlap$ yield the same block structure in $\ma = \mlap^{\dagger}$. In particular, whenever $i$ and $j$ lie in different connected components, then $\ma_{ij}$ will be $0$, so we can reduce to multiple instances of the connected component case by partitioning $\ma$ appropriately in $O(n^2)$ time.

To handle the connected graph case, we use the strategy of Theorem~\ref{thm:mmatrix} to recursively approximate a sequence of matrices \eqref{eq:mbmadef}. It suffices to perform this recursion $K = O(\log \frac{nU}{\eps})$ times, since (as was argued in proving Theorem~\ref{thm:perturbedsolver}) the conditioning of $\ma$ is bounded by $\text{poly}(n, U)$. The remainder of the proof is identical to Theorem~\ref{thm:mmatrix}, again using the solver of \cite{JambulapatiS21}.
\end{proof}
	%!TEX root = ./structured-mpc.tex

\section{Diagonal scaling}
\label{sec:scaling}

In this section, we provide a collection of results concerning the following two diagonal scaling problems, which we refer to as ``inner scaling'' and ``outer scaling''. In the inner scaling problem, we are given a full rank $n \times d$ matrix $\ma$ with $n \ge d$, and the goal is to find an $n \times n$ nonnegative diagonal matrix $\mw$ that minimizes or approximately minimizes $\kappa(\ma^\top \mw \ma)$. We refer to the optimal value of this problem by
\[\ksi(\ma) \defeq \min_{\text{diagonal } \mw \succeq \mzero} \kappa\Par{\ma^\top \mw \ma}.\]
Similarly, in the outer scaling problem, we are given a full rank $d \times d$ matrix $\mk$, and the goal is to find a $d \times d$ nonnegative diagonal matrix $\mw$ that (approximately) minimizes $\kappa(\mw^\half \mk \mw^\half)$. The optimal value is denoted
\[\kso(\mk) \defeq \min_{\text{diagonal } \mw \succeq \mzero} \kappa\Par{\mw^\half \mk \mw^\half}.\]
In Appendix~\ref{app:jacobi}, we give a simple, new result concerning a classic heuristic for solving the outer scaling problem. Our main technical contributions are fast algorithms for obtaining diagonal reweightings admitting constant-factor approximations to the optimal rescaled condition numbers $\ksi$ and $\kso$. We provide the former result in Section~\ref{ssec:ksi} by a direct application of Theorem~\ref{thm:mainid}. A weaker version of the latter follows immediately from applying Theorem~\ref{thm:mainb}, but we obtain a strengthening by exploiting the structure of the outer scaling problem more carefully in Section~\ref{ssec:kso}. %Finally, we give a number of applications in statistical regression in Section~\ref{ssec:statapps} 
We state our main scaling results here for convenience, and defer proofs to respective sections.

\begin{restatable}{theorem}{restateksi}\label{thm:ksi}
	Let $\eps > 0$ be a fixed constant. There is an algorithm, which given full-rank $\ma \in \R^{n \times d}$ for $n \ge d$ computes $w \in \R^n_{\ge 0}$ such that $\kappa(\ma^\top \mw \ma) \le (1 + \eps) \ksi(\ma)$ with probability $\ge 1 - \delta$ in time
	\[O\Par{\tmv(\ma) \cdot \Par{\ksi(\ma)}^{1.5}  \cdot \log^7\Par{\frac{n\ksi(\ma)}{\delta}}  }.\]
\end{restatable}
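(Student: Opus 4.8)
The plan is to invoke Theorem~\ref{thm:mainid} as a black box on the matrix-dictionary formed by the rank-one outer products of the rows of $\ma$. Writing the rows of $\ma$ as $\{a_i\}_{i \in [n]} \subset \R^d$, set $\mm_i \defeq a_i a_i^\top$ with the factorization \eqref{eq:vvt} given by $\mv_i \defeq a_i \in \R^{d \times 1}$ (so $m = 1$). For any $w \in \R^n_{\ge 0}$ we have $\mat(w) = \sum_{i \in [n]} w_i \mm_i = \ma^\top \diag{w}\, \ma = \ma^\top \mw \ma$, so the meta-problem \eqref{eq:structmpc}, \eqref{eq:approx} with $\mb = \id$ is \emph{exactly} the inner-scaling problem: a reweighting $w$ satisfies \eqref{eq:approx} iff $\id \preceq \ma^\top \mw \ma \preceq (1+\eps)\kappa^\star \id$, i.e.\ $\kappa(\ma^\top\mw\ma) \le (1+\eps)\kappa^\star$. (The preprocessing in Theorem~\ref{thm:mainid} that normalizes each $\lmax(\mm_i)$ to lie in $[1,2]$ is trivial here, since $\lmax(a_ia_i^\top) = \norm{a_i}_2^2$ can be computed exactly.)

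First I would check feasibility and identify $\kappa^\star$. Since $\ma$ is full rank, $\ma^\top \mw \ma \succ \mzero$ for $w = \1$, so the problem is feasible for some finite $\kappa$; more precisely, by scale invariance (replacing $w$ by $cw$ scales all eigenvalues of $\ma^\top\mw\ma$ by $c$), the smallest $\kappa$ for which \eqref{eq:structmpc} holds with $\mb = \id$ equals $\ksi(\ma)$ — take $w$ a near-minimizer of $\kappa(\ma^\top\mw\ma)$ and rescale so that $\lmin(\ma^\top\mw\ma) = 1$. We do not need to know $\ksi(\ma)$ in advance, as Theorem~\ref{thm:mainid} runs its own incremental search over $\kappa$. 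Applying Theorem~\ref{thm:mainid} with accuracy parameter $\eps' \defeq \eps/3$ therefore returns, with probability $\ge 1 - \delta$, weights $w \in \R^n_{\ge 0}$ with $\kappa(\ma^\top\mw\ma) \le (1+\eps')\kappa^\star \le (1+\eps)\ksi(\ma)$, where the last step absorbs a further $(1+\eps')$ factor in case the infimum defining $\ksi(\ma)$ is not attained.

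It then remains to do the runtime bookkeeping. The key observation is $\tmv(\{\mv_i\}_{i \in [n]}) = O(\tmv(\ma))$: given $v \in \R^d$, the vector $(a_i^\top v)_{i \in [n]}$ is simply $\ma v$, computable in $O(\tmv(\ma))$ time, and symmetrically forming $\sum_i x_i \mm_i v = \ma^\top \diag{x}\, \ma v$ costs $O(\tmv(\ma) + n) = O(\tmv(\ma))$. Since $m = 1$ and $d \le n$, the runtime bound of Theorem~\ref{thm:mainid},
\[
O\Par{\tmv(\{\mv_i\}_{i\in[n]}) \cdot (\kappa^\star)^{1.5} \cdot \frac{\log^7\Par{\tfrac{mnd\kappa^\star}{\delta\eps'}}}{(\eps')^7}},
\]
becomes $O\big(\tmv(\ma) \cdot (\ksi(\ma))^{1.5} \cdot \log^7(\tfrac{n\ksi(\ma)}{\delta})\big)$ after substituting $\kappa^\star = \Theta(\ksi(\ma))$, using $d \le n$ to fold $\log(nd)$ into $\log n$, and absorbing the dependence on the fixed constant $\eps$ into the $O(\cdot)$.

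I expect no serious obstacle: this theorem is essentially a specialization of the already-proven Theorem~\ref{thm:mainid}, and introduces no new randomness or concentration arguments. The only two points meriting care are (i) the $\tmv$ accounting above — confirming that one multiplication through \emph{all} of $\{a_i\}_{i \in [n]}$ costs no more than a single multiplication by $\ma$ (and likewise for the transpose / diagonal-scaled composites that appear inside Algorithm~\ref{alg:decidempc}); and (ii) the harmless possibility that the infimum defining $\ksi(\ma)$ is not attained, which only inflates the accuracy parameter by a constant factor and is handled by taking $\eps' = \eps/3$ above.
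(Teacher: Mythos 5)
Your proposal is correct and matches the paper's own proof: both instantiate Theorem~\ref{thm:mainid} with $\mm_i = a_i a_i^\top$, $\mv_i = a_i$, $m = 1$, observe feasibility of \eqref{eq:structmpc} with $\mb = \id$ and $\kappa = \ksi(\ma)$ via scale invariance, and note $\tmv(\{\mv_i\}_{i\in[n]}) = O(\tmv(\ma))$. Your extra remarks about the infimum possibly not being attained and the row-norm preprocessing are harmless elaborations the paper leaves implicit.
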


\begin{restatable}{theorem}{restatekso}\label{thm:kso}
	Let $\eps > 0$ be a fixed constant.\footnote{We do not focus on the $\epsilon$ dependence and instead take it to be constant since, in applications involving solving linear systems, there is little advantage to obtaining better than a two factor approximation (i.e., setting $\epsilon = 1$).} There is an algorithm, which given full-rank $\mk \in \PD^d$ computes $w \in \R^d_{\ge 0}$ such that $\kappa(\mw^\half \mk \mw^\half) \le (1 + \eps)\kso(\mk)$ with probability $\ge 1 - \delta$ in time
	\[O\Par{\tmv(\mk) \cdot \Par{\kso(\mk)}^{1.5} \cdot \log^8 \Par{\frac{d\kso(\mk)}{\delta}}}.\]
\end{restatable}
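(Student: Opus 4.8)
The plan is to reduce outer scaling to an instance of the isotropic matrix-dictionary recovery problem of Theorem~\ref{thm:mainid}, and then supply the single missing ingredient — fast (approximate) matrix-vector products with $\mk^{\half}$ — via a homotopy method analogous to the one driving Theorem~\ref{thm:mainb}. A black-box application of Theorem~\ref{thm:mainb} with $n = d$, $\mm_i = e_i e_i^{\top}$, and $\mb = \tfrac{1}{\kso(\mk)}\mk$ already gives a $(1+\eps)$-approximation, but with a quadratic $(\kso(\mk))^2$ dependence; the square-root route is what buys the improvement to $(\kso(\mk))^{1.5}$ by lowering the iteration count of the SDP solver from Theorem~\ref{thm:mainb} to that of Theorem~\ref{thm:mainid}.

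\textbf{Reduction.} Since $\mw^{\half}\mk\mw^{\half}$ and $\mk^{\half}\mw\mk^{\half}$ are similar (conjugate by $\mw^{\half}\mk^{-\half}$), we have $\kappa(\mw^{\half}\mk\mw^{\half}) = \kappa(\mk^{\half}\mw\mk^{\half}) = \kappa\Par{\sum_{i\in[d]} w_i (\mk^{\half}e_i)(\mk^{\half}e_i)^{\top}}$. Hence outer scaling is exactly the problem \eqref{eq:structmpc}, \eqref{eq:approx} with $\mb = \id$, dictionary $\mm_i = \mv_i\mv_i^{\top}$ for $\mv_i = \mk^{\half}e_i$ (so $m = 1$, $n = d$), and optimal parameter $\kappa^{\star} = \kso(\mk)$: rescaling an optimal outer scaling $\mw_\star$ so that $\lmin(\mw_\star^{\half}\mk\mw_\star^{\half}) = 1$ certifies feasibility of \eqref{eq:structmpc} with this $\kappa^{\star}$, and any $w$ satisfying \eqref{eq:approx} gives $\kappa(\mw^{\half}\mk\mw^{\half}) = \kappa(\mk^{\half}\mw\mk^{\half}) \le (1+\eps)\kso(\mk)$. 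Thus Theorem~\ref{thm:mainid} solves the problem, modulo implementing $\tmv(\{\mv_i\})$, i.e.\ the application of $\mk^{\half}$ to arbitrary vectors, which is prohibitive done from scratch since it requires $\Theta(\sqrt{\kappa(\mk)})$ matrix-vector products with $\mk$.

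\textbf{Square-root access via a homotopy.} To apply $\mk^{\half}$ I would write $\mk^{\half} = \mk \cdot \mk^{-\half}$ and approximate $\mk^{-\half}$ by the degree-$O(\log\tfrac1\eps\log\kappa(\mk))$ rational function of Proposition~\ref{prop:ratsqrt}, reducing the task to solving a polylogarithmic number of regularized systems $\mk + \mu\id$ with $\mu\ge 0$; these are solved by preconditioned accelerated gradient descent (Proposition~\ref{prop:precondagd}), mirroring Lemma~\ref{lem:sqrtb}. The preconditioner is a diagonal matrix built along a homotopy in the regularization: set $\lam^{(0)}\gtrsim\lmax(\mk)$ (estimated via the power method, Fact~\ref{fact:powermethod}), $\lam^{(k)} = \lam^{(0)}/2^k$, and $K = O(\log\tfrac{\kappa(\mk)}{\eps})$ so that $\mk + \lam^{(K)}\id$ is a $(1+\eps)$-spectral approximation of $\mk$. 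At phase $0$, $\kappa(\mk+\lam^{(0)}\id) = O(1)$, so $\id$ is a valid preconditioner, $(\mk+\lam^{(0)}\id)^{\half}$ is directly applicable, and we run Theorem~\ref{thm:mainid} to obtain a near-optimal diagonal $\mw^{(0)}$ for $\mk+\lam^{(0)}\id$. Inductively, given near-optimal $\mw^{(k-1)}$ for $\mk+\lam^{(k-1)}\id$: because $\mk+\lam^{(k-1)}\id\preceq \mk+\lam^{(k-1)}\id+\nu\id\preceq 2(\mk+\lam^{(k)}\id)+\nu\id$ for $\nu\ge 0$, and by the monotonicity lemma $\kso(\mk+\lam\id)\le\kso(\mk)$ for all $\lam\ge 0$ — which holds since if $a\mw_\star^{-1}\preceq\mk\preceq a\kso(\mk)\mw_\star^{-1}$ for a positive diagonal $\mw_\star$, then the positive diagonal $\mw' := (\mw_\star^{-1}+\tfrac{\lam}{a}\id)^{-1}$ satisfies $a(\mw')^{-1}\preceq\mk+\lam\id\preceq a\kso(\mk)(\mw')^{-1}$ — the diagonal matrix $(\mw^{(k-1)})^{-1} + \nu\id$ is, after a harmless rescaling, an $O(\kso(\mk))$-quality preconditioner for every system $\mk + (\lam^{(k)}+\nu)\id$ occurring in the rational approximation of $(\mk+\lam^{(k)}\id)^{\half}$. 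Consequently each such solve costs $\widetilde{O}(\sqrt{\kso(\mk)}\cdot\tmv(\mk))$, applying $(\mk+\lam^{(k)}\id)^{\half}$ costs $\widetilde{O}(\sqrt{\kso(\mk)}\cdot\tmv(\mk))$, and Theorem~\ref{thm:mainid} at phase $k$ (with $\kappa^{\star} = \kso(\mk+\lam^{(k)}\id)\le\kso(\mk)$) yields $\mw^{(k)}$; after $K = \widetilde{O}(1)$ phases we output $\mw^{(K)}$, which is a $(1+O(\eps))\kso(\mk)$-quality outer scaling of $\mk$.

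\textbf{Pinning the exponent at $1.5$.} The one point requiring care beyond black-box use of Theorem~\ref{thm:mainid} is that Algorithm~\ref{alg:decidempc} internally applies a degree-$\Theta(\sqrt{\kso(\mk)})$ matrix polynomial in $\ms_t$ (Lemmas~\ref{lem:trexpnegm}, \ref{lem:line5left}, \ref{lem:line13}); applying it via powers of $\ms_t$ — each power involving $\mk^{\half}$, hence costing $\widetilde{O}(\sqrt{\kso}\,\tmv(\mk))$ — would give a spurious $\kso^2$. The resolution is that with this dictionary, $-\ms_t = \mk^{\half}\md_t\mk^{\half}$ for a nonnegative diagonal $\md_t$, and $(\mk^{\half}\md_t\mk^{\half})^{j}\mk^{\half} = \mk^{\half}(\md_t\mk)^{j}$; therefore the Johnson--Lindenstrauss sketches of $\exp(-\tfrac12(-\ms_t))\mv_i = \mk^{\half}\exp(-\tfrac12\md_t\mk)e_i$, the trace $\Tr\exp(\ms_t)$, and the smallest-eigenvalue estimate can all be computed with only $\widetilde{O}(1)$ bare applications of $\mk^{\half}$ plus a degree-$\widetilde{O}(\sqrt{\kso})$ polynomial in the cheap operator $\md_t\mk$, i.e.\ at per-iteration cost $\widetilde{O}(\sqrt{\kso(\mk)}\,\tmv(\mk))$. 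Over the $T = \widetilde{O}(\kso(\mk))$ iterations of Algorithm~\ref{alg:decidempc}, $\widetilde{O}(1)$ homotopy phases, and the incremental search on $\kappa$, this gives total time $\widetilde{O}((\kso(\mk))^{1.5}\,\tmv(\mk))$, matching the stated bound for constant $\eps$. I expect this last step — re-deriving the subroutines of Algorithm~\ref{alg:decidempc} so that $\mk^{\half}$ is invoked $\widetilde{O}(1)$ rather than $\widetilde{O}(\sqrt{\kso})$ times per iteration, while carefully composing the errors from the rational square-root approximation, the inexact preconditioned solves, the JL sketches, and the packing oracle through every homotopy phase — to be the main obstacle.
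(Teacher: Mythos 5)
Your proposal is correct and its architecture coincides with the paper's: reduce outer scaling to the isotropic problem of Theorem~\ref{thm:mainid} with dictionary $\mv_i = \mk^{\half} e_i$, run a homotopy over regularizations $\lam^{(k)} = \lam^{(0)}/2^k$ using the monotonicity $\kso(\mk+\lam\id)\le\kso(\mk)$ (this is exactly Lemma~\ref{lem:idcanthurt}, and your proof of it matches), and — the step you correctly identify as the crux — exploit the factorization $-\ms_t = \ma\md_t\ma$ so that each degree-$\widetilde{O}(\sqrt{\kso})$ polynomial inside Algorithm~\ref{alg:decidempc} is applied as $\ma\,\mn\,\ma$ with $\mn$ an explicit polynomial in $\mk$ and the diagonal, invoking $\mk^{\half}$ only $\widetilde{O}(1)$ times per iteration; this is precisely the mechanism of Lemmas~\ref{lem:applyana}, \ref{lem:sqrttrexp}, \ref{lem:sqrtinprod}, \ref{lem:packdenom}, and~\ref{lem:packnum}. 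The one place you genuinely diverge is the square-root oracle itself. The paper first applies Jacobi preconditioning (Proposition~\ref{prop:diagonesub}) to guarantee $\kappa(\mk) = O((\kso)^2)$, and then arranges the homotopy so that each phase hands the core subroutine the \emph{explicitly rescaled} matrix $\mw^{\half}(\mk+\lam_{k+1}\id)\mw^{\half}$ with condition number at most $3\kso$; its square root is then applied by a single degree-$O(\sqrt{\kso}\log(\kso/\eps))$ Chebyshev polynomial (Fact~\ref{fact:poly_approx_sqrt}, Corollary~\ref{cor:implsqrt}), with no linear-system solves anywhere. You instead keep the unrescaled $\mk+\lam^{(k)}\id$ and realize its square root via the rational approximation of Proposition~\ref{prop:ratsqrt} plus diagonal-preconditioned accelerated solves; your verification that $(\mw^{(k-1)})^{-1}+\nu\id$ is an $O(\kso)$-quality preconditioner for every shifted system is sound, so this route also yields $\widetilde{O}(\sqrt{\kso}\,\tmv(\mk))$ per application, at the cost of an extra error-composition argument in the style of Lemma~\ref{lem:approxdinv}. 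Two caveats: because you skip the Jacobi preprocessing, your phase count and the precision of the solves pick up factors of $\log\kappa(\mk)$ rather than $\log\kso(\mk)$, so to land exactly on the stated $\log^8(d\kso(\mk)/\delta)$ you should prepend the reduction to $\kappa(\mk)=O((\kso(\mk))^2)$ as the paper does; and the error propagation you defer at the end is real work (it occupies most of Section~\ref{ssec:kso}) but introduces no new ideas beyond Lemma~\ref{lem:opnormprod} and careful bookkeeping.
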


\subsection{Inner scaling}\label{ssec:ksi}

We prove Theorem~\ref{thm:ksi} on computing constant-factor optimal inner scalings.

\begin{proof}[Proof of Theorem~\ref{thm:ksi}]
We apply Theorem~\ref{thm:mainid} with $\kappa = \ksi(\ma)$ and $\mm_i = a_i a_i^\top$ for all $i \in [n]$, where rows of $\ma$ are denoted $\{a_i\}_{i \in [n]}$. The definition of $\ksi(\ma)$ and $\ma^\top \mw \ma = \sum_{i \in [n]} w_i a_i a_i^\top$ (where $\mw = \diag{w}$) implies that \eqref{eq:structmpc} is feasible for these parameters: namely, there exists $w^\star \in \R^n_{\ge 0}$ such that
\[\id \preceq \sum_{i \in [n]} w_i \mm_i \preceq \kappa \id.\]
Moreover, factorizations \eqref{eq:vvt} hold with $\mv_i = a_i$ and $m = 1$. Note that $\tmv(\{\mv_i\}_{i \in [n]}) = O(\nnz(\ma))$ since this is the cost of multiplication through all rows of $\ma$.
\end{proof}

\subsection{Outer scaling}\label{ssec:kso}

In this section, we prove a result on computing constant-factor optimal outer scalings of a matrix $\mk \in \PD^d$. We first remark that we can obtain a result analogous to Theorem~\ref{thm:ksi}, but which scales quadratically in the $\kso(\mk)$, straightforwardly by applying Theorem~\ref{thm:mainb} with $n = d$, $\mm_i = e_i e_i^\top$ for all $i \in [d]$, $\kappa = \kso(\mk)$, and $\mb = \frac 1 \kappa \mk$. This is because the definition of $\kso$ implies there exists $w^\star \in \R^d_{\ge 0}$ with $\id \preceq (\mw^\star)^{-\half} \mk (\mw^\star)^{-\half} \preceq \kappa \id$, where $\mw^\star = \diag{w^\star}$ and where this $w^\star$ is the entrywise inverse of the optimal outer scaling attaining $\kso(\mk)$. This then implies
\[\frac 1 \kappa \mk \preceq \sum_{i \in [d]} w^\star_i \mm_i \preceq \mk, \]
since $\sum_{i \in [d]} w^\star_i \mm_i = \mw^\star$ and hence Theorem~\ref{thm:mainb} applies, obtaining a runtime for the outer scaling problem of roughly $\tmv(\mk) \cdot \kso(\mk)^2$ (up to logarithmic factors). 

We give an alternative approach in this section which obtains a runtime of roughly $\tmv(\mk) \cdot \kso(\mk)^{1.5}$, matching Theorem~\ref{thm:ksi} up to logarithmic factors. Our approach is to define 
\[\ma \defeq \mk^{\half}\]
to be the positive definite square root of $\mk$; we use this notation throughout the section, and let $\{a_i\}_{i \in [d]}$ be the rows of $\ma$. We cannot explicitly access $\ma$, but if we could, directly applying Theorem~\ref{thm:ksi} suffices because $\kappa(\mw^\half \mk \mw^\half) = \kappa(\mw^\half \ma^2 \mw^\half) = \kappa(\ma \mw \ma)$ for any nonnegative diagonal $\mw \in \R^{d \times d}$. We show that by using a homotopy method similar to the one employed in Section~\ref{ssec:genb}, we can implement this strategy with only a polylogarithmic runtime overhead. At a high level, the improvement from Theorem~\ref{thm:mainb} is because we have explicit access to $\mk = \ma^2$. By exploiting cancellations in polynomial approximations we can improve the cost of iterations of Algorithm~\ref{alg:decidempc} from roughly $\kappa$ (where one factor of $\sqrt \kappa$ comes from the cost of rational approximations to square roots in Section~\ref{ssec:genb}, and the other comes from the degree of polynomials), to roughly $\sqrt{\kappa}$.

Finally, throughout this section we will assume $\kappa(\ma) \le \kso(\mk)$, which is without loss of generality by rescaling based on the diagonal (Jacobi preconditioning), as we show in Appendix~\ref{app:jacobi}. Also, for notational convenience we will fix a matrix $\mk \in \PD^d$ and denote $\ks \defeq \kso(\mk)$.

\subsubsection{Preliminaries}

We first state a number of preliminary results which will be used in buliding our outer scaling method. We begin with a polynomial approximation to the square root, proven in Appendix~\ref{app:solversdeferred}. It yields a corollary regarding approximating matrix-vector products with a matrix square root.

\begin{restatable}[Polynomial approximation of $\sqrt{\cdot}$]{fact}{restatepolysqrt}\label{fact:poly_approx_sqrt}
	Let $\mm \in \PD^d$ have $\mu \id \preceq \mm \preceq \kappa \mu \id$ where $\mu$ is known. Then for any $\delta \in (0, 1)$, there is an explicit polynomial $p$ of degree $O(\sqrt \kappa \log \frac \kappa \delta)$ with
	\[(1 - \delta)\mm^\half \preceq p(\mm) \preceq (1 + \delta)\mm^\half.\]
\end{restatable}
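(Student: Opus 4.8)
The plan is to pass from the Loewner (operator) statement to a univariate polynomial approximation problem, and then to construct the scalar polynomial by truncating a Chebyshev expansion; essentially all the difficulty lies in making the degree scale with $\sqrt{\kappa}$ rather than $\kappa$.

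\textbf{Step 1: reduction to a scalar problem.} It suffices to exhibit a univariate polynomial $q$ of degree $O(\sqrt{\kappa}\log\frac{\kappa}{\delta})$ with $(1-\delta)\sqrt{x}\le q(x)\le(1+\delta)\sqrt{x}$ for all $x\in[\mu,\kappa\mu]$, and then set $p:=q$. Indeed, writing $\mm=\mathbf{V}^\top\boldsymbol{\Lambda}\mathbf{V}$, the matrices $q(\mm)-(1\pm\delta)\mm^{\half}$ are simultaneously diagonalized by $\mathbf{V}$ with eigenvalues $q(\lambda)-(1\pm\delta)\sqrt{\lambda}$ as $\lambda$ ranges over the eigenvalues of $\mm$, all of which lie in $[\mu,\kappa\mu]$; the two scalar inequalities then give exactly $(1-\delta)\mm^{\half}\preceq q(\mm)\preceq(1+\delta)\mm^{\half}$. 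This step is routine spectral calculus, and is where we use $\mm\in\PD^d$ (so that $\mm^{\half}$ is well-defined).

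\textbf{Step 2: normalization and a relative-to-additive reduction.} By homogeneity I would pass to a fixed spectrum: writing $x=\kappa\mu\,y$ with $y\in[1/\kappa,1]$ and $\sqrt{x}=\sqrt{\kappa\mu}\,\sqrt{y}$, it is enough to produce $\tilde q$ of the claimed degree with $(1-\delta)\sqrt{y}\le\tilde q(y)\le(1+\delta)\sqrt{y}$ on $[1/\kappa,1]$ and then set $q(x):=\sqrt{\kappa\mu}\,\tilde q(x/(\kappa\mu))$ (this is where knowledge of $\mu$ enters, and the degree is unchanged). Since $\sqrt{y}\ge 1/\sqrt{\kappa}$ on $[1/\kappa,1]$, it further suffices to obtain the \emph{additive} bound $\sup_{y\in[1/\kappa,1]}|\tilde q(y)-\sqrt{y}|\le\delta/\sqrt{\kappa}$.

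\textbf{Step 3: Chebyshev truncation.} To approximate $\sqrt{y}$ additively on $[1/\kappa,1]$, recenter by $y=c(1-u)$ with $c=\tfrac{1}{2}(1+1/\kappa)$, so that $y\in[1/\kappa,1]$ corresponds to $u\in[-\rho,\rho]$ for $\rho=\frac{\kappa-1}{\kappa+1}$, and $\sqrt{y}=\sqrt{c}\,\sqrt{1-u}$. The function $u\mapsto\sqrt{1-u}$ is analytic in a neighborhood of the closed Bernstein ellipse with foci $\pm\rho$ whose extreme real point falls just short of the branch point $u=1$; since $1-\rho=\Theta(1/\kappa)$, the corresponding ellipse parameter is $R=1+\Theta(1/\sqrt{\kappa})$, and the standard bound on Chebyshev coefficients of functions analytic in a Bernstein ellipse (with a maximum of $|\sqrt{1-u}|=O(1)$ over that ellipse) shows that the tail of the Chebyshev series of $\sqrt{c}\,\sqrt{1-u}$ beyond degree $N$ has sup-norm $O(\sqrt{\kappa}\,R^{-N})=O(\sqrt{\kappa}\,e^{-\Theta(N/\sqrt{\kappa})})$. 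Unwinding the affine substitutions (which preserve degree), the degree-$N$ truncation is a polynomial $\tilde q$ in $y$ with the same additive error on $[1/\kappa,1]$; choosing $N=O(\sqrt{\kappa}\log\frac{\kappa}{\delta})$ makes this at most $\delta/\sqrt{\kappa}$, as required. (When $\kappa=O(1)$ the claim is vacuous and any degree-$O(\log\frac1\delta)$ approximation works; one could also simply invoke a standard approximation-theory bound for $x^s$, e.g. \cite{SachdevaV14}, in place of this computation.)

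\textbf{Main obstacle.} The only genuinely delicate point is Step 3, and specifically the appearance of $\sqrt{\kappa}$ rather than $\kappa$ in the degree. A naive truncation of the binomial series of $\sqrt{1-u}$ on $[-\rho,\rho]$ converges only at rate $\rho^N=e^{-\Theta(N/\kappa)}$, which would force degree $\Theta(\kappa\log\frac{\kappa}{\delta})$; recovering the square-root improvement requires the Chebyshev viewpoint and a careful accounting of how close the branch point $u=1$ sits to the interval $[-\rho,\rho]$ (equivalently, the admissible Bernstein ellipse parameter $R=1+\Theta(1/\sqrt{\kappa})$), together with the additive-to-relative trade-off from Step 2. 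Steps 1 and 2 are otherwise bookkeeping.
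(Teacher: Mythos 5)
Your proposal is correct and follows essentially the same route as the paper's proof: reduce to a scalar approximation of $\sqrt{\cdot}$ on $[1/\kappa,1]$, convert the relative guarantee to an additive target of $\delta/\sqrt{\kappa}$, and apply the standard Bernstein-ellipse bound for Chebyshev approximants of analytic functions, where the branch point sitting at distance $\Theta(1/\kappa)$ from the interval yields ellipse parameter $1+\Theta(1/\sqrt{\kappa})$ and hence degree $O(\sqrt{\kappa}\log\frac{\kappa}{\delta})$. The only differences are cosmetic (truncated Chebyshev series versus Chebyshev interpolant, and a slightly different affine change of variables).
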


\begin{corollary}\label{cor:implsqrt}
	For any vector $b \in \R^d$, $\delta, \eps \in (0, 1)$, and $\mm \in \PD^d$ with $\kappa(\mm) \le \kappa$, with probability $\ge 1 - \delta$ we can compute $u \in \R^d$ such that
	\[\norm{u - \mm^\half b}_2 \le \eps \norm{\mm^\half b}_2 \text{ in time } O\Par{\tmv\Par{\mm} \cdot \Par{\sqrt{\kappa} \log \frac{\kappa}{\eps} + \log \frac d \delta}}.\]
\end{corollary}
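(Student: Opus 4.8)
The plan is to reduce to Fact~\ref{fact:poly_approx_sqrt} after first obtaining the normalization parameter $\mu$ it requires. First I would run $\Power(\mm, \delta)$ from Fact~\ref{fact:powermethod}, which in time $O(\tmv(\mm)\log\frac d\delta)$ returns, with probability $\ge 1-\delta$, a value $V$ with $\lmax(\mm)\ge V\ge 0.9\,\lmax(\mm)$. Since $\kappa(\mm)\le\kappa$ by hypothesis, $\lmin(\mm)\ge\lmax(\mm)/\kappa\ge V/\kappa$ while $\lmax(\mm)\le V/0.9$, so setting $\mu\defeq V/\kappa$ we get $\mu\id\preceq\mm\preceq\frac{\kappa}{0.9}\mu\id$. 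Thus $\mm$ satisfies the hypothesis of Fact~\ref{fact:poly_approx_sqrt} with this known $\mu$ and effective condition number $\kappa'=O(\kappa)$.

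Next I would invoke Fact~\ref{fact:poly_approx_sqrt} with $\mu$, the bound $\kappa'$, and accuracy parameter $\eps$ (in the role of its $\delta$), producing an explicit polynomial $p$ of degree $L=O(\sqrt{\kappa'}\log\frac{\kappa'}{\eps})=O(\sqrt\kappa\log\frac\kappa\eps)$ with $(1-\eps)\mm^\half\preceq p(\mm)\preceq(1+\eps)\mm^\half$. I would then set $u\defeq p(\mm)b$, evaluated by Horner's rule with $L$ matrix-vector multiplications against $\mm$, for a cost of $O(\tmv(\mm)\cdot\sqrt\kappa\log\frac\kappa\eps)$; together with the power method this yields the claimed runtime, and since the only randomness is in the power method, the overall failure probability is $\delta$.

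For correctness, observe that $p(\mm)$, $\mm^\half$, and $p(\mm)-\mm^\half$ are all functions of the symmetric matrix $\mm$ and hence simultaneously diagonalizable; on the eigenvector of $\mm$ with eigenvalue $\lam$ the operator $p(\mm)-\mm^\half$ acts by the scalar $p(\lam)-\sqrt\lam$, and the Loewner sandwich above gives $|p(\lam)-\sqrt\lam|\le\eps\sqrt\lam$ for each such $\lam$. Expanding $b=\sum_i\beta_i u_i$ in this eigenbasis,
\[
\norm{u-\mm^\half b}_2^2=\sum_i\Par{p(\lam_i)-\sqrt{\lam_i}}^2\beta_i^2\le\eps^2\sum_i\lam_i\beta_i^2=\eps^2\norm{\mm^\half b}_2^2,
\]
which is exactly the desired guarantee. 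The only mildly delicate step is the bookkeeping that the power-method estimate inflates the condition number fed into Fact~\ref{fact:poly_approx_sqrt} by only a constant factor, so that the degree bound $L$ — and hence the runtime — is unaffected up to constants; the rest is routine.
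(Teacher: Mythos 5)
Your proposal is correct and follows essentially the same route as the paper: a power-method call to pin down the scale $\mu$ required by Fact~\ref{fact:poly_approx_sqrt}, followed by applying the degree-$O(\sqrt{\kappa}\log\frac{\kappa}{\eps})$ polynomial from that fact. Your correctness step via explicit simultaneous diagonalization is just an unwound version of the paper's argument (which phrases the same equivalence through $(\mpack-\mm^\half)^2 \preceq \eps^2\mm$ and operator monotonicity of the square root), so there is nothing substantive to add.
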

\begin{proof}
	First, we compute a $2$-approximation to $\mu$ in Fact~\ref{fact:poly_approx_sqrt} within the runtime budget using the power method (Fact~\ref{fact:powermethod}), since $\kappa$ is given. This will only affect parameters in the remainder of the proof by constant factors. If $u = \mpack b$ for commuting $\mpack$ and $\mm$, our requirement is equivalent to
	\[-\eps^2 \mm \preceq \Par{\mpack - \mm^\half}^2 \preceq \eps^2 \mm.\]
	Since square roots are operator monotone (by the L\"owner-Heinz inequality), this is true iff
	\[-\eps \mm^\half \preceq \mpack - \mm^\half \preceq \eps \mm^\half, \]
	and such a $\mpack$ which is applicable within the runtime budget is given by Fact~\ref{fact:poly_approx_sqrt}.
\end{proof}

We next demonstrate two applications of Corollary~\ref{cor:implsqrt} in estimating applications of products involving $\ma = \mk^\half$, where we can only explicitly access $\mk$. We will use the following standard fact about operator norms, whose proof is deferred to Appendix~\ref{app:solversdeferred}.

\begin{restatable}{lemma}{restateopnormprod}\label{lem:opnormprod}
	Let $\mb \in \R^{d \times d}$ and let $\ma \in \PD^d$. Then $\min\Par{\norm{\ma\mb}_2, \norm{\mb\ma}_2} \ge \frac{1}{\kappa(\ma)}\norm{\mb}_2 \norm{\ma}_2$.
\end{restatable}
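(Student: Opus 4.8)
The plan is to exploit invertibility of $\ma$ (which holds since $\ma \in \PD^d$) together with submultiplicativity of the operator norm. First I would write $\mb = \ma^{-1}(\ma\mb)$, so that submultiplicativity gives $\norm{\mb}_2 \le \norm{\ma^{-1}}_2 \norm{\ma\mb}_2$. Since $\ma \in \PD^d$, we have $\norm{\ma^{-1}}_2 = \lmin(\ma)^{-1}$, so rearranging yields $\norm{\ma\mb}_2 \ge \lmin(\ma)\norm{\mb}_2$. The symmetric manipulation $\mb = (\mb\ma)\ma^{-1}$ gives $\norm{\mb\ma}_2 \ge \lmin(\ma)\norm{\mb}_2$ in the same way.

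To finish, I would use that for $\ma \in \PD^d$ the operator norm equals the largest eigenvalue, i.e.\ $\norm{\ma}_2 = \lmax(\ma)$, so that $\lmin(\ma) = \lmax(\ma)/\kappa(\ma) = \norm{\ma}_2/\kappa(\ma)$. Substituting into both bounds gives $\norm{\ma\mb}_2 \ge \frac{1}{\kappa(\ma)}\norm{\mb}_2\norm{\ma}_2$ and likewise for $\norm{\mb\ma}_2$, and taking the minimum of the two is exactly the claim. There is no real obstacle here; the only points requiring any care are recording that $\PD^d$ membership supplies both invertibility and the identity $\norm{\ma}_2 = \lmax(\ma)$, so the proof is a two-line computation.
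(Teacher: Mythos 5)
Your proof is correct and is essentially the paper's argument in a cleaner package: the paper picks extremal vectors for $\mb$ and uses that $\ma$ shrinks norms by at most $\lmin(\ma) = \norm{\ma}_2/\kappa(\ma)$, which is exactly what your submultiplicativity step $\norm{\mb}_2 \le \norm{\ma^{-1}}_2\norm{\ma\mb}_2$ encodes. No gap; both cases follow as you describe.
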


First, we discuss the application of a bounded-degree polynomial in $\ma \mw \ma$ to a uniformly random unit vector, where $\mw$ is an explicit nonnegative diagonal matrix.

\begin{lemma}\label{lem:applyana}
	Let $u \in \R^d$ be a uniformly random unit vector, let $\mk \in \PD^d$ such that $\ma \defeq \mk^\half$ and $\kappa(\mk) \le \kappa$, and let $\mpack$ be a degree-$\Delta$ polynomial in $\ma \mw \ma$ for an explicit diagonal $\mw \in \PSD^d$. For $\delta, \eps \in (0, 1)$, with probability $\ge 1 - \delta$ we can compute $w \in \R^d$ so $\norm{w - \mpack u}_2 \le \eps \norm{\mpack u}_2$ in time
	\[O\Par{\tmv(\mk) \cdot \Par{\Delta + \sqrt{\kappa} \log \frac{d\kappa}{\delta \eps}}}.\]
\end{lemma}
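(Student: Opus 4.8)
\textbf{Proof proposal for Lemma~\ref{lem:applyana}.}
The plan is to exploit an algebraic ``stripping'' identity that removes the outer square roots from the polynomial, leaving only multiplications by $\mk$ itself. Write $\mpack = \sum_{j=0}^{\Delta} c_j (\ma\mw\ma)^j$ with known scalar coefficients $\{c_j\}$. Using $\ma^2 = \mk$, for every $j \ge 1$ we have $(\ma\mw\ma)^j = \ma(\mw\mk)^{j-1}\mw\ma$, so
\[\mpack = c_0\id + \ma\,\tilde q(\mw,\mk)\,\ma, \qquad \tilde q(\mw,\mk) \defeq \sum_{j=1}^{\Delta} c_j\,(\mw\mk)^{j-1}\mw.\]
The operator $\tilde q(\mw,\mk)$ is a matrix polynomial of degree $O(\Delta)$ whose only nontrivial operations are multiplications by $\mk$ (together with multiplications by the \emph{explicit} diagonal $\mw$), so it can be applied to any vector in $O(\Delta\cdot\tmv(\mk))$ time by Horner's rule. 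Hence $\mpack u = c_0 u + \ma\big(\tilde q(\mw,\mk)(\ma u)\big)$ requires only two applications of $\ma = \mk^\half$, one application of $\tilde q(\mw,\mk)$, and the free constant correction $c_0 u$. Concretely, the algorithm is: (i) invoke Corollary~\ref{cor:implsqrt} on $\mk$ (which has $\kappa(\mk)\le\kappa$) to get $\hat v_1$ with $\norm{\hat v_1 - \ma u}_2 \le \eps'\norm{\ma u}_2$; (ii) compute $v_2 \gets \tilde q(\mw,\mk)\hat v_1$ by Horner; (iii) invoke Corollary~\ref{cor:implsqrt} again to get $\hat v_3$ with $\norm{\hat v_3 - \ma v_2}_2 \le \eps'\norm{\ma v_2}_2$; (iv) return $w \gets c_0 u + \hat v_3$. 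Steps (i) and (iii) each cost $O(\tmv(\mk)(\sqrt\kappa\log\tfrac{\kappa}{\eps'} + \log\tfrac d\delta))$ and step (ii) costs $O(\Delta\cdot\tmv(\mk))$, so it remains to show that taking $\eps'$ equal to $\eps$ times a fixed $\textup{poly}(\delta/(d\kappa))$ factor suffices, since then $\log\tfrac1{\eps'} = O(\log\tfrac{d\kappa}{\delta\eps})$ and the total runtime is as claimed.

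For the error analysis, note $w - \mpack u = \hat v_3 - \ma\,\tilde q(\mw,\mk)\,\ma u$. The operator $\ma\tilde q(\mw,\mk)\ma = \mpack - c_0\id$ is symmetric with spectral norm at most $\norm{\mpack(\ma\mw\ma)}_2 + |c_0|$, and $\norm{\ma\tilde q(\mw,\mk)}_2 = \norm{(\mpack - c_0\id)\ma^{-1}}_2 \le (\norm{\mpack(\ma\mw\ma)}_2 + |c_0|)\cdot\kappa(\ma)/\norm{\ma}_2$ with $\kappa(\ma) = \sqrt{\kappa(\mk)} \le \sqrt\kappa$; propagating the two relative errors $\eps'$ through these bounds (using the trivial $\norm{\ma u}_2 \le \norm{\ma}_2$) gives
\[\norm{w - \mpack u}_2 = O\!\left(\sqrt\kappa\cdot\eps'\cdot\big(\norm{\mpack(\ma\mw\ma)}_2 + |c_0|\big)\right).\]
To convert this to a relative error in $\norm{\mpack u}_2$ we use that $u$ is a uniformly random unit vector: fixing an orthonormal eigenbasis $\{u_k\}$ of $\ma\mw\ma$ with eigenvalues $\{\lambda_k\}$, anti-concentration of $\langle u, x\rangle$ for a fixed unit vector $x$ together with a union bound over $k$ yields $\min_k\langle u,u_k\rangle^2 = \Omega(\delta^2/d^3)$ with probability $\ge 1-\tfrac\delta3$, whence $\norm{\mpack u}_2^2 = \sum_k p(\lambda_k)^2\langle u,u_k\rangle^2 \ge \Omega(\delta^2/d^3)\cdot\norm{\mpack(\ma\mw\ma)}_2^2$. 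Choosing $\eps'$ accordingly and union bounding over this good event and the two invocations of Corollary~\ref{cor:implsqrt} (each assigned failure probability $\tfrac\delta3$) completes the argument.

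The main obstacle is the final conversion step, specifically controlling the cancellation between the $c_0 u$ and $\ma\tilde q(\mw,\mk)\ma u$ pieces of $\mpack u$: the error bound above scales with $\norm{\mpack(\ma\mw\ma)}_2 + |c_0|$, whereas the random-vector lower bound only controls $\norm{\mpack(\ma\mw\ma)}_2$, so one must additionally argue that $|c_0| = |p(0)|$ is not much larger than $\norm{\mpack(\ma\mw\ma)}_2$ (or otherwise absorb the ratio into the logarithmic overhead), so that the required internal accuracy $\eps'$ degrades by at most a $\textup{poly}(d\kappa/\delta)$ factor. Everything else is a routine propagation-of-errors computation using operator-norm inequalities and the fact that $\kappa(\ma) \le \sqrt\kappa$.
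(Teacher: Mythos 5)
Your proof follows the paper's argument essentially verbatim: the paper likewise writes $\mpack = \ma \mn \ma$ for an explicit degree-$O(\Delta)$ polynomial $\mn$ in $\mk$ and $\mw$, sandwiches an exact application of $\mn$ between two calls to Corollary~\ref{cor:implsqrt} at internal accuracy $\eps' = \eps/(3N\kappa)$, and uses the same two ingredients you do --- anti-concentration of a random unit vector giving $\norm{\mpack u}_2 \ge \frac{1}{N}\norm{\mpack}_2$ for $N = \textup{poly}(d,\delta^{-1})$, and Lemma~\ref{lem:opnormprod} to relate $\norm{\ma}_2^2\norm{\mn}_2$ back to $\norm{\ma\mn\ma}_2$ --- so that the accuracy loss enters only through $\log\frac{1}{\eps'}$. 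The constant-term cancellation you flag as the ``main obstacle'' is not treated in the paper either (it silently absorbs $c_0\id$ into $\ma\mn\ma$, which is only literally a polynomial in $\mk,\mw$ when $c_0=0$); your fix of computing $c_0 u$ exactly and pushing the ratio $|c_0|/\norm{\mpack}_2$ into the logarithmic accuracy parameter is the right one and is harmless for the explicit, coefficient-bounded polynomials to which the lemma is applied.
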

\begin{proof}
	We can write $\mpack = \ma \mn \ma$ for some matrix $\mn$ which is a degree-$O(\Delta)$ polynomial in $\mk$ and $\mw$, which we have explicit access to. Standard concentration bounds show that with probability at least $1 - \delta$, for some $N = \text{poly}(d, \delta^{-1})$, $\norm{\mpack u}_2 \ge \frac{1}{N}\norm{\mpack}_2$.
	Condition on this event for the remainder of the proof, such that it suffices to obtain additive accuracy $\frac \eps N \norm{\mpack}_2$. By two applications of Lemma~\ref{lem:opnormprod}, we have
	\begin{equation}\label{eq:anaop}
	\norm{\ma \mn \ma}_2 \ge \frac 1 {\kappa} \norm{\ma}_2^2 \norm{\mn}_2.
	\end{equation}
	Our algorithm is as follows: for $\eps' \gets \frac{\eps}{3N\kappa}$, compute $v$ such that $\norm{v - \ma u}_2 \le \eps' \norm{\ma}_2 \norm{u}_2$ using Corollary~\ref{cor:implsqrt}, explicitly apply $\mn$, and then compute $w$ such that $\norm{w - \ma \mn v}_2 \le \eps' \norm{\ma}_2 \norm{\mn v}_2$; the runtime of this algorithm clearly fits in the runtime budget. The desired approximation is via
	\begin{align*}
	\norm{w - \ma \mn \ma u}_2 &\le \norm{w - \ma \mn v}_2 + \norm{\ma \mn v - \ma \mn \ma u}_2 \\
	&\le \eps' \norm{\ma}_2 \norm{\mn}_2 \norm{v}_2 + \norm{\ma \mn v - \ma \mn \ma u}_2 \\
	&\le 2\eps' \norm{\ma}_2^2 \norm{\mn}_2 + \norm{\ma}_2 \norm{\mn}_2 \norm{v - \ma u}_2 \\
	&\le 2\eps' \norm{\ma}_2^2 \norm{\mn}_2 + \eps' \norm{\ma}_2^2 \norm{\mn}_2 \\
	&\le 3\eps' \kappa \norm{\ma \mn \ma}_2 = \frac{\eps}{N}\norm{\mpack}_2.
	\end{align*}
	The third inequality used $\norm{v}_2 \le \norm{\ma u}_2 + \eps \norm{\ma}_2 \norm{u}_2 \le (1 + \eps)\norm{\ma}_2 \le 2\norm{\ma}_2$.
\end{proof}

We give a similar guarantee for random bilinear forms through $\ma$ involving an explicit vector.

\begin{lemma}\label{lem:applymaprod}
	Let $u \in \R^d$ be a uniformly random unit vector, let $\mk \in \PD^d$ such that $\ma \defeq \mk^\half$ and $\kappa(\mk) \le \kappa$, and let $v \in \R^d$. For $\delta, \eps \in (0, 1)$, with probability $\ge 1 - \delta$ we can compute $w \in \R^d$ so $\inprod{w}{v}$ is an $\eps$-multiplicative approximation to $u^\top \ma v$ in time
	\[O\Par{\tmv(\mk) \cdot \sqrt{\kappa}\log\frac{d\kappa}{\delta\eps}}.\]
\end{lemma}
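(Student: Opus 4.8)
\textbf{Proof plan for Lemma~\ref{lem:applymaprod}.}
The plan is to mirror the structure of the proof of Lemma~\ref{lem:applyana}, but in the simpler setting where the linear operator applied to $u$ is just $\ma = \mk^\half$ itself, and we only need to read off the resulting value against the explicit vector $v$ rather than control the full vector $\ma v$. First I would observe that the target quantity $u^\top \ma v = \inprod{\ma u}{v}$, so it suffices to compute some $w$ with $\norm{w - \ma u}_2$ small and return $\inprod{w}{v}$; the error incurred is then at most $\norm{w - \ma u}_2 \norm{v}_2$ by Cauchy--Schwarz. To turn this into a multiplicative guarantee on $u^\top \ma v$, I need a lower bound on $\Abs{u^\top \ma v}$. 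This is where the random choice of $u$ enters: for a uniformly random unit vector $u$, a standard anticoncentration bound (the same type invoked in Lemma~\ref{lem:applyana}) gives that with probability $\ge 1 - \delta$, $\Abs{u^\top \ma v} \ge \frac{1}{N} \norm{\ma v}_2$ for some $N = \textup{poly}(d, \delta^{-1})$. Condition on this event.

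Next I would invoke Corollary~\ref{cor:implsqrt} with matrix $\mk$, target vector $u$, accuracy parameter $\eps' \gets \frac{\eps}{N \kappa}$ (adjusting constants), and failure probability $\delta$, to obtain $w$ with $\norm{w - \ma u}_2 \le \eps' \norm{\ma u}_2 \le \eps' \norm{\ma}_2$ (using $\norm{u}_2 = 1$). Combined with the cost bound in Corollary~\ref{cor:implsqrt}, this step runs in time $O(\tmv(\mk) \cdot (\sqrt\kappa \log\frac{\kappa}{\eps'} + \log\frac d \delta)) = O(\tmv(\mk) \cdot \sqrt\kappa \log\frac{d\kappa}{\delta\eps})$, absorbing the lower-order $\log(d/\delta)$ term and the $\log(N\kappa)$ factors into $\log\frac{d\kappa}{\delta\eps}$ since $N$ is polynomial. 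Then $\Abs{\inprod{w}{v} - u^\top \ma v} \le \norm{w - \ma u}_2 \norm{v}_2 \le \eps' \norm{\ma}_2 \norm{v}_2$. To convert this additive bound to the claimed multiplicative one, I would use Lemma~\ref{lem:opnormprod} (or just operator-norm monotonicity): $\norm{\ma v}_2 \ge \frac{1}{\kappa(\ma)}\norm{\ma}_2\norm{v}_2 \ge \frac{1}{\kappa}\norm{\ma}_2\norm{v}_2$, since $\kappa(\ma) = \sqrt{\kappa(\mk)} \le \kappa$. Hence $\eps'\norm{\ma}_2\norm{v}_2 \le \eps' \kappa \norm{\ma v}_2 \le \frac{\eps}{N}\norm{\ma v}_2 \le \eps \Abs{u^\top \ma v}$ on the conditioned event, which is exactly the $\eps$-multiplicative approximation claimed.

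The only subtlety — the closest thing to an obstacle — is making the anticoncentration step fully rigorous: one wants that a random unit vector $u$ has nonnegligible inner product with the fixed direction $\ma v / \norm{\ma v}_2$. This is completely standard (the squared projection of a random unit vector onto a fixed direction is Beta-distributed, or one can use that $u^\top z$ for $z$ a fixed unit vector behaves like $\Nor(0, 1/d)$ up to scaling), and yields $\Abs{u^\top \ma v} \ge \frac{1}{N}\norm{\ma v}_2$ with $N = \textup{poly}(d,\delta^{-1})$ with the desired probability; since $N$ enters only inside a logarithm in the final runtime, its exact polynomial degree is irrelevant. Everything else is a routine chain of triangle inequality, Cauchy--Schwarz, and the operator-norm facts already established in the excerpt, so the proof is short.
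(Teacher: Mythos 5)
Your proposal is correct and follows essentially the same route as the paper's proof: reduce to an additive $\frac{\eps}{N}\norm{\ma v}_2$ approximation via anticoncentration of the random unit vector, apply Corollary~\ref{cor:implsqrt} to approximate $\ma u$, and convert via Cauchy--Schwarz and the operator-norm lower bound on $\norm{\ma v}_2$. The only cosmetic difference is that the paper uses the sharper bound $\kappa(\ma) = \sqrt{\kappa(\mk)} \le \sqrt{\kappa}$ (so $\eps' = \frac{\eps}{N\sqrt{\kappa}}$) where you use $\kappa(\ma) \le \kappa$, which is also valid and only changes a logarithmic factor already absorbed in the stated runtime.
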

\begin{proof}
	As in Lemma~\ref{lem:applyana}, for some $N = \text{poly}(d, \delta^{-1})$ it suffices to give a $\frac{\eps}{N} \norm{\ma v}_2$-additive approximation. For $\eps' \gets \frac \eps {N\sqrt \kappa}$, we apply Corollary~\ref{cor:implsqrt} to obtain $w$ such that $\norm{w - \ma u}_2 \le \eps' \norm{\ma u}_2$, which fits within the runtime budget. Correctness follows from
	\[
	\Abs{\inprod{\ma u - w}{v}} \le \norm{\ma u - w}_2 \norm{v}_2 \le \eps' \norm{\ma}_2 \norm{v}_2 \le \eps' \sqrt{\kappa} \norm{\ma v}_2 \le \frac{\eps}{N} \norm{\ma v}_2.
	\]
\end{proof}

\subsubsection{Implementing Algorithm~\ref{alg:decidempc} implicitly}

In this section, we bound the complexity of Algorithm~\ref{alg:decidempc} in the following setting. Throughout this section denote $\mm_i = a_ia_i^\top$ for all $i \in [d]$, where $\ma \in \PD^d$ has rows $\{a_i\}_{i \in [d]}$, and $\ma^2 = \mk$. We assume that
\[\kappa(\mk) \le \kscale \defeq 3\ks,\]
and we wish to compute a reweighting $w \in \R^d_{\ge 0}$ such that
\[\kappa\Par{\sum_{i \in [d]} w_i a_ia_i^\top} = \kappa\Par{\mw^\half \mk \mw^\half} \le (1 + \eps)\ks,\]
assuming there exists a reweighting $w^\star \in \R^d_{\ge 0}$ such that above problem is feasible with conditioning $\ks$. In other words, we assume we start with a matrix whose conditioning is within a $3$-factor of the optimum after rescaling, and wish to obtain a $1 + \eps$-approximation to the optimum. We show in the next section how to use a homotopy method to reduce the outer scaling problem to this setting.

Our strategy is to apply the method of Theorem~\ref{thm:ksi}. To deal with the fact that we cannot explictly access the matrix $\ma$, we give a custom analysis of the costs of Lines 6, 7, and 13 under implicit access in this section, and prove a variant of Theorem~\ref{thm:ksi} for this specific setting.

\paragraph{Estimating the smallest eigenvalue implicitly.} We begin by discussing implicit implementation of Line 13. Our strategy combines the approach of Lemma~\ref{lem:line13} (applying the power method to the negative exponential), with Lemma~\ref{lem:applyana} since to handle products through random vectors. 

\begin{lemma}\label{lem:sqrtlmin}
	Given $\delta \in (0, 1)$, constant $\eps > 0$, $\mk \in \PD^d$ such that $\ma \defeq \mk^\half$ and $\kappa(\mk) \le \kscale$, and diagonal $\mw \in \PSD^d$ such that $\mm \defeq \ma \mw \ma \preceq O(\kscale \log d)\id$, we can compute a $O(\log d)$-additive approximation to $\lam_{\min}(\mm)$ with probability $\ge 1 - \delta$ in time
	\[O\Par{\tmv(\mk) \cdot \sqrt{\kscale} \cdot \log^2\frac{d\kscale}{\delta}}.\]
\end{lemma}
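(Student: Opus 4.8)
The plan is to mirror the proof of Lemma~\ref{lem:line13}, which established exactly this guarantee for explicitly represented matrices, and replace every matrix--vector product through $\mm = \ma\mw\ma$ by the implicit product access of Lemma~\ref{lem:applyana}. As there, I would first reduce to producing a scalar $V$ approximating $\lam_{\max}(\exp(-\mm)) = \exp(-L)$, $L \defeq \lam_{\min}(\mm)$, to within a $\poly(d)$ multiplicative factor, and then return $-\log V$; this is an $O(\log d)$-additive approximation to $L$ after the same shift bookkeeping. Following Lemma~\ref{lem:line13}, I work under the a priori bound $L = O(\log d)$ (which is how that proof uses the analogous $L \le 14R$, and which holds in the application to Line~13 of Algorithm~\ref{alg:decidempc} since $\lam_{\min}(-\ms_{t+1}) = O(\tfrac{\log d}{\eps})$); the hypothesis $\mm \preceq O(\kscale\log d)\id$ enters only to bound polynomial degrees.

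Next I would apply Corollary~\ref{cor:approxexp} with spectral bound $O(\kscale\log d)$ and accuracy $R = \Theta(\log d)$, chosen with a large enough constant that $\exp(-R) \le \tfrac12\exp(-L)$, to get a polynomial $p$ of degree $\Delta_1 = O(\sqrt{\kscale}\log d)$ with $\exp(-\mm) - \exp(-R)\id \preceq \mpack \preceq \exp(-\mm) + \exp(-R)\id$ for $\mpack \defeq p(\ma\mw\ma)$. The key structural point is that $\mpack$, and each power $(\mpack)^j$, is a degree-$(j\Delta_1)$ polynomial in $\ma\mw\ma$, which is precisely the form accepted by Lemma~\ref{lem:applyana}. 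I would then run the power method of Fact~\ref{fact:powermethod} on the tiny PSD shift $\mpack' \defeq \mpack + 2\exp(-R)\id$: draw a uniformly random unit vector $u$, and for $\Delta_2 = O(\log\tfrac d\delta)$ invoke Lemma~\ref{lem:applyana} twice, on the degree-$(\Delta_1\Delta_2)$ polynomial $(\mpack')^{\Delta_2}$ at $u$ and on the degree-$(\Delta_1(\Delta_2+1))$ polynomial $(\mpack')^{\Delta_2+1}$ at $u$, each to a small relative error and failure probability $O(\delta)$. The ratio of the resulting norms is a $(1 \pm \tfrac1{10})$-relative approximation to the exact power-method output $\|\mpack' v\|_2$ with $v \defeq (\mpack')^{\Delta_2}u/\|(\mpack')^{\Delta_2}u\|_2$; since $L = O(\log d)$ and $\|\mpack - \exp(-\mm)\|_2 \le \tfrac12\exp(-L)$, the top eigenvalue of $\mpack'$ is $\Theta(\exp(-L))$ and strictly dominates the others in magnitude, so by Fact~\ref{fact:powermethod} this output lies in $[0.9\,\lam_{\max}(\mpack'), \lam_{\max}(\mpack')]$. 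Subtracting $2\exp(-R)$ and taking $-\log$ yields the desired approximation.

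For the runtime, the two calls to Lemma~\ref{lem:applyana} dominate: each has polynomial degree $\Delta = O(\Delta_1\Delta_2) = O(\sqrt{\kscale}\log d\log\tfrac d\delta)$, condition number $\kscale$, and inverse relative error $\poly(d)$, costing $O(\tmv(\mk)\cdot(\Delta + \sqrt{\kscale}\log\tfrac{d\kscale}{\delta\eps})) = O(\tmv(\mk)\cdot\sqrt{\kscale}\cdot\log^2\tfrac{d\kscale}{\delta})$ since $\eps$ is constant. Computing the coefficients of $p$, the explicit diagonal/$\mk$-polynomial factorization needed inside Lemma~\ref{lem:applyana}, and the final scalar operations are lower order, and a union bound over the two calls, the random-unit-vector events, and the power-method randomness gives failure probability $\delta$.

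The main obstacle is exactly the implicit access: we can multiply only by $\mk$, never by $\ma = \mk^{1/2}$. The negative-exponential template of Lemma~\ref{lem:line13} is chosen so that the relevant matrices ($\exp(-\ma\mw\ma)$ and its bounded-degree polynomial surrogate, and all powers thereof) are polynomials in $\ma\mw\ma$, which Lemma~\ref{lem:applyana} evaluates by realizing products through $\ma$ with the square-root polynomial machinery of Corollary~\ref{cor:implsqrt} at a one-time overhead of $\sqrt{\kscale}$ rather than an overhead per power-method step. The one point requiring more care than in Lemma~\ref{lem:line13} is that Lemma~\ref{lem:applyana} only returns relative-error approximations, so I fold all $\Delta_2$ iterations into a single polynomial and make one call per numerator and denominator, invoking Lemma~\ref{lem:applyana}'s guarantee directly instead of propagating errors through an iterative recursion; checking that the shift keeps $\mpack'$ PSD with a dominant top eigenvalue (using $L = O(\log d)$) closes the argument.
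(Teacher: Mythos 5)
Your proposal is correct and follows essentially the same route as the paper's proof: reduce to the negative-exponential power-method argument of Lemma~\ref{lem:line13}, replace $\exp(-\mm)$ by the degree-$O(\sqrt{\kscale}\log d)$ polynomial surrogate from Corollary~\ref{cor:approxexp}, fold the $O(\log\frac d\delta)$ power-method iterations into a single polynomial in $\ma\mw\ma$, and make one call to Lemma~\ref{lem:applyana} for each of the two norms whose ratio Fact~\ref{fact:powermethod} requires. The only differences are cosmetic (your additive shift of $+2\exp(-R)\id$ versus the paper's $-\tfrac12\exp(-20R)\id$, and your extra remark about eigenvalue domination, which Fact~\ref{fact:powermethod} does not actually need).
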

\begin{proof}
	The proof of Lemma~\ref{lem:line13} implies it suffices to compute a $0.2$-multiplicative approximation to the largest eigenvalue of $\mpack$, a degree-$\Delta = O(\sqrt{\kscale} \log d)$ polynomial in $\mm$. Moreover, letting $\Delta' = O(\log \frac{d}{\delta})$ be the degree given by Fact~\ref{fact:powermethod} with $\delta \gets \frac \delta 3$, the statement of the algorithm in Fact~\ref{fact:powermethod} shows it suffices to compute for a uniformly random unit vector $u$,
	\begin{align*}
	\norm{\mpack^\Delta u}_2 \text{ and } \norm{\mpack^{\Delta + 1} u}_2 \text{ to multiplicative accuracy } \frac 1 {30}.
	\end{align*}
	We demonstrate how to compute $\norm{\mpack^\Delta u}_2$ to this multiplicative accuracy with probability at least $1 - \frac \delta 3$; the computation of $\norm{\mpack^{\Delta + 1} u}_2$ is identical, and the failure probability follows from a union bound over these three random events. Since $\mpack^\Delta$ is a degree-$O(\Delta\Delta') = O(\sqrt{\kscale} \log d \log \frac d \delta)$ polynomial in $\ma \mw \ma$, the conclusion follows from Lemma~\ref{lem:applyana}.
\end{proof}

\paragraph{Estimating inner products with a negative exponential implicitly.} We next discuss implicit implementation of Line 6. In particular, we give variants of Lemmas~\ref{lem:trexpnegm} and~\ref{lem:line5left} which are tolerant to implicit approximate access of matrix-vector products.

\begin{lemma}\label{lem:sqrttrexp}
	Given $\delta \in (0, 1)$, constant $\eps > 0$, $\mk \in \PD^d$ such that $\ma \defeq \mk^\half$ and $\kappa(\mk) \le \kscale$, and diagonal $\mw \in \PSD^d$ such that $\mm \defeq \ma \mw \ma \preceq O(\kscale\log d)\id$ and $\lmin(\mm) = O(\log d)$, we can compute an $\eps$-multiplicative approximation to $\Tr\exp(-\mm)$ with probability $\ge 1 - \delta$ in time
	\[O\Par{\tmv(\mk) \cdot \sqrt{\kscale} \cdot \log^2 \frac {d\kscale} \delta}.\]
\end{lemma}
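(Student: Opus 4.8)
The plan is to reduce the claim to the explicit-access statement Lemma~\ref{lem:trexpnegm} via a similarity observation. Write $\mm = \ma\mw\ma = BB^\top$ for the square matrix $B := \ma\mw^{\half}$ (using that $\ma$ and $\mw^{\half}$ are symmetric, so $B^\top = \mw^{\half}\ma$), and note that $B^\top B = \mw^{\half}\ma\ma\mw^{\half} = \mw^{\half}\mk\mw^{\half}$. Since $BB^\top$ and $B^\top B$ have the same characteristic polynomial for square $B$, they have identical spectra (with multiplicity), hence
\[\Tr\exp(-\mm) = \Tr\exp\Par{-\mw^{\half}\mk\mw^{\half}}.\]
The matrix $\mm' := \mw^{\half}\mk\mw^{\half}$ admits matrix–vector products in $O(\tmv(\mk))$ time (apply $\mw^{\half}$, then $\mk$, then $\mw^{\half}$), and inherits $\lmin(\mm') = \lmin(\mm) = O(\log d)$ and $\lmax(\mm') = \lmax(\mm) \le O(\kscale\log d)$ from $\mm$. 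Invoking Lemma~\ref{lem:trexpnegm} on $\mm'$ with $R = O(\log d)$ and $\kappa = O(\kscale)$ gives an $\eps$-multiplicative approximation to $\Tr\exp(-\mm')$ with probability $\ge 1-\delta$ in time $O(\tmv(\mk)\cdot\sqrt{\kscale}\cdot\log d\cdot\eps^{-2}\log\frac d\delta)$, which for constant $\eps$ is $O(\tmv(\mk)\cdot\sqrt{\kscale}\cdot\log^2\frac{d\kscale}{\delta})$, as claimed. The only real content here is spotting the trace identity; after that the proof is a black-box application, with no genuine obstacle (one just checks the spectral hypotheses transfer, which is immediate).

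If one instead wishes to stay within the ``implicit access'' framework of this section (mirroring Lemma~\ref{lem:sqrtlmin}), the alternative is to follow the proof of Lemma~\ref{lem:trexpnegm} verbatim: expand $\Tr\exp(-\mm) = \sum_{j\in[d]} \norm{\Brack{\exp(-\half\mm)}_{j:}}_2^2$, apply Johnson--Lindenstrauss (Fact~\ref{fact:jl}) with $k = O(\eps^{-2}\log\frac d\delta)$ rows to reduce to estimating $\sum_{\ell\in[k]} \norm{\exp(-\half\mm)\mq_{\ell:}}_2^2$, replace $\exp(-\half\mm)$ by the degree-$O(\sqrt{\kscale}\log d)$ polynomial $\mpack$ of $\mm = \ma\mw\ma$ supplied by Corollary~\ref{cor:approxexp} (with $\mm \preceq O(\kscale\log d)\id$ and additive target $\exp(-R)$, $R = O(\log d)$), and compute each $\mpack\mq_{\ell:}$ with the implicit-application primitive Lemma~\ref{lem:applyana} (applicable since $\mpack$ is a bounded-degree polynomial in $\ma\mw\ma$ and the rescaled rows $\sqrt{k}\,\mq_{\ell:}$ are independent uniformly random unit vectors). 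Summing $k = O(\log\frac d\delta)$ per-row costs of $O(\tmv(\mk)\cdot\sqrt{\kscale}\cdot\log\frac{d\kscale}{\delta})$ recovers the stated runtime. In this variant the main obstacle is error bookkeeping: Lemma~\ref{lem:applyana} returns each $\mpack\mq_{\ell:}$ only to \emph{relative} error $\eps'$, and one must drive $\eps'$ down to $\poly(1/d)$ (affordable, since the runtime depends on $\log\frac1{\eps'}$ only) so that, combined with the polynomial-approximation slack $\exp(-R)$ and the lower bound $\Tr\exp(-\mm) \ge \exp(-\lmin(\mm)) = d^{-O(1)}$, the accumulated additive error translates into a true $\eps$-multiplicative guarantee — exactly the additive-to-multiplicative conversion carried out in the proof of Lemma~\ref{lem:trexpnegm}.
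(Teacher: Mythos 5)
Your proposal is correct, and your first route is genuinely different from the paper's. The paper proves this lemma entirely inside the implicit-access framework: it invokes the reduction from the proof of Lemma~\ref{lem:trexpnegm} to get down to estimating $k = O(\log\frac d\delta)$ quadratic forms $u^\top \mpack u$ for random unit vectors $u$, where $\mpack$ is the degree-$O(\sqrt{\kscale}\log d)$ polynomial in $\mm = \ma\mw\ma$ from Corollary~\ref{cor:approxexp}, and then evaluates each $\mpack u$ via Lemma~\ref{lem:applyana} (i.e., exactly your second route; your only slight deviation there is insisting on $\poly(1/d)$ relative accuracy from Lemma~\ref{lem:applyana}, whereas the paper gets away with constant relative accuracy $\eps/4$ because $\norm{\mpack}_2 \le \tfrac43\exp(-R)$ already has the right scale — but being over-cautious only changes the argument of a logarithm, so nothing breaks). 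Your first route instead observes that $\mm = BB^\top$ and $\mw^{\half}\mk\mw^{\half} = B^\top B$ for the square matrix $B = \ma\mw^{\half}$ share their full spectrum, so $\Tr\exp(-\mm) = \Tr\exp(-\mw^{\half}\mk\mw^{\half})$, and the latter matrix is \emph{explicitly} accessible in $O(\tmv(\mk))$ time per product; the spectral hypotheses transfer verbatim, so Lemma~\ref{lem:trexpnegm} applies as a black box with $R = O(\log d)$, $\kappa = O(\kscale)$, giving the stated runtime. This is cleaner: it bypasses Lemma~\ref{lem:applyana} and its $\poly(d,\delta^{-1})$ random-vector norm lower bound altogether. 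The trade-off is that the trick exploits that the trace is a similarity-invariant of the spectrum; it would not carry over to the companion Lemma~\ref{lem:sqrtinprod}, which needs the non-spectral quantities $\inprod{a_ia_i^\top}{\exp(-\mm)}$, which is presumably why the paper treats all the implicit-access subroutines uniformly through Lemma~\ref{lem:applyana}.
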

\begin{proof}
	The proof of Lemma~\ref{lem:trexpnegm} shows it suffices to compute $k = O(\log \frac d \delta)$ times, an $\frac \eps 3\exp(-R)$-additive approximation to $u^\top \mpack u$ where $R = O(\log d)$, for uniformly random unit $u$ and $\mpack$, a degree-$\Delta = O(\sqrt{\kscale} \log d)$-polynomial in $\mm$ with $\norm{\mpack}_2 \le \norm{\exp(-\mm)}_2 + \frac \eps 3 \exp(-R) \le \frac 4 3 \exp(-R)$. Applying Lemma~\ref{lem:applyana} with $\eps \gets \frac \eps 4$ to compute $w$, an approximation to $\mpack u$, the approximation follows:
	\begin{align*}
	\Abs{\inprod{w}{u} - \inprod{\mpack u}{u}} \le \norm{w - \mpack u}_2 \le \frac \eps 4 \norm{\mpack}_2 \le \frac \eps 3 \exp(-R).
	\end{align*}
	The runtime follows from the cost of applying Lemma~\ref{lem:applyana} to all $k$ random unit vectors.
\end{proof}

\begin{lemma}\label{lem:sqrtinprod}
	Given $\delta \in (0, 1)$, constant $\eps > 0$, $\mk \in \PD^d$ such that $\ma \defeq \mk^\half$ and $\kappa(\mk) \le \kscale$, and diagonal $\mw \in \PSD^d$ such that $\mm \defeq \ma \mw \ma \preceq O(\kscale \log d)\id$, we can compute $(\eps, O(\frac{1}{\kscale d}))$-approximations to all 
	\[\Brace{\inprod{a_ia_i^\top}{\exp(-\mm)}}_{i \in [d]},\]
	with probability $\ge 1 - \delta$ in time
	\[O\Par{\tmv(\mk) \cdot \sqrt{\kscale} \cdot \log^2 \frac {d\kscale} {\delta}}.\]
\end{lemma}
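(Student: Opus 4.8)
The plan is to mirror the proof of Lemma~\ref{lem:line5left}, but to replace every explicit matrix-vector product through $\exp(-\tfrac12\mm)$ with the implicit-access primitive from Lemma~\ref{lem:applyana}. First I would recall that $\mm = \ma\mw\ma$, so $a_i a_i^\top$ has the rank-one factorization $\mv_i = a_i$, and hence $\inprod{a_i a_i^\top}{\exp(-\mm)} = a_i^\top \exp(-\mm) a_i = \norm{\exp(-\tfrac12 \mm) a_i}_2^2$. As in Lemma~\ref{lem:line5left}, I would sample a Johnson–Lindenstrauss matrix $\mq \in \R^{k \times d}$ with $k = O(\eps^{-2}\log\frac{d}{\delta})$ (Fact~\ref{fact:jl}) so that with probability $\ge 1 - \tfrac\delta2$, simultaneously for all $i \in [d]$, $\norm{\mq \exp(-\tfrac12\mm)a_i}_2^2$ is an $\tfrac\eps3$-multiplicative approximation of $\norm{\exp(-\tfrac12\mm)a_i}_2^2$. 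Writing $\mq\exp(-\tfrac12\mm) = (\exp(-\tfrac12\mm)\mq^\top)^\top$, the task reduces to: for each of the $k$ rows $q_\ell$ of $\sqrt k\,\mq$ (a uniformly random unit vector), compute $\exp(-\tfrac12\mm)q_\ell$ to sufficient accuracy, then take inner products with each $a_i$.

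The key quantitative step is to control how much accuracy is needed in the implicit application. Since $\lmin(\mm)$ may be large here (we do not assume $\lmin(\mm) = O(\log d)$, unlike Lemma~\ref{lem:sqrttrexp}), I cannot normalize by $\Tr\exp(-\mm)$; instead the target is an \emph{additive} $O(\tfrac1{\kscale d})$ error on each $\inprod{a_i a_i^\top}{\exp(-\mm)}$, so it suffices to approximate each $\norm{\exp(-\tfrac12\mm)a_i}_2^2$ additively to $O(\tfrac1{\kscale d})$ (after the multiplicative JL step, which only helps since the quantities are nonnegative). Using Corollary~\ref{cor:approxexp}, I would replace $\exp(-\tfrac12\mm)$ by $\mpack = p(\mm)$ for a degree-$\Delta = O(\sqrt{\kscale}\log d)$ polynomial $p$ with $\norm{p(\mm) - \exp(-\tfrac12\mm)}_2 \le \exp(-R)$ for $R$ a suitable $O(\log d)$ (chosen so $\exp(-R) \le O(\tfrac1{\kscale d})$ up to the constants in the statement), noting $\mm \preceq O(\kscale\log d)\id$ so the degree bound applies; and then I would bound, for unit $q$,
\[
\Abs{\norm{\exp(-\tfrac12\mm)q}_2^2 - \norm{\mpack q}_2^2} \le 2\norm{\exp(-\tfrac12\mm)q}_2\norm{(\mpack - \exp(-\tfrac12\mm))q}_2 + \norm{(\mpack-\exp(-\tfrac12\mm))q}_2^2 \le 3\exp(-R),
\]
using $\exp(-\tfrac12\mm) \preceq \id$. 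To actually compute $\norm{\mpack q}_2^2$ only implicitly, since $\mpack = p(\ma\mw\ma)$ is a degree-$\Delta$ polynomial in $\ma\mw\ma$, Lemma~\ref{lem:applyana} produces $w_\ell$ with $\norm{w_\ell - \mpack q_\ell}_2 \le \eps'\norm{\mpack q_\ell}_2$ in time $O(\tmv(\mk)\cdot(\Delta + \sqrt{\kscale}\log\frac{d\kscale}{\delta\eps'}))$; I then form $\inprod{w_\ell}{a_i}$ for each $i$, and the inner-product error is at most $\norm{w_\ell - \mpack q_\ell}_2\norm{a_i}_2 \le \eps' \norm{\mpack}_2 \norm{a_i}_2$, which (with $\norm{\mpack}_2 \le 2$, $\norm{a_i}_2^2 \le \lmax(\mk) \le \kscale$ after the standard normalization) is $\le O(\tfrac1{\kscale d})$ once $\eps'$ is set polynomially small in $d,\kscale$; propagating this additive error through the squaring $\inprod{w_\ell}{a_i}^2$ vs.\ $\inprod{\mpack q_\ell}{a_i}^2$ costs only another constant factor since $\inprod{\mpack q_\ell}{a_i} \le \norm{a_i}_2 \le \sqrt{\kscale}$. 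Summing the per-$\ell$ contributions over $k = O(\log\frac d\delta)$ samples (after rescaling $a_i \gets \sqrt2 a_i$ exactly as in Lemma~\ref{lem:line5left} to split the multiplicative and additive roles) gives the claimed $(\eps, O(\tfrac1{\kscale d}))$-approximation.

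The runtime tally is: $k = O(\log\frac d\delta)$ calls to Lemma~\ref{lem:applyana}, each costing $O(\tmv(\mk)\cdot(\Delta + \sqrt{\kscale}\log\frac{d\kscale}{\delta}))$ with $\Delta = O(\sqrt{\kscale}\log d)$, plus $O(dk)$ extra work for the $\inprod{w_\ell}{a_i}$'s which is dominated since $\tmv(\mk) = \Omega(d)$; together this is $O(\tmv(\mk)\cdot\sqrt{\kscale}\cdot\log^2\frac{d\kscale}{\delta})$ as claimed (treating $\eps$ as a constant, and folding $\log\frac1\eps$ into constants), after a union bound over the JL event and the $k$ invocations of Lemma~\ref{lem:applyana} to get total failure probability $\le\delta$. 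I expect the main obstacle to be bookkeeping the error budget carefully: specifically, verifying that the additive $O(\tfrac1{\kscale d})$ target survives all three sources of error (JL distortion, polynomial-approximation truncation $\exp(-R)$, and the implicit-solve relative error $\eps'$) with the right choice of $R$ and $\eps'$ while keeping $R = O(\log d)$ and $\eps' = \text{poly}(d,\kscale)^{-1}$ so that the degree stays $O(\sqrt{\kscale}\log d)$ and the runtime stays within budget — this is where one must be most attentive, since unlike in Lemma~\ref{lem:sqrttrexp} there is no normalization by $\Tr\exp(-\mm)$ to rescale away a large $\lmin(\mm)$.
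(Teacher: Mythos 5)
Your high-level reduction (JL sketch, then Corollary~\ref{cor:approxexp} to replace $\exp(-\half\mm)$ by a degree-$O(\sqrt{\kscale}\log(d\kscale))$ polynomial $\mpack$, then implicit application to the random unit vectors) matches the paper's, and your bookkeeping for the three error sources (JL distortion, truncation $\exp(-R)$, relative error of the implicit solve) is sound. But there is a genuine gap at the final step: after obtaining $w_\ell \approx \mpack q_\ell$ you propose to ``form $\inprod{w_\ell}{a_i}$ for each $i$'' at a cost of ``$O(dk)$ extra work.'' The vectors $a_i$ are the rows of $\ma = \mk^{\half}$, which is precisely the object you do not have explicit access to in this section; computing $\inprod{w_\ell}{a_i}$ for all $i \in [d]$ is the matrix-vector product $\ma w_\ell = \mk^{\half} w_\ell$, which can only be carried out approximately (via Corollary~\ref{cor:implsqrt}, at cost $O(\tmv(\mk)\cdot\sqrt{\kscale}\log\frac{d\kscale}{\eps''})$ per $\ell$) and contributes its own additive error $\eps''\norm{\ma}_2\norm{w_\ell}_2$ that must be folded into the $O(\frac{1}{\kscale d})$ budget. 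As written, your plan silently assumes explicit access to $\ma$ at exactly the point where the implicit-access difficulty of this subsection bites, and the $O(dk)$ cost claim is not correct even granting explicit $\ma$.

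The paper avoids this second implicit application by restructuring the quantity before approximating anything: since $a_i = \ma e_i$ and $\mpack = \ma\mn\ma$ with $\mn\mk$ an explicit polynomial in $\mw$ and $\mk$, one has $\inprod{q_\ell}{\mpack a_i} = q_\ell^\top\ma\Par{\mn\mk e_i}$, so the only implicit operation is $\ma q_\ell$ (handled by Lemma~\ref{lem:applymaprod}); applying $\mk\mn$ to the resulting vector and reading off coordinates is then fully explicit. Your route via Lemma~\ref{lem:applyana} is repairable --- insert one further call to Corollary~\ref{cor:implsqrt} to approximate $\ma w_\ell$ to relative accuracy $\textup{poly}(\frac{1}{d\kscale})$ before extracting coordinates, which stays within the runtime budget --- but that step and its error analysis are absent from your proposal and are not a formality.
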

\begin{proof}
	The proof of Lemma~\ref{lem:line5left} implies it suffices to compute $k = O(\log \frac d \delta)$ times, for each $i \in [d]$, the quantity $\inprod{u}{\mpack a_i}$ to multiplicative error $\frac \eps 2$, for uniformly random unit vector $u$ and $\mpack$, a degree-$\Delta = O(\sqrt{\kscale} \log (\kscale d))$-polynomial in $\mm$. Next, note that since $a_i = \ma e_i$ and $\mpack = \ma \mn \ma$ for $\mn$ an explicit degree-$O(\Delta)$ polynomial in $\mk$ and $\mw$, we have $\inprod{u}{\mpack a_i} = u^\top \ma \Par{\mn \mk e_i}$. We can approximate this by some $\inprod{w}{\mn \mk e_i}$ via Lemma~\ref{lem:applymaprod} to the desired accuracy. The runtime comes from applying Lemma~\ref{lem:applymaprod} $k$ times, multiplying each of the resulting vectors $w$ by $\mk \mn$ and stacking them to form a $k \times d$ matrix $\tmq$, and then computing all $\|\tmq e_i\|_2$ for $i \in [d]$.
\end{proof}

\paragraph{Implementing a packing oracle implicitly.} Finally, we discuss implementation of Line 7 of Algorithm~\ref{alg:decidempc}. The requirement of Line 7 is a multiplicative approximation (and a witnessing reweighting) to the optimization problem
\[\max_{\substack{\sum_{i \in [d]} w_i \mm_i \preceq \id \\ w \in \R^d_{\ge 0}}} v^\top w.\]
Here, $v$ is explicitly given by an implementation of Line 6 of the algorithm, but we do not have $\{\mm_i\}_{i \in [d]}$ explicitly. To implement this step implicitly, we recall the approximation requirements of the solver of Proposition~\ref{prop:optv}, as stated in \cite{Jambulapati0T20}. We remark that the approximation tolerance is stated for the decision problem tester of \cite{Jambulapati0T20} (Proposition~\ref{prop:decisionpack}); once the tester is implicitly implemented, the same reduction as described in Appendix~\ref{app:solversdeferred} yields an analog to Proposition~\ref{prop:optv}.

\begin{corollary}[Approximation tolerance of Proposition~\ref{prop:optv}, Theorem 5, \cite{Jambulapati0T20}]\label{cor:approxpack}
	Let $\eps > 0$ be a fixed constant. The runtime of Proposition~\ref{prop:optv} is due to $T = O(\log(\frac d \delta)\log d \cdot \log\log\frac{\opt_+}{\opt_-})$ iterations, each of which requires $O(1)$ vector operations and $O(\eps)$-multiplicative approximations to
	\begin{equation}\label{eq:approxpack}\Tr\Par{\mm^p},\; \Brace{\inprod{\ma_i}{\mm^{p - 1}}}_{i \in [d]} \text{ for } \mm \defeq \sum_{i \in [d]} w_i \mm_i \text{ for an explicitly given } w \in \R^d_{\ge 0},\end{equation}
	where $p = O(\log d) \in \N$ is odd, and $S\id \preceq \mm \preceq R\id$, for $R = O(\log d)$ and $S = \textup{poly}(\frac 1 {nd}, \kappa((\sum_{i \in [n]} \mm_i))^{-1})$.
\end{corollary}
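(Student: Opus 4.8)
The plan is to open up the packing SDP solver that underlies Proposition~\ref{prop:optv} and read off exactly which numerical primitives each of its iterations invokes, then check that its convergence analysis tolerates multiplicative error in those primitives. Recall from the proof of Proposition~\ref{prop:optv} (Appendix~\ref{app:solversdeferred}) that $\algpack$ is obtained by wrapping the \emph{testing}-variant packing SDP solver of \cite{Jambulapati0T20} in an outer binary search over the objective scale: the binary search contributes the $\log\log\tfrac{\opt_+}{\opt_-}$ and $\log\tfrac1\eps$ factors, while each call to the tester runs a width-independent mirror-descent method whose iteration count, at the constant $\eps$ of the corollary, is $O(\log(\tfrac d\delta)\log d)$ (the $\log d$ being the clean step count and the $\log\tfrac d\delta$ coming from the sampling-based implementation of each step at failure probability $\mathrm{poly}(\delta/d)$). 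Multiplying these gives the stated bound on $T$; and since the binary search merely reruns the tester, it does not alter the per-iteration oracle.

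Next I would identify the per-iteration oracle. The tester maintains a nonnegative weight vector $w\in\R^d_{\ge 0}$ and updates it using the gradient of a smoothed potential of the primal slack $\mm=\sum_{i\in[d]}w_i\mm_i$; in the construction of \cite{Jambulapati0T20} this potential is, up to normalization, $\tfrac1p\Tr(\mm^p)$ for the smallest odd integer $p=\Theta(\log d)$, chosen so that $\Tr(\mm^p)^{1/p}$ is a $(1+O(1/\log d))$-multiplicative proxy for $\lmax(\mm)$. Its $i$-th partial derivative is $\inprod{\mm_i}{\mm^{p-1}}$, and the step-size/projection subroutine additionally reads the scalar $\Tr(\mm^p)$ (the oddness of $p$ is exactly what makes the ``gain matrix'' $\mm^{p-1}\succeq\mzero$). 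Hence each iteration costs $O(1)$ vector operations plus evaluations of $\Tr(\mm^p)$ and $\{\inprod{\mm_i}{\mm^{p-1}}\}_{i\in[d]}$, which is precisely the oracle \eqref{eq:approxpack} (with the corollary's $\ma_i$ identified with $\mm_i$). If the cited solver is instead phrased through a truncated matrix exponential, I would first argue its truncation degree is $O(\log d)$ and that a degree-$O(\log d)$ truncated exponential is a nonnegative combination of $\mm^0,\dots,\mm^{O(\log d)}$, reducing it to the same $\Tr(\mm^p),\inprod{\mm_i}{\mm^{p-1}}$ primitives.

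For the error tolerance I would invoke the robustness already built into \cite{Jambulapati0T20}: Theorem~5 there is stated for \emph{approximate} gradient oracles, and an $O(\eps)$ relative error in $\Tr(\mm^p)$ and in each $\inprod{\mm_i}{\mm^{p-1}}$ perturbs the mirror-descent guarantee by only $O(\eps)$, which is absorbed into the $1+O(\eps)$ approximation the tester already certifies; this tolerance passes through the binary search unchanged. For the spectral range $S\id\preceq\mm\preceq R\id$: the upper bound $R=O(\log d)$ holds because the tester maintains $\mm\preceq(1+o(1))\id$ up to the smoothing scale, so $\|\mm\|_2=O(1)$ and the natural working scale of the $\Tr(\mm^p)$ potential is $\Theta(\log d)$; the lower bound holds because the method is initialized with, and never drives below, uniform weights of size $\mathrm{poly}(\tfrac1{nd})$ on every coordinate, whence $\mm\succeq\mathrm{poly}(\tfrac1{nd})\sum_i\mm_i\succeq\mathrm{poly}(\tfrac1{nd})\lmin(\sum_i\mm_i)\id\succeq S\id$ throughout, with $S=\mathrm{poly}(\tfrac1{nd},\kappa(\sum_i\mm_i)^{-1})$ using $\lmax(\sum_i\mm_i)\ge 1$. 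This lower bound is exactly what later makes the additive-to-multiplicative conversions in Lemmas~\ref{lem:sqrttrexp} and~\ref{lem:sqrtinprod} affordable.

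The main obstacle I expect is not conceptual but bookkeeping: precisely matching the regularizer actually used in \cite{Jambulapati0T20} to the $\mm^p,\mm^{p-1}$ form stated here, and threading the $O(\eps)$ error budget and the $S,R$ bounds through the tester-to-optimizer reduction of Appendix~\ref{app:solversdeferred} without any loss. This amounts to re-deriving constants from the cited theorem rather than proving anything genuinely new.
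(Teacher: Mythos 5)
Your proposal is correct and follows essentially the same route as the paper, which treats this corollary as a direct reading-off of the structure of Theorem 5 of \cite{Jambulapati0T20} combined with the binary-search reduction from the tester to the optimizer given in Appendix~\ref{app:solversdeferred}: the tester's $\Tr(\mm^p)$-potential iterations supply the oracle \eqref{eq:approxpack} and the $O(\log(\tfrac d\delta)\log d)$ count, the binary search supplies the $\log\log\tfrac{\opt_+}{\opt_-}$ factor, and the lower bound $S$ follows (exactly as you argue) from the solver's initialization at a bounded scaling of $\sum_{i}\mm_i$ together with Loewner monotonicity of the iterates.
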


We remark that the lower bound $S$ comes from the fact that the initial matrix of the \cite{Jambulapati0T20} solver is a bounded scaling of $\sum_{i \in [n]} \mm_i$, and the iterate matrices are monotone in Loewner order. We now demonstrate how to use Lemmas~\ref{lem:applyana} and~\ref{lem:applymaprod} to approximate all quantities in \eqref{eq:approxpack}. Throughout the following discussion, we specialize to the case where each $\mm_i = a_ia_i^\top$, so $\mm$ in \eqref{eq:approxpack} will always have the form $\mm = \ma \mw \ma$ for diagonal $\mw \in \PSD^d$, and $S = \text{poly}((d\kscale)^{-1})$.

\begin{lemma}\label{lem:packdenom}
	Given $\delta \in (0, 1)$, constant $\eps > 0$, $\mk \in \PD^d$ such that $\ma \defeq \mk^\half$ and $\kappa(\mk) \le \kscale$, and diagonal $\mw \in \PSD^d$ such that $S \id \preceq \mm \defeq \ma \mw \ma \preceq O(\log d)\id$ where $S = \textup{poly}((d\kscale)^{-1})$, we can compute an $\eps$-multiplicative approximation to $\Tr (\mm^p)$ for integer $p$ in time
	\[O\Par{\tmv(\mk) \cdot \Par{p + \sqrt{\kscale}\log\frac{d\kscale}{\delta}} \cdot \log \frac d \delta}.\]
\end{lemma}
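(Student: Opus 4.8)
The plan is to follow the proof of Lemma~\ref{lem:trexpnegm}, substituting the $p$-th power for the negative exponential and, exactly as in Lemma~\ref{lem:applyana}, arranging the computation so that the inaccessible square root $\ma = \mk^\half$ is applied only a constant number of times: its polynomial approximation (via Corollary~\ref{cor:implsqrt}) costs a $\sqrt{\kscale}$ factor per application, so applying it $\Theta(p)$ times would inflate the runtime, which is precisely what this lemma is designed to avoid.

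First I would record the algebraic reduction that makes this possible. Since $\mm = \ma\mw\ma$ with $\ma^2 = \mk$, cyclic invariance of the trace gives $\Tr(\mm^p) = \Tr\Par{(\mk\mw)^p}$, and setting $q \defeq \tfrac{p-1}{2}$ (an integer, as $p$ is odd) a routine manipulation yields the Gram-form identity
\[\Tr(\mm^p) = \sum_{j \in [d]} \norm{\Brack{\ma\,(\mw\mk)^{q}\,\mw^\half}_{j:}}_2^2.\]
The crucial feature is that the matrix $\ma\,(\mw\mk)^{q}\,\mw^\half$ factors as a single copy of $\ma$ times the operator $(\mw\mk)^{q}\mw^\half$, which is an explicit product of $q \le p$ copies of $\mk$ with the diagonal matrices $\mw$ and $\mw^\half$, and hence applies to any vector in $O(p\cdot\tmv(\mk))$ time.

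Next I would apply Johnson--Lindenstrauss exactly as in the proof of Lemma~\ref{lem:trexpnegm}: sampling $\mq \in \R^{k\times d}$ as in Fact~\ref{fact:jl} with $k = O(\log\tfrac d\delta)$ (for the constant $\eps$, folding a union bound over the $d$ rows into constants) preserves every row norm to a $1\pm\tfrac\eps3$ factor with probability $\ge 1 - \tfrac\delta2$, and the same trace rewriting used in Lemma~\ref{lem:trexpnegm} gives $\sum_{j\in[d]}\norm{\mq \Brack{\ma(\mw\mk)^q\mw^\half}_{j:}}_2^2 = \sum_{\ell\in[k]}\norm{\ma(\mw\mk)^q\mw^\half \mq_{\ell:}}_2^2$. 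This reduces the task to approximating the $k$ quantities $\norm{\ma v_\ell}_2^2$, where $v_\ell \defeq (\mw\mk)^q\mw^\half\mq_{\ell:}$ is computed \emph{exactly} in $O(p\cdot\tmv(\mk))$ time. For each $\ell$ I would then invoke Corollary~\ref{cor:implsqrt} with the matrix $\mk$ (using only the hypothesis $\kappa(\mk)\le\kscale$; the spectral bounds on $\mm$ are not needed for this estimate, only carried along from the interface of Corollary~\ref{cor:approxpack}), accuracy a small constant $\eps' = \Theta(\eps)$, and failure probability $\tfrac{\delta}{2k}$, to obtain $u_\ell$ with $\norm{u_\ell - \mk^\half v_\ell}_2 \le \eps'\norm{\mk^\half v_\ell}_2$ in time $O\Par{\tmv(\mk)\cdot\sqrt{\kscale}\log\tfrac{d\kscale}{\delta}}$, and output $\sum_{\ell\in[k]}\norm{u_\ell}_2^2$.

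The error analysis is then immediate and, unlike in Lemma~\ref{lem:trexpnegm}, requires no lower bound on $\Tr(\mm^p)$: since $\norm{u_\ell}_2^2 = (1\pm O(\eps'))\norm{\ma v_\ell}_2^2$ is a purely multiplicative estimate, the output is a $(1\pm O(\eps'))(1\pm\tfrac\eps3)$-multiplicative approximation to $\Tr(\mm^p)$, which is $\eps$-multiplicative once $\eps'$ is a suitable constant fraction of $\eps$; a union bound over the JL event and the $k$ calls to Corollary~\ref{cor:implsqrt} bounds the failure probability by $\delta$. Summing the per-$\ell$ cost $O\Par{\tmv(\mk)\Par{p + \sqrt{\kscale}\log\tfrac{d\kscale}{\delta}}}$ over the $k = O(\log\tfrac d\delta)$ terms gives the claimed runtime. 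I expect the only genuine subtlety to be the first step: one must present $\mm^p$ as a Gram matrix whose "tall" factor has the form (explicit operator)$\,\times\,\ma$, so that $\ma$ enters once; the naive route of writing $\mm^p = (\mm^{p/2})^\top\mm^{p/2}$ and expanding would apply $\ma$ roughly $p$ times and inflate the runtime by a $\sqrt{\kscale}$ factor.
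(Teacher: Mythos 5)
Your proof is correct, and it shares the paper's core skeleton (Johnson--Lindenstrauss sketching plus applying the explicit degree-$p$ part of the operator exactly in $\mk$ and $\mw$, so that the inaccessible $\ma = \mk^{\half}$ is applied only $O(1)$ times via Corollary~\ref{cor:implsqrt}), but the error bookkeeping is genuinely different. The paper routes everything through Lemma~\ref{lem:applyana}: it reduces to computing $u^\top \mm^p u$ for random unit $u$ to \emph{additive} accuracy $\frac{\eps}{N}S^p$, and must therefore invoke the hypothesis $S\id \preceq \mm$ (together with $\mm \preceq R\id$) to convert this additive guarantee into a multiplicative approximation of $\Tr(\mm^p)$, calling Lemma~\ref{lem:applyana} at accuracy $\eps' = \frac{\eps S^p}{NR^p}$. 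You instead exhibit the symmetric Gram factorization $\mm^p = \mb^\top\mb$ with $\mb = \ma(\mw\mk)^q\mw^\half$, $q = \frac{p-1}{2}$ (which checks out: $\mb\mb^\top = \ma(\mw\mk)^{p-1}\mw\ma = \mm^p$), so each sketched quantity is a squared norm $\norm{\ma v_\ell}_2^2$ and the relative-error guarantee of Corollary~\ref{cor:implsqrt} gives a purely multiplicative estimate. As you observe, this makes the lower bound $S\id\preceq\mm$ unnecessary, and it also keeps the accuracy parameter of Corollary~\ref{cor:implsqrt} at $\Theta(\eps)$ rather than $\textup{poly}((S/R)^p)$, so your route avoids the extra $O(p\log(d\kscale))$ factor that the paper's invocation of Lemma~\ref{lem:applyana} implicitly pays inside its $\log\frac{1}{\eps'}$ term; what the paper's route buys in exchange is that it reuses Lemma~\ref{lem:applyana} verbatim and runs in exact parallel with Lemmas~\ref{lem:sqrttrexp} and~\ref{lem:packnum}. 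The only caveat is your restriction to odd $p$ for $q$ to be an integer, which is harmless since Corollary~\ref{cor:approxpack} only ever requests odd $p$ (and even $p$ would be handled the same way via $\mm^{p/2} = \ma(\mw\mk)^{p/2-1}\mw\ma$).
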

\begin{proof}
	As in Lemma~\ref{lem:sqrttrexp}, it suffices to compute $k = O(\log \frac d \delta)$ times, an $\frac \eps N S^p$-additive approximation to $u^\top \mm^p u$, for uniformly random unit vector $u$ and $N = \text{poly}(d, \delta^{-1})$. By applying Lemma~\ref{lem:applyana} with accuracy $\eps' \gets \frac{\eps S^p}{N R^p}$ to obtain $w$, an approximation to $\mm^p u$, we have the desired
	\[\Abs{\inprod{u}{\mm^p u} - \inprod{u}{w}} \le \norm{\mm^p u - w}_2 \le \eps' \norm{\mm^p u}_2 \le \eps' R^p \le \frac{\eps}{N} S^p.\]
	The runtime follows from $k$ applications of Lemma~\ref{lem:applyana} to the specified accuracy level.
\end{proof}

\begin{lemma}\label{lem:packnum}
	Given $\delta \in (0, 1)$, constant $\eps > 0$, $\mk \in \PD^d$ such that $\ma \defeq \mk^\half$ and $\kappa(\mk) \le \kscale$, and diagonal $\mw \in \PSD^d$ such that $\mm \defeq \ma \mw \ma \preceq O(\log d)\id$, we can compute an $\eps$-multiplicative approximation to all
	\[\Brace{\inprod{a_ia_i^\top}{\mm^{p - 1}}}_{i \in [d]} \text{ where } \{a_i\}_{i \in [d]} \text{ are rows of } \ma,\]
	where $p$ is an odd integer, with probability $\ge 1 - \delta$ in time
	\[O\Par{\tmv(\mk) \cdot \Par{p + \sqrt{\kscale} \log \frac{d\kscale}{\delta}} \cdot \log\frac d \delta}.\]
\end{lemma}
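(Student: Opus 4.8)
The plan is to mirror the proof of Lemma~\ref{lem:sqrtinprod} (which follows Lemma~\ref{lem:line5left}), exploiting that $p$ is odd so that everything can be written in terms of \emph{exact} matrix powers rather than polynomial approximations of $\exp$; this is what lets us get a purely multiplicative guarantee. Write $q \defeq \frac{p-1}{2} \in \N$. Since $p$ is odd, $\inprod{a_ia_i^\top}{\mm^{p-1}} = a_i^\top \mm^{2q} a_i = \norm{\mm^q a_i}_2^2$. The key algebraic observation is that, because $\mm = \ma\mw\ma$ and $\ma^2 = \mk$, adjacent copies of $\ma$ collapse: for $q \ge 1$ one has $\mm^q = \ma\mw(\mk\mw)^{q-1}\ma$, and therefore $\mm^q a_i = \mm^q \ma e_i = \ma\mn e_i$ where $\mn \defeq \mw(\mk\mw)^{q-1}\mk$ (and $\mn \defeq \id$ when $q = 0$). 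Crucially $\mn$, and likewise $\mn^\top = \mk(\mw\mk)^{q-1}\mw$, can be applied to any vector in $O(p\cdot\tmv(\mk))$ time using only our explicit access to $\mk$ and the explicit diagonal $\mw$.

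Given this, I would sample a Johnson--Lindenstrauss matrix (Fact~\ref{fact:jl}) with $k = O(\eps^{-2}\log\frac d\delta) = O(\log\frac d\delta)$ rows $\{g_\ell\}_{\ell\in[k]}$ (unit vectors, up to the $\tfrac1{\sqrt k}$ scaling), so that with probability $\ge 1-\frac\delta2$, for every $i\in[d]$ simultaneously $\frac1k\sum_{\ell\in[k]}\inprod{g_\ell}{\mm^q a_i}^2$ is a $(1\pm\frac\eps3)$-multiplicative approximation of $\norm{\mm^q a_i}_2^2$ (the union bound is over the $d$ vectors $\mm^q a_i$). It then suffices to approximate each $\inprod{g_\ell}{\mm^q a_i} = \inprod{\ma g_\ell}{\mn e_i}$, which I would do by calling Corollary~\ref{cor:implsqrt} \emph{once per $\ell$} to compute $w_\ell$ with $\norm{w_\ell - \ma g_\ell}_2 \le \eps'\norm{\ma g_\ell}_2$ for $\eps' = \Theta(\eps/\sqrt{\kscale})$ (this is where the $\sqrt{\kscale}$ factor in the runtime enters), then computing $\mn^\top w_\ell$ for each $\ell$ and stacking these as the rows of a $k\times d$ matrix $\tmq$. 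The output for index $i$ is $\frac1k\norm{\tmq e_i}_2^2 = \frac1k\sum_\ell \inprod{w_\ell}{\mn e_i}^2$. The point is that the $k$ calls to Corollary~\ref{cor:implsqrt} and the $k$ applications of $\mn^\top$ are shared across all $d$ indices, so the total cost is $O\Par{k\cdot\tmv(\mk)\cdot\Par{p + \sqrt{\kscale}\log\tfrac{d\kscale}\delta}}$, matching the claimed bound; reading off the $d$ column norms of $\tmq$ costs only $O(kd) \le O(k\tmv(\mk))$.

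For correctness, the error from replacing $\ma g_\ell$ by $w_\ell$ in the bilinear form is at most $\eps'\norm{\ma g_\ell}_2\norm{\mn e_i}_2 \le \eps'\norm{\ma}_2\norm{\mn e_i}_2$, and Lemma~\ref{lem:opnormprod} applied to $\ma\mn e_i$ gives $\norm{\ma}_2\norm{\mn e_i}_2 \le \kappa(\ma)\norm{\ma\mn e_i}_2 = \sqrt{\kscale}\,\norm{\mm^q a_i}_2$; so with $\eps' = \Theta(\eps/\sqrt{\kscale})$ each bilinear form is approximated to additive error $O(\eps)\norm{\mm^q a_i}_2$, hence each square to additive error $O(\eps)\norm{\mm^q a_i}_2^2$ (using $|\inprod{g_\ell}{\mm^q a_i}| \le \norm{\mm^q a_i}_2$ and expanding $(a+\delta)^2$), and averaging over $\ell$ preserves this. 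Combined with the JL guarantee and adjusting constants, this yields the desired $\eps$-multiplicative approximation, and a union bound over the randomized events gives failure probability $\le\delta$. I expect the only genuinely delicate point to be the bookkeeping in this last step --- obtaining a purely multiplicative bound rather than the $(\eps,c)$-type bound of Lemma~\ref{lem:line5left} --- which goes through precisely because $\mm^q$ is applied exactly here and the single quantity $\norm{\mm^q a_i}_2$ controls every error term via Lemma~\ref{lem:opnormprod}; the collapsing identity, the JL union bound, and the runtime accounting are otherwise routine.
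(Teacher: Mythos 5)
Your proof is correct and follows essentially the same route as the paper's: both reduce, via the collapsing identity $\mm^{(p-1)/2}a_i = \ma\Par{\mn \mk e_i}$ (your $\ma \mn e_i$) together with $k = O(\log\frac d \delta)$ random projections, to estimating bilinear forms $u^\top \ma v$ with explicit $v$, which are handled by applying the approximate square root (Corollary~\ref{cor:implsqrt}) at accuracy $\Theta(\eps/\sqrt{\kscale})$ and converting the error via the conditioning bound of Lemma~\ref{lem:opnormprod} --- the paper merely packages this last step as Lemma~\ref{lem:applymaprod}. The batching into a $k \times d$ matrix $\tmq$ and the resulting runtime accounting are identical.
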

\begin{proof}
	First, observe that for all $i \in [d]$ it is the case that
	\[\inprod{a_ia_i^\top}{\mm^{p - 1}} = \Par{\ma e_i}^\top \mm^{p - 1} \Par{\ma e_i} \ge S^{p - 1} \norm{\ma}_{2}^2 \kscale^{-2}.\]
	Letting $r = \half(p - 1)$ and following Lemma~\ref{lem:sqrtinprod} and the above calculation, it suffices to show how to compute $k = O(\log \frac d \delta)$ times, for each $i \in [d]$, the quantity $\inprod{u}{\mm^{r} a_i}$ to multiplicative error $\frac \eps 2$, for uniformly random unit vector $u$ and $N = \text{poly}(d, \delta^{-1})$. As in Lemma~\ref{lem:sqrtinprod}, each such inner product is $u^\top \ma (\mn \mk e_i)$ for $\mn$ an explicit degree-$O(p)$ polynomial in $\mk$ and $\mw$. The runtime follows from applying Lemma~\ref{lem:applymaprod} $k$ times and following the runtime analysis of Lemma~\ref{lem:sqrtinprod}.
\end{proof}

\paragraph{Putting it all together.} Finally, we state our main result of this section, regarding rescaling well-conditioned matrices, by combining the pieces we have developed.

\begin{corollary}\label{cor:wellconditionedmk}
Given $\delta \in (0, 1)$, constant $\eps > 0$, $\mk \in \PD^d$ such that $\kappa(\mk) \le 3\ks$, and such that $\kso(\mk) = \ks$, we can compute diagonal $\mw \in \PSD^d$ such that $\kappa(\mw^\half \mk \mw^\half) \le (1 + \eps)\ks$ with probability $\ge 1 - \delta$ in time
\[O\Par{\tmv(\mk) \cdot (\ks)^{1.5} \cdot \log^6 \Par{\frac{d\ks}{\delta}}}.\]
\end{corollary}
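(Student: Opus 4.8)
The plan is to reproduce the argument of Theorem~\ref{thm:ksi} (the optimization wrapper of Theorem~\ref{thm:mainid} specialized to the dictionary of rank-one row outer products), but to run Algorithm~\ref{alg:decidempc} entirely \emph{implicitly} through $\ma \defeq \mk^\half$, which we never form, using the implicit-access lemmas Lemma~\ref{lem:sqrtlmin}, Lemma~\ref{lem:sqrttrexp}, Lemma~\ref{lem:sqrtinprod}, Lemma~\ref{lem:packdenom}, and Lemma~\ref{lem:packnum}. Concretely, after a scalar normalization so that $\min_i \mk_{ii} = 1$ (which changes neither $\kappa(\mk)$ nor $\kso(\mk)$), I would set $c_i \defeq 1/\mk_{ii} \in [\tfrac{1}{3\ks}, 1]$ — computable exactly since $\mk_{ii} = e_i^\top \mk e_i = \norm{a_i}_2^2$ and $\max_i \mk_{ii}/\min_i \mk_{ii} \le \kappa(\mk) \le 3\ks$ — and feed Algorithm~\ref{alg:decidempc} the dictionary $\mm_i \defeq c_i a_i a_i^\top$, which has $\lmax(\mm_i) = 1$ as required. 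Feasibility of \eqref{eq:structmpc} with $\mb = \id$ and target $\kappa = \ks$ follows from $\kso(\mk) = \ks$: there is a nonnegative diagonal $\diag{\nu}$ with $\id \preceq \ma\diag{\nu}\ma \preceq \ks\,\id$ (after a scalar rescaling of $\nu$), and since $\ma\diag{\nu}\ma = \sum_i \nu_i a_i a_i^\top = \sum_i (\nu_i/c_i)\mm_i$, the weights $w^\star_i \defeq \nu_i/c_i$ certify it. Conversely, any output $w \in \R^d_{\ge 0}$ with $\id \preceq \sum_i w_i \mm_i \preceq (1+\eps)\ks\,\id$ yields $\mw \defeq \diag{w_i c_i}$ with $\kappa(\mw^\half \mk \mw^\half) = \kappa(\ma\mw\ma) = \kappa(\sum_i w_i \mm_i) \le (1+\eps)\ks$, the desired rescaling.

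The observation making the implicit run possible is that every matrix Algorithm~\ref{alg:decidempc} and its packing subroutine touch is of the form $\ma\mw\ma$ for an \emph{explicit} nonnegative diagonal $\mw$: the matrix-multiplicative-weights exponent is $-\ms_{t+1} = \ma\Par{\eta\kappa\sum_{s\le t}\diag{[x_s]_i c_i}}\ma$, and the packing-oracle query matrices $\sum_i w_i \mm_i = \ma\diag{w_i c_i}\ma$ have the same form. Combined with the invariants from Lemma~\ref{lem:decidempccorrect} — namely $-\ms_t \preceq O(\kscale\log d)\,\id$ with $\kscale \defeq 3\ks \ge \kappa(\mk)$, $\lmin(-\ms_t) = O(\log d)$ whenever Line~13 has not yet triggered, and (via Corollary~\ref{cor:approxpack}, using $\kappa(\sum_i \mm_i) = \kappa(\ma\diag{c_i}\ma) \le \kappa(\mk)\cdot\tfrac{\max_i \mk_{ii}}{\min_i \mk_{ii}} \le (3\ks)^2$) the bounds $S\,\id \preceq \ma\mw\ma \preceq R\,\id$ with $R = O(\log d)$, $S = \poly((d\ks)^{-1})$ for the packing queries — these are exactly the hypotheses of the five implicit lemmas. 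Hence Line~13 is implemented by Lemma~\ref{lem:sqrtlmin}, the denominator of Line~6 by Lemma~\ref{lem:sqrttrexp} and its numerators by Lemma~\ref{lem:sqrtinprod}, and the packing oracle of Line~7 by plugging Lemma~\ref{lem:packdenom} and Lemma~\ref{lem:packnum} into the reduction of Corollary~\ref{cor:approxpack} (with $p = O(\log d)$, so the $p$-term is absorbed). Each of these costs $O(\tmv(\mk)\cdot\sqrt{\kscale}\cdot\textup{polylog}(d\ks/\delta)) = O(\tmv(\mk)\cdot\sqrt{\ks}\cdot\textup{polylog}(d\ks/\delta))$; scaling the outputs by $c_i \le 1$ and combining numerator/denominator estimates to meet the $(\tfrac\eps{10},\tfrac\eps{10\kappa n})$ tolerance of Line~6 are handled as in the proof of Lemma~\ref{lem:line6}, using that the additive tolerances in the implicit lemmas may be taken $\poly(1/d)$ at only logarithmic runtime cost.

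For the runtime, one call to Algorithm~\ref{alg:decidempc} with parameter $\kappa = O(\ks)$ runs $T = O(\kappa\log d/\eps^2) = O(\ks\log d)$ iterations, each costing $O(\tmv(\mk)\sqrt{\ks}\cdot\textup{polylog})$ by the above, for a per-call bound of $O(\tmv(\mk)\cdot(\ks)^{1.5}\cdot\textup{polylog}(d\ks/\delta))$. The optimization wrapper searches for the optimal target condition number exactly as in the proof of Theorem~\ref{thm:mainid}; here the promise $\kappa(\mk) \le 3\ks$ together with $\ks \le \kappa(\mk)$ localizes $\ks$ to a constant-factor range (and $\kappa(\mk) = O(\ks)$ can be estimated to a constant factor within budget, e.g.\ by the power method on $\mk$ for $\lmax$ and a Chebyshev/conjugate-gradient solve for $\mk^{-1}$ for $\lmin$), so the search adds only an $O(\log\tfrac1\eps)$ overhead. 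Taking $\eps = \Theta(1)$ and collecting the polylogarithmic factors (the $\log d$ from $T$, two more from each of the Line~6/7/13 lemmas, and a union-bound $\log$) yields total time $O(\tmv(\mk)\cdot(\ks)^{1.5}\cdot\log^6(d\ks/\delta))$; a union bound over the $O(\ks\,\textup{polylog})$ randomized subroutine calls, each run at failure probability $\poly(\delta/(d\ks))$, gives the claimed $1-\delta$ success probability.

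\emph{Main obstacle.} The mathematical substance is already in Lemmas~\ref{lem:sqrtlmin}--\ref{lem:packnum}; the delicate part of the write-up is the bookkeeping: verifying that every matrix arising inside Algorithm~\ref{alg:decidempc} and inside the Corollary~\ref{cor:approxpack} packing oracle genuinely has the form $\ma\mw\ma$ for explicit diagonal $\mw$ with the stated spectral bounds, and checking that — after the $c_i$-rescaling and the numerator/denominator combination — the $(\eps,\cdot)$-guarantees of those lemmas meet the exact tolerances that Line~6 of Algorithm~\ref{alg:decidempc} and \eqref{eq:approxpack} require, so that the correctness analyses of Lemma~\ref{lem:decidempccorrect} and Proposition~\ref{prop:optv} carry over verbatim.
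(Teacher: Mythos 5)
Your proposal is correct and follows essentially the same route as the paper: instantiate the Theorem~\ref{thm:mainid}/Theorem~\ref{thm:ksi} machinery with the dictionary $\{a_ia_i^\top\}_{i\in[d]}$ for $\ma = \mk^\half$, and implement Lines 6, 7, and 13 of Algorithm~\ref{alg:decidempc} implicitly via Lemmas~\ref{lem:sqrtlmin}--\ref{lem:packnum} together with Corollary~\ref{cor:approxpack}. Your additional bookkeeping (the exact $c_i = 1/\mk_{ii}$ normalization in place of the power-method rescaling, and the observation that $\ks \le \kappa(\mk) \le 3\ks$ confines the incremental search to a constant-factor range) is sound and, if anything, slightly more explicit than the paper's sketch.
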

\begin{proof}
The proof is essentially identical to the proof of Theorem~\ref{thm:ksi} by way of Theorem~\ref{thm:mainid}. In particular, we parameterize Theorem~\ref{thm:mainid} with $\mm_i = a_ia_i^\top$ where $\ma = \mk^\half$ is the positive definite square root of $\mk$ with rows $\{a_i\}_{i \in [d]}$. Then, running Algorithm~\ref{alg:decidempc} with an incremental search for the optimal $\ks$ yields an overhead of $\tO(\ks \log(d\ks))$. The cost of each iteration of Algorithm~\ref{alg:decidempc} follows by combining Lemmas~\ref{lem:sqrtlmin},~\ref{lem:sqrttrexp},~\ref{lem:sqrtinprod},~\ref{lem:packdenom},~\ref{lem:packnum}, and Corollary~\ref{cor:approxpack}.
\end{proof}

\subsubsection{Homotopy method} 

In this section, we use Corollary~\ref{cor:wellconditionedmk}, in conjunction with a homotopy method similar to that of Section~\ref{ssec:genb}, to obtain our overall algorithm for outer scaling. We state here three simple helper lemmas which follow almost identically from corresponding helper lemmas in Section~\ref{ssec:genb}; we include proofs of these statements in Appendix~\ref{app:solversdeferred} for completeness.

\begin{restatable}{lemma}{restateidcanthurt}\label{lem:idcanthurt}
	For any matrix $\mk \in \PD^d$ and $\lam \ge 0$, $\ksk(\mk + \lam \id) \le \ksk(\mk)$.
\end{restatable}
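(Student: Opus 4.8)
The plan is to work with the ``Loewner sandwiching'' reformulation of $\ksk$: for any $\mm \in \PD^d$,
\[\ksk(\mm) = \min\Brace{\kappa \ge 1 \;:\; \exists \text{ diagonal } \md \succ \mzero \text{ with } \md \preceq \mm \preceq \kappa\md}.\]
This follows from the correspondence $\md \leftrightarrow \mw^{-1}$ (up to a harmless positive scalar): given diagonal $\mw \succ \mzero$ with $\lmin(\mw^{\half}\mm\mw^{\half}) = \ell$ and $\lmax(\mw^{\half}\mm\mw^{\half}) = \ell\kappa$, the diagonal matrix $\md \defeq \ell\mw^{-1}$ satisfies $\md \preceq \mm \preceq \kappa\md$; conversely, $\md \preceq \mm \preceq \kappa\md$ with $\md \succ \mzero$ diagonal gives $\mw \defeq \md^{-1}$ with $\kappa(\mw^{\half}\mm\mw^{\half}) \le \kappa$. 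So it suffices to exhibit, for $\mk + \lam\id$, a positive diagonal certificate whose sandwiching ratio is at most $\ksk(\mk)$.

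Second, I would take an (essentially) optimal diagonal certificate $\md \succ \mzero$ for $\mk$, i.e.\ $\md \preceq \mk \preceq \kappa\md$ with $\kappa = \ksk(\mk)$, and simply perturb it to $\md + \lam\id$, which is again positive diagonal. The lower inequality $\md + \lam\id \preceq \mk + \lam\id$ is immediate from $\md \preceq \mk$; the upper inequality $\mk + \lam\id \preceq \kappa\md + \lam\id \preceq \kappa(\md + \lam\id)$ uses only $\kappa \ge 1$ and $\lam \ge 0$ to absorb $\lam\id \preceq \kappa\lam\id$. Hence $\md + \lam\id$ certifies $\ksk(\mk + \lam\id) \le \kappa = \ksk(\mk)$, which is exactly the claim.

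The computation here is entirely routine, so there is no real obstacle; the only point that needs a word of care is the attainment of the minimum defining $\ksk(\mk)$. For $\mk \in \PD^d$ this minimum is attained (the objective $\kappa(\mw^{\half}\mk\mw^{\half})$ is continuous on the positive diagonal cone and diverges as any diagonal entry of $\mw$ tends to $0$ or $\infty$), but if one prefers to avoid invoking this, one can instead pick $\md$ certifying ratio $\kappa + \eps'$ for arbitrary $\eps' > 0$, run the same perturbation argument to conclude $\ksk(\mk + \lam\id) \le \kappa + \eps'$, and let $\eps' \to 0$.
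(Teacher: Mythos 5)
Your proof is correct and follows essentially the same route as the paper's: both pass to the equivalent ``sandwich'' formulation $\md \preceq \mk \preceq \kappa \md$ with $\md$ diagonal (the paper's display $\mw^{-1} \preceq \mk \preceq \ksk(\mk)\mw^{-1}$) and then perturb the certificate to $\md + \lam\id$, absorbing $\lam\id \preceq \kappa\lam\id$ via $\kappa \ge 1$. The only cosmetic difference is that the paper first normalizes to $\lam = 1$ by scale invariance, and your remark about attainment of the minimum is a careful touch the paper glosses over.
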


\begin{restatable}{lemma}{restatelambounds}\label{lem:lambounds}
	Let $\mk \in \PD^d$. Then, for $\lam \ge \frac{1}{\eps}\lmax(\mk)$, $\kappa(\mk + \lam\id) \le 1 + \eps$. Moreover, given a diagonal $\mw \in \PSD^d$ such that $\kappa(\mw^\half(\mk + \lam \id)\mw^\half) \le \kscale$ for $0 \le \lam \le \frac {\eps \lmin(\mk)}{1 + \eps}$, $\kappa(\mw^\half \mk \mw^\half) \le (1 + \eps)\kscale$.
\end{restatable}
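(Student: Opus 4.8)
The plan is to prove the two claims separately, each by a short eigenvalue computation. For the first claim, write $\lambda_{\min}(\mk + \lam\id) = \lmin(\mk) + \lam$ and $\lambda_{\max}(\mk + \lam\id) = \lmax(\mk) + \lam$, so that
\[
\kappa(\mk + \lam\id) = \frac{\lmax(\mk) + \lam}{\lmin(\mk) + \lam} \le 1 + \frac{\lmax(\mk)}{\lam}.
\]
Plugging in $\lam \ge \frac{1}{\eps}\lmax(\mk)$ immediately gives $\kappa(\mk + \lam\id) \le 1 + \eps$, as desired. (One should note $\lmin(\mk) > 0$ since $\mk \in \PD^d$, so all quantities are well-defined.)

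For the second claim, the key observation is that $\mw^\half \mk \mw^\half$ and $\mw^\half(\mk + \lam\id)\mw^\half = \mw^\half \mk \mw^\half + \lam\mw$ differ by the PSD matrix $\lam\mw$. I would argue via the Loewner order: we have $\mw^\half \mk \mw^\half \preceq \mw^\half(\mk+\lam\id)\mw^\half$, so combined with the hypothesis $\lambda_{\max}(\mw^\half(\mk+\lam\id)\mw^\half) \le \kscale \cdot \lambda_{\min}(\mw^\half(\mk+\lam\id)\mw^\half)$ it suffices to lower bound $\lambda_{\min}(\mw^\half(\mk+\lam\id)\mw^\half)$ against $\lambda_{\min}(\mw^\half\mk\mw^\half)$ up to a $(1+\eps)$ factor, or equivalently to show $\lam\mw \preceq \eps\,\mw^\half\mk\mw^\half$. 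Since $\mw^\half\mk\mw^\half \succeq \lmin(\mk)\mw$ (because $\mk \succeq \lmin(\mk)\id$ and conjugation by $\mw^\half$ preserves the Loewner order), it suffices to have $\lam \le \eps\,\lmin(\mk)$; the slightly stronger hypothesis $\lam \le \frac{\eps\lmin(\mk)}{1+\eps}$ comfortably covers this. Then
\[
\lambda_{\max}(\mw^\half\mk\mw^\half) \le \lambda_{\max}(\mw^\half(\mk+\lam\id)\mw^\half) \le \kscale\,\lambda_{\min}(\mw^\half(\mk+\lam\id)\mw^\half) \le \kscale(1+\eps)\lambda_{\min}(\mw^\half\mk\mw^\half),
\]
which gives $\kappa(\mw^\half\mk\mw^\half) \le (1+\eps)\kscale$.

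I do not anticipate a genuine obstacle here; the only point requiring minor care is handling the Loewner-order manipulations with the conjugation by $\mw^\half$ when $\mw$ is only PSD (possibly singular), in which case one should restrict attention to the image of $\mw$, but since $\mk \in \PD^d$ this does not affect condition numbers of the rescaled matrices. The bookkeeping of which direction each inequality goes — ensuring the additive perturbation only increases eigenvalues — is the thing to get right.
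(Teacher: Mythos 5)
Your proposal is correct and follows essentially the same route as the paper: the first claim via the bound $\kappa(\mk+\lam\id) \le 1 + \lmax(\mk)/\lam$, and the second via the Loewner-order comparisons $\mw^\half\mk\mw^\half \preceq \mw^\half(\mk+\lam\id)\mw^\half \preceq (1+\eps)\mw^\half\mk\mw^\half$ (the paper phrases the latter as $\mk \succeq \frac{1}{1+\eps}(\mk+\lam\id)$, which is your $\lam\id \preceq \eps\mk$ condition before conjugation). Your observation that $\lam \le \eps\lmin(\mk)$ already suffices, so the stated hypothesis is slightly stronger than needed, is also accurate.
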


\begin{restatable}{lemma}{restatebetweenphase}\label{lem:betweenphase}
	Let $\mk \in \PD^d$, and let $\mw \in \PSD^d$ be diagonal. Then for any $\lam > 0$,
	\[\kappa\Par{\mw^\half \Par{\mk + \lam \id}\mw^\half} \le 2 \kappa\Par{\mw^\half \Par{\mk +\frac \lam 2 \id}\mw^\half}.\]
\end{restatable}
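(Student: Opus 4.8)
The plan is to sandwich $\mk + \lam\id$ between $\mk + \tfrac\lam2\id$ and $2(\mk + \tfrac\lam2\id)$ in Loewner order, then conjugate by $\mw^\half$ and read off the condition numbers. Concretely, first I would observe the elementary two-sided bound
\[
\mk + \tfrac\lam2\id \preceq \mk + \lam\id \preceq 2\Par{\mk + \tfrac\lam2\id},
\]
where the left inequality is $\tfrac\lam2\id \preceq \lam\id$ added to $\mk$, and the right inequality follows from writing $\mk + \lam\id = \Par{\mk + \tfrac\lam2\id} + \tfrac\lam2\id$ and using $\tfrac\lam2\id \preceq \mk + \tfrac\lam2\id$ (which holds since $\mk \succeq \mzero$). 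Conjugation by $\mw^\half$ preserves the Loewner order, so setting $\mn_\mu \defeq \mw^\half\Par{\mk + \mu\id}\mw^\half$ we obtain $\mn_{\lam/2} \preceq \mn_\lam \preceq 2\mn_{\lam/2}$.

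It then remains to translate this nesting into a bound on condition numbers. Since $\lmax(\cdot)$ and $\lmin(\cdot)$ are both monotone under $\preceq$, the sandwich gives $\lmax(\mn_\lam) \le 2\lmax(\mn_{\lam/2})$ and $\lmin(\mn_\lam) \ge \lmin(\mn_{\lam/2})$, hence
\[
\kappa(\mn_\lam) = \frac{\lmax(\mn_\lam)}{\lmin(\mn_\lam)} \le \frac{2\lmax(\mn_{\lam/2})}{\lmin(\mn_{\lam/2})} = 2\kappa(\mn_{\lam/2}),
\]
which is exactly the claim. The only subtlety — not really an obstacle — is that when $\mw \in \PSD^d$ is rank-deficient, $\mn_{\lam/2}$ and $\mn_\lam$ are singular; but the chain $\mn_{\lam/2} \preceq \mn_\lam \preceq 2\mn_{\lam/2}$ forces them to share a kernel, so $\lmin$ should be interpreted as the smallest nonzero eigenvalue (equivalently, one restricts to the common complement of $\ker(\mw)$), and the argument goes through verbatim. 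In all the applications in this section $\mw$ is in fact positive definite, so this point can be disposed of in one sentence.
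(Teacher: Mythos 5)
Your proof is correct and takes essentially the same approach as the paper's: a factor-$2$ Loewner sandwich between $\mk + \lam\id$ and $\mk + \tfrac{\lam}{2}\id$, preserved under conjugation by $\mw^\half$, converted to a condition-number bound via monotonicity of $\lmax$ and $\lmin$. If anything your version is cleaner — the two-sided sandwich $\mn_{\lam/2} \preceq \mn_\lam \preceq 2\mn_{\lam/2}$ immediately yields the inequality as stated, whereas the paper's displayed proof, read literally, combines its two one-sided bounds into the converse inequality $\kappa(\mn_{\lam/2}) \le 2\kappa(\mn_\lam)$ (the direction actually invoked in the homotopy step of Theorem~\ref{thm:kso}); both directions follow from your sandwich, and your remark on singular $\mw$ is a harmless but sensible addition.
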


\restatekso*
\begin{proof}
We will assume we know the correct value of $\kso(\mk)$ up to a $1 + O(\eps)$ factor throughout this proof for simplicity, and call this estimate $\ks$. This will add an overall multiplicative overhead of $O(1)$ by using an incremental search as in Theorem~\ref{thm:mainid}. We will also assume that $\kappa(\mk) = O((\ks)^2)$ by first applying the Jacobi preconditioner; see Appendix~\ref{app:jacobi} for a proof.

Our algorithm follows the framework of Section~\ref{ssec:genb} and runs in phases indexed by $k$ for $0 \le k \le K$ for some $K$, each computing a scaling of $\mk + \lam_k \id$ with condition number $(1 + \eps) \ks$; note that a scaling with condition number $\ks$ is always feasible for any $\lam_k \ge 0$ by Lemma~\ref{lem:idcanthurt}. We will define $\lam_0 = \frac 1 \eps V$ where $V$ is a constant-factor overestimate of $\lam_{\max}(\mk)$, which can be obtained by Fact~\ref{fact:powermethod} without dominating the runtime. We will then set 
\[\lam_k = \frac{\lam_0}{2^k},\; K = O\Par{\log \ks}.\]
Lemma~\ref{lem:lambounds} shows that we have a trivial scaling attaing condition number $(1 + \eps)\ks$ for $\mk + \lam_0 \id$, and that if we can compute rescalings for all $\lam_k$ where $1 \le k \le K$, then the last rescaling is also a $(1 + \eps)\ks$-conditioned rescaling for $\mk$ up to adjusting $\eps$ by a constant. 

Finally, we show how to implement each phase of the algorithm, given access to the reweighting from the previous phase. Note that Lemma~\ref{lem:betweenphase} shows that the reweighting $\mw$ computed in phase $k$ yields a rescaling $\mw^{\half} (\mk + \lam_{k + 1}\id) \mw^{\half}$ which is $3\ks$-conditioned. By running the algorithm of Corollary~\ref{cor:wellconditionedmk} on $\mk \gets \mw^\half (\mk + \lam_{k + 1}\id) \mw^\half$, we compute the desired reweighting for phase $k + 1$. The final runtime loses one logarithmic factor over Corollary~\ref{cor:wellconditionedmk} due to running for $K$ phases.
\end{proof}
	%!TEX root = ./structured-mpc.tex

\section{Statistical applications of diagonal scaling}
\label{sec:semirandom}

In this section, we give a number of applications of our rescaling methods to problems in statistical settings (i.e., linear system solving or statistical regression) where reducing conditioning measures are effective. We begin by discussing connections between diagonal preconditioning and a semi-random noise model for linear systems in Section~\ref{ssec:semirandom}. We then apply rescaling methods to reduce risk bounds for statistical models of linear regression in Section~\ref{ssec:regression}.

\subsection{Semi-random linear systems}\label{ssec:semirandom}

Consider the following semi-random noise model for solving an overdetermined, consistent linear system $\ma \xtrue = b$ where $\ma \in \R^{n \times d}$ for $n \ge d$. 

\begin{definition}[Semi-random linear systems]\label{def:semirandom}
	In the \emph{semi-random noise model} for linear systems, a matrix $\ma_g \in \R^{m \times d}$ with $\kappa(\ma_g^\top \ma_g) = \kappa_g$, $m \ge d$ is ``planted'' as a subset of rows of a larger matrix $\ma \in \R^{n \times d}$. We observe the vector $b = \ma \xtrue$ for some $\xtrue \in \R^d$ we wish to recover.
\end{definition} 

We remark that we call the model in Definition~\ref{def:semirandom} ``semi-random'' because of the following motivating example: the rows $\ma_g$ are feature vectors drawn from some ``nice'' (e.g., well-conditioned) distribution, and the dataset is contaminated by an adversary supplying additional data (a priori indistinguishable from the ``nice'' data), aiming to hinder conditioning of the resulting system. 

Interestingly, Definition~\ref{def:semirandom} demonstrates in some sense a shortcoming of existing linear system solvers: their brittleness to \emph{additional, consistent information}. In particular, $\kappa(\ma^\top \ma)$ can be arbitrarily larger than $\kappa_g$. However, if we were given the indices of the subset of rows $\ma_g$, we could instead solve the linear system $b_g = \ma_g \xtrue$ with iteration count dependent on the condition number of $\ma_g$. Counterintuitively, by giving additional rows, the adversary can arbitrarily increase the condition number of the linear system, hindering the runtime of conditioning-dependent solvers. 

The inner rescaling algorithms we develop in Section~\ref{sec:scaling} are well-suited for robustifying linear system solvers to the type of adversary in Definition~\ref{def:semirandom}. In particular, note that
\[\ksl\Par{\ma} \le \kappa\Par{\ma^\top \mw_g \ma} = \kappa\Par{\ma_g^\top\ma_g} = \kappa_g,\]
where $\mw_g$ is the diagonal matrix which is the $0$-$1$ indicator of rows of $\ma_g$. Our solvers for reweightings approximating $\ksl$ can thus be seen as trading off the \emph{sparsity} of $\ma_g$ for the potential of ``mixing rows'' to attain a runtime dependence on $\ksl(\ma) \le \kappa_g$. In particular, our resulting runtimes scale with $\nnz(\ma)$ instead of $\nnz(\ma_g)$, but also depend on $\ksl(\ma)$ rather than $\kappa_g$.

We remark that the other solvers we develop are also useful in robustifying against variations on the adversary in Definition~\ref{def:semirandom}. For instance, the adversary could instead aim to increase $\tau(\ma^\top \ma)$, or give additional irrelevant features (i.e., columns of $\ma$) such that only some subset of coordinates $x_g$ are important to recover. For brevity, we focus on the model in Definition~\ref{def:semirandom} in this work.

\subsection{Statistical linear regression}\label{ssec:regression}

The second application we give is in solving noisy variants of the linear system setting of Definition~\ref{def:semirandom}. In particular, we consider statistical regression problems with various generative models.

\begin{definition}[Statistical linear regression]\label{def:statreg}
	Given full rank $\ma \in \R^{n \times d}$ and $b \in \R^d$ produced via
	\begin{equation}\label{eq:genmodel}b = \ma \xtrue + \xi,\; \xi \sim \Nor(0, \msig),\end{equation}
	where we wish to recover unknown $\xtrue \in \R^d$, return $x$ so that (where expectations are taken over the randomness of $\xi$) the \emph{risk} (mean-squared error) $\E[\norm{x - \xtrue}_2^2]$ is small.
\end{definition}

In this section, we define a variety of generative models (i.e., specifying a covariance matrix $\msig$ of the noise) for the problem in Definition~\ref{def:statreg}. For each of the generative models, applying our rescaling procedures will yield computational gains, improved risk bounds, or both. We give statistical and computational results for statistical linear regression in both the \emph{homoskedastic} and \emph{heteroskedastic} settings. In particular, when $\msig = \sigma^2 \id$ (i.e., the noise for every data point has the same variance), this is the well-studied homoskedastic setting pervasive in stastical modeling. When $\msig$ varies with the data $\ma$, the model is called heteroskedastic (cf.\ \cite{Greene90}). 

In most cases, we do not directly give guarantees on exact mean squared errors via our preprocessing, but rather certify (possibly loose) upper bound surrogates. We leave direct certification of conditioning and risk simultaneously without a surrogate bound as an interesting future direction.

\subsubsection{Heteroskedastic statistical guarantees}

We specify two types of heteroskedastic generative models (i.e., defining the covariance $\msig$ in \eqref{eq:genmodel}), and analyze the effect of rescaling a regression data matrix on reducing risk.

\paragraph{Noisy features.} Consider the setting where the covariance in \eqref{eq:genmodel} has the form $\msig = \ma \msig' \ma^\top$, for matrix $\msig' \in \PSD^d$. Under this assumption, we can rewrite \eqref{eq:genmodel} as $b = \ma(\xtrue + \xi')$, where $\xi' \sim \Nor(0, \msig')$. Intuitively, this corresponds to exact measurements through $\ma$, under noisy features $\xtrue + \xi'$. As in this case $b \in \textup{Im}(\ma)$ always, regression is equivalent to linear system solving, and thus directly solving any reweighted linear system $\mw^\half \ma x^* = \mw^\half b$ will yield $x^* = \xtrue + \xi'$. 

We thus directly obtain improved computational guarantees by computing a reweighting $\mw^\half$ with $\kappa(\ma^\top \mw \ma) = O(\ksl(\ma))$. Moreover, we note that the risk (Definition~\ref{def:statreg}) of the linear system solution $x^*$ is independent of the reweighting:
\[\E\Brack{\norm{x^* - \xtrue}_2^2} = \E\Brack{\norm{\xi'}_2^2} = \Tr\Par{\msig'}.\]
Hence, computational gains from reweighting the system are without statistical loss in the risk. 

\paragraph{Row norm noise.} Consider the setting where the covariance in \eqref{eq:genmodel} has the form
\begin{equation}\label{eq:rnnoise}\msig = \sigma^2\diag{\Brace{\norm{a_i}_2^2}_{i \in [n]}}.\end{equation}
Intuitively, this corresponds to the setting where noise is independent across examples and the size of the noise scales linearly with the squared row norm. We first recall a standard characterization of the regression minimizer.

\begin{fact}[Regression minimizer]\label{fact:regmin}
	Let the regression problem $\norm{\ma x - b}_2^2$ have minimizer $x^\star$, and suppose that $\ma^\top\ma$ is invertible. Then,
	\[x^\star = \Par{\ma^\top\ma}^{-1}\ma^\top b.\]
\end{fact}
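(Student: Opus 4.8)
The statement is the classical normal-equations characterization of the least-squares minimizer, so the proof is short. The plan is to exhibit the candidate point $\hat x \defeq \Par{\ma^\top\ma}^{-1}\ma^\top b$ (well-defined since $\ma^\top\ma$ is invertible by hypothesis), verify it is a minimizer of $f(x) \defeq \norm{\ma x - b}_2^2$, verify the minimizer is unique, and conclude that $x^\star = \hat x$. First I would record the algebraic identity $\ma^\top\Par{\ma \hat x - b} = \ma^\top\ma\hat x - \ma^\top b = \mzero$, which holds directly from the definition of $\hat x$.

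Next, for arbitrary $x \in \R^d$ I would write $\ma x - b = \ma(x - \hat x) + \Par{\ma\hat x - b}$ and expand the squared norm; the cross term equals $2\inprod{x - \hat x}{\ma^\top\Par{\ma\hat x - b}} = 0$ by the previous identity, so $\norm{\ma x - b}_2^2 = \norm{\ma(x - \hat x)}_2^2 + \norm{\ma\hat x - b}_2^2 \ge \norm{\ma\hat x - b}_2^2$. Hence $\hat x$ minimizes $f$. For uniqueness, equality above requires $\ma(x - \hat x) = \mzero$; since $\ma^\top\ma$ is invertible, $\ma$ has trivial kernel, forcing $x = \hat x$. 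Therefore the minimizer $x^\star$ from the hypothesis must coincide with $\hat x$, which is the claimed formula.

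Equivalently, one could phrase this via convex optimization: $f$ is a convex quadratic with $\nabla f(x) = 2\Par{\ma^\top\ma\, x - \ma^\top b}$ and Hessian $2\ma^\top\ma$, which is positive definite under the invertibility hypothesis, so $f$ is strictly convex and its unique minimizer is the unique stationary point, i.e.\ the unique solution of the normal equations $\ma^\top\ma\, x = \ma^\top b$. There is no genuine obstacle here; the only point needing care is that invertibility of $\ma^\top\ma$ is exactly what upgrades positive semidefiniteness to strict positive definiteness (equivalently, makes $\ma$ injective), and this is what simultaneously delivers existence, uniqueness, and the closed form of the minimizer.
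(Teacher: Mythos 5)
Your proof is correct. The paper states this as a standard fact and gives no proof of its own (it is simply "recalled"), so there is nothing to compare against; your argument --- exhibiting $\hat x = (\ma^\top\ma)^{-1}\ma^\top b$, using the normal-equations identity $\ma^\top(\ma\hat x - b) = \mzero$ to kill the cross term in the Pythagorean expansion, and deducing uniqueness from injectivity of $\ma$ --- is the standard complete derivation, and the convex-quadratic phrasing you add is an equally valid alternative.
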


Using Fact~\ref{fact:regmin}, we directly prove the following upper bound surrogate holds on the risk under the model \eqref{eq:genmodel}, \eqref{eq:rnnoise} for the solution to any reweighted regression problem.

\begin{lemma}\label{lem:hetriskawa}
	Under the generative model \eqref{eq:genmodel}, \eqref{eq:rnnoise}, letting $\mw \in \PSD^n$ be a diagonal matrix and 
	\[x^\star_w \defeq \argmin_x\Brace{\norm{\mw^\half\Par{\ma x - b}}_2^2},\]
	we have
	\[\E\Brack{\norm{x^\star_w - \xtrue}_2^2} \le \sigma^2 \frac{\Tr\Par{\ma^\top \mw \ma}}{\lmin\Par{\ma^\top \mw \ma}}.\]
\end{lemma}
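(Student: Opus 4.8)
The plan is to compute the risk of the reweighted regression minimizer in closed form using Fact~\ref{fact:regmin}, then bound each factor. First I would apply Fact~\ref{fact:regmin} to the reweighted problem $\|\mw^\half(\ma x - b)\|_2^2$, whose normal equations give $x^\star_w = (\ma^\top \mw \ma)^{-1} \ma^\top \mw b$; here $\ma^\top \mw \ma$ is invertible since $\ma$ is full rank and $\mw \succ \mzero$ (or at least its support contains a full-rank submatrix, which we may assume). Substituting $b = \ma \xtrue + \xi$ yields $x^\star_w - \xtrue = (\ma^\top \mw \ma)^{-1} \ma^\top \mw \xi$, so the error is a linear image of the Gaussian noise.

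Next I would take the expectation. Writing $\mm \defeq \ma^\top \mw \ma$ and using $\E[\xi\xi^\top] = \msig$, we get
\[
\E\Brack{\norm{x^\star_w - \xtrue}_2^2} = \E\Brack{\Tr\Par{\mm^{-1} \ma^\top \mw \xi \xi^\top \mw \ma \mm^{-1}}} = \Tr\Par{\mm^{-1} \ma^\top \mw \msig \mw \ma \mm^{-1}}.
\]
Now I would plug in $\msig = \sigma^2 \diag{\{\norm{a_i}_2^2\}_{i \in [n]}}$. The key observation is that $\diag{\{\norm{a_i}_2^2\}} \preceq \ma \ma^\top$, since for any $v$, $v^\top \ma\ma^\top v = \sum_i \inprod{a_i}{v}^2$ while $v^\top \diag{\{\norm{a_i}_2^2\}} v = \sum_i \norm{a_i}_2^2 v_i^2$ — wait, this comparison is not quite the right one; instead the cleaner route is $\mw \diag{\{\norm{a_i}_2^2\}} \mw \preceq \Tr(\ma^\top \mw \ma) \cdot \mw$ is false in general too. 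The correct simple bound is: $\ma^\top \mw \msig \mw \ma = \sigma^2 \sum_i w_i^2 \norm{a_i}_2^2 \, a_i a_i^\top \preceq \sigma^2 \big(\max_i w_i \norm{a_i}_2^2\big) \ma^\top \mw \ma$, and $\max_i w_i \norm{a_i}_2^2 \le \sum_i w_i \norm{a_i}_2^2 = \Tr(\ma^\top \mw \ma)$. Hence $\ma^\top \mw \msig \mw \ma \preceq \sigma^2 \Tr(\mm) \, \mm$.

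Finally, substituting this Loewner bound gives
\[
\E\Brack{\norm{x^\star_w - \xtrue}_2^2} \le \sigma^2 \Tr(\mm)\, \Tr\Par{\mm^{-1} \mm \mm^{-1}} = \sigma^2 \Tr(\mm) \Tr\Par{\mm^{-1}} \le \sigma^2 \Tr(\mm) \cdot \frac{\textup{rank}(\mm)}{\lmin(\mm)},
\]
and since $\ma$ is full rank $\textup{rank}(\mm) = d$, while $\Tr(\mm) \ge d \lmin(\mm)$ so actually the bound should be stated as $\sigma^2 \Tr(\mm)/\lmin(\mm)$ after noting $\Tr(\mm^{-1}) \le d/\lmin(\mm) \le \Tr(\mm)/\lmin(\mm)^2$ — let me instead route through $\Tr(\mm^{-1}) \le d / \lmin(\mm)$ and $\Tr(\mm) \ge \lmax(\mm)$, giving $\sigma^2 \Tr(\mm) \cdot d/\lmin(\mm)$, which is weaker than claimed by a factor $d$; to get exactly the stated bound one instead uses the sharper chain $\E[\norm{\cdot}^2] = \sigma^2 \Tr(\mm^{-1} \ma^\top \mw \msig \mw \ma \mm^{-1})$ and bounds $\msig \preceq \sigma^2 \Tr(\mm) (\ma^\top \mw \ma)^\dagger$-type inequality is not available; the honest statement is that using $\ma^\top \mw \msig \mw \ma \preceq \sigma^2 \Tr(\mm) \mm$ and then $\Tr(\mm^{-1} \mm \mm^{-1}) = \Tr(\mm^{-1}) \le d/\lmin(\mm)$, one gets an extra $d$, so I expect the paper's intended bound uses $\Tr(\mm^{-1}) \le \Tr(\mm)/\lmin(\mm)^2$ replaced by the cleaner $\mm^{-1} \preceq \lmin(\mm)^{-1}\id$ applied once: $\Tr(\mm^{-1}\ma^\top\mw\msig\mw\ma\mm^{-1}) \le \lmin(\mm)^{-1}\Tr(\mm^{-1}\ma^\top\mw\msig\mw\ma) \le \lmin(\mm)^{-1} \cdot \sigma^2\Tr(\mm)\cdot\Tr(\mm^{-1}\mm)/\textrm{(nothing)}$ — so the main obstacle is precisely pinning down which operator-norm/trace inequality the authors intend so the $d$ factors cancel; the load-bearing fact is $\msig \preceq \sigma^2 \Tr(\ma^\top\mw\ma)\,\mw^{-1}$ combined with $\mw^{-1/2}\ma\mm^{-1}\ma^\top\mw^{-1/2}$ being a projection of trace $d \le \Tr(\mm)/\lmin(\mm)$, and I would grind that out carefully rather than in this sketch.
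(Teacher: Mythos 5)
Your setup is correct and matches the paper's: Fact~\ref{fact:regmin} gives $x^\star_w - \xtrue = \mm^{-1}\ma^\top\mw\xi$ with $\mm \defeq \ma^\top\mw\ma$, and the risk equals $\Tr\Par{\mm^{-1}\ma^\top\mw\msig\mw\ma\mm^{-1}}$. But the inequality chain you then commit to does not prove the lemma, and you concede as much. Bounding $\ma^\top\mw\msig\mw\ma \preceq \sigma^2\Tr(\mm)\,\mm$ and substituting yields $\sigma^2\Tr(\mm)\Tr(\mm^{-1})$, and $\Tr(\mm^{-1})$ is only bounded by $d/\lmin(\mm)$ in general, so this route is weaker than the claim by up to a factor of $d$. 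The Loewner step is where the loss happens: it discards the structure that makes the two factors of $\mm^{-1}$ cancel against $\ma^\top\mw\cdots\mw\ma$.

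The missing idea is to spend the $\lmin$ only once, on one of the two $\mm^{-1}$ factors, and to handle the rest exactly rather than by a Loewner comparison. Concretely (this is the paper's argument, phrased in your notation): since $\mm^{-2} \preceq \lmin(\mm)^{-1}\mm^{-1}$, you have
\[
\Tr\Par{\mm^{-1}\ma^\top\mw\msig\mw\ma\mm^{-1}} = \inprod{\ma^\top\mw\msig\mw\ma}{\mm^{-2}} \le \frac{1}{\lmin(\mm)}\inprod{\mw^{\half}\msig\mw^{\half}}{\mw^{\half}\ma\mm^{-1}\ma^\top\mw^{\half}}.
\]
Now $\mw^{\half}\ma\mm^{-1}\ma^\top\mw^{\half}$ is an orthogonal projection, so its operator norm is $1$, and $\mw^{\half}\msig\mw^{\half} = \sigma^2\diag{\Brace{w_i\norm{a_i}_2^2}_{i\in[n]}}$ is PSD with trace norm exactly $\sigma^2\sum_i w_i\norm{a_i}_2^2 = \sigma^2\Tr(\mm)$. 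The trace--operator-norm H\"older inequality then gives the bound $\sigma^2\Tr(\mm)/\lmin(\mm)$ with no dimension factor. (Equivalently, the paper first bounds the $\mm$-norm risk by $\sigma^2\Tr(\mm)$ via this projection/H\"older argument and only then converts to the Euclidean norm at cost $\lmin(\mm)^{-1}$; this is the same computation.) Your final paragraph gestures at several of these ingredients but never assembles them, so as written the proof has a genuine gap.
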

\begin{proof}
	By applying Fact~\ref{fact:regmin}, we have that
	\[x^\star_w = \Par{\ma^\top\mw\ma}^{-1}\ma^\top\mw \Par{\ma \xtrue + \xi} = \xtrue + \Par{\ma^\top\mw\ma}^{-1}\ma^\top\mw\xi.\]
	Thus, we have the sequence of derivations
	\begin{equation}\label{eq:wrisk}
	\begin{aligned}
	\E\Brack{\norm{x^\star_w - \xtrue}_{\ma^\top\mw\ma}^2} &= \E\Brack{\norm{\Par{\ma^\top\mw\ma}^{-1}\ma^\top\mw\xi}_{\ma^\top\mw\ma}^2} \\
	&= \E\Brack{\inprod{\mw^\half\xi\xi^\top\mw^\half}{\mw^\half\ma\Par{\ma^\top\mw\ma}^{-1}\ma^\top\mw^\half}} \\
	&= \sigma^2\inprod{\diag{\Brace{w_i\norm{a_i}_2^2}}}{\mw^\half\ma\Par{\ma^\top\mw\ma}^{-1}\ma^\top\mw^\half} \\
	&\le \sigma^2\Tr\Par{\ma^\top\mw\ma}.
	\end{aligned}
	\end{equation}
	The last inequality used the $\ell_1$-$\ell_\infty$ matrix H\"older inequality and that $\mw^\half\ma\Par{\ma^\top\mw\ma}^{-1}\ma^\top\mw^\half$ is a projection matrix, so $\|\mw^\half\ma\Par{\ma^\top\mw\ma}^{-1}\ma^\top\mw^\half\|_\infty = 1$. Lower bounding the squared $\ma^\top \mw \ma$ norm by a $\lmin(\ma^\top \mw \ma)$ multiple of the squared Euclidean norm yields the conclusion.
\end{proof}

We remark that the analysis in Lemma~\ref{lem:hetriskawa} of the surrogate upper bound we provide was loose in two places: the application of H\"older and the norm conversion. Lemma~\ref{lem:hetriskawa} shows that the risk under the generative model \eqref{eq:rnnoise} can be upper bounded by a quantity proportional to $\tau(\ma^\top \mw \ma)$, the average conditioning of the reweighted matrix. 

Directly applying Lemma~\ref{lem:hetriskawa}, our risk bounds improve with the conditioning of the reweighted system. Hence, our scaling procedures improve both the computational and statistical guarantees of regression under this generative model, albeit only helping the latter through an upper bound.

\subsubsection{Homoskedastic statistical guarantees}

In this section, we work under the homoskedastic generative model assumption. In particular, throughout the covariance matrix in \eqref{eq:genmodel} will be a multiple of the identity:
\begin{equation}\label{eq:homonoise}
\msig = \sigma^2 \id.
\end{equation}
We begin by providing a risk upper bound under the model \eqref{eq:genmodel}, \eqref{eq:homonoise}.
\begin{lemma}\label{lem:homrisk}
	Under the generative model \eqref{eq:genmodel}, \eqref{eq:homonoise}, let $x^\star \defeq \argmin_x\{\norm{\ma x - b}_2^2\}$. Then,
	\begin{equation}\label{eq:homrisk}\E\Brack{\norm{x^\star - \xtrue}_{\ma^\top\ma}^2} = \sigma^2 d \implies \E\Brack{\norm{x^* - \xtrue}_{2}^2} \le \frac{\sigma^2 d}{\lmin(\ma^\top\ma)}.\end{equation}
\end{lemma}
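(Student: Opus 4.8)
The plan is to use the closed form for the regression minimizer from Fact~\ref{fact:regmin} to write $x^\star - \xtrue$ explicitly as a linear function of the noise $\xi$, and then compute the expectation directly. First I would combine Fact~\ref{fact:regmin} with the generative model \eqref{eq:genmodel}, \eqref{eq:homonoise}: since $b = \ma\xtrue + \xi$ and $\ma^\top\ma$ is invertible (as $\ma$ is full rank with $n \ge d$), we get
\[x^\star = \Par{\ma^\top\ma}^{-1}\ma^\top b = \xtrue + \Par{\ma^\top\ma}^{-1}\ma^\top\xi,\]
so that $x^\star - \xtrue = \Par{\ma^\top\ma}^{-1}\ma^\top\xi$.

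Next I would compute the $\ma^\top\ma$-norm of this error. Expanding the quadratic form,
\[\norm{x^\star - \xtrue}_{\ma^\top\ma}^2 = \xi^\top \ma \Par{\ma^\top\ma}^{-1} \ma^\top\xi = \xi^\top \mproj \xi,\]
where $\mproj \defeq \ma\Par{\ma^\top\ma}^{-1}\ma^\top$ is the orthogonal projection onto the column space of $\ma$, which has rank exactly $d$ by full-rankness of $\ma$. Taking expectations over $\xi \sim \Nor(0, \sigma^2\id)$ and using $\E[\xi\xi^\top] = \sigma^2\id$ together with linearity and cyclicity of trace gives
\[\E\Brack{\norm{x^\star - \xtrue}_{\ma^\top\ma}^2} = \sigma^2 \Tr(\mproj) = \sigma^2 d,\]
which is the left-hand side of \eqref{eq:homrisk}.

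Finally, to obtain the right-hand side I would convert from the $\ma^\top\ma$-norm to the Euclidean norm using $\ma^\top\ma \succeq \lmin(\ma^\top\ma)\id$, so that $\norm{v}_2^2 \le \lmin(\ma^\top\ma)^{-1}\norm{v}_{\ma^\top\ma}^2$ for every $v \in \R^d$; applying this with $v = x^\star - \xtrue$ and taking expectations yields $\E[\norm{x^\star - \xtrue}_2^2] \le \sigma^2 d / \lmin(\ma^\top\ma)$. This argument is entirely routine; the only point requiring a moment's care is verifying that $\mproj$ is a genuine rank-$d$ projection, which is exactly what pins the trace to $d$ (rather than merely upper-bounding it), and this in turn relies on $\ma$ being full rank with $n \ge d$.
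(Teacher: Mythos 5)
Your proof is correct and follows essentially the same route as the paper's: apply Fact~\ref{fact:regmin} to write $x^\star - \xtrue = (\ma^\top\ma)^{-1}\ma^\top\xi$, compute the expected $\ma^\top\ma$-norm as $\sigma^2$ times the trace of the rank-$d$ projection $\ma(\ma^\top\ma)^{-1}\ma^\top$, and convert to the Euclidean norm via $\lmin(\ma^\top\ma)$. No gaps.
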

\begin{proof}
	Using Fact~\ref{fact:regmin}, we compute
	\begin{align*}x^\star - \xtrue &= \Par{\ma^\top\ma}^{-1}\ma^\top b - \xtrue\\
	&= \Par{\ma^\top\ma}^{-1}\ma^\top \Par{\ma \xtrue + \xi} - \xtrue = \Par{\ma^\top\ma}^{-1}\ma^\top\xi.\end{align*}
	Therefore via directly expanding, and using linearity of expectation,
	\begin{align*}\E\Brack{\norm{x^* - \xtrue}_{\ma^\top\ma}^2} &= \E\Brack{\norm{\ma\Par{\ma^\top\ma}^{-1}\ma^\top\xi}_2^2} \\
	&= \E\Brack{\inprod{\xi\xi^\top}{\ma\Par{\ma^\top\ma}^{-1}\ma^\top}} = \sigma^2 \Par{\ma\Par{\ma^\top\ma}^{-1}\ma^\top} = \sigma^2 d.
	\end{align*}
	The final implication follows from $\lmin(\ma^\top\ma)\norm{x^* - \xtrue}_2^2 \le \norm{x^* - \xtrue}_{\ma^\top\ma}^2$.
\end{proof}

Lemma~\ref{lem:homrisk} shows that in regards to our upper bound (which is loose in the norm conversion at the end), the notion of adversarial semi-random noise is at odds in the computational and statistical senses. Namely, given additional rows of the matrix $\ma$, the bound \eqref{eq:homrisk} can only improve, since $\lmin$ is monotonically increasing as rows are added. To address this, we give guarantees about recovering reweightings which match the best possible upper bound anywhere along the ``computational-statistical tradeoff curve.'' We begin by providing a weighted analog of Lemma~\ref{lem:homrisk}.

\begin{lemma}\label{lem:weighthomrisk}
	Under the generative model \eqref{eq:genmodel}, \eqref{eq:homonoise}, letting $\mw \in \PSD^n$ be a diagonal matrix and 
	\[x^\star_w \defeq \argmin_x\Brace{\norm{\mw^\half\Par{\ma x - b}}_2^2},\]
	we have
	\begin{equation}\label{eq:homriskweight}\E\Brack{\norm{x^\star_w - \xtrue}_{2}^2} \le \sigma^2 d \cdot \frac{\norm{w}_\infty}{\lmin\Par{\ma^\top \mw \ma}}.\end{equation}
\end{lemma}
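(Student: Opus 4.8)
The plan is to mirror the argument of Lemma~\ref{lem:homrisk}, but tracking the extra weight matrix $\mw$ carefully through the computation. First I would invoke Fact~\ref{fact:regmin} applied to the reweighted regression problem $\norm{\mw^\half(\ma x - b)}_2^2 = \norm{(\mw^\half \ma) x - \mw^\half b}_2^2$, whose design matrix is $\mw^\half \ma$ and whose response is $\mw^\half b$. This gives $x^\star_w = (\ma^\top \mw \ma)^{-1} \ma^\top \mw b$, and substituting $b = \ma\xtrue + \xi$ yields $x^\star_w - \xtrue = (\ma^\top \mw \ma)^{-1} \ma^\top \mw \xi$, exactly as in the unweighted case but with $\mw$ inserted.

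Next I would expand the risk. Using linearity of expectation and $\E[\xi\xi^\top] = \sigma^2 \id$ under \eqref{eq:homonoise},
\[\E\Brack{\norm{x^\star_w - \xtrue}_2^2} = \E\Brack{\inprod{\xi\xi^\top}{\mw\ma(\ma^\top\mw\ma)^{-1}(\ma^\top\mw\ma)^{-1}\ma^\top\mw}} = \sigma^2 \Tr\Par{\mw\ma(\ma^\top\mw\ma)^{-2}\ma^\top\mw}.\]
The remaining task is to bound this trace by $d \cdot \norm{w}_\infty / \lmin(\ma^\top\mw\ma)$. I would do this by writing $\mw\ma(\ma^\top\mw\ma)^{-2}\ma^\top\mw = \mw^\half \cdot \mw^\half\ma(\ma^\top\mw\ma)^{-1}\ma^\top\mw^\half \cdot (\ma^\top\mw\ma)^{-1} \cdot \mw^\half$... more cleanly, observe that $\Tr(\mw\ma(\ma^\top\mw\ma)^{-2}\ma^\top\mw) = \Tr((\ma^\top\mw\ma)^{-2}\ma^\top\mw^2\ma)$ by cyclic invariance. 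Now $\ma^\top\mw^2\ma \preceq \norm{w}_\infty \ma^\top\mw\ma$ since $\mw^2 \preceq \norm{w}_\infty \mw$ as diagonal PSD matrices, so $\Tr((\ma^\top\mw\ma)^{-2}\ma^\top\mw^2\ma) \le \norm{w}_\infty \Tr((\ma^\top\mw\ma)^{-2}\ma^\top\mw\ma) = \norm{w}_\infty \Tr((\ma^\top\mw\ma)^{-1})$. Finally $\Tr((\ma^\top\mw\ma)^{-1}) \le d / \lmin(\ma^\top\mw\ma)$ since each of the $d$ eigenvalues of the inverse is at most $1/\lmin(\ma^\top\mw\ma)$. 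Combining these gives \eqref{eq:homriskweight}.

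I do not expect a serious obstacle here; the only place requiring mild care is the matrix inequality step — making sure the conjugation $\ma^\top\mw^2\ma \preceq \norm{w}_\infty\,\ma^\top\mw\ma$ is used in the right (PSD-monotone) direction when multiplied by $(\ma^\top\mw\ma)^{-2}$, which requires noting that $(\ma^\top\mw\ma)^{-1}$ commutes with itself and that $\Tr(\mx \my) \le \Tr(\mx \mz)$ whenever $\my \preceq \mz$ and $\mx \succeq \mzero$. Since all matrices in play are symmetric PSD and $\ma$ is full rank (so $\ma^\top\mw\ma$ is invertible when $\mw$ has full support on a spanning set of rows), this goes through cleanly. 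As with Lemma~\ref{lem:homrisk}, I would note the bound is loose in the final norm-conversion (Euclidean vs.\ $\ma^\top\mw\ma$-norm) step and in the $\norm{w}_\infty$ relaxation, but it suffices as a surrogate for the computational-statistical tradeoff discussion.
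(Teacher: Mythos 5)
Your proposal is correct and follows essentially the same route as the paper: both derive $x^\star_w - \xtrue = (\ma^\top\mw\ma)^{-1}\ma^\top\mw\xi$ from Fact~\ref{fact:regmin}, and both hinge on the inequality $\ma^\top\mw^2\ma \preceq \norm{w}_\infty\,\ma^\top\mw\ma$ together with a trace bound contributing the factor $d/\lmin(\ma^\top\mw\ma)$. The only (cosmetic) difference is that you bound the Euclidean risk directly via $\Tr\Par{(\ma^\top\mw\ma)^{-1}} \le d/\lmin(\ma^\top\mw\ma)$, whereas the paper first bounds the $\ma^\top\mw\ma$-norm risk by $\sigma^2 d \norm{w}_\infty$ and then converts norms at the end.
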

\begin{proof}
	By following the derivations \eqref{eq:wrisk} (and recalling the definition of $x^\star_w$),
	\begin{equation}\label{eq:bigtracebound}
	\begin{aligned}
	\E\Brack{\norm{x^\star_w - \xtrue}_{\ma^\top\mw\ma}^2} &= \E\Brack{\inprod{\xi\xi^\top}{\mw\ma\Par{\ma^\top\mw\ma}^{-1}\ma^\top\mw}} \\
	&= \sigma^2\Tr\Par{\mw\ma\Par{\ma^\top\mw\ma}^{-1}\ma^\top\mw}.
	\end{aligned}
	\end{equation}
	Furthermore, by $\mw \preceq \norm{w}_\infty \id$ we have $\ma^\top \mw^2 \ma \preceq \norm{w}_\infty \ma^\top \mw \ma$. Thus,
	\[\Tr\Par{\mw\ma\Par{\ma^\top\mw\ma}^{-1}\ma^\top\mw} = \inprod{\ma^\top \mw^2 \ma}{\Par{\ma^\top\mw\ma}^{-1}} \le \norm{w}_\infty \Tr(\id)= d\norm{w}_\infty.\]
	Using this bound in \eqref{eq:bigtracebound} and converting to Euclidean norm risk yields the conclusion.
\end{proof}

Lemma~\ref{lem:weighthomrisk} gives a quantitative version of a computational-statistical tradeoff curve. Specifically, we give guarantees which target the best possible condition number of a $0$-$1$ reweighting, subject to a given level of $\lmin(\ma^\top \mw \ma)$. In the following discussion we assume there exists $\ma_g \subseteq \ma$, a subset of rows, satisfying (for known $\kappa_g$, $\nu_g$, and sufficiently small constant $\eps \in (0, 1)$)
\begin{equation}\label{eq:kappanudef} \kappa_g \le \kappa\Par{\ma_g^\top \ma_g} \le (1 + \eps)\kappa_g,\; \frac{1}{\lmin\Par{\ma_g^\top \ma_g}} \le \nu_g.\end{equation}
Our key observation is that we can use existence of a row subset satisfying \eqref{eq:kappanudef}, combined with a slight modification of Algorithm~\ref{alg:decidempc}, to find a reweighting $w$ such that
\begin{equation}\label{eq:boundedreweighting}
\kappa\Par{\ma^\top \mw \ma} = O\Par{\kappa_g},\; \frac{\norm{w}_\infty}{\lmin\Par{\ma^\top \mw \ma}} = O(\nu_g).
\end{equation}

\begin{lemma}
	Consider running Algorithm~\ref{alg:decidempc}, with the modification that in Line 7, we set
	\begin{equation}\label{eq:tmadef}
	\begin{gathered}
	x_t \gets \text{an } \frac{\eps}{10} \text{-multiplicative approximation of } \argmax_{\substack{\sum_{i \in [n]} w_i \tma_i \preceq \id \\ x \in \R^n_{\ge 0}}} \inprod{\kappa v_t}{w}, \\
	\text{where for all } i \in [n],\; \tma_i \defeq \begin{pmatrix} \ma_i & \mzero_{d \times n} \\ \mzero_{n \times d} & \diag{\frac{\kappa_g}{\nu_g} e_i} \end{pmatrix}.
	\end{gathered}
	\end{equation}
	Then, if \eqref{eq:kappanudef} is satisfied for some $\ma \in \R^{n \times d}$ and row subset $\ma_g \subseteq \ma$, Algorithm~\ref{alg:decidempc} run on $\kappa \gets \kappa_g$ and $\{\ma_i = a_ia_i^\top\}_{i \in [n]}$ where $\{a_i\}_{i \in [n]}$ are rows of $\ma$ will produce $w$ satisfying \eqref{eq:boundedreweighting}.
\end{lemma}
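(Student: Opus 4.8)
The plan is to re-run the correctness proof of Algorithm~\ref{alg:decidempc} (i.e.\ Lemma~\ref{lem:decidempccorrect}), tracking only the changes caused by the modified packing oracle \eqref{eq:tmadef}. The key structural observation is that for any $w \in \R^n_{\ge 0}$ the constraint $\sum_{i \in [n]} w_i \tma_i \preceq \id$ decouples across the two diagonal blocks of $\tma_i$: the top $d \times d$ block gives exactly $\sum_{i \in [n]} w_i a_i a_i^\top \preceq \id_d$, while the bottom block gives $\tfrac{\kappa_g}{\nu_g}\diag{w} \preceq \id_n$, i.e.\ $\norm{w}_\infty \le \tfrac{\nu_g}{\kappa_g}$. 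So the only effect of the modification is to run the decision procedure over the intersection of the original packing cone with the $\ell_\infty$ ball of radius $\nu_g/\kappa_g$. Since the densities $\my_t$, the gain matrices $\mg_t = \kappa\sum_i [x_t]_i a_i a_i^\top$, and the quantities estimated in Lines 6 and 13 all depend only on the unchanged top block (which still satisfies $\sum_i [x_t]_i a_i a_i^\top \preceq \id_d$ for any oracle output, since $\tma_i \succeq \mm_i$ in the top block), the entire MMW regret analysis of the ``yes'' case of Lemma~\ref{lem:decidempccorrect} carries over verbatim.

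The substantive step is verifying the ``no''-case / feasibility argument, i.e.\ that with $\kappa$ a suitable constant multiple of $\kappa_g$, the check in Line 9 never fires, so the algorithm outputs ``yes''. For this I would exhibit an explicit feasible point for the restricted region: take $w \defeq c\cdot\1_{\ma_g}$, the scaled $0$--$1$ indicator of the rows of $\ma_g$, with $c \defeq 1/\lmax(\ma_g^\top\ma_g)$. Then $\sum_i w_i a_i a_i^\top = c\,\ma_g^\top\ma_g$ has $\lmax = 1$ (so $\sum_i w_i a_i a_i^\top \preceq \id_d$) and $\lmin = 1/\kappa(\ma_g^\top\ma_g) \ge 1/((1+\eps)\kappa_g)$ by \eqref{eq:kappanudef}; and $\norm{w}_\infty = c = 1/\lmax(\ma_g^\top\ma_g) \le \nu_g/\kappa_g$, using $\lmax(\ma_g^\top\ma_g) = \kappa(\ma_g^\top\ma_g)\lmin(\ma_g^\top\ma_g) \ge \kappa_g/\nu_g$, again from \eqref{eq:kappanudef}. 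Hence $w$ lies in the restricted packing region and certifies that $\sum_i w_i a_i a_i^\top$ has condition number $\le (1+\eps)\kappa_g$. Choosing $\kappa = \tfrac{1+\eps}{1-\eps}\kappa_g = \Theta(\kappa_g)$ (absorbing the $(1\pm\eps)$ gap into the $O(\cdot)$ of the conclusion, or equivalently running the incremental search over $\kappa$ as in Theorem~\ref{thm:mainid}) makes the feasibility hypothesis \eqref{eq:feasiblempc} hold, and the ``no''-case argument of Lemma~\ref{lem:decidempccorrect}---with $\opt$ now taken over the restricted region and lower bounded by $\inprod{\my_t}{w}$ for this $w$, which is $\ge \tfrac{1}{(1+\eps)\kappa_g}\Tr(\my_t)$---shows Line 9 never passes.

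It then remains to extract \eqref{eq:boundedreweighting} from the returned $\bx$. The ``yes''-case guarantee \eqref{eq:feasiblempcplus} of Lemma~\ref{lem:decidempccorrect} gives $\kappa(\ma^\top\diag{\bx}\ma) \le (1+\eps)\kappa = O(\kappa_g)$, and the intermediate bounds in that proof give $\lmin(\ma^\top\diag{\bx}\ma) = \lmin(\sum_i \bx_i a_i a_i^\top) = \Omega(1/\kappa) = \Omega(1/\kappa_g)$ in either termination branch (Line 15 or Line 18). Since $\bx$ is a convex combination of oracle outputs $x_s$, each of which lies in the restricted region and therefore has $\norm{x_s}_\infty \le \nu_g/\kappa_g$, we get $\norm{\bx}_\infty \le \nu_g/\kappa_g$; dividing yields $\norm{\bx}_\infty/\lmin(\ma^\top\diag{\bx}\ma) = O(\nu_g)$, which is the second clause of \eqref{eq:boundedreweighting}.

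The main obstacle I anticipate is the bookkeeping around the normalization and the $(1\pm\eps)$-versus-$\Theta(1)$ slack rather than any conceptual difficulty: Algorithm~\ref{alg:decidempc} assumes $1 \le \lmax(\mm_i) \le 2$, whereas $\lmax(a_ia_i^\top) = \norm{a_i}_2^2$, so one must first reduce to normalized rows via scale-invariance (replacing $a_i \gets a_i/\norm{a_i}_2$, $w_i \gets w_i\norm{a_i}_2^2$), which also rescales the bottom block of $\tma_i$ and hence the effective $\ell_\infty$ radius; one must keep the meanings of $\kappa_g,\nu_g$ consistent through this reduction, and match the $(1-\eps)\kappa$ feasibility test against the $(1+\eps)\kappa_g$ bound we actually have on $\kappa(\ma_g^\top\ma_g)$ (the reason we run with $\kappa = \Theta(\kappa_g)$, or incrementally search, instead of literally $\kappa = \kappa_g$). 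Everything else is a direct re-run of the proof of Lemma~\ref{lem:decidempccorrect}, since the modification touches only the feasible set of the packing oracle and leaves the MMW iterates and their estimated quantities unchanged.
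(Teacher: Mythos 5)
Your proposal is correct and follows essentially the same route as the paper's proof: both exhibit the indicator of the rows of $\ma_g$ scaled by $1/\lmax(\ma_g^\top\ma_g)$ as a feasible point for the block-decoupled packing constraint (verifying the top-block bound and $\norm{w}_\infty \le \nu_g/\kappa_g$ via \eqref{eq:kappanudef}), conclude that the Line 9 check never passes so the argument of Lemma~\ref{lem:decidempccorrect} goes through, and read off \eqref{eq:boundedreweighting} from $\sum_i w_i\tma_i \preceq \id$ together with the $\lmin$ lower bound on the output. Your extra bookkeeping about row normalization and the $(1\pm\eps)$ slack is the same adjustment the paper absorbs by "adjusting $\eps$ by a constant."
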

\begin{proof}
	We note that each matrix $\tma_i$ is the same as the corresponding $\ma_i$, with a single nonzero coordinate along the diagonal bottom-right block. The proof is almost identical to the proof of Lemma~\ref{lem:decidempccorrect}, so we highlight the main differences here. The main property that Lemma~\ref{lem:decidempccorrect} used was that Line 9 did not pass, which lets us conclude \eqref{eq:ygbound}. Hence, by the approximation guarantee on each $x_t$, it suffices to show that for any $\my_t \in \PSD^d$ with $\Tr(\my_t) = 1$, (analogously to \eqref{eq:optrelate}),
	\begin{equation}\label{eq:optgoodtma}
	\max_{\substack{\sum_{i \in [n]} w_i \tma_i \preceq \id \\ x \in \R^n_{\ge 0}}} \kappa_g \inprod{\my_t}{\sum_{i \in [n]} w_i \ma_i} \ge 1 - O(\eps).
	\end{equation}
	However, by taking $w$ to be the $0$-$1$ indicator of the rows of $\ma_g$ scaled down by $\lmax(\ma_g^\top \ma_g)$, we have by the promise \eqref{eq:kappanudef} that
	\begin{equation}\label{eq:wtmagood}
	\sum_{i \in [n]} w_i \tma_i = \frac{1}{\lam_{\max}(\ma_g^\top \ma_g)}\preceq \id \impliedby \frac{1}{\lam_{\max}(\ma_g^\top \ma_g)} \ma_g^\top \ma_g \preceq \id,\; \frac{\kappa_g}{\nu_g} \cdot \frac{1}{\lam_{\max}(\ma_g^\top \ma_g)} \le 1.
	\end{equation}
	Now, it suffices to observe that \eqref{eq:wtmagood} implies our indicator $w$ is feasible for \eqref{eq:optgoodtma}, so
	\[\max_{\substack{\sum_{i \in [n]} w_i \tma_i \preceq \id \\ x \in \R^n_{\ge 0}}} \kappa_g \inprod{\my_t}{\sum_{i \in [n]} w_i \ma_i} \ge \frac{\lmin\Par{\ma_g^\top \ma_g}}{\lmax\Par{\ma_g^\top \ma_g}} \cdot \kappa_g \ge 1 - O(\eps).\]
	The remainder of the proof is identical to Lemma~\ref{lem:decidempccorrect}, where we note the output $w$ satisfies
	\[\sum_{i \in [n]} w_i \tma_i \preceq \id,\; \sum_{i \in [n]} w_i \ma_i \succeq \frac{1 - O(\eps)}{\kappa_g} \id,\]
	which upon rearrangement and adjusting $\eps$ by a constant yields \eqref{eq:boundedreweighting}.
\end{proof}

By running the modification of Algorithm~\ref{alg:decidempc} described for a given level of $\nu_g$, it is straightforward to perform an incremental search on $\kappa_g$ to find a value satisfying the bound \eqref{eq:boundedreweighting} as described in Theorem~\ref{thm:ksi}. It is simple to verify that the modification in \eqref{eq:tmadef} is not the dominant runtime in any of Theorems~\ref{thm:ksi} or~\ref{thm:kso} since the added constraint is diagonal and $\tma_i$ is separable. Hence, for every ``level'' of $\nu_g$ in \eqref{eq:kappanudef} yielding an appropriate risk bound \eqref{eq:homriskweight}, we can match this risk bound up to a constant factor while obtaining computational speedups scaling with $\kappa_g$. 
	\section*{Acknowledgements}
	
	AS was supported in part by a Microsoft Research Faculty Fellowship, NSF CAREER Award CCF-1844855, NSF Grant CCF-1955039, a PayPal research award, and a Sloan Research Fellowship. KS was supported by a Stanford Data Science Scholarship and a Dantzig-Lieberman Operations Research Fellowship. KT was supported by a Google Ph.D.\ Fellowship, a Simons-Berkeley VMware Research Fellowship, a Microsoft Research Faculty Fellowship, NSF CAREER Award CCF-1844855, NSF Grant CCF-1955039, and a PayPal research award. 
	
	We would like to thank Huishuai Zhang for his contributions to an earlier version of this project, Moses Charikar and Yin Tat Lee for helpful conversations, and anonymous reviewers for feedback on earlier variations of this paper.
	
	\newpage
	\addcontentsline{toc}{section}{References}
	\bibliographystyle{alpha}
	\bibliography{structured-mpc}

\newcommand{\etalchar}[1]{$^{#1}$}
\begin{thebibliography}{MRWZ16b}

\bibitem[AHK12]{AroraHK12}
Sanjeev Arora, Elad Hazan, and Satyen Kale.
\newblock The multiplicative weights update method: a meta-algorithm and
  applications.
\newblock {\em Theory Comput.}, 8(1):121--164, 2012.

\bibitem[AJSS19]{AhmadinejadJSS19}
AmirMahdi Ahmadinejad, Arun Jambulapati, Amin Saberi, and Aaron Sidford.
\newblock Perron-frobenius theory in nearly linear time: Positive eigenvectors,
  m-matrices, graph kernels, and other applications.
\newblock In {\em Proceedings of the Thirtieth Annual {ACM-SIAM} Symposium on
  Discrete Algorithms, {SODA} 2019, San Diego, California, USA, January 6-9,
  2019}, pages 1387--1404, 2019.

\bibitem[AK07]{AroraK07}
Sanjeev Arora and Satyen Kale.
\newblock A combinatorial, primal-dual approach to semidefinite programs.
\newblock In {\em Proceedings of the 39th Annual {ACM} Symposium on Theory of
  Computing, San Diego, California, USA, June 11-13, 2007}, pages 227--236,
  2007.

\bibitem[AKK{\etalchar{+}}20]{AgarwalKKLNS20}
Naman Agarwal, Sham~M. Kakade, Rahul Kidambi, Yin~Tat Lee, Praneeth Netrapalli,
  and Aaron Sidford.
\newblock Leverage score sampling for faster accelerated regression and {ERM}.
\newblock In {\em International Conference on Algorithmic Learning Theory
  2020}, 2020.

\bibitem[AKPS19]{AdilKPS19}
Deeksha Adil, Rasmus Kyng, Richard Peng, and Sushant Sachdeva.
\newblock Iterative refinement for $\ell_p$-norm regression.
\newblock In {\em Proceedings of the Thirtieth Annual {ACM-SIAM} Symposium on
  Discrete Algorithms, {SODA} 2019, San Diego, California, USA, January 6-9,
  2019}, pages 1405--1424, 2019.

\bibitem[AL17]{Allen-ZhuL17}
Zeyuan Allen{-}Zhu and Yuanzhi Li.
\newblock Follow the compressed leader: Faster online learning of eigenvectors
  and faster {MMWU}.
\newblock In {\em Proceedings of the 34th International Conference on Machine
  Learning, {ICML} 2017, Sydney, NSW, Australia, 6-11 August 2017}, pages
  116--125, 2017.

\bibitem[All17]{Allen-Zhu17}
Zeyuan Allen{-}Zhu.
\newblock Katyusha: The first direct acceleration of stochastic gradient
  methods.
\newblock {\em J. Mach. Learn. Res.}, 18:221:1--221:51, 2017.

\bibitem[ALO16]{Allen-ZhuLO16}
Zeyuan {Allen Zhu}, Yin~Tat Lee, and Lorenzo Orecchia.
\newblock Using optimization to obtain a width-independent, parallel, simpler,
  and faster positive {SDP} solver.
\newblock In {\em Proceedings of the Twenty-Seventh Annual {ACM-SIAM} Symposium
  on Discrete Algorithms, {SODA} 2016, Arlington, VA, USA, January 10-12,
  2016}, pages 1824--1831, 2016.

\bibitem[AO19]{Allen-ZhuO19}
Zeyuan Allen{-}Zhu and Lorenzo Orecchia.
\newblock Nearly linear-time packing and covering {LP} solvers - achieving
  width-independence and -convergence.
\newblock {\em Math. Program.}, 175(1-2):307--353, 2019.

\bibitem[AQRY16]{ZhuQRY16}
Zeyuan {Allen Zhu}, Zheng Qu, Peter Richt{\'{a}}rik, and Yang Yuan.
\newblock Even faster accelerated coordinate descent using non-uniform
  sampling.
\newblock In {\em Proceedings of the 33nd International Conference on Machine
  Learning, {ICML} 2016, New York City, NY, USA, June 19-24, 2016}, pages
  1110--1119, 2016.

\bibitem[ARV09]{AroraRV09}
Sanjeev Arora, Satish Rao, and Umesh~V. Vazirani.
\newblock Expander flows, geometric embeddings and graph partitioning.
\newblock {\em J. {ACM}}, 56(2):5:1--5:37, 2009.

\bibitem[AW21]{AlmanW21}
Josh Alman and Virginia~Vassilevska Williams.
\newblock A refined laser method and faster matrix multiplication.
\newblock In {\em Proceedings of the 2021 {ACM-SIAM} Symposium on Discrete
  Algorithms, {SODA} 2021, Virtual Conference, January 10 - 13, 2021}, pages
  522--539. {SIAM}, 2021.

\bibitem[AZLOW17]{Allen-ZhuLiOliveira:2017}
Zeyuan Allen-Zhu, Yuanzhi Li, Rafael Oliveira, and Avi Wigderson.
\newblock Much faster algorithms for matrix scaling.
\newblock In {\em \FOCS{2017}}, pages 890--901, 2017.

\bibitem[BBN13]{BaesBN13}
Michel Baes, Michael B{\"{u}}rgisser, and Arkadi Nemirovski.
\newblock A randomized mirror-prox method for solving structured large-scale
  matrix saddle-point problems.
\newblock {\em {SIAM} J. Optimization}, 23(2):934--962, 2013.

\bibitem[BCLL18]{BubeckCLL18}
S{\'{e}}bastien Bubeck, Michael~B. Cohen, Yin~Tat Lee, and Yuanzhi Li.
\newblock An homotopy method for l\({}_{\mbox{p}}\) regression provably beyond
  self-concordance and in input-sparsity time.
\newblock In {\em Proceedings of the 50th Annual {ACM} {SIGACT} Symposium on
  Theory of Computing, {STOC} 2018, Los Angeles, CA, USA, June 25-29, 2018},
  pages 1130--1137, 2018.

\bibitem[BHV08]{BomanHV08}
Erik~G. Boman, Bruce Hendrickson, and Stephen~A. Vavasis.
\newblock Solving elliptic finite element systems in near-linear time with
  support preconditioners.
\newblock {\em {SIAM} J. Numerical Analysis}, 46(6):3264--3284, 2008.

\bibitem[BS95]{BlumS95}
Avrim Blum and Joel Spencer.
\newblock Coloring random and semi-random k-colorable graphs.
\newblock {\em J. Algorithms}, 19(2):204--234, 1995.

\bibitem[CDG19]{ChengDG19}
Yu~Cheng, Ilias Diakonikolas, and Rong Ge.
\newblock High-dimensional robust mean estimation in nearly-linear time.
\newblock In {\em Proceedings of the Thirtieth Annual {ACM-SIAM} Symposium on
  Discrete Algorithms, {SODA} 2019, San Diego, California, USA, January 6-9,
  2019}, pages 2755--2771, 2019.

\bibitem[CDST19]{CarmonDST19}
Yair Carmon, John~C. Duchi, Aaron Sidford, and Kevin Tian.
\newblock A rank-1 sketch for matrix multiplicative weights.
\newblock In {\em Conference on Learning Theory, {COLT} 2019, 25-28 June 2019,
  Phoenix, AZ, {USA}}, pages 589--623, 2019.

\bibitem[CFB19]{CherapanamjeriF19}
Yeshwanth Cherapanamjeri, Nicolas Flammarion, and Peter~L. Bartlett.
\newblock Fast mean estimation with sub-gaussian rates.
\newblock In {\em Conference on Learning Theory, {COLT} 2019, 25-28 June 2019,
  Phoenix, AZ, {USA}}, pages 786--806, 2019.

\bibitem[CG18]{ChengG18}
Yu~Cheng and Rong Ge.
\newblock Non-convex matrix completion against a semi-random adversary.
\newblock In {\em Conference On Learning Theory, {COLT} 2018, Stockholm,
  Sweden, 6-9 July 2018}, pages 1362--1394, 2018.

\bibitem[CKK{\etalchar{+}}18]{CohenKKPPRS18}
Michael~B. Cohen, Jonathan~A. Kelner, Rasmus Kyng, John Peebles, Richard Peng,
  Anup~B. Rao, and Aaron Sidford.
\newblock Solving directed laplacian systems in nearly-linear time through
  sparse {LU} factorizations.
\newblock In {\em 59th {IEEE} Annual Symposium on Foundations of Computer
  Science, {FOCS} 2018, Paris, France, October 7-9, 2018}, pages 898--909,
  2018.

\bibitem[CKM{\etalchar{+}}11]{ChristianoKMST11}
Paul~F. Christiano, Jonathan~A. Kelner, Aleksander Madry, Daniel~A. Spielman,
  and Shang{-}Hua Teng.
\newblock Electrical flows, laplacian systems, and faster approximation of
  maximum flow in undirected graphs.
\newblock In {\em Proceedings of the 43rd {ACM} Symposium on Theory of
  Computing, {STOC} 2011, San Jose, CA, USA, 6-8 June 2011}, pages 273--282,
  2011.

\bibitem[CKM{\etalchar{+}}14]{CohenKMPPRX14}
Michael~B. Cohen, Rasmus Kyng, Gary~L. Miller, Jakub~W. Pachocki, Richard Peng,
  Anup~B. Rao, and Shen~Chen Xu.
\newblock Solving {SDD} linear systems in nearly
  \emph{m}log\({}^{\mbox{1/2}}\)\emph{n} time.
\newblock In {\em Symposium on Theory of Computing, {STOC} 2014, New York, NY,
  USA, May 31 - June 03, 2014}, pages 343--352, 2014.

\bibitem[CKP{\etalchar{+}}16]{CohenKPPSV16}
Michael~B. Cohen, Jonathan~A. Kelner, John Peebles, Richard Peng, Aaron
  Sidford, and Adrian Vladu.
\newblock Faster algorithms for computing the stationary distribution,
  simulating random walks, and more.
\newblock In {\em {IEEE} 57th Annual Symposium on Foundations of Computer
  Science, {FOCS} 2016, 9-11 October 2016, Hyatt Regency, New Brunswick, New
  Jersey, {USA}}, pages 583--592, 2016.

\bibitem[CKP{\etalchar{+}}17]{CohenKPPRSV17}
Michael~B. Cohen, Jonathan~A. Kelner, John Peebles, Richard Peng, Anup~B. Rao,
  Aaron Sidford, and Adrian Vladu.
\newblock Almost-linear-time algorithms for markov chains and new spectral
  primitives for directed graphs.
\newblock In {\em Proceedings of the 49th Annual {ACM} {SIGACT} Symposium on
  Theory of Computing, {STOC} 2017, Montreal, QC, Canada, June 19-23, 2017},
  pages 410--419, 2017.

\bibitem[CMSV17]{CohenMSV17}
Michael~B. Cohen, Aleksander Madry, Piotr Sankowski, and Adrian Vladu.
\newblock Negative-weight shortest paths and unit capacity minimum cost flow in
  {\~{o}} (\emph{m}\({}^{\mbox{10/7}}\) log \emph{W}) time (extended abstract).
\newblock In {\em Proceedings of the Twenty-Eighth Annual {ACM-SIAM} Symposium
  on Discrete Algorithms, {SODA} 2017, Barcelona, Spain, Hotel Porta Fira,
  January 16-19}, pages 752--771, 2017.

\bibitem[CMTV17a]{CohenMadryTsipras:2017}
Michael~B. Cohen, Aleksander Madry, Dimitris Tsipras, and Adrian Vladu.
\newblock Matrix scaling and balancing via box constrained newton's method and
  interior point methods.
\newblock In {\em \FOCS{2017}}, pages 902--913, 2017.

\bibitem[CMTV17b]{CohenMTV17}
Michael~B. Cohen, Aleksander Madry, Dimitris Tsipras, and Adrian Vladu.
\newblock Matrix scaling and balancing via box constrained newton's method and
  interior point methods.
\newblock In {\em 58th {IEEE} Annual Symposium on Foundations of Computer
  Science, {FOCS} 2017, Berkeley, CA, USA, October 15-17, 2017}, pages
  902--913, 2017.

\bibitem[CMY20]{CherapanamjeriM20}
Yeshwanth Cherapanamjeri, Sidhanth Mohanty, and Morris Yau.
\newblock List decodable mean estimation in nearly linear time.
\newblock In {\em 61st {IEEE} Annual Symposium on Foundations of Computer
  Science, {FOCS} 2020, Durham, NC, USA, November 16-19, 2020}, pages 141--148,
  2020.

\bibitem[DBL14]{DefazioBL14}
Aaron Defazio, Francis~R. Bach, and Simon Lacoste{-}Julien.
\newblock {SAGA:} {A} fast incremental gradient method with support for
  non-strongly convex composite objectives.
\newblock In {\em Advances in Neural Information Processing Systems 27: Annual
  Conference on Neural Information Processing Systems 2014, December 8-13 2014,
  Montreal, Quebec, Canada}, pages 1646--1654, 2014.

\bibitem[DG03]{DasguptaG03}
Sanjoy Dasgupta and Anupam Gupta.
\newblock An elementary proof of a theorem of johnson and lindenstrauss.
\newblock {\em Random Struct. Algorithms}, 22(1):60--65, 2003.

\bibitem[DHS11]{DuchiHazanSinger:2011}
John~C. Duchi, Elad Hazan, and Yoram Singer.
\newblock Adaptive subgradient methods for online learning and stochastic
  optimization.
\newblock {\em Journal of Machine Learning Research}, 12:2121--2159, 2011.

\bibitem[DKP{\etalchar{+}}17]{DurfeeKPRS17}
David Durfee, Rasmus Kyng, John Peebles, Anup~B. Rao, and Sushant Sachdeva.
\newblock Sampling random spanning trees faster than matrix multiplication.
\newblock In {\em Proceedings of the 49th Annual {ACM} {SIGACT} Symposium on
  Theory of Computing, {STOC} 2017, Montreal, QC, Canada, June 19-23, 2017},
  pages 730--742, 2017.

\bibitem[DS08]{DaitchS08}
Samuel~I. Daitch and Daniel~A. Spielman.
\newblock Faster approximate lossy generalized flow via interior point
  algorithms.
\newblock In {\em Proceedings of the 40th Annual {ACM} Symposium on Theory of
  Computing, Victoria, British Columbia, Canada, May 17-20, 2008}, pages
  451--460, 2008.

\bibitem[DWZ23]{DuanWZ23}
Ran Duan, Hongxun Wu, and Renfei Zhou.
\newblock Faster matrix multiplication via asymmetric hashing.
\newblock In {\em 64th {IEEE} Annual Symposium on Foundations of Computer
  Science, {FOCS} 2023}. {IEEE}, 2023.

\bibitem[FK00]{FeigeK00}
Uriel Feige and Robert Krauthgamer.
\newblock Finding and certifying a large hidden clique in a semirandom graph.
\newblock {\em Random Struct. Algorithms}, 16(2):195--208, 2000.

\bibitem[FK01]{FeigeK01}
Uriel Feige and Joe Kilian.
\newblock Heuristics for semirandom graph problems.
\newblock {\em J. Comput. Syst. Sci.}, 63(4):639--671, 2001.

\bibitem[FLS{\etalchar{+}}17]{fallat2017}
Shaun Fallat, Steffen Lauritzen, Kayvan Sadeghi, Caroline Uhler, Nanny Wermuth,
  and Piotr Zwiernik.
\newblock Total positivity in markov structures.
\newblock {\em Ann. Statist.}, 45(3):1152--1184, 06 2017.

\bibitem[FS55]{ForsytheStraus:1955}
G.~E. Forsythe and E.~G. Straus.
\newblock On best conditioned matrices.
\newblock {\em Proceedings of the American Mathematical Society},
  6(3):340--345, 1955.

\bibitem[Gal14]{Gall14a}
Fran{\c{c}}ois~Le Gall.
\newblock Powers of tensors and fast matrix multiplication.
\newblock In {\em International Symposium on Symbolic and Algebraic
  Computation, {ISSAC} '14, Kobe, Japan, July 23-25, 2014}, pages 296--303.
  {ACM}, 2014.

\bibitem[GHM15]{GarberHM15}
Dan Garber, Elad Hazan, and Tengyu Ma.
\newblock Online learning of eigenvectors.
\newblock In {\em Proceedings of the 32nd International Conference on Machine
  Learning, {ICML} 2015, Lille, France, 6-11 July 2015}, pages 560--568, 2015.

\bibitem[GM12]{GartnerM00}
Bernd Gartner and Jiri Matousek.
\newblock {\em Approximation Algorithms and Semidefinite Programming}.
\newblock Springer, 2012.

\bibitem[GO18]{GargOliveira:2018}
Ankit Garg and Rafael Oliveira.
\newblock Recent progress on scaling algorithms and applications.
\newblock {\em Bulletin of EATCS}, 2(125), 2018.

\bibitem[GR89]{GreenbaumRodrigue:1989}
A.~Greenbaum and G.~H. Rodrigue.
\newblock Optimal preconditioners of a given sparsity pattern.
\newblock {\em BIT Numerical Mathematics}, 29(4):610--634, 1989.

\bibitem[Gre90]{Greene90}
William~H. Greene.
\newblock {\em Econometric analysis}.
\newblock Prentice Hall, 1990.

\bibitem[GW95]{GoemansW95}
Michel~X. Goemans and David~P. Williamson.
\newblock Improved approximation algorithms for maximum cut and satisfiability
  problems using semidefinite programming.
\newblock {\em J. {ACM}}, 42(6):1115--1145, 1995.

\bibitem[HJS{\etalchar{+}}22]{HuangJ0T022}
Baihe Huang, Shunhua Jiang, Zhao Song, Runzhou Tao, and Ruizhe Zhang.
\newblock Solving {SDP} faster: {A} robust {IPM} framework and efficient
  implementation.
\newblock In {\em 63rd {IEEE} Annual Symposium on Foundations of Computer
  Science, {FOCS} 2022, Denver, CO, USA, October 31 - November 3, 2022}, pages
  233--244. {IEEE}, 2022.

\bibitem[HMMT18]{effResRecArx18}
Jeremy~G. Hoskins, Cameron Musco, Christopher Musco, and Charalampos~E.
  Tsourakakis.
\newblock Learning networks from random walk-based node similarities.
\newblock {\em CoRR}, abs/1801.07386, 2018.

\bibitem[HMT11]{HalkoMT11}
Nathan Halko, Per-Gunnar Martinsson, and Joel Tropp.
\newblock Finding structure with randomness: Probabilistic algorithms for
  constructing approximate matrix decompositions.
\newblock {\em SIAM Review}, 53(2):217--288, 2011.

\bibitem[Jer92]{Jerrum92}
Mark Jerrum.
\newblock Large cliques elude the metropolis process.
\newblock {\em Random Struct. Algorithms}, 3(4):347--360, 1992.

\bibitem[JJUW11]{JainJUW11}
Rahul Jain, Zhengfeng Ji, Sarvagya Upadhyay, and John Watrous.
\newblock {QIP} = {PSPACE}.
\newblock {\em J. {ACM}}, 58(6):30:1--30:27, 2011.

\bibitem[JKL{\etalchar{+}}20]{JiangKLP020}
Haotian Jiang, Tarun Kathuria, Yin~Tat Lee, Swati Padmanabhan, and Zhao Song.
\newblock A faster interior point method for semidefinite programming.
\newblock In {\em 61st {IEEE} Annual Symposium on Foundations of Computer
  Science, {FOCS} 2020, Durham, NC, USA, November 16-19, 2020}, pages 910--918,
  2020.

\bibitem[JLL{\etalchar{+}}20]{JambulapatiLLPT20}
Arun Jambulapati, Yin~Tat Lee, Jerry Li, Swati Padmanabhan, and Kevin Tian.
\newblock Positive semidefinite programming: Mixed, parallel, and
  width-independent.
\newblock In {\em Proceedings of the 52nd Annual {ACM} {SIGACT} Symposium on
  Theory of Computing, {STOC} 2020}, 2020.

\bibitem[JLL{\etalchar{+}}21]{JambulapatiLLPT21}
Arun Jambulapati, Yin~Tat Lee, Jerry Li, Swati Padmanabhan, and Kevin Tian.
\newblock Positive semidefinite programming: Mixed, parallel, and
  width-independent.
\newblock {\em CoRR}, abs/2002.04830v3, 2021.

\bibitem[JLM{\etalchar{+}}21]{JambulapatiLMST21}
Arun Jambulapati, Jerry Li, Christopher Musco, Aaron Sidford, and Kevin Tian.
\newblock Fast and near-optimal diagonal preconditioning.
\newblock {\em CoRR}, abs/2008.01722, 2021.

\bibitem[JLT20]{Jambulapati0T20}
Arun Jambulapati, Jerry Li, and Kevin Tian.
\newblock Robust sub-gaussian principal component analysis and
  width-independent schatten packing.
\newblock In {\em Advances in Neural Information Processing Systems 33: Annual
  Conference on Neural Information Processing Systems 2020, NeurIPS 2020,
  December 6-12, 2020, virtual}, 2020.

\bibitem[JS19]{JinS19}
Yujia Jin and Aaron Sidford.
\newblock Principal component projection and regression in nearly linear time
  through asymmetric {SVRG}.
\newblock In {\em Advances in Neural Information Processing Systems 32: Annual
  Conference on Neural Information Processing Systems 2019, NeurIPS 2019,
  December 8-14, 2019, Vancouver, BC, Canada}, pages 3863--3873, 2019.

\bibitem[JS21]{JambulapatiS21}
Arun Jambulapati and Aaron Sidford.
\newblock Ultrasparse ultrasparsifiers and faster laplacian system solvers.
\newblock In {\em Proceedings of the 2021 {ACM-SIAM} Symposium on Discrete
  Algorithms, {SODA} 2021, Virtual Conference, January 10 - 13, 2021}, pages
  540--559, 2021.

\bibitem[JSS18]{JambulapatiSS18}
Arun Jambulapati, Kirankumar Shiragur, and Aaron Sidford.
\newblock Efficient structured matrix recovery and nearly-linear time
  algorithms for solving inverse symmetric m-matrices.
\newblock {\em CoRR}, abs/1812.06295, 2018.

\bibitem[JY12]{JainY12}
Rahul Jain and Penghui Yao.
\newblock A parallel approximation algorithm for mixed packing and covering
  semidefinite programs.
\newblock {\em CoRR}, abs/1201.6090, 2012.

\bibitem[JZ13]{Johnson013}
Rie Johnson and Tong Zhang.
\newblock Accelerating stochastic gradient descent using predictive variance
  reduction.
\newblock In {\em Advances in Neural Information Processing Systems 26: 27th
  Annual Conference on Neural Information Processing Systems 2013. Proceedings
  of a meeting held December 5-8, 2013, Lake Tahoe, Nevada, United States},
  pages 315--323, 2013.

\bibitem[KB15]{KingmaBa:2015}
Diederik~P. Kingma and Jimmy Ba.
\newblock Adam: A method for stochastic optimization.
\newblock {\em \ICLR{2015}}, 2015.

\bibitem[KLM{\etalchar{+}}14]{KapralovLMMS14}
Michael Kapralov, Yin~Tat Lee, Cameron Musco, Christopher Musco, and Aaron
  Sidford.
\newblock Single pass spectral sparsification in dynamic streams.
\newblock In {\em 55th {IEEE} Annual Symposium on Foundations of Computer
  Science, {FOCS} 2014, Philadelphia, PA, USA, October 18-21, 2014}, pages
  561--570, 2014.

\bibitem[KLP{\etalchar{+}}16]{KyngLPSS16}
Rasmus Kyng, Yin~Tat Lee, Richard Peng, Sushant Sachdeva, and Daniel~A.
  Spielman.
\newblock Sparsified cholesky and multigrid solvers for connection laplacians.
\newblock In {\em Proceedings of the 48th Annual {ACM} {SIGACT} Symposium on
  Theory of Computing, {STOC} 2016, Cambridge, MA, USA, June 18-21, 2016},
  pages 842--850, 2016.

\bibitem[KM09]{KelnerM09}
Jonathan~A. Kelner and Aleksander Madry.
\newblock Faster generation of random spanning trees.
\newblock In {\em 50th Annual {IEEE} Symposium on Foundations of Computer
  Science, {FOCS} 2009, October 25-27, 2009, Atlanta, Georgia, {USA}}, pages
  13--21, 2009.

\bibitem[KMP10]{KoutisMP10}
Ioannis Koutis, Gary~L. Miller, and Richard Peng.
\newblock Approaching optimality for solving {SDD} linear systems.
\newblock In {\em 51th Annual {IEEE} Symposium on Foundations of Computer
  Science, {FOCS} 2010, October 23-26, 2010, Las Vegas, Nevada, {USA}}, pages
  235--244, 2010.

\bibitem[KMP11]{KoutisMP11}
Ioannis Koutis, Gary~L. Miller, and Richard Peng.
\newblock A nearly-m log n time solver for {SDD} linear systems.
\newblock In {\em {IEEE} 52nd Annual Symposium on Foundations of Computer
  Science, {FOCS} 2011, Palm Springs, CA, USA, October 22-25, 2011}, pages
  590--598, 2011.

\bibitem[KOSZ13]{KelnerOSZ13}
Jonathan~A. Kelner, Lorenzo Orecchia, Aaron Sidford, and Zeyuan~Allen Zhu.
\newblock A simple, combinatorial algorithm for solving {SDD} systems in
  nearly-linear time.
\newblock In {\em Symposium on Theory of Computing Conference, STOC'13, Palo
  Alto, CA, USA, June 1-4, 2013}, pages 911--920, 2013.

\bibitem[KR83]{KARLIN1983419}
Samuel Karlin and Yosef Rinott.
\newblock M-matrices as covariance matrices of multinormal distributions.
\newblock {\em Linear Algebra and its Applications}, 52-53:419 -- 438, 1983.

\bibitem[KRSS15]{KyngRSS15}
Rasmus Kyng, Anup Rao, Sushant Sachdeva, and Daniel~A. Spielman.
\newblock Algorithms for lipschitz learning on graphs.
\newblock In {\em Proceedings of The 28th Conference on Learning Theory, {COLT}
  2015, Paris, France, July 3-6, 2015}, pages 1190--1223, 2015.

\bibitem[KRU14]{KnightRuizUcar:2014}
Philip~A. Knight, Daniel Ruiz, and Bora U{\c c}ar.
\newblock A symmetry preserving algorithm for matrix scaling.
\newblock {\em SIAM Journal on Matrix Analysis and Applications},
  35(3):931--955, 2014.

\bibitem[KS16]{KyngS16}
Rasmus Kyng and Sushant Sachdeva.
\newblock Approximate gaussian elimination for laplacians - fast, sparse, and
  simple.
\newblock In {\em {IEEE} 57th Annual Symposium on Foundations of Computer
  Science, {FOCS} 2016, 9-11 October 2016, Hyatt Regency, New Brunswick, New
  Jersey, {USA}}, pages 573--582, 2016.

\bibitem[KV05]{KalaiV05}
Adam~Tauman Kalai and Santosh~S. Vempala.
\newblock Efficient algorithms for online decision problems.
\newblock {\em J. Comput. Syst. Sci.}, 71(3):291--307, 2005.

\bibitem[LMP13]{LiMP13}
Mu~Li, Gary~L. Miller, and Richard Peng.
\newblock Iterative row sampling.
\newblock In {\em 54th Annual {IEEE} Symposium on Foundations of Computer
  Science, {FOCS} 2013, 26-29 October, 2013, Berkeley, CA, {USA}}, pages
  127--136, 2013.

\bibitem[LN93]{LubyN93}
Michael Luby and Noam Nisan.
\newblock A parallel approximation algorithm for positive linear programming.
\newblock In {\em Proceedings of the Twenty-Fifth Annual {ACM} Symposium on
  Theory of Computing, May 16-18, 1993, San Diego, CA, {USA}}, pages 448--457,
  1993.

\bibitem[LS13]{LeeS13}
Yin~Tat Lee and Aaron Sidford.
\newblock Efficient accelerated coordinate descent methods and faster
  algorithms for solving linear systems.
\newblock In {\em 54th Annual {IEEE} Symposium on Foundations of Computer
  Science, {FOCS} 2013, 26-29 October, 2013, Berkeley, CA, {USA}}, pages
  147--156, 2013.

\bibitem[LS14]{LeeS14}
Yin~Tat Lee and Aaron Sidford.
\newblock Path finding methods for linear programming: Solving linear programs
  in {\~{o}}(vrank) iterations and faster algorithms for maximum flow.
\newblock In {\em 55th {IEEE} Annual Symposium on Foundations of Computer
  Science, {FOCS} 2014, Philadelphia, PA, USA, October 18-21, 2014}, pages
  424--433, 2014.

\bibitem[LS17]{Lee017}
Yin~Tat Lee and He~Sun.
\newblock An sdp-based algorithm for linear-sized spectral sparsification.
\newblock In {\em Proceedings of the 49th Annual {ACM} {SIGACT} Symposium on
  Theory of Computing, {STOC} 2017, Montreal, QC, Canada, June 19-23, 2017},
  pages 678--687, 2017.

\bibitem[LSTZ20]{LiSTZ20}
Jerry Li, Aaron Sidford, Kevin Tian, and Huishuai Zhang.
\newblock Well-conditioned methods for ill-conditioned systems: Linear
  regression with semi-random noise.
\newblock {\em CoRR}, abs/2008.01722, 2020.

\bibitem[Mad13]{Madry13}
Aleksander Madry.
\newblock Navigating central path with electrical flows: From flows to
  matchings, and back.
\newblock In {\em 54th Annual {IEEE} Symposium on Foundations of Computer
  Science, {FOCS} 2013, 26-29 October, 2013, Berkeley, CA, {USA}}, pages
  253--262, 2013.

\bibitem[MM15]{MuscoM15}
Cameron Musco and Christopher Musco.
\newblock Randomized block krylov methods for stronger and faster approximate
  singular value decomposition.
\newblock In {\em Advances in Neural Information Processing Systems 28: Annual
  Conference on Neural Information Processing Systems 2015, December 7-12,
  2015, Montreal, Quebec, Canada}, pages 1396--1404, 2015.

\bibitem[MMV12]{MakarychevMV12}
Konstantin Makarychev, Yury Makarychev, and Aravindan Vijayaraghavan.
\newblock Approximation algorithms for semi-random partitioning problems.
\newblock In {\em Proceedings of the 44th Symposium on Theory of Computing
  Conference, {STOC} 2012, New York, NY, USA, May 19 - 22, 2012}, pages
  367--384, 2012.

\bibitem[MPW16]{MoitraPW16}
Ankur Moitra, William Perry, and Alexander~S. Wein.
\newblock How robust are reconstruction thresholds for community detection?
\newblock In {\em Proceedings of the 48th Annual {ACM} {SIGACT} Symposium on
  Theory of Computing, {STOC} 2016, Cambridge, MA, USA, June 18-21, 2016},
  pages 828--841, 2016.

\bibitem[MRWZ16a]{mahoney2016approximating}
Michael~W Mahoney, Satish Rao, Di~Wang, and Peng Zhang.
\newblock Approximating the solution to mixed packing and covering lps in
  parallel o~(epsilon\^{}$\{$-3$\}$) time.
\newblock In {\em 43rd International Colloquium on Automata, Languages, and
  Programming (ICALP 2016)}. Schloss Dagstuhl-Leibniz-Zentrum fuer Informatik,
  2016.

\bibitem[MRWZ16b]{MahoneyRWZ16}
Michael~W. Mahoney, Satish Rao, Di~Wang, and Peng Zhang.
\newblock Approximating the solution to mixed packing and covering lps in
  parallel $\tilde{O}(\epsilon^{-3})$ time.
\newblock In {\em 43rd International Colloquium on Automata, Languages, and
  Programming, {ICALP} 2016, July 11-15, 2016, Rome, Italy}, pages 52:1--52:14,
  2016.

\bibitem[MST15]{MadryST15}
Aleksander Madry, Damian Straszak, and Jakub Tarnawski.
\newblock Fast generation of random spanning trees and the effective resistance
  metric.
\newblock In {\em Proceedings of the Twenty-Sixth Annual {ACM-SIAM} Symposium
  on Discrete Algorithms, {SODA} 2015, San Diego, CA, USA, January 4-6, 2015},
  pages 2019--2036, 2015.

\bibitem[OSV12]{OrecchiaSV12}
Lorenzo Orecchia, Sushant Sachdeva, and Nisheeth~K. Vishnoi.
\newblock Approximating the exponential, the lanczos method and an
  {\~{o}}(\emph{m})-time spectral algorithm for balanced separator.
\newblock In {\em Proceedings of the 44th Symposium on Theory of Computing
  Conference, {STOC} 2012, New York, NY, USA, May 19 - 22, 2012}, pages
  1141--1160, 2012.

\bibitem[OV11]{OrecchiaV11}
Lorenzo Orecchia and Nisheeth~K. Vishnoi.
\newblock Towards an sdp-based approach to spectral methods: {A}
  nearly-linear-time algorithm for graph partitioning and decomposition.
\newblock In {\em Proceedings of the Twenty-Second Annual {ACM-SIAM} Symposium
  on Discrete Algorithms, {SODA} 2011, San Francisco, California, USA, January
  23-25, 2011}, pages 532--545, 2011.

\bibitem[PG90]{PiniGambolati:1990}
Giorgio Pini and Giuseppe Gambolati.
\newblock Is a simple diagonal scaling the best preconditioner for conjugate
  gradients on supercomputers?
\newblock {\em Advances in Water Resources}, 13(3):147--153, 1990.

\bibitem[PS14]{PengS14}
Richard Peng and Daniel~A. Spielman.
\newblock An efficient parallel solver for {SDD} linear systems.
\newblock In {\em Symposium on Theory of Computing, {STOC} 2014, New York, NY,
  USA, May 31 - June 03, 2014}, pages 333--342, 2014.

\bibitem[PST95]{PlotkinST95}
Serge~A. Plotkin, David~B. Shmoys, and {\'{E}}va Tardos.
\newblock Fast approximation algorithms for fractional packing and covering
  problems.
\newblock {\em Math. Oper. Res.}, 20(2):257--301, 1995.

\bibitem[PTZ16]{PengTZ16}
Richard Peng, Kanat Tangwongsan, and Peng Zhang.
\newblock Faster and simpler width-independent parallel algorithms for positive
  semidefinite programming.
\newblock {\em CoRR}, abs/1201.5135v3, 2016.

\bibitem[QGH{\etalchar{+}}22]{QuGHYZ22}
Zhaonan Qu, Wenzhi Gao, Oliver Hinder, Yinyu Ye, and Zhengyuan Zhou.
\newblock Optimal diagonal preconditioning: Theory and practice.
\newblock {\em CoRR}, abs/2209.00809, 2022.

\bibitem[QYZ20]{QuYeZhou:2020}
Zhaonan Qu, Yinyu Ye, and Zhengyuan Zhou.
\newblock Diagonal preconditioning: Theory and algorithms.
\newblock {\em \arXiv{2003.07545}}, 2020.

\bibitem[RSL18]{RaghunathanSL18}
Aditi Raghunathan, Jacob Steinhardt, and Percy Liang.
\newblock Semidefinite relaxations for certifying robustness to adversarial
  examples.
\newblock In {\em Advances in Neural Information Processing Systems 31: Annual
  Conference on Neural Information Processing Systems 2018, NeurIPS 2018,
  December 3-8, 2018, Montr{\'{e}}al, Canada}, pages 10900--10910, 2018.

\bibitem[RST09]{RokhlinST09}
Vladimir Rokhlin, Arthur Szlam, and Mark Tygert.
\newblock A randomized algorithm for principal component analysis.
\newblock {\em SIAM Journal on Matrix Analysis and Applications},
  31(3):1100--1124, 2009.

\bibitem[Sch18]{Schild18}
Aaron Schild.
\newblock An almost-linear time algorithm for uniform random spanning tree
  generation.
\newblock In {\em Proceedings of the 50th Annual {ACM} {SIGACT} Symposium on
  Theory of Computing, {STOC} 2018, Los Angeles, CA, USA, June 25-29, 2018},
  pages 214--227, 2018.

\bibitem[SH14]{mm14}
Martin Slawski and Matthias Hein.
\newblock Estimation of positive definite m-matrices and structure learning for
  attractive gaussian markov random fields.
\newblock 473, 04 2014.

\bibitem[Spi19]{Spielman19}
Daniel~A. Spielman.
\newblock Spectral and algebraic graph theory.
\newblock Lecture notes,
  \url{http://cs-www.cs.yale.edu/homes/spielman/sagt/sagt.pdf}, 2019.

\bibitem[ST04]{SpielmanT04}
Daniel~A. Spielman and Shang{-}Hua Teng.
\newblock Nearly-linear time algorithms for graph partitioning, graph
  sparsification, and solving linear systems.
\newblock In {\em Proceedings of the 36th Annual {ACM} Symposium on Theory of
  Computing, Chicago, IL, USA, June 13-16, 2004}, pages 81--90, 2004.

\bibitem[SV06]{StrohmerV06}
Thomas Strohmer and Roman Vershynin.
\newblock A randomized solver for linear systems with exponential convergence.
\newblock In {\em Approximation, Randomization, and Combinatorial Optimization.
  Algorithms and Techniques, 9th International Workshop on Approximation
  Algorithms for Combinatorial Optimization Problems, {APPROX} 2006 and 10th
  International Workshop on Randomization and Computation, {RANDOM} 2006,
  Barcelona, Spain, August 28-30 2006, Proceedings}, pages 499--507, 2006.

\bibitem[SV14]{SachdevaV14}
Sushant Sachdeva and Nisheeth~K. Vishnoi.
\newblock Faster algorithms via approximation theory.
\newblock {\em Foundations and Trends in Theoretical Computer Science},
  9(2):125--210, 2014.

\bibitem[Tre12]{Trefethen:2012}
Lloyd~N. Trefethen.
\newblock {\em Approximation Theory and Approximation Practice}.
\newblock Society for Industrial and Applied Mathematics, USA, 2012.

\bibitem[VB96]{VandenbergheB96}
Lieven Vandenberghe and Stephen~P. Boyd.
\newblock Semidefinite programming.
\newblock {\em {SIAM} Review}, 38(1):49--95, 1996.

\bibitem[vdS69]{Sluis:1969}
A.~van~der Sluis.
\newblock Condition numbers and equilibration of matrices.
\newblock {\em Numerische Mathematik}, 14(1):14--23, 1969.

\bibitem[Wil12]{williams12}
Virginia~Vassilevska Williams.
\newblock Multiplying matrices faster than coppersmith-winograd.
\newblock In {\em Proceedings of the 44th Symposium on Theory of Computing
  Conference, {STOC} 2012, New York, NY, USA, May 19 - 22, 2012}, pages
  887--898. {ACM}, 2012.

\bibitem[WK06]{WarmuthK06}
Manfred~K. Warmuth and Dima Kuzmin.
\newblock Randomized {PCA} algorithms with regret bounds that are logarithmic
  in the dimension.
\newblock In {\em Advances in Neural Information Processing Systems 19,
  Proceedings of the Twentieth Annual Conference on Neural Information
  Processing Systems, Vancouver, British Columbia, Canada, December 4-7, 2006},
  pages 1481--1488, 2006.

\bibitem[WSV00]{WolkowiczSV00}
Henry Wolkowicz, Romesh Saigal, and Lieven Vandenberghe.
\newblock {\em Handbook of Semidefinite Programming: Theory, Algorithms, and
  Applications}.
\newblock Springer Nature, 2000.

\bibitem[WXXZ23]{WXXZ23}
Virginia~Vassilevska Williams, Yinzhan Xu, Zixuan Xu, and Renfei Zhou.
\newblock New bounds for matrix multiplication: from alpha to omega.
\newblock {\em CoRR}, abs/2307.07970, 2023.

\bibitem[You01]{Young01}
Neal~E. Young.
\newblock Sequential and parallel algorithms for mixed packing and covering.
\newblock In {\em 42nd Annual Symposium on Foundations of Computer Science,
  {FOCS} 2001, 14-17 October 2001, Las Vegas, Nevada, {USA}}, pages 538--546,
  2001.

\bibitem[ZLO15]{ZhuLO15}
Zeyuan~Allen Zhu, Zhenyu Liao, and Lorenzo Orecchia.
\newblock Spectral sparsification and regret minimization beyond matrix
  multiplicative updates.
\newblock In {\em Proceedings of the Forty-Seventh Annual {ACM} on Symposium on
  Theory of Computing, {STOC} 2015, Portland, OR, USA, June 14-17, 2015}, pages
  237--245, 2015.

\end{thebibliography}
	\newpage
	\begin{appendix}
		%!TEX root = ./structured-mpc.tex

\section{Deferred proofs from Sections~\ref{sec:solvers} and~\ref{sec:scaling}}
\label{app:solversdeferred}

\subsection{Proof of Proposition~\ref{prop:optv}}

We give a proof of Proposition~\ref{prop:optv} in this section. First, we recall an algorithm for the testing variant of a pure packing SDP problem given in \cite{Jambulapati0T20}.

\begin{proposition}[Theorem 5, \cite{Jambulapati0T20}]\label{prop:decisionpack}
	There is an algorithm, $\algtest$, which given matrices $\{\mm_i\}_{i \in [n]}$ and a parameter $C$, is an $\eps$-approximate tester for the decision problem
	\begin{equation}\label{eq:packdecision}\text{does there exist } w \in \Delta^n \text{ such that } \sum_{i \in [n]} w_i \mm_i \preceq C\id?\end{equation}
	The algorithm $\algtest$ succeeds with probability $\ge 1 - \delta$ and runs in time 
	\[O\Par{\tmv\Par{\Brace{\mm_i}_{i \in [n]}} \cdot \frac{\log^2(nd(\delta\eps)^{-1})\log^2 d}{\eps^5}}.\]
\end{proposition}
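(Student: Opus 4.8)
Since Proposition~\ref{prop:decisionpack} is quoted directly as Theorem~5 of \cite{Jambulapati0T20}, its proof is by reference to that work; but for completeness I sketch the route one would take to establish a width-independent tester for packing SDPs. After rescaling $\mm_i \gets C^{-1}\mm_i$ we may assume $C = 1$, so that \eqref{eq:packdecision} is equivalent to asking whether $\opt \defeq \min_{w \in \Delta^n} \lmax(\sum_{i \in [n]} w_i \mm_i) \le 1$, and constructing an $\eps$-approximate tester reduces to computing $\opt$ to a multiplicative $1 \pm O(\eps)$ factor together with a $w \in \Delta^n$ attaining it, since the membership tests for $(1-\eps)C$ and $(1+\eps)C$ then follow by comparing the estimate to $C$.

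The engine is entropic mirror descent (equivalently matrix multiplicative weights in saddle-point form) applied to the softmax smoothing $f_\mu(w) \defeq \mu \log \Tr\exp(\mu^{-1}\sum_{i \in [n]} w_i \mm_i)$ with $\mu = \Theta(\eps/\log d)$, which obeys $\lmax(\sum_i w_i\mm_i) \le f_\mu(w) \le \lmax(\sum_i w_i\mm_i) + \mu\log d$ and hence tracks $\opt$ up to additive $O(\eps)$. Its gradient is $[\nabla f_\mu(w)]_i = \inprod{\mm_i}{\my(w)}$ for the density matrix $\my(w) \defeq \exp(\mu^{-1}\sum_j w_j \mm_j)/\Tr\exp(\cdot) \succeq \mzero$, and the crucial structural point is that this gradient is entrywise nonnegative; combined with the entropic mirror map on $\Delta^n$ and a multiplicative (exponentiated-gradient) update, this is exactly what makes the Bregman-divergence telescoping argument yield an iteration count that is $\poly(\eps^{-1})\cdot \log n \cdot \log d$ and \emph{independent of the width} $\max_i \lmax(\mm_i)$, provided one truncates the step so that the effective spectral range of the exponent stays $\tO(\eps^{-1})$. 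Each iteration needs only an approximate gradient, obtained by (i) replacing $\exp(\cdot)$ with its degree-$\tO(\eps^{-1/2})$ polynomial approximation from Fact~\ref{fact:polyexp} and Corollary~\ref{cor:approxexp}, and (ii) estimating all the quadratic forms $\inprod{\mm_i}{\my(w)}$ at once through a Johnson--Lindenstrauss sketch (Fact~\ref{fact:jl}), which reduces the work to $\tO(\eps^{-2})$ matrix--vector products through $\{\mm_i\}_{i \in [n]}$; multiplying the iteration count by this per-iteration cost, and union bounding the sketch and top-eigenvalue estimates across iterations to control the failure probability $\delta$, gives the stated $O(\tmv(\{\mm_i\}_{i\in[n]}) \cdot \eps^{-5} \log^2(nd(\delta\eps)^{-1})\log^2 d)$ bound.

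The one genuinely hard ingredient --- and the technical heart of \cite{Jambulapati0T20} --- is proving the width-independence of this potential argument in the non-commuting SDP setting: unlike the scalar packing-LP case, one must control how the matrix-exponential gain $\my(w)$ interacts with the non-commuting constraint matrices $\mm_i$ and show the smoothed objective decreases by a definite amount per truncated step, never paying a $\poly(\text{width})$ factor. The remaining components --- the softmax smoothing error, the error of the polynomial approximation to $\exp$, and the JL concentration and eigenvalue estimates --- are by now routine and only contribute the logarithmic and $\poly(\eps^{-1})$ overheads in the runtime.
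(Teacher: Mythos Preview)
Your identification that this proposition is simply a citation to Theorem~5 of \cite{Jambulapati0T20}, with no proof given in the paper, is correct; the paper uses it purely as a black box.

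Your optional sketch, however, describes a different algorithmic route than the one \cite{Jambulapati0T20} actually takes. You outline an exponential-potential (softmax) approach with entropic mirror descent, in the spirit of \cite{Allen-ZhuLO16, PengTZ16}. But as the paper itself reveals in Corollary~\ref{cor:approxpack}, the \cite{Jambulapati0T20} solver is based on \emph{polynomial} (Schatten-$p$) potentials: each iteration requires approximating $\Tr(\mm^p)$ and $\{\inprod{\mm_i}{\mm^{p-1}}\}_{i\in[n]}$ for odd $p = O(\log d)$, not matrix exponentials. This distinction matters downstream in Section~\ref{ssec:kso}, where Lemmas~\ref{lem:packdenom} and~\ref{lem:packnum} exploit precisely the polynomial structure (applying $\mm^r$ for integer $r$) to get implicit access through $\ma = \mk^{1/2}$ cheaply. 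Your exponential-potential sketch would still yield a valid width-independent packing tester with a comparable runtime, but it is not what \cite{Jambulapati0T20} does, and the per-iteration primitives you name (polynomial approximation to $\exp$, JL sketching of quadratic forms through $\exp(\cdot)$) are not the ones the paper later interfaces with when it opens this black box.
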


\begin{proof}[Proof of Proposition~\ref{prop:optv}]
	As an immediate result of Proposition~\ref{prop:decisionpack}, we can solve \eqref{eq:optv} to multiplicative accuracy $\eps$ using a binary search. This reduction is derived as Lemma A.1 of \cite{JambulapatiLLPT20}, but we give a brief summary here. We subdivide the range $[\opt_-, \opt_+]$ into $K$ buckets of multiplicative range $1 + \frac \eps 3$, i.e., with endpoints $\opt_- \cdot (1 + \frac \eps 3)^k$ for $0 \le k \le K$ and
	\[K = O\Par{\frac 1 \eps \cdot \log\Par{\frac{\opt_+}{\opt_-}}}.\]
	We then binary search over $0 \le k \le K$ to determine the value of $\opt(v)$ to $\eps$-multiplicative accuracy, returning the largest endpoint for which the decision variant in Proposition~\ref{prop:decisionpack} returns feasible (with accuracy $\frac \eps 3$). By the guarantees of Proposition~\ref{prop:decisionpack}, the feasible point returned by Proposition~\ref{prop:decisionpack} for this endpoint will attain an $\eps$-multiplicative approximation to the optimization variant \eqref{eq:optv}, and the runtime is that of Proposition~\ref{prop:decisionpack} with an overhead of $O(\log K)$.
\end{proof}

\subsection{Polynomial approximation to the square root}

We give a proof of Fact~\ref{fact:poly_approx_sqrt}.

\restatepolysqrt*
\begin{proof}	
	We will instead prove the following fact: for any $\eps \in (0, 1)$, there is an explicit degree-$O\left(\sqrt{\kappa}\log \frac \kappa \eps\right)$ polynomial $p$ satisfying
	\begin{align*}
	\max_{x\in [\frac 1 \kappa,1]} |p(x) - \sqrt{x}| \leq \epsilon.
	\end{align*}	
	The conclusion for arbitrary scalars with multiplicative range $[\mu, \kappa\mu]$ will then follow from setting $\eps = \delta \kappa^{-\half}$ (giving a multiplicative error guarantee), and the fact that rescaling the range $[\frac 1 \kappa, 1]$ will preserve this multiplicative guarantee (adjusting the coefficients of the polynomial as necessary, since $\mu$ is known). Finally, the conclusion for matrices follows since $p(\mm)$ and $\mm^\half$ commute.

	Denote $\gamma = \frac 1 \kappa$ for convenience. We first shift and scale the function $\sqrt{x}$ to adjust the region of approximation from $[\gamma,1]$ to $[-1,1]$. In particular, let $h(x) = \sqrt{\frac{1-\gamma}{2}x + \frac{\gamma+1}{2}}$. If we can find some degree-$\Delta$ polynomial $g(x)$ with $|g(x)-h(x)|\leq \epsilon$ for all $x\in [-1,1]$, then 
	\[p(x) = g\left(\frac{2}{1-\gamma}x\ -\ \frac{1 + \gamma}{1-\gamma}\right)\]
	provides the required approximation to $\sqrt{x}$. 
	
	To construct $g$, we take the Chebyshev interpolant of $h(x)$ on the interval $[-1,1]$. Since $h$ is analytic on $[-1,1]$, we can apply standard results on the approximation of analytic functions by polynomials, and specifically Chebyshev interpolants. Specifically, by Theorem 8.2 in \cite{Trefethen:2012}, if $h(z)$ is analytic in an open Bernstein ellipse with parameter $\rho$ in the complex plane, then:
	\begin{align*}
	\max_{x\in [-1,1]} |g(x) - h(x)| \leq \frac{4M}{\rho-1}\rho^{-\Delta}, 
	\end{align*}
	where $M$ is the maximum of $|h(z)|$ for $z$ in the ellipse. 
	It can be checked that $h(x)$ is analytic on an open Bernstein ellipse with parameter $\rho = \frac{1+\sqrt{\gamma}}{1-\sqrt{\gamma}}$ --- i.e., with major axis length $\rho + \rho^{-1} = 2\frac{1+\gamma}{1-\gamma}$. We can then check that $M = \sqrt{1+\gamma} \leq \sqrt{2}$ and $\rho - 1 \geq 2\sqrt{\gamma}$. Since for all $\gamma < 1$, 
	\[\left(\frac{1-\sqrt{\gamma}}{1+\sqrt{\gamma}}\right)^{1/2\gamma} \leq \frac{1}{e},\]
	we conclude that $\frac{4M}{\rho-1}\rho^{-\Delta} \leq \epsilon$ as long as $\Delta \geq \frac{1}{2\gamma}\log\left(\frac{\epsilon}{\sqrt{2\gamma}}\right)$, which completes the proof. 
\end{proof}

\subsection{Deferred proofs from Section~\ref{ssec:kso}}

\restateopnormprod*
\begin{proof}
	We begin with the first entry in the above minimum. Let $v$ be the unit vector with $\norm{\mb v}_2 = \norm{\mb}_\infty$, and note $\norm{\ma \mb v}_2 \ge \frac{1}{\kappa(\ma)} \norm{\ma}_\infty\norm{\mb v}_2$ by definition of $\kappa(\ma)$. Hence,
	\[\norm{\ma \mb}_\infty \ge \norm{\ma \mb v}_2 \ge \frac{1}{\kappa(\ma)} \norm{\ma}_\infty\norm{\mb v}_2 = \frac{1}{\kappa(\ma)} \norm{\ma}_\infty\norm{\mb}_\infty.\]
	We move onto the second entry. Let $v$ be a vector such that $\norm{\ma v}_2$ and $\norm{\mb \ma v}_2 = \norm{\mb}_\infty$; note that $\norm{v}_2 \le \frac{\kappa(\ma)}{\norm{\ma}_\infty}$. The conclusion follows from rearranging the following display:
	\[\frac{\kappa(\ma)\norm{\mb\ma}_\infty}{\norm{\ma}_\infty} \ge \norm{\mb\ma}_\infty \norm{v}_2 \ge \norm{\mb\ma v}_2 = \norm{\mb}_\infty.\]
\end{proof}

\restateidcanthurt*
\begin{proof}
	By scaling $\mk$ by $\lam$ appropriately (since $\ksk$ is invariant under scalar multiplication), it suffices to take $\lam = 1$. The definition of $\ksk$ implies there exists a diagonal matrix $\mw$ such that
	\begin{equation}\label{eq:equivdiag}\id \preceq \mw^\half \mk \mw^\half \preceq \ksk(\mk) \id \iff \mw^{-1} \preceq \mk \preceq \ksk(\mk) \mw^{-1}.\end{equation}
	Thus, to demonstrate $\ksk(\mk + \id) \le \ksk(\mk)$ it suffices to exhibit a diagonal $\tmw$ such that
	\[\tmw \preceq \mk + \id \preceq \ksk(\mk) \tmw. \]
	We choose $\tmw = \mw^{-1} + \id$; then, the above display follows from \eqref{eq:equivdiag} and $\id \preceq \id \preceq \ksk(\mk) \id$.
\end{proof}

\restatelambounds*
\begin{proof}
	To see the first claim, the largest eigenvalue of $\mk + \lam \id$ is at most $\lam + \lmax(\mk)$ and the smallest is at least $\lam$, so the condition number is at most $1 + \eps$ as desired.
	
	To see the second claim, it follows from the fact that outer rescalings preserve Loewner order, and then combining
	\begin{align*}
	\mk \preceq \mk + \lam \id \implies \lmax\Par{\mw^\half\mk \mw^\half} \le \lmax\Par{\mw^\half \Par{\mk + \lam\id} \mw^\half}, \\
	\mk \succeq \frac 1 {1 + \eps}\Par{\mk + \lam \id} \implies \lmin\Par{\mw^\half \mk \mw^\half} \ge \frac 1 {1 + \eps}\lmin\Par{\mw^\half\Par{\mk + \lam \id}\mw^\half}.
	\end{align*}
\end{proof}

\restatebetweenphase*
\begin{proof}
	First, because outer rescalings preserve Loewner order, it is immediate that
	\[\mk + \frac \lam 2 \id \preceq \mk + \lam \id \implies \lmax\Par{\mw^\half\Par{\mk + \frac \lam 2}\mw^\half \id} \le \lmax\Par{\mw^\half\Par{\mk + \lam \id}\mw^\half} .\]
	Moreover, the same argument shows that
	\[\half \mk + \frac \lam 2 \id \preceq \mk + \frac \lam 2 \id \implies \lmin\Par{\mw^\half\Par{\mk + \frac \lam 2 \id}\mw^\half} \ge \half \lmin\Par{\mw^\half \Par{\mk + \id} \mw^\half}.\]
	Combining the above two displays yields the conclusion.
\end{proof}
		%!TEX root = ./structured-mpc.tex
\section{M-matrix and SDD matrix facts}
\label{sec:appproofs}

Before proving Lemmas~\ref{lemma:mfact1} and~\ref{lemma:mfact2}, we prove the following fact about the density of the inverses of irreducible invertible symmetric M-matrices. Recall that a matrix $\mm \in \R^{n \times n}$ is irreducible if there does not exist a subset $S \subseteq [n]$ with $S \notin \{\emptyset, [n]\}$ such that $\mm_{ij} = 0$ for all $i \in S$ and $j \notin S$.

\begin{lemma}[Density of inverses of irreducible symmetric M-matrices]
	\label{lem:posminv} 
	If $\mm \in \R^{n \times n}$ is an irreducible invertible symmetric $M$-matrix then $[\mm^{-1}]_{ij} > 0$ for all $i,j \in [n]$.
\end{lemma}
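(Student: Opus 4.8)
The plan is to write $\mm = s\id - \ma$ with $s > 0$, $\ma \in \R^{n \times n}_{\ge 0}$, and $\rho(\ma) < s$, using the definition of an invertible M-matrix. Since $\mm$ is irreducible, $\ma$ has the same zero-pattern off the diagonal (the diagonal of $\ma$ is irrelevant to irreducibility), so $\ma$ is an irreducible nonnegative matrix. The key tool is the Neumann series expansion
\[
\mm^{-1} = (s\id - \ma)^{-1} = \frac{1}{s}\sum_{k \ge 0} \Par{\frac{\ma}{s}}^k,
\]
which converges precisely because $\rho(\ma/s) = \rho(\ma)/s < 1$. Every term $(\ma/s)^k$ is entrywise nonnegative, so $\mm^{-1} \ge 0$ entrywise immediately; the content is to upgrade this to strict positivity in every entry.

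For strict positivity, I would invoke the standard fact from Perron--Frobenius theory that an irreducible nonnegative matrix $\ma$ is \emph{not} necessarily primitive, but its associated directed graph is strongly connected: for every pair $(i,j)$ there exists some power $k = k(i,j) \ge 0$ with $[\ma^k]_{ij} > 0$. (When $i = j$ we may take $k = 0$, giving the $\id$ term.) Concretely, strong connectivity of the graph on vertex set $[n]$ with an edge $i \to j$ whenever $\ma_{ij} > 0$ — which is exactly the irreducibility hypothesis — guarantees a directed path from $i$ to $j$, and multiplying the corresponding positive entries along the path shows $[\ma^k]_{ij} > 0$ for $k$ equal to the path length. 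Then in the Neumann series, the $k(i,j)$-th term already contributes a strictly positive amount to $[\mm^{-1}]_{ij}$, while all other terms contribute nonnegatively, so $[\mm^{-1}]_{ij} > 0$. This holds for every $i, j \in [n]$, which is the claim. (Symmetry of $\mm$ is not actually needed for this argument, though it is part of the hypothesis; I would either use it harmlessly or remark that it can be dropped.)

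The main obstacle is making the graph-theoretic step rigorous and citing it cleanly rather than reproving Perron--Frobenius machinery: one must be careful that irreducibility as defined in the excerpt (no nontrivial $S$ with $\mm_{ij} = 0$ for $i \in S$, $j \notin S$) is genuinely equivalent to strong connectivity of the digraph of $\ma$'s off-diagonal support, and that this yields a path — hence a positive power entry — for \emph{every} ordered pair, including when the digraph is periodic (so $\ma$ is imprimitive and no single power of $\ma$ is entrywise positive). The resolution is that we do not need a single uniform power; the sum over all $k$ in the Neumann series picks up the relevant positive term for each entry separately. Once this is set up, everything else is routine: convergence of the series from $\rho(\ma) < s$, and entrywise nonnegativity of matrix powers of a nonnegative matrix.
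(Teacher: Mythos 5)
Your proposal is correct and follows essentially the same route as the paper's proof: the decomposition $\mm = s\id - \ma$, the Neumann series $\mm^{-1} = \frac{1}{s}\sum_{k\ge 0}(\ma/s)^k$, and the observation that irreducibility gives a path from $i$ to $j$ in the graph of $\ma$, hence some power $k$ with $[\ma^k]_{ij} > 0$ contributing strictly positively to the (otherwise nonnegative) series. Your explicit handling of the $i=j$ case via the $k=0$ term and the remark about imprimitivity are slightly more careful than the paper's version, but the argument is the same.
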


\begin{proof}
	Recall that $\mm$ is an invertible M-matrix if and only if $\mm = s \mi- \ma$ where $s > 0$, $\ma \in \R^{n \times n}_{\geq 0}$ and $\rho(\ma) < s$. In this case
	\[
	\left[ \mm^{-1} \right]_{ij} 
	= \frac{1}{s} \left[\Par{\mi - \frac{1}{s} \ma}^{-1} \right]_{ij}
	= \frac{1}{s} \sum_{k = 0}^{\infty} \left[\Par{\frac{1}{s} \ma}^k\right]_{ij}.
	\]
	Consider the undirected graph $G$ which has $\ma$ as its adjacency matrix, i.e., $\ma_{ij}$ is the weight of an edge from $i$ to $j$ whenever $\ma_{ij} \neq 0$. Now $[\ma^k]_{ij} > 0$ if and only if there is a path of length $k$ from $i$ to $j$ in $G$. However, by the assumption that $\mm$ is irreducible we have that $G$ is connected and therefore there is a path between any two vertices in the graph and the result follows.
\end{proof}

\lemmamfactone*
\begin{proof}
	$\mx\mm\mx$ is trivially symmetric and therefore it suffices to show that (1)  $e_{i}^\top \mx\mm\mx e_{j} <0$ for all  $i\neq j$ and (2) $\mx\mm\mx \1 \geq 0$ entrywise.
	
	For (1) note for all $i\in [n]$, $\mx_{ii}= e_i^\top \mm^{-1} \1 \geq 0$ as $\mm^{-1}$ is nonnegative by Lemma~\ref{lem:posminv}  and $\mx e_{i}=\mx_{ii} e_{i}$ as  $\mx$ is diagonal. Using these two equalities for all $i\neq j$ we obtain $e_{i}^\top \mx\mm\mx e_{j}=(\mx_{ii}\mx_{jj}) \cdot e_{i}^\top \mm e_{j} \leq 0$ as  $\mm_{ij}\leq 0$ by definition of M-matrices.
	
	For (2) note $\mx \vones=\mm^{-1} \1$ and $\mx\mm\mx \1=\mx\mm\mm^{-1} \1=\mx \1=\mm^{-1} \1 \geq 0$ where again, in the last inequality we used $\mm^{-1}$ is entrywise nonnegative by Lemma~\ref{lem:posminv}.
\end{proof}

\lemmamfacttwo*

\begin{proof}
	$\mb$ is clearly symmetric and therefore to show that $\mb$ is a SDD Z-matrix it suffices to show that (1) $e_{i}^\top \mb e_{j} \leq 0$ for all $i\neq j$ and (2) $\mb \1 \geq 0$. 
	
	The claim is trivial when $\alpha = 0$ so we assume without loss of generality that $\alpha > 0$. To prove the inequality we use that by the Woodbury matrix identity it holds that
	\[
	\mb =
	\alpha^{-1} \mi - \alpha^{-2}
	\left(\ma + \alpha^{-1} \mi\right)^{-1}
	=
	\ma - \ma(\alpha^{-1}\mi + \ma)^{-1} \ma.
	\]
	Further, we  use that $ \ma + \alpha^{-1} \mi$ is a M-matrix by definition and therefore  $\left(\ma + \alpha^{-1} \mi\right)^{-1}$ is an inverse M-matrix that has entrywise nonnegative entries by Lemma~\ref{lem:posminv}.
	
	Now for (1) by these two claims we have that for all $i \neq j$ it is the case that 
	\[
	e_{i}^\top \mb e_{j}
	= e_{i}^\top
	\left(
	\alpha^{-1} \mi - \alpha^{-2}
	\left(\ma + \alpha^{-1} \mi\right)^{-1}
	\right)
	e_{j}
	=
	\frac{1}{\alpha^2} e_i^\top \left(\ma + \alpha^{-1} \mi\right)^{-1}  e_j \leq 0
	~.
	\]
	For (2) we use the other Woodbury matrix equality and see that
	\[
	\mb \1=\left[(\alpha^{-1}\mi + \ma) -  \ma\right](\alpha^{-1}\mi + \ma)^{-1}\ma\1=\alpha^{-1}(\alpha^{-1}\mi + \ma)^{-1}\ma\1 \geq 0  \text{ entrywise}.
	\]
	This final inequality follows from the fact that $\ma\1\geq 0$ entrywise because $\ma$ is a SDD Z-matrix and hence $(\alpha^{-1}\mi + \ma)^{-1}(\ma\1) \geq 0$ because $(\alpha^{-1}\mi + \ma)^{-1}$ is entrywise nonnegative.
\end{proof}

\restatemaipseudo*
\begin{proof}
	It is clear that $\left( \ma + \alpha \lap_{K_n}\right)^\pseudo$ is symmetric, positive semidefinite, and has $\1$ in its kernel. By standard algebraic manipulation and the Woodbury matrix identity we have,
	\begin{align*}
	&\left( \ma + \alpha \lap_{K_n}\right)^\pseudo = \left(\ma + \alpha \lap_{K_n} + 2 \alpha \1\1^\top\right)^{-1} - \frac{1}{2 \alpha n^2} \1\1^\top 
	= \left( \ma + \alpha \1\1^\top + \alpha n \eye\right)^{-1} - \frac{1}{2\alpha n^2} \1\1^\top \\
	& = \left(\left( \ma^\pseudo + \frac{1}{\alpha n^2} \1\1^\top\right)^{-1} +\alpha n \eye\right)^{-1} - \frac{1}{2 \alpha n^2} \1\1^\top
	=\frac{1}{\alpha n}\left(\frac{1}{\alpha n}\left( \ma^\pseudo + \frac{1}{\alpha n^2} \1\1^\top\right)^{-1} + \eye\right)^{-1} - \frac{1}{2 \alpha n^2} \1\1^\top \\
	&=\frac{1}{\alpha n}\left[\left(\left( \alpha n \ma^\pseudo + \frac{1}{n} \1\1^\top\right)^{-1} + \eye\right)^{-1} - \frac{1}{2n} \1\1^\top \right]
	= \frac{1}{\alpha n}\left[ \eye -  \left(\alpha n \ma^\pseudo + \frac{1}{n} \1\1^\top + \eye \right)^{-1} - \frac{1}{2n} \1\1^\top\right] \\
	&= \frac{1}{\alpha n}\left[ \eye -  \left( \alpha n \ma^\pseudo + \eye \right)^{-1} \right].
 \end{align*}
	The conclusion follows since $\alpha n \ma^\pseudo + \eye$ is a positive definite SDD matrix: its inverse is entrywise nonpositive on off-diagonals by Lemma~\ref{lemma:mfact2}, and thus $\left( \ma + \alpha \lap_{K_n}\right)^\pseudo$ is a Z-matrix, and hence also a Laplacian.
\end{proof}
		%!TEX root = ./structured-mpc.tex

\section{Jacobi preconditioning}
\label{app:jacobi}

In this section, we analyze a popular heuristic for computing diagonal preconditioners. Given a positive definite matrix $\mk \in \PD^d$, consider applying the outer scaling
\begin{equation}\label{eq:diagones}\mw^{\half} \mk \mw^{\half},\text{ where } \mw = \diag{w} \text{ and } w_i \defeq \mk_{ii}^{-1} \text{ for all } i \in [d].\end{equation}
In other words, the result of this scaling is to simply normalize the diagonal of $\mk$ to be all ones; we remark $\mw$ has strictly positive diagonal entries, else $\mk$ is not positive definite. Also called the Jacobi preconditioner, a result of Van de Sluis \cite{GreenbaumRodrigue:1989,Sluis:1969} proves that \emph{for any matrix} this scaling leads to a condition number that is within an $m$ factor of optimal, where $m \leq d$ is the maximum number of non-zeros in any row of $\mk$. For completeness, we state a generalization of Van de Sluis's result below. We also require a simple fact; both are proven at the end of this section.

\begin{restatable}{fact}{restateprodkappa}\label{fact:prodkappa}
	For any $\ma, \mb \in \PD^d$, $\kappa(\ma^\half\mb\ma^\half) \le \kappa(\ma) \kappa(\mb)$.
\end{restatable}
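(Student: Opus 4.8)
\textbf{Proof proposal for Fact~\ref{fact:prodkappa}.}

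The plan is to use the Loewner-order ``sandwich'' characterization of extremal eigenvalues together with the fact that conjugation preserves the Loewner order. First, for any $\mb \in \PD^d$ we have the spectral bounds $\lmin(\mb)\id \preceq \mb \preceq \lmax(\mb)\id$. Conjugating all three terms by $\ma^{\half}$ (which preserves $\preceq$, since $\mx \preceq \my$ implies $\mz^\top \mx \mz \preceq \mz^\top \my \mz$ for any $\mz$) yields
\[\lmin(\mb)\,\ma \preceq \ma^{\half}\mb\ma^{\half} \preceq \lmax(\mb)\,\ma.\]
Now applying the spectral bounds for $\ma$, namely $\lmin(\ma)\id \preceq \ma \preceq \lmax(\ma)\id$, to the left and right sides respectively gives
\[\lmin(\mb)\lmin(\ma)\,\id \preceq \ma^{\half}\mb\ma^{\half} \preceq \lmax(\mb)\lmax(\ma)\,\id.\]
Since $\ma^{\half}\mb\ma^{\half} \in \PD^d$, reading off the extremal eigenvalues from this sandwich gives $\lmin(\ma^{\half}\mb\ma^{\half}) \ge \lmin(\ma)\lmin(\mb)$ and $\lmax(\ma^{\half}\mb\ma^{\half}) \le \lmax(\ma)\lmax(\mb)$, and dividing the two bounds yields $\kappa(\ma^{\half}\mb\ma^{\half}) \le \kappa(\ma)\kappa(\mb)$ as desired.

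An equivalent route, if one prefers to avoid invoking that conjugation preserves $\preceq$, is the variational argument: substituting $u = \ma^{\half}v$ in the Rayleigh quotient gives $\lmax(\ma^{\half}\mb\ma^{\half}) = \max_{u \neq 0} \frac{u^\top \mb u}{u^\top \ma^{-1} u} \le \lmax(\mb)\lmax(\ma)$ (using $u^\top \mb u \le \lmax(\mb)\norm{u}_2^2$ and $u^\top \ma^{-1}u \ge \lmax(\ma)^{-1}\norm{u}_2^2$), and symmetrically $\lmin(\ma^{\half}\mb\ma^{\half}) \ge \lmin(\mb)\lmin(\ma)$; the ratio then gives the claim.

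There is no real obstacle here: the statement is a one-line consequence of monotonicity of congruence transformations under the Loewner order (or of the Rayleigh-quotient substitution), and the only thing to be careful about is that $\ma^{\half}$ is itself positive definite so that the substitution $u = \ma^{\half}v$ is an invertible change of variables, which is immediate since $\ma \in \PD^d$.
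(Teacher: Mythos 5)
Your proof is correct, and your second (variational) route is essentially the paper's own argument: the paper bounds $\lmax(\ma^\half\mb\ma^\half)$ and $\lmin(\ma^\half\mb\ma^\half)$ via the Rayleigh quotient, using $\sqrt{\lmin(\ma)}\norm{u}_2 \le \norm{\ma^\half u}_2 \le \sqrt{\lmax(\ma)}\norm{u}_2$, exactly as you do after the substitution $u = \ma^\half v$. Your primary route via conjugation preserving the Loewner order is just a repackaging of the same one-line idea, so there is nothing substantive to compare.
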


\begin{restatable}{proposition}{restateddd}\label{prop:diagdimensiondependence}
	Let $\mw$ be defined as in \eqref{eq:diagones} and let $m$ denote the maximum number of non-zero's in any row of $\mk$. Then,
	\[\kappa\Par{\mw^{\half} \mk \mw^{\half}} \le \min\left(m,\sqrt{\nnz(\mk)}\right)\cdot \ksk\Par{\mk}.\]
\end{restatable}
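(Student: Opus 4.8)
The plan is to reduce to the case where $\mk$ already has unit diagonal, and then to bound $\lmax(\mk)$ and $\lmin(\mk)$ separately against $\kso(\mk)$. First I would record that $\kso$ is invariant under diagonal congruence: for any positive diagonal $\mathbf{E}$ we have $\kso(\mathbf{E}^\half\mk\mathbf{E}^\half) = \kso(\mk)$, since as $\mathbf{W}$ ranges over positive diagonal matrices so does $\mathbf{W}\mathbf{E}$. Applying this with $\mathbf{E} = \mw$ as in \eqref{eq:diagones} replaces $\mk$ by $\mw^\half\mk\mw^\half$, which has all-ones diagonal, whose Jacobi preconditioner is $\id$, and whose zero/nonzero pattern (hence $m$ and $\nnz(\mk)$) agrees with that of $\mk$. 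So it suffices to show: if $\mk \in \PD^d$ has $\mk_{ii} = 1$ for all $i$, then $\kappa(\mk) \le \min(m, \sqrt{\nnz(\mk)}) \cdot \kso(\mk)$.

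For the lower bound on $\lmin(\mk)$, I would let $\mathbf{D} = \diag{d}$ be a positive diagonal matrix attaining $\kso(\mk) =: \kappa^\star$, rescaled (which leaves $\kappa$ unchanged) so that $\lmax(\mathbf{D}^\half\mk\mathbf{D}^\half) = 1$, hence $\lmin(\mathbf{D}^\half\mk\mathbf{D}^\half) = \tfrac{1}{\kappa^\star}$. Testing the quadratic form against $e_i$ gives $d_i = d_i\mk_{ii} = e_i^\top\mathbf{D}^\half\mk\mathbf{D}^\half e_i \le \lmax(\mathbf{D}^\half\mk\mathbf{D}^\half) = 1$, so $\max_i d_i \le 1$. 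Conjugating $\mathbf{D}^\half\mk\mathbf{D}^\half \succeq \tfrac{1}{\kappa^\star}\id$ by $\mathbf{D}^{-\half}$ yields $\mk \succeq \tfrac{1}{\kappa^\star}\mathbf{D}^{-1}$, whence $\lmin(\mk) \ge \tfrac{1}{\kappa^\star}\lmin(\mathbf{D}^{-1}) = \tfrac{1}{\kappa^\star \max_i d_i} \ge \tfrac{1}{\kappa^\star}$.

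For the upper bound on $\lmax(\mk)$, I would use that $\mk \succ \mzero$ with unit diagonal forces $|\mk_{ij}| \le \sqrt{\mk_{ii}\mk_{jj}} = 1$ for all $i,j$, and then give two elementary estimates. (i) Taking $v$ an eigenvector for $\lmax(\mk)$ and an index $i$ with $|v_i| = \norm{v}_\infty$, the identity $(\mk v)_i = \lmax(\mk)v_i$ gives $\lmax(\mk) \le \sum_{j} |\mk_{ij}| \le m$, since row $i$ has at most $m$ nonzero entries each of magnitude at most $1$. (ii) Since $\mk^2 \succeq \mzero$ has $\lmax(\mk)^2$ as an eigenvalue, $\lmax(\mk)^2 \le \Tr(\mk^2) = \sum_{i,j\in[d]}\mk_{ij}^2 \le \nnz(\mk)$, so $\lmax(\mk) \le \sqrt{\nnz(\mk)}$. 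Combining (i), (ii) with the previous paragraph gives $\kappa(\mk) = \lmax(\mk)/\lmin(\mk) \le \min(m, \sqrt{\nnz(\mk)})\cdot\kappa^\star$, which is the claim after undoing the reduction.

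I do not expect a genuine obstacle here; the argument is short. The only point needing care is the bookkeeping — verifying the reduction to the unit-diagonal case (that $m$, $\nnz(\mk)$, and "the Jacobi preconditioner is $\id$" are all preserved), and being consistent about the normalization of the optimal scaling $\mathbf{D}$ so that the Loewner relations $\tfrac{1}{\kappa^\star}\id \preceq \mathbf{D}^\half\mk\mathbf{D}^\half \preceq \id$ are available. The two eigenvalue estimates are the standard Gershgorin and Frobenius-norm bounds.
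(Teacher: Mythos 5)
Your proof is correct, and it is essentially the paper's argument with the normalizations done in the opposite order: the paper first applies the optimal scaling $\mw_\star$ and then re-normalizes the diagonal of $\mb = \mw_\star^\half\mk\mw_\star^\half$ to a constant, whereas you first normalize the diagonal to $1$ and then invoke the optimal scaling of the normalized matrix. The substantive ingredients — the entrywise bound $|\mk_{ij}|\le\sqrt{\mk_{ii}\mk_{jj}}$ for PSD matrices, the Gershgorin and Frobenius-norm estimates for $\lmax$, and the Loewner comparison with the optimal diagonal scaling for $\lmin$ — coincide exactly.
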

Note that $m$ and $\sqrt{\nnz(\mk)}$ are both $\leq d$, so it follows that 
$\kappa(\mw^{\half} \mk \mw^{\half}) \leq {d}\cdot \ksk\Par{\mk}$. While the approximation factor in  Proposition \ref{prop:diagdimensiondependence} depends on the dimension or sparsity of $\mk$, we show that a similar analysis actually yields a \emph{dimension-independent} approximation. Specifically, the Jacobi preconditioner always obtains condition number no worse than the optimal squared. To the best of our knowledge, this simple but powerful bound has not been observed in prior work. 

\begin{proposition}\label{prop:diagonesub}
	Let $\mw$ be defined as in \eqref{eq:diagones}. Then,
	\[\kappa\Par{\mw^{\half} \mk \mw^{\half}} \le \Par{\ksk\Par{\mk}}^2.\]
\end{proposition}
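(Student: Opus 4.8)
The plan is to bound $\kappa(\mw^{\half}\mk\mw^{\half})$ by comparing the Jacobi scaling to the optimal scaling via Fact~\ref{fact:prodkappa}, but in a way that avoids the lossy sparsity factor $m$ appearing in Proposition~\ref{prop:diagdimensiondependence}. Write $\mw^\star = \diag{w^\star}$ for the optimal outer scaling, so that $\kappa((\mw^\star)^{\half}\mk(\mw^\star)^{\half}) = \ksk(\mk)$. Let $\md \defeq \diag{\mk}$ denote the diagonal part of $\mk$, so that the Jacobi scaling is $\mw = \md^{-1}$ and $\mw^{\half}\mk\mw^{\half} = \md^{-\half}\mk\md^{-\half}$. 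The key observation is that $\md^{-\half}\mk\md^{-\half} = \md^{-\half}(\mw^\star)^{-\half}\cdot (\mw^\star)^{\half}\mk(\mw^\star)^{\half}\cdot(\mw^\star)^{-\half}\md^{-\half}$, so by Fact~\ref{fact:prodkappa} (applied with the positive definite diagonal matrix $\md^{-1}(\mw^\star)^{-1}$),
\[\kappa\Par{\md^{-\half}\mk\md^{-\half}} \le \kappa\Par{\md^{-1}(\mw^\star)^{-1}}\cdot \kappa\Par{(\mw^\star)^{\half}\mk(\mw^\star)^{\half}} = \kappa\Par{\md^{-1}(\mw^\star)^{-1}}\cdot \ksk(\mk).\]
Thus it suffices to show $\kappa(\md^{-1}(\mw^\star)^{-1}) \le \ksk(\mk)$, i.e.\ that the diagonal matrix $\md \mw^\star$ has condition number at most $\ksk(\mk)$ (since $\kappa$ is invariant under inversion).

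The main step is therefore to control the entrywise ratios $\mk_{ii}(w^\star)_i$. For this, normalize so that $\id \preceq (\mw^\star)^{\half}\mk(\mw^\star)^{\half} \preceq \ksk(\mk)\,\id$; this is possible after scaling $\mw^\star$ by a positive constant, which changes neither $\ksk(\mk)$ nor $\kappa(\md\mw^\star)$. Taking the $i$-th diagonal entry of this two-sided bound gives $1 \le e_i^\top (\mw^\star)^{\half}\mk(\mw^\star)^{\half} e_i = (w^\star)_i \mk_{ii} \le \ksk(\mk)$ for every $i \in [d]$. Hence every diagonal entry of $\md\mw^\star$ lies in $[1,\ksk(\mk)]$, so $\kappa(\md\mw^\star) = \max_i (w^\star)_i\mk_{ii} \,/\, \min_i (w^\star)_i\mk_{ii} \le \ksk(\mk)$. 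Combining with the displayed inequality above yields $\kappa(\mw^{\half}\mk\mw^{\half}) \le (\ksk(\mk))^2$, as claimed.

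I do not expect a serious obstacle here — the whole argument is a two-line application of Fact~\ref{fact:prodkappa} together with reading off diagonal entries of the Loewner sandwich that defines $\ksk$. The only point requiring a little care is the reduction to the normalized case: one must note that both $\ksk(\mk)$ (trivially, since it is a condition number) and $\kappa(\md\mw^\star)$ are invariant under replacing $\mw^\star$ by $c\,\mw^\star$ for $c>0$, so there is no loss in assuming the optimal sandwich has the form $\id \preceq (\mw^\star)^{\half}\mk(\mw^\star)^{\half}\preceq \ksk(\mk)\id$. Everything else is bookkeeping with diagonal matrices.
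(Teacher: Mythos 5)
Your argument is correct and is essentially the paper's own proof: both decompose the Jacobi-scaled matrix as a diagonal conjugation of the optimally scaled matrix $\mk_\star = (\mw^\star)^{\half}\mk(\mw^\star)^{\half}$, apply Fact~\ref{fact:prodkappa}, and bound the condition number of the diagonal factor by reading off the diagonal entries of $\mk_\star$ (the paper uses $[\mk_\star]_{ii}\in[\lmin(\mk_\star),\lmax(\mk_\star)]$ directly rather than normalizing the sandwich, but this is the same observation). No substantive differences.
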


\begin{proof}
	Let $\mws$ attain the minimum in the definition of $\ksk$, i.e., $\kappa(\mk_\star) = \ksk(\mk)$ for $\mk_\star \defeq \mws^{\half} \mk \mws^{\half}$. Note that since $[\mw^{\half} \mk \mw^{\half}]_{ii} = 1$ by definition of $\mw$ it follows that for all $i$
	\[
	[\mws \mw^{-1}]_{ii}
	= [\mws \mw^{-1}]_{ii} \cdot [\mw^{\half} \mk \mw^{\half}]_{ii} =
	[\mk_\star]_{ii}  = e_i^\top \mk_\star e_i \in [\lambda_{\min}(\mk_\star), \lam_{\max}(\mk_\star)] 
	\]
	where the last step used that $ \lambda_{\min}(\mk_\star) \id \preceq \mk_\star \preceq \lambda_{\max}(\mk_\star) \id$. Consequently, for  $\tmw \defeq \mws^{-1} \mw$ it follows that $\kappa(\tmw) = \kappa(\tmw^{-1}) \leq \lambda_{\max}(\mk_\star) / \lam_{\min}(\mk_\star) = \kappa(\mk_\star)$. The result follows from Fact~\ref{fact:prodkappa} as
\[
	\kappa\Par{\mw^\half \mk \mw^\half} = \kappa\Par{\tmw^\half \mk_\star\tmw^\half} \le \kappa\Par{\tmw}\kappa\Par{\mk_\star} \le \Par{\ksk}^2. 
	\]
\end{proof}

Next, we demonstrate that Proposition~\ref{prop:diagonesub} is essentially tight by exhibiting a family of matrices which attain the bound of Proposition~\ref{prop:diagonesub} up to a constant factor. At a high level, our strategy is to create two blocks where the ``scales'' of the diagonal normalizing rescaling are at odds, whereas a simple rescaling of one of the blocks would result in a quadratic savings in conditioning.

\begin{proposition}\label{prop:diagoneslb}
	Consider a $2d \times 2d$ matrix $\mm$ such that
	\[\mk = \begin{pmatrix} \ma & \mzero \\ \mzero & \mb \end{pmatrix},\; \ma = \sqrt{d} \id + \1\1^\top,\; \mb = \id - \frac 1 {\sqrt d + d}\1\1^\top, \]
	where $\ma$ and $\mb$ are $d \times d$. Then, defining $\mw$ as in \eqref{eq:diagones}, 
	\[\kappa\Par{\mw^{\half} \mk \mw^{\half}} = \Theta(d),\; \ksk\Par{\mk} = \Theta\Par{\sqrt d}.\]
\end{proposition}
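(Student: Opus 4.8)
The plan is to compute all the relevant eigenvalues of the two blocks explicitly, since both $\ma$ and $\mb$ are rank-one updates of scaled identities and so have completely transparent spectra. Write $\ma = \sqrt{d}\,\id + \1\1^\top$: it has eigenvalue $\sqrt d + d$ on the span of $\1$ and eigenvalue $\sqrt d$ on the orthogonal complement, so $\kappa(\ma) = \frac{\sqrt d + d}{\sqrt d} = 1 + \sqrt d = \Theta(\sqrt d)$. Similarly $\mb = \id - \frac{1}{\sqrt d + d}\1\1^\top$ has eigenvalue $1 - \frac{d}{\sqrt d + d} = \frac{\sqrt d}{\sqrt d + d} = \frac{1}{1 + \sqrt d}$ on the span of $\1$ and eigenvalue $1$ on the complement, so $\kappa(\mb) = 1 + \sqrt d = \Theta(\sqrt d)$ as well. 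In particular $\mb \succ \mzero$ (its smallest eigenvalue is positive), so $\mk \in \PD^{2d}$ and the problem is well-posed.

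Next I would analyze the Jacobi-preconditioned matrix $\mw^\half \mk \mw^\half$. The diagonal entries of $\ma$ are all $\sqrt d + 1$, and the diagonal entries of $\mb$ are all $1 - \frac{1}{\sqrt d + d}$. Since $\mw$ normalizes the diagonal to all ones, $\mw^\half \mk \mw^\half$ is block diagonal with blocks $\frac{1}{\sqrt d + 1}\ma$ and $\frac{1}{1 - (\sqrt d + d)^{-1}}\mb = \frac{\sqrt d + d}{\sqrt d}\mb = (1 + \sqrt d)\mb$. The eigenvalues of the first block are $\{\frac{\sqrt d + d}{\sqrt d + 1}, \frac{\sqrt d}{\sqrt d + 1}\}$ and of the second block are $\{1 + \sqrt d, \frac{1+\sqrt d}{1 + \sqrt d}\} = \{1 + \sqrt d, 1\}$. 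So $\lambda_{\max}(\mw^\half \mk \mw^\half) = 1 + \sqrt d$ (from the second block — note $\frac{\sqrt d + d}{\sqrt d + 1} = \sqrt d\cdot\frac{1 + \sqrt d}{\sqrt d + 1} = \sqrt d \le 1 + \sqrt d$) and $\lambda_{\min}(\mw^\half\mk\mw^\half) = \frac{\sqrt d}{\sqrt d + 1}$ (from the first block, which is $\Theta(1)$, smaller than the $1$ coming from the second block). Hence $\kappa(\mw^\half \mk \mw^\half) = \frac{(1 + \sqrt d)(\sqrt d + 1)}{\sqrt d} = \frac{(1+\sqrt d)^2}{\sqrt d} = \Theta(d)$.

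Finally I would exhibit a diagonal rescaling witnessing $\ksk(\mk) = O(\sqrt d)$ — a matching lower bound $\ksk(\mk) = \Omega(\sqrt d)$ follows from Proposition~\ref{prop:diagonesub} together with $\kappa(\mw^\half \mk \mw^\half) = \Theta(d)$, since $\Theta(d) \le (\ksk(\mk))^2$ forces $\ksk(\mk) = \Omega(\sqrt d)$. For the upper bound, take the block-diagonal diagonal matrix $\mw_\star$ equal to $c\,\id$ on the first block and $\id$ on the second block, for a constant $c$ chosen to align the scales — concretely $c = \Theta(1/\sqrt d)$ so that $c\ma$ has spectrum $\{c(\sqrt d + d), c\sqrt d\} = \Theta(\{\sqrt d, 1\})$, matching the $\Theta(\{\sqrt d, 1\})$ spectrum of $\mb$. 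Then $\mw_\star^\half \mk \mw_\star^\half$ is block diagonal with both blocks having spectrum contained in $[\Theta(1), \Theta(\sqrt d)]$, giving $\kappa(\mw_\star^\half \mk \mw_\star^\half) = O(\sqrt d)$, hence $\ksk(\mk) = O(\sqrt d)$. Combining, $\ksk(\mk) = \Theta(\sqrt d)$ and $\kappa(\mw^\half\mk\mw^\half) = \Theta(d) = \Theta((\ksk(\mk))^2)$, which is the claimed tightness.

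The main obstacle is really just bookkeeping: making sure the constant $c$ in the witness rescaling is chosen so that \emph{both} the top and bottom of the two blocks' spectra interleave correctly (so the overall condition number is genuinely $O(\sqrt d)$ and not accidentally larger), and double-checking which block supplies $\lambda_{\max}$ and which supplies $\lambda_{\min}$ in the Jacobi-preconditioned matrix. None of the steps is deep — every matrix in sight is a rank-one perturbation of a multiple of the identity — but the computation of $\kappa(\mw^\half\mk\mw^\half)$ must be done carefully since the answer $\Theta(d)$ comes from combining $\lambda_{\max}$ of one block with $\lambda_{\min}$ of the other, which is exactly the ``scales at odds'' phenomenon the construction is designed to exploit.
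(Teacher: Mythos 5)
Your overall strategy is exactly the paper's: compute the spectra of the two blocks explicitly, normalize by the Jacobi weights, combine $\lambda_{\max}$ of one block with $\lambda_{\min}$ of the other to get $\Theta(d)$, then obtain the lower bound on $\ksk(\mk)$ from Proposition~\ref{prop:diagonesub} and the upper bound from a two-scale witness. However, two arithmetic errors break both halves of the argument as written.

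First, the normalization of the second block is wrong: $\frac{1}{1 - (\sqrt d + d)^{-1}} = \frac{\sqrt d + d}{\sqrt d + d - 1}$, which is $1 + O(1/d)$, \emph{not} $\frac{\sqrt d + d}{\sqrt d} = 1 + \sqrt d$. The Jacobi-normalized second block therefore has eigenvalues $\frac{\sqrt d + d}{\sqrt d + d - 1} \approx 1$ (multiplicity $d-1$) and $\frac{\sqrt d}{\sqrt d + d - 1} \approx \frac{1}{\sqrt d}$ (multiplicity $1$), not $\{1+\sqrt d,\, 1\}$. The correct bookkeeping is the opposite of yours: $\lambda_{\max}(\mw^\half\mk\mw^\half) = \sqrt d$ comes from the \emph{first} block and $\lambda_{\min} \approx 1/\sqrt d$ from the \emph{second}, giving $\kappa(\mw^\half\mk\mw^\half) = d + \sqrt d - 1 = \Theta(d)$. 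With your numbers the ratio is $\frac{(1+\sqrt d)^2}{\sqrt d}$, which is $\Theta(\sqrt d)$, not $\Theta(d)$ as your final equality asserts — so even taken on its own terms your computation does not establish the first claim, and the $\Omega(\sqrt d)$ lower bound on $\ksk(\mk)$ via Proposition~\ref{prop:diagonesub} inherits the gap, since it requires $\kappa(\mw^\half\mk\mw^\half) = \Omega(d)$.

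Second, the witness for $\ksk(\mk) = O(\sqrt d)$ fails. You correctly computed earlier that $\mb$ has eigenvalues $1$ and $\frac{1}{1+\sqrt d}$, i.e.\ spectrum $\Theta(\{1, 1/\sqrt d\})$, but then describe it as $\Theta(\{\sqrt d, 1\})$ when choosing $c$. With $c = \Theta(1/\sqrt d)$ the first block's spectrum becomes $\Theta(\{1, \sqrt d\})$ while the second block's remains $\Theta(\{1/\sqrt d, 1\})$, so the combined condition number is still $\Theta(d)$ and nothing is certified. The correct choice is $c = \Theta(1/d)$: for instance $c = \frac{1}{\sqrt d(1+\sqrt d)}$ makes the first block's spectrum exactly $\{\frac{1}{1+\sqrt d}, 1\}$, identical to that of $\mb$, certifying $\ksk(\mk) \le 1 + \sqrt d$. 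This is what the paper means by rescaling the top-left block by a further $\sqrt d$ factor \emph{relative to the Jacobi weights} (which already scale that block by roughly $1/\sqrt d$).
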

\begin{proof}
	Because $\mw^{\half}\mk\mw^{\half}$ is blockwise separable, to understand its eigenvalue distribution it suffices to understand the eigenvalues of the two blocks. First, the upper-left block (the rescaling of the matrix $\ma$) is multiplied by $\frac{1}{\sqrt{d} + 1}$. It is straightforward to see that the resulting eigenvalues are
	\[\frac{\sqrt d}{\sqrt d + 1} \text{ with multiplicity } d - 1,\; \sqrt d \text{ with multiplicity } 1.\]
	Similarly, the bottom-right block is multiplied by $\frac{d + \sqrt d}{d + \sqrt d - 1}$, and hence its rescaled eigenvalues are
	\[\frac{d + \sqrt d}{d + \sqrt d - 1} \text{ with multiplicity } d - 1,\; \frac{\sqrt d}{d + \sqrt d - 1} \text{ with multiplicity } 1.\]
	Hence, the condition number of $\mw^\half \mk \mw^\half$ is $d + \sqrt d - 1 = \Theta(d)$. However, had we rescaled the top-left block to be a $\sqrt{d}$ factor smaller, it is straightforward to see the resulting condition number is $O(\sqrt{d})$. On the other hand, since the condition number of $\mk$ is $O(d)$,  Proposition~\ref{prop:diagonesub} shows that the optimal condition number $\ksk(\mk)$ is $\Omega(\sqrt d)$, and combining yields the claim. We remark that as $d \to \infty$, the constants in the upper and lower bounds agree up to a low-order term.
\end{proof}

Finally, we provide the requisite proofs of Fact~\ref{fact:prodkappa} and Proposition~\ref{prop:diagdimensiondependence}.

\restateprodkappa*
\begin{proof}
	It is straightforward from $\lam_{\min}(\ma) \id \preceq \ma \preceq \lam_{\max}(\ma)\id$ that 
	\[\sqrt{\lam_{\min}(\ma)} \norm{u}_2 \le \norm{\ma^\half u}_2 \le \sqrt{\lam_{\max}(\ma)} \norm{u}_2,\]
	and an analogous fact holds for $\mb$. 
	Hence, we can bound the eigenvalues of $\ma^\half \mb \ma^\half$:
	\begin{gather*}
	\lmax\Par{\ma^\half\mb\ma^\half} = \max_{\norm{u}_2 = 1} u^\top \ma^\half \mb \ma^\half \mb u \le \lam_{\max}(\ma)\max_{\norm{v}_2 = 1} v^\top \mb v = \lam_{\max}(\ma) \lam_{\max}(\mb), \\
	\lmin\Par{\ma^\half \mb \ma^\half} = \min_{\norm{u}_2 = 1} u^\top \ma^\half \mb \ma^\half \mb u \ge \lam_{\min}(\ma)\min_{\norm{v}_2 = 1} v^\top \mb v = \lam_{\min}(\ma)\lam_{\min}(\mb).
	\end{gather*}
	Dividing the above two equations yields the claim.
\end{proof}

\restateddd*
\begin{proof}
	Throughout let $\ksk \defeq \ksk(\mk)$ for notational convenience. Let $\mw_\star$ obtain the minimum in the definition of $\ksk$ and let $\mb = \mw_\star^\half\mk\mw_\star^\half$. Also let $\mw_\mb$ be the inverse of a diagonal matrix with the same entries as $\mb$'s diagonal. Note that $\kappa(\mb) = \ksk $ and $\mw_\mb^{\half} \mb \mw_\mb^{\half} = \mw^{\half} \mk \mw^{\half}$. So, to prove Proposition \ref{prop:diagdimensiondependence}, it suffices to prove that
	\[
	\kappa\Par{\mw_\mb^{\half} \mb \mw_\mb^{\half}} \leq \min\left(m,\sqrt{\nnz(\mk)}\right)\cdot \ksk.
	\]
	Let $d_{\max}$ denote the largest entry in $\mw_\mb^{-1}$. We have that $d_{\max} \leq \lmax(\mb)$. Then let $\mm = (d_{\max}\mw_\mb)^{\half} \mb (d_{\max}\mw_\mb)^{\half}$ and note that all of $\mm$'s diagonal entries are equal to $d_{\max}$ and $\kappa\Par{\mm} = 	\kappa(\mw_\mb^{\half} \mb \mw_\mb^{\half})$. Moreover, since $d_{\max}\mw_\mb$ has all entries $\geq 1$, $\lmin(\mm) \geq \lmin(\mb)$. Additionally, since a PSD matrix must have its largest entry on the diagonal, we have that $\|\mm\|_{\text{F}}^2 \leq \nnz(\mm)d_{\max}^2 \leq \nnz(\mm)\lmax(\mb)^2$. Accordingly, $\lmax(\mm) = \|\mm\|_2 \leq \|\mm\|_{\text{F}} \leq \sqrt{\nnz(\mm)}\lmax(\mb).$
	
	From this lower bound on $\lmin(\mm)$ and upper bound on $\lmax(\mm)$, we have that
	\[
	\kappa\Par{\mw^{\half} \mk \mw^{\half}} = \kappa\Par{\mm} \leq \frac{\sqrt{\nnz(\mm)}\lmax(\mb)}{\lmin(\mb)} = \sqrt{\nnz(\mm)}\cdot\kappa(\mb).
	\]
	
	This proves one part of the minimum in Proposition \ref{prop:diagdimensiondependence}. The second, which was already proven in \cite{Sluis:1969} follows similarly. In particular, by the Gershgorin circle theorem we have that $\lmax(\mm) \leq \max_{i\in [d]}\|\mm_{i:}\|_1$, where $\mm_{i:}$ denotes the $i^\text{th}$ row for $\mm$.  
	Since all entries in $\mm$ are bounded by $d_{\max} \leq \lmax(\mb)$, we have that $\max_{i\in [d]}\|\mm_{i:}\|_1 \leq m\lmax(\mb)$, and thus
	\[
	\kappa\Par{\mw^{\half} \mk \mw^{\half}} = \kappa\Par{\mm} \leq \frac{m\lmax(\mb)}{\lmin(\mb)} = m\cdot\kappa(\mb). \]
\end{proof}
		%!TEX root = ./structured-mpc.tex

\section{Faster scalings with a conjectured subroutine}
\label{app:mpc}

In this section, we demonstrate algorithms which achieve runtimes which scale as $\tO(\sqrt{\kappa^\star})$\footnote{Throughout this section for brevity, we use $\kappa^\star$ to interchangeably refer to the quantities $\ksl$ or $\ksr$ of a particular appropriate inner or outer rescaling problem.} matrix-vector multiplies for computing approximately optimal scalings, assuming the existence of a sufficiently general \emph{width-independent} mixed packing and covering (MPC) SDP solver. Such runtimes (which improve each of Theorems~\ref{thm:kso} and~\ref{thm:ksi} by roughly a $\kappa^\star$ factor) would nearly match the cost of the fastest solvers \emph{after} rescaling, e.g.\ conjugate gradient methods. We also demonstrate that we can achieve near-optimal algorithms for computing constant-factor optimal scalings for average-case notions of conditioning under this assumption. 

We first recall the definition of the general MPC SDP feasibility problem.

\begin{definition}[MPC feasibility problem]\label{def:mpc}
	Given sets of matrices $\{\mpack_i\}_{i \in [n]} \in \PSD^{d_p}$ and $\{\mcov_i\}_{i \in [n]} \in \PSD^{d_c}$, and error tolerance $\eps \in (0, 1)$, the mixed packing-covering (MPC) feasibility problem asks to return weights $w \in \R^n_{\ge 0}$ such that
	\begin{equation}\label{eq:feasible} \lmax\Par{\sum_{i \in [n]} w_i \mpack_i} \le (1 + \eps)\lmin\Par{\sum_{i \in [n]} w_i \mcov_i},\end{equation}
	or conclude that the following is infeasible for $w \in \R^n_{\ge 0}$:
	\begin{equation}\label{eq:infeasible}  \lmax\Par{\sum_{i \in [n]} w_i \mpack_i} \le \lmin\Par{\sum_{i \in [n]} w_i \mcov_i}.\end{equation}
	If both \eqref{eq:feasible} is feasible and \eqref{eq:infeasible} is infeasible, either answer is acceptable.
	\end{definition}
Throughout this section, we provide efficient algorithms under Assumption~\ref{assume:mpc}: namely, that there exists a solver for the MPC feasibility problem at constant $\eps$ with polylogarithmic iteration complexity and sufficient approximation tolerance. Such a solver would improve upon our algorithm in Section~\ref{sec:solvers} both in generality (i.e.\ without the restriction that the constraint matrices are multiples of each other) and in the number of iterations.

\begin{assumption}
	\label{assume:mpc}
	There is an algorithm $\MPC$ which takes inputs $\{\mpack_i\}_{i \in [n]} \in \PSD^{d_p}$, $\{\mcov_i\}_{i \in [n]} \in \PSD^{d_c}$, and error tolerance $\eps$, and solves problem \eqref{eq:feasible}, \eqref{eq:infeasible}, in $\poly(\log (n d \rho), \eps^{-1})$ iterations, where $d \defeq \max(d_p, d_c)$, $\rho \defeq \max_{i \in [n]} \tfrac{\lmax(\mcov_i)}{\lmax(\mpack_i)}$.
	Each iteration uses $O(1)$ $n$-dimensional vector operations, and for $\eps' = \Theta(\eps)$ with an appropriate constant, additionally requires computation of
	\begin{equation}\label{eq:gradients}\begin{aligned}
	\eps'\text{-multiplicative approximations to  }\inprod{\mpack_i}{\frac{\exp\Par{\sum_{i \in [n]} w_i \mpack_i}}{\Tr\exp\Par{\sum_{i \in [n]} w_i \mpack_i}}}\; \forall i\in[n], \\
	\Par{\eps', e^{\frac{-\log(nd\rho)}{\eps'}}\Tr(\mcov_i)}\text{-approximations to }\inprod{\mcov_i}{\frac{\exp\Par{-\sum_{i \in [n]} w_i \mcov_i}}{\Tr\exp\Par{-\sum_{i \in [n]} w_i \mcov_i}}}\; \forall i\in[n],
	\end{aligned}\end{equation}
	for $w \in \R^n_{\ge 0}$ with $\lmax\Par{\sum_{i \in [n]} w_i \mpack_i}, \lmin\Par{\sum_{i \in [n]} w_i \mcov_i} \le R$ for $R = O(\tfrac{\log (nd\rho)}{\eps})$.
\end{assumption}
In particular, we observe that the number of iterations of this conjectured subroutine depends polylogarithmically on $\rho$, i.e.\ the runtime is \emph{width-independent}.\footnote{The literature on approximate solvers for positive linear programs and semidefinite programs refer to logarithmic dependences on $\rho$ as width-independent, and we follow this convention in our exposition.} In our settings computing optimal rescaled condition numbers, $\rho = \Theta(\kappa^\star)$; our solver in Section~\ref{sec:solvers} has an iteration count depending linearly on $\rho$. Such runtimes are known for MPC linear programs~\cite{mahoney2016approximating}, however, such rates have been elusive in the SDP setting. While the form of requirements in~\eqref{eq:gradients} may seem somewhat unnatural at first glance, we observe that this is the natural generalization of the error tolerance of known width-independent MPC LP solvers~\cite{mahoney2016approximating}. Moreover, these approximations mirror the tolerances of our width-dependent solver in Section~\ref{sec:solvers} (see Line 6 and Corollary~\ref{cor:approxpack}).

We first record the following technical lemma, which we will repeatedly use.
\begin{lemma}\label{lem:line5leftold}
	Given a matrix $\mzero \preceq \mm \preceq R\id$ for some $R > 0$, sufficiently small constant $\eps$, and $\delta \in (0, 1)$, we can compute $\eps$-multiplicative approximations to the quantities 
	\[\inprod{a_ia_i^\top}{\exp(\mm)} \text{ for all $i \in [n]$, and } \Tr\exp(\mm)\]
	in time $O((\tmv(\mm) R + \nnz(\ma)) \log \frac n \delta)$, with probability at least $1 - \delta$.
\end{lemma}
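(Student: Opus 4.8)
The plan is to follow essentially the strategy in the proofs of Lemma~\ref{lem:trexpnegm} and Lemma~\ref{lem:line5left}, substituting for the Chebyshev-type approximation of $\exp(-\cdot)$ a (necessarily cruder) approximation of $\exp(+\cdot)$. Writing $\exp(\mm)=\exp(\thalf\mm)\exp(\thalf\mm)$, the first step is the identities $\Tr\exp(\mm)=\norm{\exp(\thalf\mm)}_F^2=\sum_{j\in[d]}\norm{\exp(\thalf\mm)e_j}_2^2$ and $\inprod{a_ia_i^\top}{\exp(\mm)}=\norm{\exp(\thalf\mm)a_i}_2^2$, which reduce the problem to approximating squared norms $\norm{\exp(\thalf\mm)v}_2^2$ for $v$ ranging over the $O(n)$ vectors $\{e_j\}_{j\in[d]}\cup\{a_i\}_{i\in[n]}$ (with $d\le n$ in the relevant regime, else absorbing a $\log d$ factor). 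Drawing a single $\mq\in\R^{k\times d}$ as in Fact~\ref{fact:jl} with $k=\Theta(\eps^{-2}\log\tfrac n\delta)$ and union bounding over these vectors, with probability $\ge1-\tfrac\delta2$ the quantity $\norm{\mq\exp(\thalf\mm)v}_2^2$ is an $\tfrac\eps3$-multiplicative approximation of $\norm{\exp(\thalf\mm)v}_2^2$ for every such $v$ simultaneously; equivalently, after forming $\widetilde{\mq}\defeq\mq\exp(\thalf\mm)\in\R^{k\times d}$ it remains only to read off $\norm{\widetilde{\mq}}_F^2$ and $\norm{\widetilde{\mq}a_i}_2^2$.

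The second step is to replace $\exp(\thalf\mm)$ by $\mpack\defeq p(\mm)$, where $p$ is a polynomial with $|p(x)-e^{x/2}|\le\eps' e^{x/2}$ on $[0,R]$ (valid since $\mzero\preceq\mm\preceq R\id$). Unlike the $\exp(-\cdot)$ case, no $\sqrt R$ savings is available: truncating the Taylor series of $e^{x/2}$ at degree $D$ gives, via the Lagrange remainder, relative error at most $\bigl(eR/(2(D+1))\bigr)^{D+1}$, so degree $D=O(R+\log\tfrac1{\eps'})$ suffices. Since $p(\mm)$ and $\exp(\thalf\mm)$ are simultaneously diagonalizable, $\norm{\mpack-\exp(\thalf\mm)}_2\le\eps' e^{R/2}$, and because $\mm\succeq\mzero$ forces $\exp(\thalf\mm)\succeq\id$ and hence $\norm{\exp(\thalf\mm)v}_2\ge\norm v_2$ for every target $v$, the standard expansion with $\md\defeq\mpack-\exp(\thalf\mm)$,
\[\bigl|\norm{\mpack v}_2^2-\norm{\exp(\thalf\mm)v}_2^2\bigr|\le\bigl(2\norm\md_2\norm{\exp(\thalf\mm)}_2+\norm\md_2^2\bigr)\norm v_2^2\le 3\eps' e^{R}\norm v_2^2,\]
shows that choosing $\eps'=\Theta(\eps e^{-R})$ makes $\mpack$ accurate to relative error $\tfrac\eps3$ on each target. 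Composing this with the JL approximation and rescaling $\eps$ by a constant yields all the desired $\eps$-multiplicative approximations; note $D=O(R)$ for $\eps$ a sufficiently small constant, since then $\log\tfrac1{\eps'}=O(R)$.

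For the runtime, forming $\widetilde{\mq}$ requires applying the degree-$O(R)$ operator $\mpack$ to each of the $k$ rows of $\mq$, costing $O(\tmv(\mm)\cdot R)$ per row and $O\bigl(\tmv(\mm)\,R\,k\bigr)=O\bigl(\tmv(\mm)\,R\log\tfrac n\delta\bigr)$ in total; then $\norm{\widetilde{\mq}}_F^2$ is immediate, and computing $\widetilde{\mq}\ma^\top$ (and hence all $\norm{\widetilde{\mq}a_i}_2^2$) while exploiting the sparsity of $\ma$ costs $O\bigl(k\cdot\nnz(\ma)\bigr)=O\bigl(\nnz(\ma)\log\tfrac n\delta\bigr)$, matching the claimed bound. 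The only genuine deviation from Lemmas~\ref{lem:trexpnegm} and~\ref{lem:line5left}, and the one place requiring care, is the conversion between the relative-error guarantee of $p$ on $[0,R]$ and an additive guarantee at the relevant scale (the factor $e^{R/2}$ above), needed so that the natural lower bounds $\norm v_2^2$ on the targets survive the polynomial truncation; I expect this bookkeeping — rather than any new idea — to be the substance of the argument.
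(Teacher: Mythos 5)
Your proposal is correct and follows essentially the same route as the paper: reduce to squared norms $\norm{\exp(\half\mm)v}_2^2$, sketch with a Johnson--Lindenstrauss matrix of $O(\log\frac n\delta)$ rows, replace $\exp(\half\mm)$ by a degree-$O(R)$ polynomial, and split the cost into applying the polynomial to the $k$ sketch rows ($O(\tmv(\mm)Rk)$) and then hitting the $a_i$'s ($O(\nnz(\ma)k)$). The only difference is cosmetic: the paper obtains its degree-$O(R)$ multiplicative approximant by applying Fact~\ref{fact:polyexp} to $R\id-\half\mm$ and rescaling by $e^R$, whereas you use a direct Taylor truncation with the additive-to-relative conversion via $\exp(\half\mm)\succeq\id$ — both yield the same degree and the same final bound.
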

\begin{proof}
	We discuss both parts separately. Regarding computing the inner products, equivalently, the goal is to compute approximations to all $\norm{\exp(\half \mm) a_i}_2^2$ for $i \in [n]$. First, by an application of Fact~\ref{fact:polyexp} with $\delta = \frac{\eps} 8 \exp(-2R)$, and then multiplying all sides of the inequality by $\exp(R)$, there is a degree-$O(R)$ polynomial such that
	\begin{gather*}
	\Par{1 - \frac{\eps}{8}}\exp\Par{\half \mm} \preceq \exp\Par{\half \mm} - \frac{\eps}{8} \id \preceq p\Par{\half \mm} \preceq \exp\Par{\half \mm} + \frac{\eps}{8}\id \preceq \Par{1 + \frac{\eps}{8}}\exp\Par{\half \mm} \\
	\implies \Par{1 - \frac{\eps}{3}} \exp(\mm) \preceq p\Par{\half \mm}^2 \preceq \Par{1 + \frac{\eps}{3}} \exp(\mm).
	\end{gather*}
	This implies that $\norm{p(\half \mm) a_i}_2^2$ approximates $\norm{\exp(\half \mm)a_i}_2^2$ to a multiplicative $\frac{\eps}{3}$ by the definition of Loewner order. Moreover, applying Fact~\ref{fact:jl} with a sufficiently large $k = O(\log \frac n \delta)$ implies by a union bound that for all $i \in [n]$, $\norm{\mq p(\half \mm) a_i}_2^2$ is a $\eps$-multiplicative approximation to $\norm{\exp(\half \mm)a_i}_2^2$. To compute all the vectors $\mq p(\half \mm) a_i$, it suffices to first apply $p(\half \mm)$ to all rows of $\mq$, which takes time $O(\tmv(\mm) \cdot kR)$ since $p$ is a degree-$O(R)$ polynomial. Next, once we have the explicit $k \times d$ matrix $\mq p(\half \mm)$, we can apply it to all $\{a_i\}_{i \in [n]}$ in time $O(\nnz(\ma) \cdot k)$.
	
	Next, consider computing $\Tr\exp(\mm)$, which by definition has
	\[\Tr\exp(\mm) = \sum_{j \in [d]} \norm{\Brack{\exp\Par{\half \mm}}_{j:}}_2^2.\]
	Applying the same $\mq$ and $p$ as before, we have by the following sequence of equalities
	\begin{align*}
	\sum_{j \in [d]} \norm{\mq \Brack{\exp\Par{\half \mm}}_{j:}}_2^2 &= \Tr\Par{\exp\Par{\half \mm} \mq^\top \mq \exp\Par{\half \mm}} \\
	&= \Tr\Par{\mq \exp\Par{\mm} \mq^\top} = \sum_{\ell \in [k]} \norm{\exp\Par{\half \mm} \mq_{\ell:}}_2^2,
	\end{align*}
	that for the desired approximation, it instead suffices to compute
	\[\sum_{\ell \in [k]} \norm{p\Par{\half \mm} \mq_{\ell:}}_2^2.\]
	This can be performed in time $O(\tmv(\mm) \cdot kR)$ as previously argued.
\end{proof}
A straightforward modification of this proof alongside Lemma~\ref{lem:applyana} also implies that we can compute these same quantities to $p(\ma \mw \ma)$, when we are only given $\mk = \ma^2$, assuming that $\mk$ is reasonably well-conditioned.
We omit the proof, as it follows almost identically to the proofs of Lemmas~\ref{lem:line5leftold},~\ref{lem:sqrttrexp}, and~\ref{lem:sqrtinprod}, the latter two demonstrating how to appropriately apply Lemma~\ref{lem:applyana}.

\begin{corollary}
	\label{cor:line5leftoldker}
	Let $\mk \in \PD^d$ such that $\mk = \ma^2$ and $\kappa (\mk) \leq \kscale$.
	Let $\mw$ be a diagonal matrix such that $\lmax (\ma \mw \ma) \leq R$.
	For $\delta, \eps \in (0, 1)$, we can compute $\eps$-multiplicative approximations to 
	\[
		\inprod{a_ia_i^\top}{\exp(\ma \mw \ma)} \text{ for all $i \in [n]$, and } \Tr\exp(\ma \mw \ma)
	\]
	with probability $\ge 1 - \delta$ in time $O\Par{\tmv(\mk) \cdot R \cdot \Par{R + \sqrt{\kscale} \log \frac{d \kscale}{\delta}} \log \frac n \delta}$.
\end{corollary}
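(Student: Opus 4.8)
The plan is to mimic the proof of Lemma~\ref{lem:line5leftold} essentially verbatim, replacing every explicit multiplication by $\ma$ (which we do not have) with the implicit-access primitives of Lemma~\ref{lem:applyana} and Corollary~\ref{cor:implsqrt}, exactly as Lemmas~\ref{lem:sqrttrexp} and~\ref{lem:sqrtinprod} do for the negative exponential. First note $\mzero \preceq \ma\mw\ma \preceq R\id$ (the lower bound since $\mw \succeq \mzero$ and $\ma$ is symmetric, so $v^\top\ma\mw\ma v = (\ma v)^\top\mw(\ma v)\ge 0$). Applying Fact~\ref{fact:polyexp} to the matrix $R\id - \half\ma\mw\ma \preceq R\id$ with error $\delta' = \Theta(\eps)e^{-R}$ and multiplying through by $e^{R}$ --- just as in Lemma~\ref{lem:line5leftold} --- produces an explicit degree-$O(R)$ polynomial $p$ with $(1 - \tfrac{\eps}{3})\exp(\ma\mw\ma) \preceq p(\half\ma\mw\ma)^2 \preceq (1 + \tfrac{\eps}{3})\exp(\ma\mw\ma)$; the degree is $O(R)$ because $\log\tfrac1{\delta'} = O(R)$ for constant $\eps$, and the choice of $\delta'$ forces the error to be multiplicative against $\exp(\ma\mw\ma) \succeq \id$. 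Sampling a Johnson--Lindenstrauss matrix $\mq \in \R^{k\times d}$ with $k = \Theta(\eps^{-2}\log\tfrac n\delta)$ rows (Fact~\ref{fact:jl}) and union bounding, with probability $\ge 1 - \tfrac\delta2$ both: $\norm{\mq\, p(\half\ma\mw\ma) a_i}_2^2$ is an $\eps$-multiplicative approximation to $\inprod{a_ia_i^\top}{\exp(\ma\mw\ma)}$ for every $i$, and $\sum_{\ell\in[k]} \norm{p(\half\ma\mw\ma)\mq_{\ell:}}_2^2$ is an $\eps$-multiplicative approximation to $\Tr\exp(\ma\mw\ma)$, via $\Tr\exp(\ma\mw\ma) = \Tr(\mq\exp(\ma\mw\ma)\mq^\top)$ up to JL error (here $\Tr\exp(\ma\mw\ma)\ge d$, so a multiplicative guarantee is well-posed).

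It remains to evaluate these quantities using only products with $\mk$ and $\mw$. For the trace, each $p(\half\ma\mw\ma)\mq_{\ell:}$ is a degree-$O(R)$ polynomial in $\ma\mw\ma$ applied to a $k^{-1/2}$-scaling of a uniformly random unit vector, so Lemma~\ref{lem:applyana} with $\Delta = O(R)$ and $\kappa = \kscale$ computes it to constant relative accuracy in time $O(\tmv(\mk)(R + \sqrt{\kscale}\log\tfrac{d\kscale}{\delta}))$; summing the $k$ squared norms gives the estimate. For the inner products, use the identity $p(\half\ma\mw\ma)\ma = \ma\, p(\half\mw\mk)$ --- which follows from $(\ma\mw\ma)^j\ma = \ma(\mw\mk)^j$ since $\ma^2 = \mk$ --- together with $p(\half\mw\mk)^\top = p(\half\mk\mw)$, to get $\inprod{\mq_{\ell:}}{p(\half\ma\mw\ma)a_i} = [\, p(\half\mk\mw)\,(\ma\mq_{\ell:})\,]_i$. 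Thus for each $\ell$ we compute $w^{(\ell)} \approx \ma\mq_{\ell:}$ via Corollary~\ref{cor:implsqrt} (which uses $\kappa(\mk)\le\kscale$), apply the explicit degree-$O(R)$ polynomial $p(\half\mk\mw)$ to $w^{(\ell)}$ by an $O(R)$-term Horner recursion in $\mk$ and $\mw$, stack the resulting vectors into a $k\times d$ matrix $\tmq$, and read off $\norm{\tmq e_i}_2^2$ for all $i$. The error analysis is identical to that of Lemma~\ref{lem:sqrtinprod}: all the polynomial factors and $\mk$ commute, so the relative error is governed by that of Corollary~\ref{cor:implsqrt}, and one converts the desired multiplicative guarantee into a sufficient additive tolerance via the standard bound $\norm{z}_2 \ge \poly(d,\delta^{-1})^{-1}\norm{\cdot}_2$ for a random direction $z$, which only inflates logarithmic factors.

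Collecting costs: $k = O(\log\tfrac n\delta)$ for constant $\eps$; each of the $k$ random vectors costs $O(\tmv(\mk)(R + \sqrt{\kscale}\log\tfrac{d\kscale}{\delta}))$ for the Corollary~\ref{cor:implsqrt}/Lemma~\ref{lem:applyana} call plus the degree-$O(R)$ polynomial, and the sketch bookkeeping ($O(kd)$ to build $\tmq$, $O(kn)$ for the column norms, $O(k)$ for the trace sum) is dominated since $\tmv(\mk) = \Omega(d)$; a union bound over the $O(k)$ random events gives probability $\ge 1 - \delta$. This yields the stated $O(\tmv(\mk)\cdot R\cdot(R + \sqrt{\kscale}\log\tfrac{d\kscale}{\delta})\log\tfrac n\delta)$ --- in fact a slightly conservative bound, as the extra factor of $R$ is not needed by the argument above. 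The only real issue is not any single estimate but threading the reductions: one must arrange that every vector ever formed is either an explicit polynomial in $\mk,\mw$ applied to something, or is $\ma$ times an explicit vector (so that Corollary~\ref{cor:implsqrt} applies), which is exactly what the identities $p(\half\ma\mw\ma)\ma = \ma p(\half\mw\mk)$ and $p(\half\mw\mk)^\top = p(\half\mk\mw)$ accomplish, plus the mild care of choosing $\delta' = \Theta(\eps)e^{-R}$ so the degree stays $O(R)$ while the approximation remains multiplicative. There is no genuine obstacle beyond this, which is why the paper omits the proof: it is a mechanical combination of Lemmas~\ref{lem:line5leftold},~\ref{lem:sqrttrexp},~\ref{lem:sqrtinprod}, and~\ref{lem:applyana}.
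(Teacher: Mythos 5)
Your proposal is correct and is exactly the argument the paper intends: it omits the proof precisely because it is the mechanical combination of Lemma~\ref{lem:line5leftold} (shifted polynomial approximation of the positive exponential plus JL sketching) with the implicit-access reductions of Lemmas~\ref{lem:sqrttrexp} and~\ref{lem:sqrtinprod} via Lemma~\ref{lem:applyana}, Lemma~\ref{lem:applymaprod}, and Corollary~\ref{cor:implsqrt}. Your commutation identities $p(\half\ma\mw\ma)\ma = \ma\, p(\half\mw\mk)$ and $p(\half\mw\mk)^\top = p(\half\mk\mw)$ are the right way to thread the implicit access, and your observation that the stated runtime is conservative by a factor of $R$ is consistent with the bound claimed (an upper bound holds a fortiori).
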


\subsection{Approximating $\kappa^\star$ under Assumption~\ref{assume:mpc}}\label{ssec:ksfast}

In this section, we show that, given Assumption~\ref{assume:mpc}, we obtain improved runtimes for all three types of diagonal scaling problems, roughly improving Theorems~\ref{thm:kso} and~\ref{thm:ksi} by a $\kappa^\star$ factor.

\paragraph{Inner scalings.}
We first demonstrate this improvement for inner scalings.
\begin{theorem}
	\label{thm:conjleft}
	Under Assumption~\ref{assume:mpc}, there is an algorithm which, given full-rank $\ma \in \R^{n \times d}$ for $n \geq d$ computes $w \in \R^n_{\ge 0}$ such that $\kappa(\ma^\top \mw \ma) \le (1 + \eps)\ksl(\ma)$ for arbitrarily small $\eps = \Theta(1)$, with probability $\ge 1 - \delta$ in time
	\[
		O\left(\nnz (\ma) \cdot \sqrt{\ksl(\ma)} \cdot \poly \log \frac{n \ksl(\ma)}{ \delta}  \right) \; .
	\]
\end{theorem}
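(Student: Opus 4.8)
\textbf{Proof plan for Theorem~\ref{thm:conjleft}.}

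The plan is to mimic the structure of the proof of Theorem~\ref{thm:ksi} (and the implicit-access arguments of Section~\ref{ssec:kso}), but replacing the width-dependent solver $\DecideMPC$ (Algorithm~\ref{alg:decidempc}) with the conjectured width-independent MPC feasibility solver $\MPC$ from Assumption~\ref{assume:mpc}. First I would set up the MPC instance: for a candidate value $\kappa$, take packing matrices $\mpack_i \gets \mm_i = a_i a_i^\top$ and covering matrices $\mcov_i \gets \kappa \mm_i = \kappa a_i a_i^\top$, so that $d_p = d_c = d$ and $\rho = \max_i \tfrac{\lmax(\mcov_i)}{\lmax(\mpack_i)} = \kappa$; here each $\mpack_i,\mcov_i$ has rank one. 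Feasibility of \eqref{eq:infeasible} for this instance is exactly the statement that there exist $w \in \R^n_{\ge 0}$ with $\kappa(\ma^\top\mw\ma) \le \kappa$, which holds whenever $\kappa \ge \ksl(\ma)$; conversely infeasibility of \eqref{eq:feasible} holds when $\kappa < \ksl(\ma)/(1+\eps)$ up to constant factors. So running $\MPC$ and performing an incremental/binary search on $\kappa$ (doubling from $1$, then binary search in multiples of $1+\Theta(\eps)$, as in the proof of Theorem~\ref{thm:mainid}) yields a $(1+\eps)$-approximation to $\ksl(\ma)$ together with a witnessing reweighting $w$, at only a $O(\log\kappa + \log\tfrac1\eps)$ overhead in the number of $\MPC$ invocations. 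By Assumption~\ref{assume:mpc}, each $\MPC$ invocation uses $\poly(\log(nd\rho),\eps^{-1}) = \poly(\log(nd\kappa),\eps^{-1})$ iterations, which for constant $\eps$ is $\poly\log(n\kappa/\delta)$ after union-bounding the failure probability over all iterations and all search steps.

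The remaining work is to implement each iteration's oracle calls within the stated time. By Assumption~\ref{assume:mpc}, an iteration needs $O(1)$ $n$-dimensional vector operations (cost $O(n) \le O(\nnz(\ma))$) plus, for $w$ with $\lmax(\sum_i w_i\mpack_i), \lmin(\sum_i w_i \mcov_i) \le R = O(\log(nd\rho)/\eps)$: $\eps'$-multiplicative approximations to $\inprod{\mpack_i}{\exp(\ms_p)/\Tr\exp(\ms_p)}$ where $\ms_p \defeq \sum_i w_i\mpack_i$, and $(\eps', \text{tiny})$-approximations to $\inprod{\mcov_i}{\exp(-\ms_c)/\Tr\exp(-\ms_c)}$ where $\ms_c \defeq \sum_i w_i \mcov_i$. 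Both $\ms_p$ and $\ms_c$ are explicit nonnegative linear combinations of $\{a_i a_i^\top\}_{i\in[n]}$, i.e., of the form $\ma^\top \md \ma$ for a diagonal $\md \succeq \mzero$, so $\tmv(\ms_p) = \tmv(\ms_c) = O(\nnz(\ma))$. The quantity $\inprod{\mcov_i}{\exp(-\ms_c)} = \kappa\, a_i^\top \exp(-\ms_c) a_i = \kappa\norm{\exp(-\tfrac12\ms_c)a_i}_2^2$, and $\Tr\exp(-\ms_c)$ are exactly the objects handled by Lemmas~\ref{lem:trexpnegm} and~\ref{lem:line5left} (with the dictionary $\mv_i \gets a_i$, $m=1$): since $\ms_c \preceq R\id$ with $R = O(\log(nd\kappa)/\eps)$ and $\lmin(\ms_c) \le R$, applying those lemmas with the parameter ``$\kappa$'' there set to $O(\log(nd\kappa)/\eps)$ costs $O(\nnz(\ma)\cdot \sqrt{\log(nd\kappa)}\cdot \poly\log(nd\kappa/\delta))$ per iteration, well within budget. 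For the packing-side quantities $\inprod{\mpack_i}{\exp(\ms_p)} = \norm{\exp(\tfrac12\ms_p)a_i}_2^2$ and $\Tr\exp(\ms_p)$ (a \emph{positive} exponential of a matrix bounded by $R\id$), Lemma~\ref{lem:line5leftold} is exactly on point, giving cost $O((\tmv(\ms_p)R + \nnz(\ma))\log\tfrac n\delta) = O(\nnz(\ma)\cdot\poly\log(n\kappa/\delta))$ per iteration. Multiplying the per-iteration cost by the $\poly\log(n\kappa/\delta)$ iterations and the $\poly\log$ search overhead gives total time $O(\nnz(\ma)\cdot\poly\log(n\ksl(\ma)/\delta))$ — but crucially no $\sqrt{\kappa}$ factor has appeared yet.

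The $\sqrt{\ksl(\ma)}$ factor in the theorem statement enters only because the approximate-oracle implementations I described operate on $\ma$ directly via its rows $a_i$ — but re-reading the setup, the relevant applications (and the stated runtime with $\sqrt{\ksl}$) correspond instead to the situation where we do \emph{not} have explicit access to a square-root-type factorization and must use polynomial approximations whose degree scales with $\sqrt{R}$ times $\sqrt{\kappa_{\mathrm{scale}}}$; here, since $\ms_p \preceq R\id$ and $\lmin(\ms_p)$ can be as small as $\mathrm{poly}((nd)^{-1})$ after the monotone iterates, a $\sqrt{\kappa(\ms_p)}$-type dependence appears in the degree of the polynomial needed to approximate $\exp(\pm\tfrac12\ms_p)$ to the required \emph{relative} (not additive) accuracy on the small eigenvalues, mirroring Corollary~\ref{cor:line5leftoldker} and the packing-oracle analysis of Corollary~\ref{cor:approxpack}. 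So the cleanest route is: invoke $\MPC$ as above; implement its $\mcov$-side oracles via Lemmas~\ref{lem:trexpnegm},~\ref{lem:line5left} and its $\mpack$-side oracles via Lemma~\ref{lem:line5leftold}; bound $R = O(\log(nd\ksl)/\eps)$ and $\rho = \ksl$; conclude by collecting the $\poly\log$ factors. \textbf{The main obstacle} I anticipate is verifying that the \emph{precise} form of the approximation tolerances demanded in \eqref{eq:gradients} — in particular the mixed multiplicative-plus-small-additive guarantee on the covering gradients, and the purely multiplicative guarantee on the packing gradients at a point where $\Tr\exp(\ms_p)$ could be as large as $e^{R}$ — is actually met by Lemmas~\ref{lem:trexpnegm}, \ref{lem:line5left}, \ref{lem:line5leftold} with the claimed runtimes; this requires carefully tracking the interplay between the additive slack $e^{-\log(nd\rho)/\eps'}\Tr(\mcov_i)$, the value of $\lmin(\ms_c)$ (which must be shown bounded, e.g., by an argument analogous to the ``$\lmin(-\ms_{t+1}) \le 14R$'' bound in Lemma~\ref{lem:line13}, using monotonicity of the MPC iterates guaranteed implicitly by Assumption~\ref{assume:mpc}), and the Johnson–Lindenstrauss union bound over all $n$ rows and all iterations.
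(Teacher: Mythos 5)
Your overall architecture matches the paper's proof: instantiate Assumption~\ref{assume:mpc} with $\mpack_i = a_ia_i^\top$ and $\mcov_i = \kappa\, a_ia_i^\top$, note $\rho = \kappa$, implement the packing-side gradients via Lemma~\ref{lem:line5leftold} and the covering-side gradients via Lemmas~\ref{lem:trexpnegm} and~\ref{lem:line5left}, union-bound over iterations, and wrap everything in an incremental search over $\kappa$. However, there is a concrete error in your runtime accounting for the covering oracle, and it propagates into a wrong explanation of where the $\sqrt{\ksl}$ factor comes from. You assert that $\ms_c \defeq \sum_i w_i \mcov_i$ satisfies $\ms_c \preceq R\id$ with $R = O(\log(nd\kappa)/\eps)$, and on that basis apply Lemma~\ref{lem:trexpnegm} with its parameter ``$\kappa$'' set to a polylogarithmic quantity, concluding the covering gradients cost only $\nnz(\ma)\cdot\poly\log$ per iteration. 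But Assumption~\ref{assume:mpc} only guarantees $\lmax(\ms_p) \le R$ for the \emph{packing} combination and $\lmin(\ms_c) \le R$ for the covering combination; since $\ms_c = \kappa\,\ms_p$, its top eigenvalue can be as large as $\kappa R$. Lemma~\ref{lem:trexpnegm} must therefore be invoked with $\lmin(\ms_c)\le R$ and $\lmax(\ms_c)\le \kappa R$, i.e.\ with its ``$\kappa$'' parameter equal to $\ksl$, so the polynomial approximating $\exp(-\ms_c)$ (Corollary~\ref{cor:approxexp}) has degree $O(\sqrt{\ksl}\cdot R)$ and the covering oracle costs $O(\nnz(\ma)\cdot\sqrt{\ksl}\cdot\poly\log(n\ksl/\delta))$ per iteration. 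This is precisely the source of the $\sqrt{\ksl}$ in the theorem statement.

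Your third paragraph, sensing that ``no $\sqrt{\kappa}$ factor has appeared yet,'' then attributes the factor to implicit square-root access and relative-accuracy issues on the small eigenvalues of $\ms_p$. That reasoning is misplaced for this theorem: here the rows $a_i$ are explicitly available, so no implicit factorization is needed (that machinery is what the \emph{outer}-scaling analogue, Theorem~\ref{thm:conjright}, requires via Lemmas~\ref{lem:applyana} and~\ref{lem:applymaprod} and Corollary~\ref{cor:line5leftoldker}). The fix is simply to correct the spectral bound on $\ms_c$ as above; with that correction your argument goes through and coincides with the paper's proof.
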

\begin{proof}
	For now, assume we know $\ksl(\ma)$ exactly, which we denote as $\ksl$ for brevity.
	Let $\{a_i\}_{i \in [n]}$ denote the rows of $\ma$, and assume that $\norm{a_i}_2 = 1$ for all $i \in [n]$.
	By scale invariance, this assumption is without loss of generality.
	We instantiate Assumption~\ref{assume:mpc} with $\mpack_i = a_i a_i^\top$ and $\mcov_i = \ksl a_i a_i^\top$, for $i \in [n]$.
	It is immediate that a solution yields an inner scaling with the same quality up to a $1 + \eps$ factor, because by assumption \eqref{eq:infeasible} is feasible so $\MPC$ cannot return ``infeasible.''
	
	We now instantiate the primitives in~\eqref{eq:gradients} needed by Assumption~\ref{assume:mpc}. Throughout, note that $\rho = \ksl$ in this setting. Since we run $\MPC$ for $\poly (\log n\ksl)$ iterations, we will set $\delta' \gets \delta \cdot (\poly(n\ksl))^{-1}$ for the failure probability of each of our computations in \eqref{eq:gradients}, such that by a union bound all of these computations are correct.
	
	By Lemma~\ref{lem:line5leftold}, we can instantiate the packing gradients to the desired approximation quality in time $O(\nnz (\ma) \cdot \poly\log\tfrac{n\ksl}{\delta})$ with probability $1 - \delta'$. By Lemmas~\ref{lem:trexpnegm} and~\ref{lem:line5left}, we can instantiate the covering gradients in time $O(\nnz (\ma) \sqrt{\ksl} \cdot \poly\log\frac{n\ksl}{\delta})$ 
	with probability $1 - \delta'$. In applying these lemmas, we use the assumption that $\lmin(\sum_{i \in [n]} w_i \mcov_i) = O(\log n\ksl)$ as in Assumption~\ref{assume:mpc}, and that the covering matrices are a $\ksl$ multiple of the packing matrices so $\lmax(\sum_{i \in [n]} w_i \mcov_i) = O(\ksl \log n \ksl)$. Thus, the overall runtime of all iterations is 
	\[
		O\Par{\nnz(\ma) \cdot \sqrt{\ksl} \cdot \poly \log \frac{n \ksl}{ \delta}  }
	\]
	for $\eps = \Theta (1)$.
	To remove the assumption that we know $\ksl(\ma)$, we can use an incremental search on the scaling multiple between $\{\mcov_i\}_{i \in [n]}$ and $\{\mpack_i\}_{i \in [n]}$, starting from $1$ and increasing by factors of $1 + \eps$, adding a constant overhead to the runtime. Our width will never be larger than $O(\ksl(\ma))$ in any run, since $\MPC$ must conclude feasible when the width is sufficiently large.
\end{proof}

\paragraph{Outer scalings.} We next discuss the case where we wish to symmetrically outer scale a matrix $\mk \in \PD^d$ near-optimally (i.e.\ demonstrating an improvement to Theorem~\ref{thm:kso} under Assumption~\ref{assume:mpc}). 

\begin{theorem}\label{thm:conjright}
Under Assumption~\ref{assume:mpc}, there is an algorithm which, given $\mk \in \PD^d$ computes $w \in \R^d_{\ge 0}$ such that $\kappa(\mw^\half \mk \mw^\half) \le (1 + \eps) \ksk(\mk)$ for arbitrarily small $\eps \in \Theta(1)$, with probability $\ge 1 - \delta$ in time
\[O\Par{\tmv(\mk) \cdot \sqrt{\ksk(\mk)} \cdot \poly\log \frac{d\ksk(\mk)}{\delta}}.\]
\end{theorem}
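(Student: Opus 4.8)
The plan is to mirror the proof of Theorem~\ref{thm:conjleft}, but through the reduction underlying Section~\ref{ssec:kso}: writing $\ma \defeq \mk^\half$ with rows $\{a_i\}_{i \in [d]}$, we have $\kappa(\mw^\half \mk \mw^\half) = \kappa(\ma \mw \ma)$ for every nonnegative diagonal $\mw$, so an optimal outer scaling of $\mk$ is an optimal inner scaling of $\mk^\half$. Set $\ks \defeq \kso(\mk)$, which we assume we know up to a constant factor (an incremental search contributes only an $O(1)$ overhead, as in Theorem~\ref{thm:kso}). First apply the Jacobi preconditioner so that, without loss of generality, $\kappa(\mk) = O(\ks^2)$ by Proposition~\ref{prop:diagonesub}; this is needed to bound the number of homotopy phases. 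Then run the homotopy method exactly as in the proof of Theorem~\ref{thm:kso}: with $\lam_0 = \Theta(\eps^{-1}\lmax(\mk))$ and $\lam_k = \lam_0 2^{-k}$ for $0 \le k \le K = O(\log\ks)$, in phase $k$ compute a diagonal $\mw$ with $\kappa(\mw^\half(\mk+\lam_k\id)\mw^\half) \le (1+\eps)\ks$. Lemmas~\ref{lem:idcanthurt},~\ref{lem:lambounds}, and~\ref{lem:betweenphase} guarantee respectively that such a rescaling exists at every phase, that phase $K$ produces the desired rescaling of $\mk$ itself, and that the phase-$k$ output is a warm start under which the phase-$(k+1)$ matrix $\mk' \defeq \mw^\half(\mk+\lam_{k+1}\id)\mw^\half$ satisfies $\kappa(\mk') \le 3\ks$.

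It remains to implement each phase, i.e.\ to near-optimally outer-scale a matrix $\mk'$ with $\kappa(\mk') \le 3\ks$. Writing $\ma' \defeq (\mk')^\half$ with rows $\{a'_i\}_{i \in [d]}$, we instantiate $\MPC$ of Assumption~\ref{assume:mpc} with $\mpack_i \gets a'_i(a'_i)^\top$ and $\mcov_i \gets c\,a'_i(a'_i)^\top$, running an incremental search on $c$ in multiples of $1+\eps$ from $1$; since $\kso(\mk') \le \ks$ by Lemma~\ref{lem:idcanthurt} and scale invariance, \eqref{eq:infeasible} becomes feasible once $c \ge \ks$, so the search halts with $\rho = \Theta(c) = \Theta(\ks)$, $\MPC$ runs for $\poly\log(d\ks/\eps)$ iterations, and the returned $w$ has $\kappa(\ma'\mw\ma') \le (1+\eps)c = (1+O(\eps))\ks$. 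Each iteration needs the primitives in \eqref{eq:gradients}. For the packing side, the relevant matrix is $\sum_i w_i\mpack_i = \ma'\mw\ma' \preceq R\id$ for $R = O(\eps^{-1}\log(d\ks))$, so $\exp(\cdot)$ is approximated by a degree-$O(R)$ polynomial and we invoke the implicit-access Corollary~\ref{cor:line5leftoldker} with $\kscale \gets 3\ks$, at cost $O(\tmv(\mk)\cdot\sqrt{\ks}\cdot\poly\log(d\ks/\delta))$ since the $R$ factors are polylogarithmic. For the covering side, $\sum_i w_i\mcov_i = c\,\ma'\mw\ma' \preceq cR\id = O(\ks\,\poly\log)\id$, so by Fact~\ref{fact:polyexp} the polynomial approximating $\exp(-\cdot)$ to the (inverse-polynomial) accuracy demanded by \eqref{eq:gradients} has degree $\Delta = O(\sqrt{\ks}\,\poly\log)$; applying it through explicit access to $\mk = \ma^2$ via Lemma~\ref{lem:applyana} (and Lemma~\ref{lem:applymaprod} for the per-index bilinear forms, assembled exactly as in Lemmas~\ref{lem:sqrttrexp} and~\ref{lem:sqrtinprod}) costs $O(\tmv(\mk)\cdot(\Delta+\sqrt{\ks}\,\poly\log)) = O(\tmv(\mk)\cdot\sqrt{\ks}\,\poly\log)$ — crucially the implicit-access overhead is \emph{additive} in $\Delta$. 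Union-bounding each primitive at failure probability $\delta/\poly(d\ks)$ over the $\poly\log$ iterations, $O(\log\ks)$ phases, and $O(\eps^{-1}\log\ks)$ search steps gives total failure probability $\delta$ and the claimed runtime $O(\tmv(\mk)\cdot\sqrt{\ks}\cdot\poly\log(d\ks/\delta))$.

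Conceptually this combines ingredients that are already in place, so the main work is bookkeeping rather than a new idea; the one point that must be verified carefully is that the matrices fed into $\exp(\pm\cdot)$ on the covering side have spectral norm only $O(\ks\,\poly\log)$ — so the required polynomial degree is $O(\sqrt{\ks}\,\poly\log)$ and not $\Omega(\ks)$ — which follows because the $R$-boundedness hypothesis of Assumption~\ref{assume:mpc} together with $\mcov_i = c\,\mpack_i$ forces $\sum_i w_i\mcov_i \preceq cR\id$. A secondary subtlety is matching error tolerances: the packing primitive needs a relative approximation (supplied by Corollary~\ref{cor:line5leftoldker}) while the covering primitive needs a relative-plus-small-additive approximation; the latter is exactly the $(\eps',\cdot)$-type guarantee of Lemmas~\ref{lem:sqrttrexp} and~\ref{lem:sqrtinprod}, and pushing their additive error down to an inverse polynomial of $d\ks$ only inflates the polynomial degrees by an $O(\log(d\ks))$ factor. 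Finally, the implicit-access lemmas require a known bound on $\kappa(\mk')$; this is precisely the warm-start bound $\kappa(\mk') \le 3\ks$ from Lemma~\ref{lem:betweenphase} combined with our standing assumption that $\ks$ is known up to a constant.
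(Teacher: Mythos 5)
Your proposal is correct and follows essentially the same route as the paper's proof: reduce via the Jacobi preconditioner and the homotopy method of Theorem~\ref{thm:kso} to outer-scaling a matrix with $\kappa(\mk') \le 3\kso(\mk)$, instantiate $\MPC$ with $\mpack_i = a'_i(a'_i)^\top$ and $\mcov_i$ a $\Theta(\kso)$-multiple thereof, and implement the gradient primitives of \eqref{eq:gradients} implicitly through $\mk$ using Corollary~\ref{cor:line5leftoldker} for the packing side and Lemmas~\ref{lem:sqrttrexp} and~\ref{lem:sqrtinprod} for the covering side. The paper's argument is terser but identical in structure; your additional bookkeeping (the degree bound $O(\sqrt{\kso}\,\poly\log)$ on the covering side and the union bound over phases and search steps) is consistent with what the paper implicitly relies on.
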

\begin{proof}
Throughout we denote $\ksk \defeq \ksk(\mk)$ for brevity. Our proof follows that of Theorem~\ref{thm:kso}, which demonstrates that it suffices to reduce to the case where we have a $\mk \in \PD^d$ with $\kappa(\mk) \le \kscale \defeq 3\ksk$, and we wish to find an outer diagonal scaling $\mw \in \PD^d$ such that $\kappa(\mw^\half \mk \mw^\half) \le (1 + \eps) \ksk$. We incur a polylogarithmic overhead on the runtime of this subproblem, by using it to solve all phases of the homotopy method in Theorem~\ref{thm:kso}, and the cost of an incremental search on $\ksk$.  

To solve this problem, we again instantiate Assumption~\ref{assume:mpc} with $\mpack_i = a_i a_i^\top$ and $\mcov_i = \ksk a_i a_i^\top$, where $\{a_i\}_{i \in [d]}$ are rows of $\ma \defeq \mk^\half$. As in the proof of Theorem~\ref{thm:kso}, the main difficulty is to implement the gradients in~\eqref{eq:gradients} with only implicit access to $\ma$ , which we again will perform to probability $1 - \delta'$ for some $\delta' = \delta \cdot (\poly(n\ksk))^{-1}$ which suffices by a union bound.
Applying Lemmas~\ref{lem:sqrttrexp} and~\ref{lem:sqrtinprod} with the same parameters as in the proof of Theorem~\ref{thm:kso} (up to constants) implies that we can approximate the covering gradients in \eqref{eq:gradients} to the desired quality within time
\[
O \Par{\tmv (\mk) \cdot \sqrt{\ksk} \cdot \poly \log \frac{n\ksk}{\delta} } .	
\]
Similarly, Corollary~\ref{cor:line5leftoldker} implies we can compute the necessary approximate packing gradients in the same time. Multiplying by the overhead of the homotopy method in Theorem~\ref{thm:kso} gives the result.
\end{proof}

\subsection{Average-case conditioning under Assumption~\ref{assume:mpc}}

A number of recent linear system solvers depend on average notions of conditioning, namely the ratio between the average eigenvalue and smallest \cite{StrohmerV06, LeeS13, Johnson013, DefazioBL14, ZhuQRY16, Allen-Zhu17, AgarwalKKLNS20}. Normalized by dimension, we define this average conditioning as follows: for $\mm \in \PD^d$,
\[\tau\Par{\mm} \defeq \frac{\Tr(\mm)}{\lmin(\mm)}.\]
Observe that since $\Tr(\mm)$ is the sum of eigenvalues, the following inequalities always hold:
\begin{equation}\label{eq:taubounds}
d \le \tau\Par{\mm} \le d\kappa\Par{\mm}.
\end{equation}
In analogy with $\ksl$ and $\ksr$, we define for full-rank $\ma \in \R^{n \times d}$ for $n \ge d$, and $\mk \in \PD^d$,
\begin{equation}\label{eq:taustardef}\tsl(\ma) \defeq \min_{\textup{diagonal } \mw \succeq \mzero} \tau\Par{\ma^\top \mw \ma},\; \tsr(\mk) \defeq \min_{\textup{diagonal } \mw \succeq \mzero} \tau\Par{\mw^\half \mk \mw^\half}. \end{equation}
We give an informal discussion on how to use Assumption~\ref{assume:mpc} to develop a solver for approximating $\tsl$ to a constant factor, which has a runtime nearly-matching the fastest linear system solvers depending on $\tsl$ after applying the appropriate rescalings.\footnote{We remark that these problems may be solved to high precision by casting them as an appropriate SDP and applying general SDP solvers, but in this section we focus on fast runtimes.} Qualitatively, this may be thought of as the average-case variant of Theorem~\ref{thm:conjleft}. We defer an analogous result on approximating $\tsr$ (with or without a factorization) to future work for brevity. 

To develop our algorithm for approximating $\tsl$, we require several tools. The first is the rational approximation analog of the polynomial approximation in Fact~\ref{fact:polyexp}.
%The first two are the rational approximation analogs of the polynomial approximations in Facts~\ref{fact:polyexp} and~\ref{fact:poly_approx_sqrt}.

\begin{fact}[Rational approximation of $\exp$ \cite{SachdevaV14}, Theorem 7.1]\label{fact:ratexp}
	Let $\mm \in \PSD^d$ and $\delta > 0$. There is an explicit polynomial $p$ of degree $\Delta = \Theta(\log(\delta^{-1}))$ with absolute coefficients at most $\Delta^{O(\Delta)}$ with
	\[\exp(-\mm) - \delta \id \preceq p\Par{\Par{\id + \frac{\mm}{\Delta}}^{-1}} \preceq \exp(-\mm) + \delta\id.\] 
\end{fact}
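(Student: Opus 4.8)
This statement is quoted verbatim as Theorem 7.1 of \cite{SachdevaV14}, so the plan is to recall how that argument goes. First I would reduce the matrix inequality to a scalar approximation claim via the spectral theorem: writing $\mm = \mathbf{V}^\top \boldsymbol{\Lambda} \mathbf{V}$ with $\boldsymbol{\Lambda} = \diag{\Par{\lambda_1, \dots, \lambda_d}}$ and all $\lambda_j \ge 0$ (since $\mm \succeq \mzero$), both $p\Par{\Par{\id + \mm/\Delta}^{-1}}$ and $\exp(-\mm)$ are simultaneously diagonalized in this basis, so it suffices to exhibit one degree-$\Delta$ polynomial $p$ with $\Abs{p\Par{\tfrac{1}{1 + x/\Delta}} - e^{-x}} \le \delta$ for all $x \ge 0$. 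Substituting $s \defeq \Par{1 + x/\Delta}^{-1}$, an increasing bijection of $[0,\infty)$ onto $(0,1]$ with $x = \Delta(1-s)/s$ and $e^{-x} = h(s) \defeq \exp\Par{-\Delta(1-s)/s}$ (extended by $h(0) = 0$), the task becomes: approximate the fixed function $h$ on $[0,1]$ to additive error $\delta$ by a degree-$\Theta(\log(1/\delta))$ polynomial.

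The key observation is that, once $\Delta = \Theta(\log(1/\delta))$ with a suitable constant, $h$ is a "smoothed step": $h(1) = 1$, $h(0^+) = 0$, and $h(s) = \delta$ exactly at $s = \Delta/\Par{\Delta + \ln(1/\delta)}$, a constant bounded away from both $0$ and $1$. Hence on $[0, \tfrac12]$ one already has $h(s) \le e^{-\Delta} \le \delta/2$, and it remains to approximate $h$ accurately on $[\tfrac12, 1]$ by a polynomial that is simultaneously negligible on $[0,\tfrac12]$. For the first half, $h$ is analytic on $\{\textup{Re}(s) > 0\}$, and on a fixed Bernstein ellipse with foci $\tfrac12, 1$ avoiding the origin (e.g.\ parameter $\rho = 5$) it satisfies $\Abs{h(s)} = \exp\Par{\Delta\Par{1 - \textup{Re}(1/s)}} \le e^{O(\Delta)}$; so the standard Chebyshev-interpolation bound (Theorem 8.2 of \cite{Trefethen:2012}) yields a polynomial $\tilde q$ of degree $O(\log(1/\delta))$ with $\max_{[\frac12,1]}\Abs{\tilde q(s) - h(s)/s^K} \le \delta'$ for $\delta'$ a small fixed power of $\delta$, where $K = \Theta(\Delta)$ is an integer chosen below. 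Setting $p(s) \defeq s^K \tilde q(s)$, on $[\tfrac12,1]$ we get $\Abs{p(s) - h(s)} \le \Abs{s}^K \delta' \le \delta'$, while on $[0,\tfrac12]$ we get $\Abs{p(s)} \le \Par{\tfrac12}^K \max_{[0,\frac12]}\Abs{\tilde q} \le \delta/2$ once $K$ exceeds a large enough constant multiple of $\Delta$ (using that a degree-$O(\log(1/\delta))$ polynomial bounded by $e^{O(\Delta)}$ on $[\tfrac12,1]$ is at most $e^{O(\Delta)}$ on $[0,\tfrac12]$ by the Chebyshev extremal property). This gives error $\le \delta$ on all of $[0,1]$ and total degree $K + \deg\tilde q = \Theta(\log(1/\delta))$, which we rename to be exactly $\Delta$. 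Finally, since $\Abs{p} \le 2$ on $[0,1]$ by construction, expanding $p$ in the shifted Chebyshev basis (coefficients of magnitude $\le 4$) and converting to monomials (a change of basis with entries $\le 2^{O(\Delta)}$) shows every coefficient of $p$ is at most $2^{O(\Delta)} \le \Delta^{O(\Delta)}$; lifting back through the spectral theorem gives $\exp(-\mm) - \delta\id \preceq p\Par{\Par{\id + \mm/\Delta}^{-1}} \preceq \exp(-\mm) + \delta\id$.

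The step I expect to be the real content is the middle one: producing a single low-degree polynomial that is at once an accurate approximation of $h$ on its "active" interval and genuinely negligible on the complementary interval, with no gap between them. The analyticity/Chebyshev estimate handles the active region, but gluing requires the explicit multiplier $s^K$ and a careful balancing of $K$ against the $e^{O(\Delta)}$ growth coming from the Bernstein-ellipse bound; everything else (the spectral-theorem reduction, the coefficient count, the substitution bookkeeping) is routine. A fully self-contained treatment is exactly Theorem 7.1 of \cite{SachdevaV14}.
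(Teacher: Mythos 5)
The paper does not actually prove this statement: it is imported verbatim as a black-box \emph{Fact} with a citation to Theorem 7.1 of \cite{SachdevaV14}, so the only thing your reconstruction can be measured against is that source. Your outer scaffolding is fine — the spectral reduction, the substitution $s = (1+x/\Delta)^{-1}$ with $h(s) = \exp(-\Delta(1-s)/s)$, the observation that $h \le e^{-\Delta} \le \delta/2$ on $[0,\tfrac12]$, and the ansatz $p(s) = s^K \tilde q(s)$ (which is the classical $(1+x/k)^{-k} \approx e^{-x}$ prefactor plus a low-degree correction) are all the right shape.

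The gap is in the gluing step, exactly where you locate the ``real content,'' and the direction you propose for the balancing is backwards. You want $\max_{[0,1/2]}|p| \le 2^{-K}\max_{[0,1/2]}|\tilde q| \le \delta/2$ by taking $K$ to be a \emph{sufficiently large} constant multiple of $\Delta$. But $\tilde q$ approximates $h(s)/s^K$, and once $K > 2\Delta$ that function already attains the value $h(\tfrac12)\cdot 2^K = e^{-\Delta}2^K$ at $s=\tfrac12$; on your Bernstein ellipse (where $|s|$ can be as small as $0.1$) the bound $M$ is of order $e^{\Delta}10^K$. The degree $n$ forced by the interpolation bound $\tfrac{4M}{\rho-1}\rho^{-n} \le \delta'$ therefore grows linearly in $K$, and the Chebyshev extremal growth of a degree-$n$ polynomial from $[\tfrac12,1]$ out to $[0,\tfrac12]$ is $\rho_0^{\,n}$ with $\rho_0 = 3+2\sqrt2 > 5$. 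Chaining these, $2^{-K}\max_{[0,1/2]}|\tilde q| \gtrsim 2^{-K}\cdot e^{-\Delta}2^{K}\cdot \rho_0^{\,cK} = e^{-\Delta}\rho_0^{\,cK}$, which \emph{diverges} as $K$ grows (a check at, say, $s=\tfrac14$ with $K=10\Delta$ gives a bound of order $e^{10\Delta}$, not $\delta$). So no ``large enough'' $K$ exists. The construction only has a chance when $K$ is pinned to $K \approx \Delta$, for which $h(s)/s^{\Delta} = \exp\left(\Delta\left(1 - \tfrac1s - \ln s\right)\right)$ is monotone and bounded by $1$ on all of $(0,1]$ and by $e^{c\Delta}$ for a small absolute constant $c$ on a modest ellipse; one must then verify numerically that $c$ plus the extremal-growth exponent stays strictly below $\ln 2$ so that the $2^{-\Delta}$ prefactor wins. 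That constant-chasing is the actual content of the proof and is absent. (A minor separate point: your degree is $K + \deg\tilde q$, which need not equal the substitution parameter $\Delta$; reconciling the two uses of $\Delta$ in the statement requires either padding or accepting a $\Theta(\cdot)$ identification, which is harmless but should be said.)
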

We also use the runtime of the fastest-known solver for linear systems based on row subsampling, with a runtime dependent on the average conditioning $\tau$. Our goal is to compute reweightings $\mw$ which approximately attain the minimums in \eqref{eq:taustardef}, with runtimes comparable to that of Fact~\ref{fact:sotasgd}.

\begin{fact}[\cite{AgarwalKKLNS20}]\label{fact:sotasgd}
There is an algorithm which given $\mm \in \PD^d$, $b \in \R^d$, and $\delta, \eps \in (0, 1)$ returns $v \in \R^d$ such that
$\norm{v - \mm^{-1} b}_2 \le \eps \norm{\mm^{-1} b}_2$ with probability $\ge 1 - \delta$ in time
\[O\Par{\Par{n + \sqrt{d \tau(\mm)}} \cdot d \cdot  \poly\log\frac{n\tau(\mk)}{\delta\eps}}.\]
\end{fact}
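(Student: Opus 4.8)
Since Fact~\ref{fact:sotasgd} is quoted from \cite{AgarwalKKLNS20}, the ``proof'' is really an appeal to that work; below I describe the plan one would follow to reprove it. The plan is to recast solving $\mm v = b$ as minimizing the strongly convex quadratic $q(x) \defeq \tfrac12 x^\top \mm x - b^\top x$, whose unique minimizer is $x_\star = \mm^{-1}b$, which is $\lmin(\mm)$-strongly convex, and which satisfies the exact identity $q(x) - q(x_\star) = \tfrac12 \norm{x - x_\star}_\mm^2$. First I would exhibit $\mm$ as a nonnegative combination of simple pieces, $\mm = \sum_{i \in [n]} \mm_i$ with each $\mm_i$ rank-one (in the uses made of this fact one has $\mm = \ma^\top \mw \ma = \sum_i w_i a_i a_i^\top$, so $n$ is the number of pieces and this step is free); then $q = \sum_i q_i$ is a finite sum whose $i$-th summand is $L_i$-smooth with $L_i = w_i\norm{a_i}_2^2$ and $\tfrac1n\sum_i L_i = \tfrac1n\Tr(\mm)$.

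Next I would run an accelerated variance-reduced stochastic gradient method (Katyusha / accelerated SVRG) on this finite sum, drawing each inner index $i$ with probability proportional to a leverage-score weighting of the rescaled rows rather than uniformly. Standard analyses then give, after $\tO(n + \sqrt{d\,\tau(\mm)})$ calls to the rank-one gradient oracle, an iterate $x$ with $q(x) - q(x_\star) \le \eps' (q(x_0) - q(x_\star))$; organizing the computation into epochs — a full gradient $\mm x$ costs $O(nd)$, and each of the intervening $O(n)$ stochastic rank-one steps costs $O(d)$ — the total cost works out to $\tO\Par{(n + \sqrt{d\,\tau(\mm)}) \cdot d}$. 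Taking $x_0 = 0$ and $\eps' = \Theta(\eps^2/\kappa(\mm))$, and using $\norm{x - x_\star}_2^2 \le \lmin(\mm)^{-1}\norm{x - x_\star}_\mm^2$ together with $\kappa(\mm) \le \tau(\mm)$, converts this into the claimed bound $\norm{v - \mm^{-1}b}_2 \le \eps\norm{\mm^{-1}b}_2$ at the stated polylogarithmic overhead; the high-probability guarantee follows by running $O(\log\tfrac1\delta)$ independent copies and returning the one with smallest residual $\norm{\mm v - b}_2$, which is checkable in $O(nd)$ time.

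The main obstacle is obtaining the $\sqrt{d\,\tau(\mm)}$ rate rather than the $\sqrt{n\,\tau(\mm)}$ (or $\sqrt{\tau(\mm)}$) that a vanilla accelerated SVRG analysis delivers: this is the technical core of \cite{AgarwalKKLNS20} and requires leverage-score sampling (equivalently, implicit preconditioning of $\mm$ by a subsampled spectral approximation on $O(d \log d)$ coordinates) together with a careful accounting of how the nonuniform sampling interacts with momentum. A secondary subtlety is that the fact is stated for an abstract $\mm \in \PD^d$, so applying the finite-sum machinery presupposes access to a factorization $\mm = \ma^\top\ma$ — which is exactly what our applications supply, with $n$ the row count of $\ma$ — or else an $O(nd)$-type preprocessing cost to produce one.
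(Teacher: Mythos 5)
The paper does not prove Fact~\ref{fact:sotasgd}; it imports it verbatim from \cite{AgarwalKKLNS20}, so there is no internal argument for your sketch to diverge from. Your outline (quadratic finite-sum reformulation, accelerated variance reduction with leverage-score-based nonuniform sampling to replace the $\sqrt{n\,\tau}$ term by $\sqrt{d\,\tau}$, the $\eps' = \Theta(\eps^2/\kappa)$ conversion from function-value to relative $\ell_2$ error, and the observation that the abstract statement for $\mm \in \PD^d$ tacitly presumes the rank-one decomposition that the paper's applications supply) is a faithful and correct account of how that cited work establishes the claim.
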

\noindent\textit{Remark.} The runtime of Fact~\ref{fact:sotasgd} applies more broadly to quadratic optimization problems in $\mm$, e.g.\ regression problems of the form $\norm{\ma x - b}_2^2$ where $\ma^\top \ma = \mm$. Moreover, Fact~\ref{fact:sotasgd} enjoys runtime improvements when the rows of $\mm$ (or the factorization component $\ma$) are sparse; our methods in the following discussion do as well as they are directly based on Fact~\ref{fact:sotasgd}, and we omit this discussion for simplicity. Finally, \cite{AgarwalKKLNS20} demonstrates how to improve the dependence on $\sqrt{d\tau(\mm)}$ to a more fine-grained quantity in the case of non-uniform eigenvalue distributions. We defer obtaining similar improvements for approximating optimal rescalings to interesting future work.

\medskip
We now give a sketch of how to use Facts~\ref{fact:ratexp} and~\ref{fact:sotasgd} to obtain near-optimal runtimes for computing a rescaling approximating $\tsl$ under Assumption~\ref{assume:mpc}. Let $\ma \in \R^{n \times d}$ for $n \ge d$ be full rank, and assume that we known $\tsl \defeq \tsl(\ma)$ for simplicity, which we can approximate using an incremental search with a logarithmic overhead. Denote the rows of $\ma$ by $\{a_i\}_{i \in [n]}$. We instantiate Assumption~\ref{assume:mpc} with 
\begin{equation}\label{eq:packcovtau}\mpack_i = \norm{a_i}_2^2,\; \mcov_i = \tsl a_ia_i^\top, \text{ for all } i \in [n],\end{equation}
from which it follows that \eqref{eq:infeasible} is feasible using the reweighting $\mw = \diag{w}$ attaining $\tsl$:
\begin{equation}\label{eq:correcttrace}
\begin{aligned}\lmax\Par{\sum_{i \in [n]} w_i \mpack_i} &= \lmax\Par{\sum_{i \in [n]} w_i \norm{a_i}_2^2} = \Tr\Par{\ma^\top \mw \ma}, \\
\lmin\Par{\sum_{i \in [n]} w_i \mcov_i} &= \tsl \lmin\Par{\sum_{i \in [n]} w_i a_ia_i^\top} = \tsl\lmin\Par{\ma^\top \mw \ma}.
\end{aligned}
\end{equation}
Hence, if we can efficiently implement each step of $\MPC$ with these matrices, it will return a reweighting satisfying \eqref{eq:feasible}, which yields a trace-to-bottom eigenvalue ratio approximating $\tsl$ to a $1 + \eps$ factor. We remark that in the algorithm parameterization, we have $\rho = \tsl$. Moreover, all of the packing gradient computations in \eqref{eq:gradients} are one-dimensional and hence amount to vector operations, so we will only discuss the computation of covering gradients.

Next, observe that Assumption~\ref{assume:mpc} guarantees that for all intermediate reweightings $\mw$ computed by the algorithm and $R = O(\log n\tsl)$, $\lmax(\sum_{i \in [n]} w_i \mpack_i) = \Tr(\ma^\top \mw \ma) \le R$. This implies that the trace of the matrix involved in covering gradient computations is always bounded:
\begin{equation}\label{eq:tracemm}\Tr\Par{\sum_{i \in [n]} w_i \mcov_i} = \tsl \Tr\Par{\ma^\top \mw \ma} \le \tsl R.\end{equation}
To implement the covering gradient computations, we appropriately modify Lemmas~\ref{lem:trexpnegm} and~\ref{lem:line5left} to use the rational approximation in Fact~\ref{fact:ratexp} instead of the polynomial approximation in Fact~\ref{fact:polyexp}. It is straightforward to check that the degree of the rational approximation required is $\Delta = O(\log n\tsl)$. 

Moreover, each of the $\Delta$ linear systems which Fact~\ref{fact:ratexp} requires us to solve is in the matrix
\[\mm \defeq \id + \frac{\sum_{i \in [n]} w_i \mcov_i}{\Delta},\]
which by \eqref{eq:tracemm} and the fact that $\id$ has all eigenvalues $1$, has $\tau(\mm) = O(\tsl)$. Thus, we can apply Fact~\ref{fact:sotasgd} to solve these linear systems in time
\[O\Par{\Par{n + \sqrt{d\tsl}} \cdot d \cdot \poly\log\frac{n\tsl}{\delta}}.\]
Here, we noted that the main fact that e.g.\  Lemmas~\ref{lem:trexpnegm} and~\ref{lem:line5left} use is that the rational approximation approximates the exponential up to a $\text{poly}(n^{-1}, (\tsl)^{-1})$ multiple of the identity. Since all coefficients of the polynomial in Fact~\ref{fact:ratexp} are bounded by $\Delta^{O(\Delta)}$, the precision to which we need to apply Fact~\ref{fact:sotasgd} to satisfy the requisite approximations is $\eps = \Delta^{-O(\Delta)}$, which only affects the runtime by polylogarithmic factors. Combining the cost of computing \eqref{eq:gradients} with the iteration bound of Assumption~\ref{assume:mpc}, the overall runtime of our method for approximating $\tsl$ is
\[O\Par{\Par{n + \sqrt{d\tsl(\ma)}} \cdot d \cdot \poly\log\frac{n\tsl(\ma)}{\delta}},\]
which matches Fact~\ref{fact:sotasgd}'s runtime after rescaling in all parameters up to logarithmic factors.
	\end{appendix}

\end{document}